\setlist[itemize]{itemsep=6pt, topsep=6pt}
\setlist[enumerate]{itemsep=6pt, topsep=6pt}
\newcommand{\hi}[1]{#1}
\newcommand{\str}[1]{}
\def\ketbra#1#2{{\vert#1\rangle\!\langle#2\vert}}
\newtheoremstyle{thmstyleone}
  {8pt minus2pt plus2pt}
  {8pt minus2pt plus2pt}
  {\itshape}
  {}
  {\bfseries}
  {.}
  { }
  {}
\theoremstyle{thmstyleone}
\newtheorem{theorem}{Theorem}[section]
\newtheorem{lemma}{Lemma}[section]
\newtheorem{definition}{Definition}[section]
\newtheorem{corollary}{Corollary}[section]
\newtheorem{proposition}{Proposition}[section]
\newtheorem{fact}{Fact}
\newtheorem{algorithm}{Algorithm}
\newtheorem{remark}{Remark}
\newtheorem{conjecture}{Conjecture}
\numberwithin{equation}{section}
\newcommand{\nc}{\newcommand}
\nc{\rnc}{\renewcommand}
\nc{\rc}[1]{{\color{red}{#1}}}
\nc{\bc}[1]{{\color{blue}{#1}}}
\nc{\be}{\begin{equation}}
\nc{\ee}{\end{equation}}
\nc{\df}{definition}
\nc{\MD}[1][1]{}
\nc{\cmt}[1][1]{}
\nc{\SB}[1]{}
\nc{\ethan}[1]{}
\nc{\gacs}[1]{}
\nc{\eq}[1]{\eqref{eq:#1}}
\rnc{\L}{\left}
\nc{\R}{\right}
\nc{\ot}{\otimes}
\nc\cA{\mathcal{A}}
\nc\cB{\mathcal{B}}
\nc\cC{\mathcal{C}}
\nc{\cD}{{\mathcal{D}}}
\nc\cE{\mathcal{E}}
\nc\cF{\mathcal{F}}
\nc\cG{\mathcal{G}}
\nc\cH{\mathcal{H}}
\nc\cI{\mathcal{I}}
\nc\cJ{\mathcal{J}}
\nc\cK{\mathcal{K}}
\nc\cL{\mathcal{L}}
\nc\cM{\mathcal{M}}
\nc\cN{\mathcal{N}}
\nc\cO{\mathcal{O}}
\nc\cP{\mathcal{P}}
\nc\cQ{\mathcal{Q}}
\nc\cR{\mathcal{R}}
\nc\cS{\mathcal{S}}
\nc\cT{\mathcal{T}}
\nc\cU{\mathcal{U}}
\nc\cV{\mathcal{V}}
\nc\cW{\mathcal{W}}
\nc\cX{\mathcal{X}}
\nc\cY{\mathcal{Y}}
\nc\cZ{\mathcal{Z}}
\let\emptyset\varnothing
\nc{\numT}{k}
\nc{\finT}{m}
\nc{\EC}{EC}
\nc{\nil}{\ell}
\nc\trel{t_{\rm rel}}
\nc\supp{\mathrm{supp}}
\nc{\maj}{\mathsf{Maj}}
\nc{\recmaj}{\mathsf{Maj}_r}
\newtcolorbox{yellowsection}{
  breakable,
  enhanced,
  colback=yellow!10,
  colframe=yellow!100,
  boxrule=4pt,
  arc=0pt,
  left=0mm,right=0mm,top=1mm,bottom=1mm
}
\begin{document}

\title{A local automaton for the 2D toric code}

\author[1]{Shankar Balasubramanian}
\author[2,3]{Margarita Davydova}
\author[4]{Ethan Lake}

\affil[1]{Center for Theoretical Physics, Massachusetts Institute of Technology, Cambridge, MA 02139}
\affil[2]{Department of Physics, Massachusetts Institute of Technology, Cambridge, MA 02139}
\affil[3]{Walter Burke Institute for Theoretical Physics and Institute for Quantum Information and Matter, California Institute of Technology, Pasadena, CA 91125}
\affil[4]{Department of Physics, University of California Berkeley, Berkeley, CA 94720}

\maketitle

\begin{abstract}
We construct a local decoder for the 2D toric code using ideas from the hierarchical classical cellular automata of Tsirelson and G\'acs. Our decoder is a circuit of strictly local quantum operations preserving a logical state for exponential time in the presence of circuit-level noise without the need for non-local classical computation or communication.  Our construction is not translation invariant in spacetime, but can be made time-translation invariant in 3D with stacks of 2D toric codes. This solves the open problem of constructing a local topological quantum memory below four dimensions.
\end{abstract}

\setcounter{tocdepth}{2}
\tableofcontents

\section{Introduction}

Fault tolerant quantum computation is a necessary ingredient for building a scalable quantum computer. 
The first proof of fault tolerance with a constant threshold, due to Aharonov and Ben-Or~\cite{aharonov1996faulttolerantquantumcomputation,Aharonov1999Jun} as well as Knill, Laflamme, and Zurek~\cite{knill1998resilient}, involved encoding a quantum circuit with a concatenated code.  %
Later, topological codes like the surface and toric code~\cite{Kitaev1997Jul,Dennis_2002} were introduced for storing quantum information, and have since remained \hi{one of the leading practical schemes} thanks to high thresholds, low-weight stabilizers, and simple prescriptions for fault-tolerant gates. More recently, qLDPC codes with superior encoding rates and distances were proposed~\cite{breuckmann2021quantum} and provide promising alternatives for architectures operating on some amount of geometric non-locality.

However, none of the existing approaches, apart from circuit concatenation~\cite{Gottesman_2000}, are currently known to be \emph{truly} scalable in three or fewer spatial dimensions while remaining fault tolerant. For true scalability, locality and speed limitations in classical computation and communication need to be included.  For example, in an architecture that supports non-local connections, the amount of time needed to implement these operations (such as bringing atoms together to implement a gate in cold atom/trapped ion architectures) may not be able to compete with the noise rate in the limit of infinite system size.

Thus, the interplay between fault tolerance and locality has remained a fundamental question that has not been addressed in full generality.  This paper focuses on understanding whether a \hi{\emph{topological}} quantum memory can operate in a geometrically local way \hi{in three dimensions or below} and preserve encoded information for a long time. While we approach this question from a purely theoretical point of view, its resolution has potentially important practical implications, since backlogs due to the need for global decoding~\cite{RevModPhys.87.307} has increasingly become a problem in current quantum devices.

One of the existing end-to-end fault-tolerant quantum computation schemes is due to Aharonov and Ben-Or~\cite{aharonov1996faulttolerantquantumcomputation,Aharonov1999Jun}, who use concatenated codes~\cite{knill1996concatenatedquantumcodes,aharonov1996faulttolerantquantumcomputation}. \hi{Their construction becomes fully local when classical computation responsible for error correction is treated as a part of the circuit to be concatenated}. Many of the details are fleshed out by Gottesman~\cite{Gottesman_2000}, and additional details regarding classical locality are discussed in his upcoming book Ref.~\cite{GottesmanBook}. \hi{A crucial feature of the concatenated codes prescription is that it relies on procedures like fault-tolerant measurement, which requires a qubit overhead whose function is to store intermediate information needed to infer measurement outcomes. These qubits must then be incorporated in the next level of circuit concatenation, resulting in a rather complicated local fault-tolerance scheme.}

It could be possible to have a model where \hi{quantum information is encoded in a topological code.  A paradigmatic example of a topological code is the Kitaev's surface code}~\cite{Kitaev1997Jul}.  
In four dimensions and higher, some topological codes are passive self-correcting quantum memories (such as the 4D toric code~\cite{Dennis_2002,alicki2008thermalstabilitytopologicalqubit}). 
Such a memory can encode logical information at a non-zero temperature and usually admits a local fault tolerant scheme (also known as a local decoder) via a set of explicitly local rules which preserves the encoded logical qubit for a time diverging in the limit of large system size even in the presence of noise. 
However, local decoders are only known for topological codes that possess exclusively membrane-like excitations, which so far is only known to be possible in an unphysical number of spatial dimensions (four and higher).  While topological memories in lower spatial dimensions are generally believed to not be finite-temperature quantum memories, this \emph{does not} rule out the existence of a local decoder for such memories.

A reason to believe that low-dimensional topological codes can possess local decoders is due to results in the field of reliable local classical computation. In the past, several schemes of local rules have been proposed to stabilize a classical bit of information in the presence of noise. The problem of reliable classical computation from unreliable components was first introduced by von Neumann~\cite{von1956probabilistic}, who proposed a solution based on self-simulating automata (also referred to as a universal self-replicating machine)~\cite{von1966theory}, although this construction was non-local. 

The most famous example of a \emph{local} scheme that realizes a reliable classical memory is Toom's rule \cite{toom1980stable} which preserves the classical bit encoded in a 2D classical repetition code. This rule has been adapted to construct local quantum decoders for the 4D toric code~\cite{ahn2004extending,breuckmann2016local} since the errors that create islands of membranes can be shrunk similarly to the error domains in the 2D repetition code. 
However, in the 2D toric code, errors can be revealed only through their point-like defects, which is more similar to the 1D classical repetition code.  Unfortunately, the ``entropy versus energy'' argument in statistical mechanics forbids the existence of a finite-temperature ordered phase in 1D.
However, these arguments do not rule out the possibility of reliably storing a bit of information using non-thermalizing dynamics, which do not obey detailed balance and thus do not equilibrate to a Gibbs state.
An example of such a 1D dynamical system was shown by G\'acs in a beautiful paper~\cite{gacs1983reliable} and \cite{Gacs_2001}\footnote{See also the newer and improved version at
\url{https://arxiv.org/abs/math/0003117}.}, which relied on hierarchical (renormalization group-like) methods to eliminate islands of errors. G\'acs's construction has other striking properties, such as being translation invariant, being stable to asynchrony of local clocks (and thus implementable in continuous time), 
and possessing a constant encoding rate. However, the details of G\'acs's construction are rather complicated and the local state space requires, as a rough estimate, between $2^{24}$ and $2^{400}$ states~\cite{Gray2001Apr}. 
It remains an outstanding open challenge to propose a quantum scheme that is similar to G\'acs's, and operates in a physical number of spatial dimensions (three or less). The only other example of a fault-tolerant classical memory in one dimension that is known to us is the lesser-known Tsirelson's automaton~\cite{Cirel'son2006Aug} that was proposed prior to G\'acs's construction and was the inspiration, along with the G\'acs automaton, for the quantum fault tolerance theorem using concatenated codes~\cite{Aharonov1999Jun}. Tsirelson's automaton, which features heavily in our work, preserves one bit of information in a repetition code for a long time, but strongly breaks translation symmetry in space and time. 

From the physics perspective, having a \hi{new} example of a local fault-tolerant quantum memory would have interesting implications for non-equilibrium quantum phases of matter.  Geometric locality is an important assumption for physical systems. Unlike zero-temperature gapped phases of matter in low dimensions, where phase equivalence has been relatively well understood and a conjectured classification has been put forward, very little is known about low dimensional stable phases of matter in the non-equilibrium setting (circuit or Lindbladian dynamics), such as general no-go theorems, notions of phase equivalence, and a conjectured classification. G\'acs automaton, in particular, should constitute a stable phase under any valid classification, but it has not been studied from this perspective because of its complexity; Tsirelson's automaton has not been either. A fully local quantum fault-tolerant memory (that remains local even without needing to reliably store classical bits on the side) \hi{is} another especially interesting example of a stable non-equilibrium quantum phase of matter.

In this paper, we use Tsirelson's automaton~\cite{Cirel'son2006Aug}, which provides a simpler starting point than G\'acs's automaton, and generalize it to construct a local ``quantum automaton'' for the 2D toric code. 
Operationally, our decoder consists of small local components that measure checks of the toric code in a local region at each time step and apply local feedback\footnote{Note that measurements perform the role of ``cooling'' the system, since it is known that unitary/reversible operations alone do not suffice for reliably preserving a quantum memory \cite{benor2013quantumrefrigerator}. To avoid measurements entirely, one could replace local measurement-and-feedback framework with local unitary gates and ancilla reset.}. The measurement outcomes are discarded a constant number of timesteps after having been collected, and thus our construction does not require the help of a reliable classical computer to process or store the measurement outcomes. These local operations are composed in such a way that the same action is simulated (via ``self-simulation'') at a growing hierarchy of length scales, thus, implementing error correction at all scales.
As we discuss in \hi{``Comparison with earlier literature''}, Subsec.~\ref{sec:comparison}, we are not aware of another quantum construction in fewer than four dimensions where this is the case. 

Our 2D decoder lacks translation invariance in both space and time, i.e. the entire hierarchy of operations in the circuit is ``hardwired''. 
However, we show how to get rid of the hardwiring by making our construction \emph{time-translation invariant} in 3D using a stack of 2D toric codes. This constitutes the first example of a local fault tolerant scheme for a topological memory in below four spatial dimensions.  Moving forward, it should be possible to eventually turn the 3D construction into a fault-tolerant and universal \emph{local quantum computer} based on topological codes, as we discuss in the ``Outlook'' section \ref{sec:outlook} below.

\subsection{Overview of main results and structure of the paper}

We will provide an informal but self-contained discussion of the main results in the paper and reference appropriate sections where these results are proved. 

\subsubsection*{Starting point: one-dimensional automaton}

In Sec.~\ref{sec:tsirelson}, we discuss Tsirelson's classical automaton, which stores one bit of information encoded in a repetition code and is built using a few elementary gadgets and a recursive substitution rule. We briefly review Tsirelson's original results in Subsec.~\ref{ss:tsirelsonOG}. We then show that, with a minor modification, we can interpret this automaton as repeated circuit concatenation, see Subsec.~\ref{ss:modification_of_tsirelson}.
This modern approach makes it easy to see why Tsirelson's automaton is fault-tolerant, and also allows one to borrow the methods from quantum concatenated codes~\cite{Aliferis2005Apr,gottesman2009introductionquantumerrorcorrection}. We provide a very concise proof that Tsirelson's automaton is fault-tolerant in Subsec.~\ref{subsec:proof1}.

The elementary gadgets of Tsirelson's automaton are:
\begin{equation} 
 \includegraphics[scale=0.28]{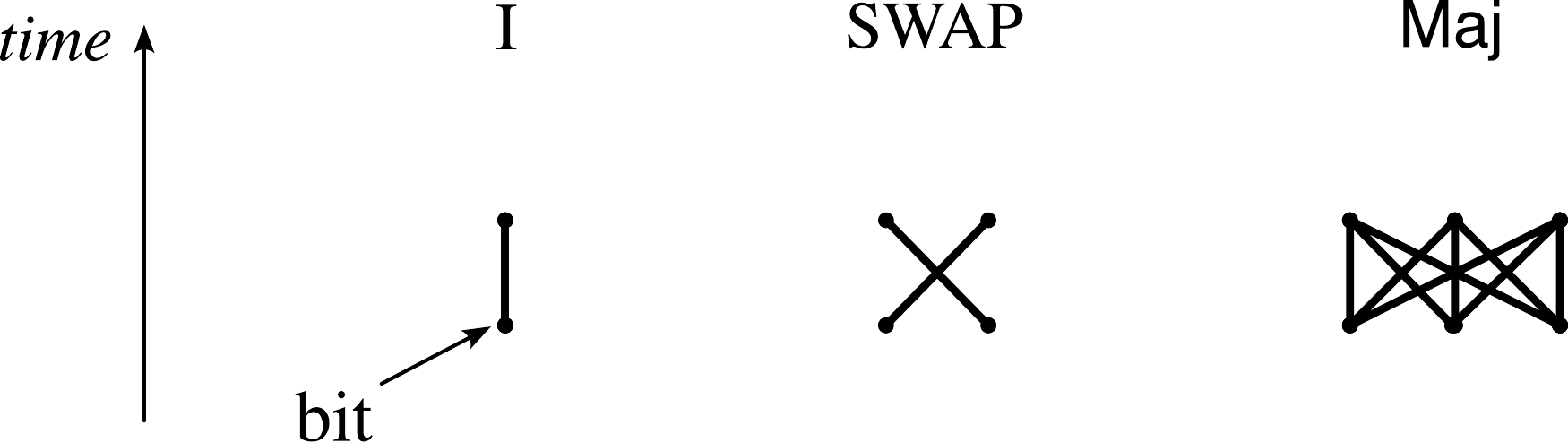}
 \end{equation}
where the left gadget acts as an identity, the middle gadget swaps the state of two bits, and the right gadget performs a majority vote $\maj$ on a 3-bit block and copies the output three times. All three operations are treated as gates; the third operation is treated as a gate but also as an error-correction ($EC$) gadget. 

One then constructs a concatenated circuit repeatedly concatenating the circuit with a 3-bit repetition code and replacing gates in the circuit with ``fault-tolerant'' versions  $FT(\cdot)$, shown below: 
\begin{equation} 
 \includegraphics[scale=0.2]{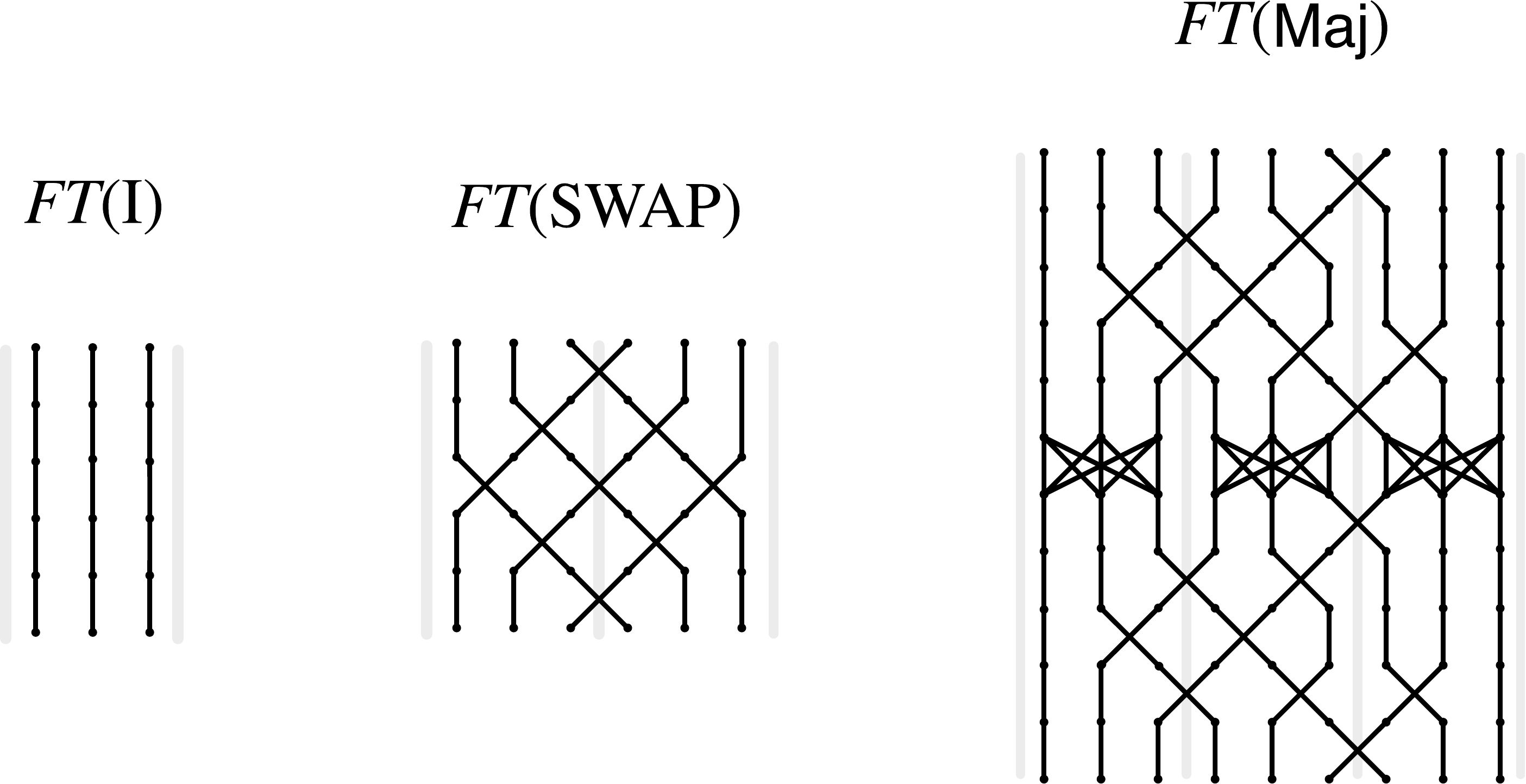}
 \end{equation}
The logical action of $FT(G)$ on 3-bit code blocks acts as an encoded version of $G$ and can tolerate any single qubit fault within it. A level-$n$ fault-tolerant simulation of any circuit is then defined in the standard way:
\begin{definition}[Fault tolerant simulation; informal] \label{def:FT_original-intro}
A fault-tolerant simulation of a circuit of elementary gadgets $\cC$, denoted $FT(\cC)$, is a circuit wherein each gate $G$ is replaced with its fault-tolerant version $FT(G)$, and a layer of $EC$ gadgets is inserted between layers of fault-tolerant gates.
A repeated fault tolerant simulation of a circuit $\cC$, denoted $FT^n(\cC)$, is the circuit formed from the recursion $FT^n(\cC) = FT(FT^{n-1}(\cC))$.
\end{definition}
Tsirelson's automaton is defined as the circuit consisting of $T$ applications of $FT^n(I)$. Denoting the system size to be $L = 3^n$, this automaton is fault-tolerant under the bit-flip noise:
\begin{theorem}[Informal, Tsirelson~\cite{Cirel'son2006Aug}]\label{thm:FT_TS-intro}
Let $s(t)$ denote a bit string describing the value of each bit after $T$ repetitions of Tsirelson's automaton on $L$ bits.  At a low enough rate of (bit flip) noise and starting from the initial state where $\forall i, \, s_i(0) = 1$, we have
\begin{equation}
\mathbb{P}\left(D(s(T)) = 1\right) \geq 1 - T \cdot \exp\left(-\alpha L^{\beta}\right)
\end{equation}
where $D(\cdot): \{0,1\}^L \to \{0,1\}$ is a decoder that implements a hierarchical majority vote to determine the logical state, with $\alpha$ and $\beta = \log_3 2$ positive constants.
\end{theorem}
\noindent We prove this statement in Subsec.~\ref{subsec:proof1}, and extend the result to a more general noise model. We estimate the threshold numerically in Subsec.~\ref{ss:tsirelson_numerics}. 

\subsubsection*{Repetition code as a stabilizer code}

Tsirelson's automaton is a promising starting point because it is a local circuit preserving the codestate of the classical 1D repetition code for superpolynomial in $L$ duration of time. The 1D repetition code has an important similarity to 2D topological codes in that its error syndrome can be viewed as a set of points that have to be matched for decoding.  
However, in the quantum case, one cannot use an automaton that directly reads off the state of each qubit before operating on them. Instead, we need to use a logical information-non-revealing approach called ``measurement and feedback model''. 
\begin{definition}[Informal, stabilizer quantum automaton]
A (noisy) stabilizer quantum automaton for a stabilizer code is a sequence of local operations that, at each timestep: (a) measures stabilizer generators of the code, thereby obtaining syndromes, (b) applies local Pauli feedback operators which only depend on a local ($O(1)$) neighborhood of syndromes, (c) discards syndromes and applies local stochastic noise.
\end{definition}
We show that Tsirelson's automaton can be formulated in this way by recasting the classical 1D repetition code as a stabilizer code with stabilizer group generated by $Z_i Z_{i+1}$. Placing qubits on edges of a 1D lattice, the presence of a nontrivial syndrome at a vertex is denoted below:
\begin{equation} 
 \includegraphics[width=0.35 \textwidth]{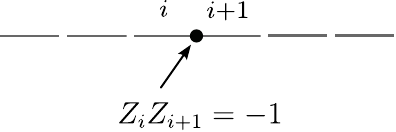}
 \end{equation}
We can rewrite elementary gadgets so that they operate by measuring syndrome and applying feedback:
\begin{itemize}
    \item $\cI_0$: identity gate;
    \item $\cT_0$: ``splits the error''.  If a nontrivial syndrome is present at vertex $v$, this gadget applies a Pauli-$X$ operator to two qubits adjacent to the vertex:
\begin{equation} 
 \includegraphics[width=0.34 \textwidth]{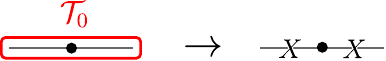}
 \end{equation}
 One can check that in the computational basis, this operation is identical to SWAP. 
    \item $\cM_0$: a ``matching'' operation, which applies $X$ to an edge if both endpoints contain nontrivial syndrome (thereby merging the two syndrome points):
\begin{equation} 
 \includegraphics[width=0.42 \textwidth]{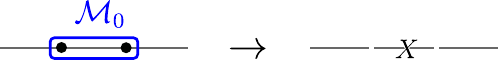}
 \end{equation}
\end{itemize}
\noindent The measurement-and-feedback version of the earlier $\maj$ (error correction) gadget will now be a composite gadget that we call $\cR_0$ (``recovery'' operation), that is decomposed in 5 steps:
\begin{equation*} 
 \includegraphics[width=1 \textwidth]{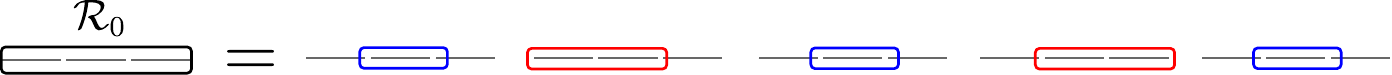}
 \end{equation*}
We can now define a new version of Tsirelson's automaton where the elementary gadgets $I$, SWAP and $\maj$ are replaced with $\cI_0$, $\cT_0$, and $\cM_0$.  We also write down fault tolerant simulations $FT(\cI_0)$, $FT(\cT_0)$, and $FT(\cM_0)$. This version of the 1D construction is explained in detail in Subsec.~\ref{modified_tsirelson2}.  We also prove its fault-tolerance in Sec.~\ref{sec:tsirelson-proof-2}, wherein we also establish some of the new proof techniques that will be needed for our toric code automaton. 

A useful intuition about the operation of the automaton can be gained by considering the following example. Consider an input with some bits flipped (shown in red); and act on it with a noiseless implementation of gadgets $\cR_0$ that cover the lattice (which is roughly equivalent to $FT(\cI_0)$), and its higher-level simulations, $FT(\cR_0)$ and $FT^2(\cR_0)$. This is shown below: 
\begin{equation*} 
 \includegraphics[width=1 \textwidth]{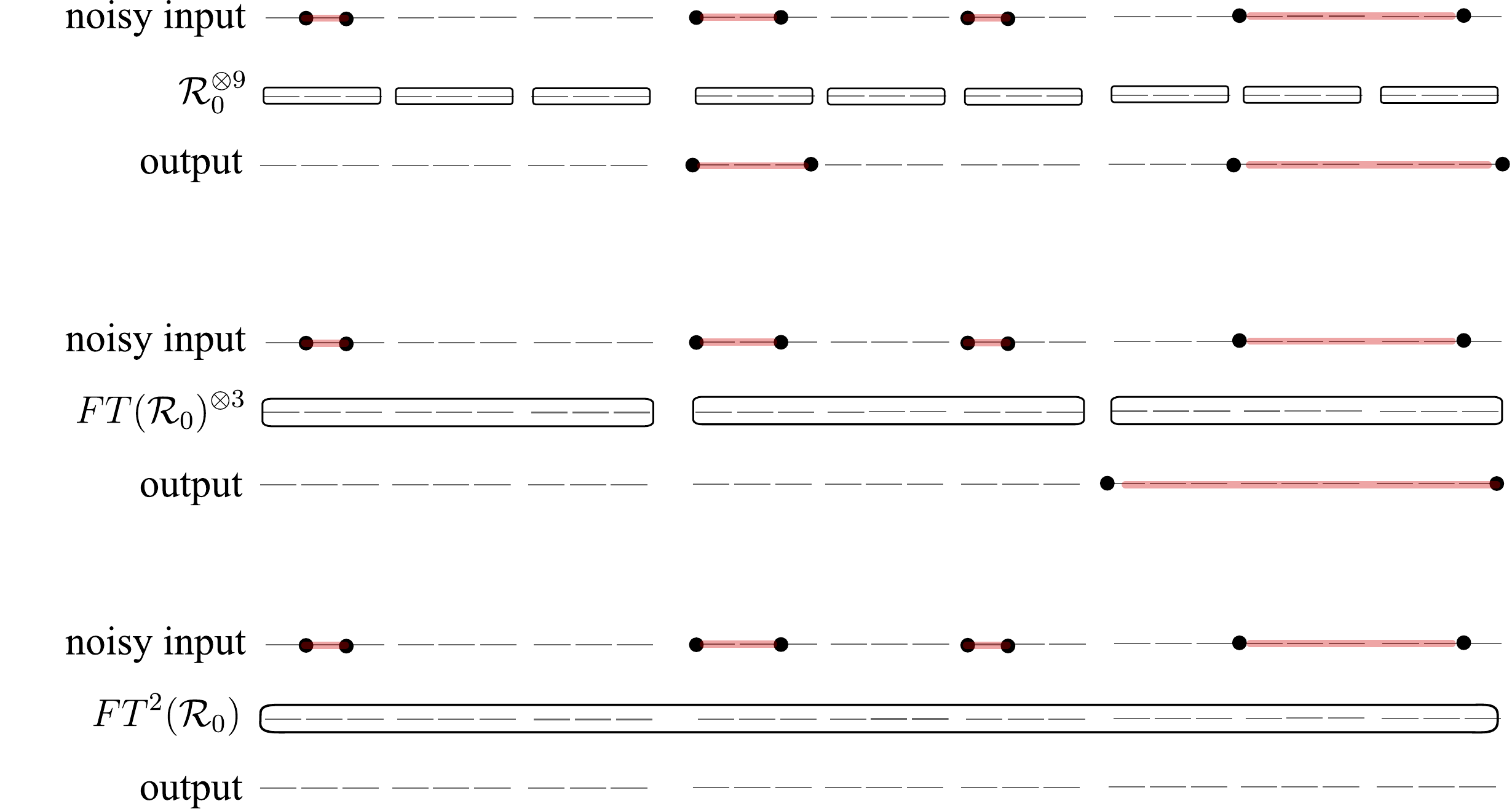}
 \end{equation*}
In each case, the output consists of blocks of codewords of a concatenated repetition code.  In the top figure, it is a 3-bit repetition code, in the middle figure a $9$-bit repetition code, and in the bottom figure a $27$-bit repetition code. As a consequence, the syndrome in the output state is pinned to the closest syndrome configuration of a superlattice corresponding to the concatenation level. We call this a \emph{coarse-graining} action that occurs at all levels of the construction.  The reader may draw some parallels with the real-space renormalization methods.

\subsubsection*{Quantum automaton: summary of assumptions}

Let us start this section by clarifying several key assumptions that we make throughout the paper:
\begin{itemize}
    \item [(1)] \textit{We assume a $p$-bounded  gadget error model (a local stochastic noise model), where an arbitrary error channel can act in the support of each elementary gadget during its operation.  }
\end{itemize}
See Def.~\ref{def:p-bounded-classical} for the classical automaton and Def.~\ref{def:p-bounded-gadget-model} for the quantum one. This error model only requires that the probability of a particular configuration of errors is suppressed exponentially in the size of the configuration. 
It also automatically incorporates the effects of measurement errors, because in a local automaton, the effect of a measurement error is equivalent to failing the output of an elementary gadget.
In addition, since the toric code is a CSS code, we can split error correction into separate phase-flip and bit-flip correcting processes. Therefore, when explaining how error correction works, we focus on the bit-flip (Pauli-$X$) part of noise.  The phase-flip correction works analogously, except on the dual lattice. Both operations are then combined to run in parallel, resulting in the full quantum automaton.  

Another assumption is:
\begin{itemize}
    \item [(2)] \textit{The qubits are located on the links of a square lattice,  the bit-flip noise syndrome points live on the vertices of the lattice, and the face-flip noise syndrome points live on the faces. }
\end{itemize}
which we believe is inessential, and that there should exist a generalization of our automaton for any self-similar cellulation. Generalizing our construction to this case, as well as to more general cellulations and codes, is an interesting problem for future research. Finally, we make the following assumption:
\begin{itemize}
    \item [(3)] \textit{In all dimensions, our decoders are assumed to be run in discrete time, with all sites of the automaton updating in unison at each time step. }
\end{itemize}
This is an important assumption, and generalizing our construction to continuous time/local clocks will require ideas from G\'acs's automaton; we discuss this in the Outlook, Subsec.~\ref{sec:outlook}.  Nevertheless, we believe it should be possible to achieve this.

Our construction in two dimensions will be a hardwired circuit, similarly to Tsirelson's automaton in 1D. We then show how to obtain a time-translation invariant (but not spatially translation invariant) automaton by introducing one extra spatial dimension.

\subsubsection*{Hardwired rule: 2D toric code automaton}
 
Although the toric code cannot be obtained from code concatenation, there is still a simple way of understanding how the fault-tolerant simulation operates. Similarly to Tsirelson's automaton action on syndrome points, our automaton can be understood to dynamically coarse-grain the error configuration.  We assign superlattices in the square-lattice toric code that correspond to each level of the simulation hierarchy.  We define a level-$k$ superlattice to be a square lattice whose primitive cells have size $3^k \times 3^k$, and level-$k$ coarse-grained error syndromes to be syndromes restricted to live on vertices of a level-$k$ superlattice:
\begin{equation*} 
 \includegraphics[width=0.6 \textwidth]{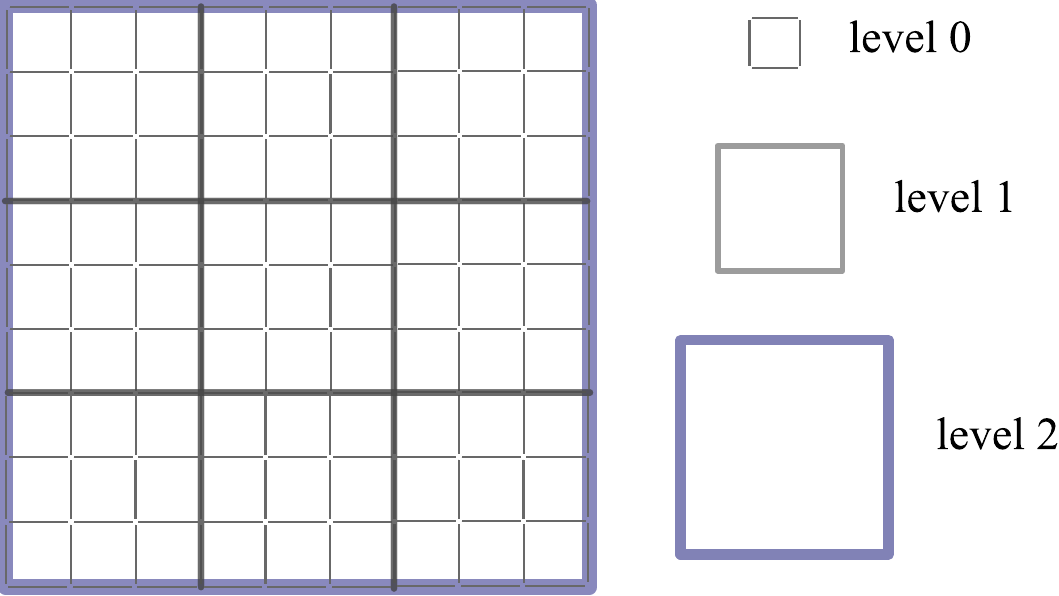}
 \end{equation*}
A suitable dynamics for coarse-graining in the toric code is one where errors continuously evolve to the closest coarse-grained configuration at each scale, while performing error correction at lower scales, as illustrated below. At the smallest $3 \times 3$ scale, we will have a generalized $\cR_0$ gadget whose output must be coarse-grained at level 1, i.e.:
\begin{equation} \label{eq:R0-intro}
 \includegraphics[width=0.9 \textwidth]{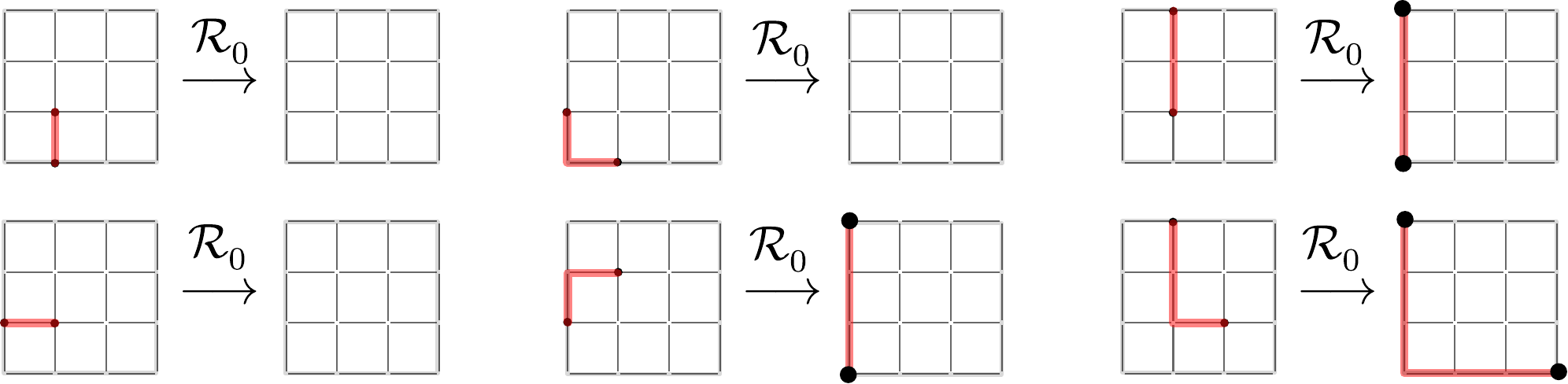}
 \end{equation}
where the black dots correspond to locations of vertex syndrome points (violation of vertex stabilizers of the toric code, of the $Z^{\otimes 4}$ type), and the edges highlighted red have been acted upon by $X$-type noise. Then, at all levels, we expect to observe the following behavior:
\begin{equation*} 
 \includegraphics[width=1 \textwidth]{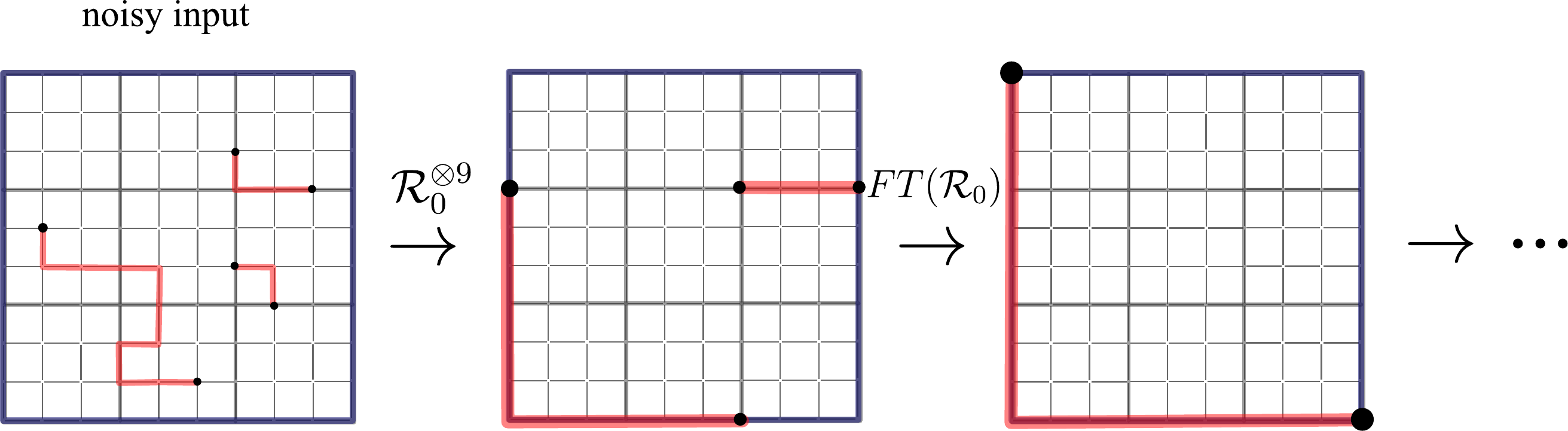}
 \end{equation*}
At the largest scale, with probability close to 1, an initial error pattern must eventually be cleaned up, assuming it is sampled from a $p$-bounded noise model for sufficiently small $p$.

Inspired by this intuition, we can now formulate a simple set of criteria according to which a candidate for a fault-tolerant toric code automaton should operate: 
\begin{enumerate}
    \item [(1)] We require an ``error correction'' gadget $\cR_0$ defined on a $3\times 3$ patch of cells and built from elementary gadgets that involve local measurement and feedback. This gadget has to be able to clean up small enough errors and otherwise output a configuration coarse-grained at level 1. 
    \item [(2)] To simulate this operation at larger scales, we must design fault-tolerant gadgets $FT(\cdot)$ for each elementary gadget, defined as circuits built out of elementary operations. Their action should be a scaled-up version of that of the respective elementary gadget. To control error spread, we also require them to satisfy additional properties.
    \item[(3)] The automaton is obtained by iterated fault-tolerant simulation of a given circuit of elementary operations $FT^n(\cC)$. 
\end{enumerate}
These are quite generic and are also required of any concatenated circuit-based construction (such as Tsirelson's automaton). 
The main elements of our construction (which satisfy (1-3) above and exhibit the desired coarse-graining dynamics) are below:
\begin{itemize}
    \item $\cI_0$: elementary identity gadget, acting on a given qubit.
    \item $\cT^{h/v}_0$:  horizontal/vertical syndrome ``splitting''. On a two-dimensional square lattice, we now have two versions of some types of gates for two possible orientations. If a nontrivial syndrome is present at vertex $v$, this gadget applies a Pauli-$X$ operator to two qubits horizontally/vertically adjacent to the vertex:
    \begin{equation*} 
    \includegraphics[width=0.4 \textwidth]{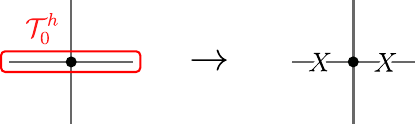}
    \end{equation*}
    \item $\cM^{h/v}_0$: horizontal/vertical ``matching'' operation, which applies $X$ to an edge if both endpoints contain nontrivial syndrome (thereby merging the two syndrome points on a horizontal/vertical link):
    \begin{equation*} 
    \includegraphics[width=0.4 \textwidth]{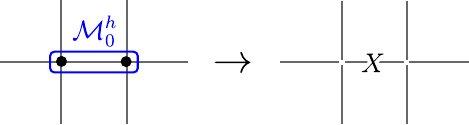}
    \end{equation*}
    \item $\cR_0$: error correction (``recovery'') gadget. We first define auxiliary vertical/horizontal gadgets $R_0^{v/h}$ pictorially as: 
     \begin{equation}
     \includegraphics[width = 0.75 \textwidth]{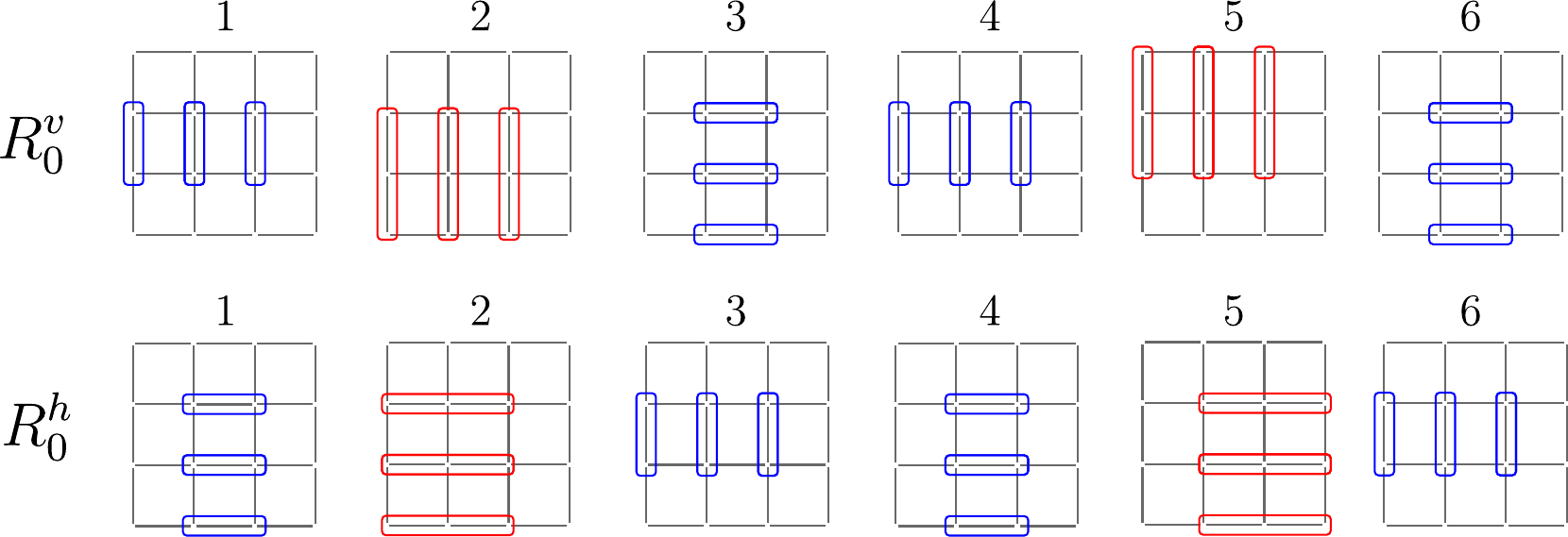}
     \end{equation}
     In the pictures above, the sequence of gates in $R_0^{v/h}$ acts in six steps displayed from left to right.  In terms of these gadgets, $R_0$ can be expressed as the following 48-step operation
    \begin{equation} 
        R_0 = (R_0^h \circ R_0^h \circ R_0^v \circ R_0^v)^2.
    \end{equation}
    This gadget has the action that we depicted in Eq.~\ref{eq:R0-intro}.
 \item We also construct the $FT(\cdot)$ version of each elementary gadget. This is done in a very similar way to the one-dimensional case. We do not summarize this here but direct the interested reader to Sec.~\ref{sec:toriccode}. 
 \item Our choice of gadgets has the following property: syndromes within and on boundaries of gadgets do not have any effect on syndromes outside the gadget. This property is imposed because in 2D, neighboring gadgets can interact on a one-dimensional boundary, and we would like to minimize the spread of correlations. 
\end{itemize}
Thus, we can consider a circuit $\cC$ which is expressed through elementary operations $\cI_0$, $\cT_0^{h/v}$ and $\cM_0^{h/v}$, and the error correction gadget $\cR_0$. We define fault-tolerant simulation $FT^k(\cC)$ similarly to Def.~\ref{def:FT_original-intro}. The toric code automaton is then a repeated application of circuit $FT^n(\cI_0)$ $T$ times, where $n = \log_3 L$, and the circuit operates on a $L \times L$ torus.

We prove fault tolerance of our construction in Sec.~\ref{sec:FT_proof}.  We find that to achieve error suppression from a $p$-bounded error model to an $O(p^2)$-bounded error model, we need to perform a constant number $k$ levels of fault-tolerant simulation. Every time we perform $k$ levels of simulation, we say that we implement a single step of an \emph{outer} simulation. We call the simulation that is performed within these $k$ levels the \emph{inner} simulation. To make proofs easier, we slightly modify how we perform the inner simulation in comparison to the outer one, which we explain in detail in Sec.~\ref{sec:toriccode}. However, the main idea explained above does not change.

For the proof of fault tolerance, we introduce a modification of the exRecs method, which is necessary due to the fact that neighboring gadgets have overlapping support and that the toric code cannot be viewed as a concatenated code\footnote{Colloquially speaking, our approach can be viewed as a hybrid between the exRecs method and G\'acs's sparsity-based method~\cite{gacs1983reliable,gacs_slides}.}.  
We then introduce the notion of gadget nilpotence (see Def.~\ref{def:tsirelson_nilpotence}), which determines how many levels $\ell$ of simulation one must go up for a single elementary gadget failure to not affect the operation of the simulated gadget.
With the help of an exhaustive computer-aided search (the full code is available at~\cite{code}), we find that 
gadget nilpotence must hold at $\ell = 6$, which is likely a loose upper bound. 
We then show that the error model changes from $p$-bounded noise to $O(p^2)$-bounded noise every time we go up some constant number $k$ levels up in the hierarchy, i.e. every time we perform $k$ iterations of the (inner) fault-tolerant simulation procedure, where the number $k$ is a constant related to the nilpotence level. For the toric code automaton, a generous analytical estimate is $k \leq 34$. 
However, in reality, a much smaller number of levels would suffice (our tentative numerical results suggest that for an i.i.d. Pauli $X$ and $Z$ qubit error model, one could have $k=2$, and accounting for measurement errors might lead to $k = 3$). The construction and proof in the paper are intended to be a proof of principle result, and optimizing the proof (and the performance of the automaton) is an interesting future direction. 

The main theorem that we prove can be summarized as follows:
\begin{theorem}[Informal, fault tolerance of toric code automaton]
Construct the iterated fault tolerant simulation of it  $FT^{\log_3 L }(\mathcal{\cI_0})$.  Assuming a $p$-bounded gadget error model,  there exist positive constants $A, \gamma$ such that the logical error rate is $(Ap^\gamma)^{L^{\left \lfloor(\log_3 2)/k\right \rfloor}}$-bounded where $k \leq 34$. 
 \end{theorem}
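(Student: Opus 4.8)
The plan is to establish the theorem by a recursive analysis that tracks how the effective noise model improves as one climbs the simulation hierarchy, combining the extended-rectangle (exRec) bookkeeping with the coarse-graining picture. First I would fix notation: let $p_0 = p$ be the rate of the $p$-bounded gadget error model at level $0$, and let $p_j$ denote the rate of the effective $p_j$-bounded error model seen by the level-$j$ coarse-grained lattice, where a ``level-$j$ fault'' is (by definition) the failure of a level-$j$ exRec, i.e. a cluster of elementary gadget failures dense enough to corrupt the output of a simulated level-$j$ gadget. The core estimate, which I would prove first as a lemma, is that $p_{j+k} \le A' p_j^{\gamma'}$ for absolute constants $A', \gamma' > 1$ and a fixed $k \le 34$; iterating this gives $p_{mk} \le (A p)^{\gamma^m}$ after $m$ outer simulation steps, for appropriate $A, \gamma$.

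The key steps, in order: (i) Prove the single-scale improvement lemma. Using gadget nilpotence at level $\ell = 6$ (Def.~\ref{def:tsirelson_nilpotence}), a single elementary gadget failure is invisible to any simulated gadget $\ell$ levels above it, so for a level-$(j+k)$ exRec to fail one needs a \emph{fault pattern} that survives $k$ rounds of simulation --- this forces the faults to form a hierarchically nested, spatially sparse set. I would count such patterns: the number of minimal ``bad'' configurations at level $j+k$ is bounded by a constant (times a geometric factor from the $3\times 3$ substitution and the overlap/boundary interactions controlled by the minimal communication rule (5)), and each such configuration requires at least two independent level-$j$ faults (this is where the $p^2$ suppression originates), so $p_{j+k} \le C p_j^2$. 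Choosing $k$ large enough that the combinatorial prefactor $C$ is absorbed (the $k \le 34$ bound) upgrades this to $p_{j+k} \le A' p_j^{\gamma'}$ with $\gamma' > 1$. (ii) Reduce the toric-code analysis to this lemma: invoke Sec.~\ref{sec:3-1} to replace arbitrary local stochastic noise by effective Pauli noise, invoke the CSS splitting to treat the $X$-part alone, and then note that on $k$-links of the level-$k$ superlattice the automaton reduces to the measurement-and-feedback Tsirelson automaton, whose fault tolerance is established in Sec.~\ref{sec:tsirelson-proof-2}; the extra ingredient beyond Tsirelson is the geometric coarse-graining argument that errors flow to the nearest level-$j$ coarse-grained configuration, which I would state as the ``$\EC_1$ cleans small errors'' property (criterion (1), Def.~\ref{def:R0}). (iii) Iterate: after $n = \log_3 L$ levels there are $m = \lfloor n/k \rfloor$ outer steps, the top scale is $L$, so a logical failure requires a level-$n$ fault, which by the iterated lemma has probability at most $(Ap)^{\gamma^m}$; since $\gamma^{m} \ge \gamma^{(\log_3 L)/k - 1} = L^{(\log_3 \gamma)/k}/\gamma$ and the relevant exponent works out to $\lfloor(\log_3 2)/k\rfloor$ (the $\log_3 2 = \beta$ from Theorem~\ref{thm:FT_TS} appearing because the hierarchical majority decoder has that fractal exponent), we get the claimed $(Ap^\gamma)^{L^{\lfloor(\log_3 2)/k\rfloor}}$-bounded effective error model. (iv) Finally, union-bound over the $T$ time steps and recombine the $X$ and $Z$ sectors to conclude the logical state is preserved.

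The main obstacle I expect is step (i), specifically the counting of hierarchically nested fault patterns in the presence of \emph{overlapping} gadgets. Unlike a genuine concatenated code, where exRecs are disjoint and the standard Aliferis--Gottesman--Preskill level-reduction is clean, here neighboring gadgets tiling the torus share boundaries and interact there, so a fault near a boundary can in principle propagate into two adjacent parent gadgets. The minimal communication rule (5) is precisely what tames this --- it guarantees boundary syndromes do not leak outward --- but turning that qualitative statement into a quantitative bound on the branching ratio of the fault-pattern tree (and hence into the constant $k$) is the delicate part, and is presumably why a computer-aided search is needed to certify nilpotence at $\ell = 6$ and why the analytic bound $k \le 34$ is loose. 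A secondary subtlety is making sure the ``effective Pauli noise'' reduction and the CSS decomposition do not themselves inflate the locality radius of the noise beyond what the hierarchical counting can absorb; I would handle this by folding any such constant-radius inflation into the definition of an elementary gadget failure at the outset, so that all subsequent estimates are insensitive to it.
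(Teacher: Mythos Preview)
Your high-level outline---reduce to the $X$-sector via CSS and the effective Pauli noise lemma, prove a single-scale suppression $p_{j+k}\le A'p_j^2$, and iterate---matches the paper's architecture. But step (i) as you describe it has a real gap, and the fix is not the ``branching-ratio of the fault-pattern tree'' counting you sketch.

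Nilpotence alone does not give you the single-scale lemma. Nilpotence (Def.~\ref{def:nilpotenceTC}) tells you that an \emph{isolated} elementary fault inside a level-$\ell$ gadget is erased by the trailing clean $\EC_\ell$, but it says nothing about what happens when a fault in $\cG_k^{(i)}$ and a fault in a \emph{neighboring} $\cG_k^{(j)}$ interact across their shared $k$-link. The minimal communication rule you invoke is only the qualitative confinement statement; to turn it into the needed quantitative bound the paper builds two additional layers of machinery you do not mention. First, it replaces the standard weight-$r$ filters of the exRec method by geometric $m$-\emph{cell} filters (Def.~\ref{def:mprime-TC}), because the toric code is not concatenated and ``distance from a codeword'' is not the right notion---one must instead control the \emph{spatial extent} of the damage operator within each $k$-cell and across shared boundaries. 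Second, proving the resulting $(\text{Gate A/B})_{k,m}$ and $(\EC\text{ A/B})_{k,m}$ conditions (Prop.~\ref{proposition:gate-ec-prime-TC}) requires, beyond nilpotence, the confinement and linearity framework of Sec.~\ref{sec:confinement} together with the observation that the automaton \emph{restricted to a boundary $k$-link} is itself a (slightly modified) Tsirelson automaton (Lemma~\ref{lemma:Tsirelson-at-boundaries}). It is this boundary-Tsirelson analysis that bounds how far damage deposited on a shared link by two non-isolated faults can spread before being cleaned; this is where both the $\ell_T$ contribution to $m=\ell+2\ell_T+5$ and the choice $k=2m=34$ actually come from, not from any tree-counting argument.

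Two smaller points. The ideal decoder for the toric code involves a \emph{global} pushforward map $\Phi^k$ (Def.~\ref{def:pushforward}), so exRecs cannot be pulled through one codeblock at a time; the paper therefore formulates Gate B and $\EC$ B layer-wise and proves a layer-wise version of ``good $\Rightarrow$ correct'' (Prop.~\ref{prop:goodcorrect_TC}, Thm.~\ref{thm:TCerrsupp})---your sketch does not account for this. And your derivation of the exponent $\log_3 2$ is off: it is not inherited from the fractal exponent of the hierarchical majority decoder, but is simply the identity $2^n=2^{(\log_3 L)/k}=L^{(\log_3 2)/k}$ once one knows the error rate squares at each of the $n=(\log_3 L)/k$ outer levels.
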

\noindent From this suppression, we then show that the logical failure probability remains suppressed on the order of $\exp(-\alpha L^\beta)$ for some positive constants $\alpha, \beta$ for an exponential amount of time in $L^\beta$ so long as the noise rate is below $O(A^{-1/\gamma})$. We remark that during this time, the state evolved under our automaton has the following unusual property: if it is fault-tolerantly terminated by destructive on-site Pauli measurements, the logical state can be correctly recovered from these measurements \emph{without} needing the spacetime history of past measurements.  This is strikingly different from logical state readout with the usual methods (which, in addition, would require storing a spacetime measurement record for time proportional to the distance of the code). Thus, the logical state remains \emph{efficiently recoverable} even after an exponentially long time. In addition, we show that a toric code logical state can be fault-tolerantly initialized in $\mathrm{poly}(L)$ time.

Numerical simulations of the noisy operation of our construction are shown in Sec.~\ref{sec:num-sim-and-conj}; these results are tentative because the spacetime volume of the circuits grows exponentially with the level of the hierarchy, and we did not optimize our construction for performance. Nevertheless, already after few iterations of the simulation, one can see the suppression of the effective logical error rate as one increases the hierarchy level.   

 \subsubsection*{Time-translation invariant construction in 3D}

Finally, in Sec.~\ref{sec:3D} we make our construction \emph{time-translation invariant} by going to three spatial dimensions.
For this, we consider $2T$ layers, each hosting a different toric code state, stacked in three dimensions with periodic boundary conditions. The even-numbered layers (with index $2t$) are equipped with idle gadgets that act at each timestep, while the odd-numbered layers (with index $2t+1$) apply the gadget action of a 2D toric code automaton corresponding to timestep $t$. After applying either the gadget or an idle, we then transversally swap the pair of layers $2t$ and $2t+1$ (and transversally swap $2t+1$ and $2t+2$ during the next timestep). This is shown in Fig.~\ref{fig:3dfig}. The result is that the $O(L)$  toric code states move through the third dimension while effectively experiencing the action of the 2D toric code automaton, but with a time-independent operation applied at each site. The reason this construction is fault-tolerant is that the additional SWAP gates are transversal, and a failure of such a gate, followed by measurements of stabilizers of each toric code, can be absorbed into a $p$-bounded gadget error model.  This construction is capable of encoding $O(T)$ logical qubits, which matches the encoding rate of the 2D toric code.

\subsection{Comparison with earlier literature} \label{sec:comparison}

We now provide a brief comparison between the decoder constructed in this work and the earlier literature. To this end, we first summarize below the most relevant aspects of our construction:  
\begin{enumerate}
	\item The results of stabilizer measurements are used immediately and locally to perform feedback, and are then discarded, i.e. measurement results are communicated only an $O(1)$ distance in both space and time. 
    \item In 2D, our decoder is ``hardwired''. This means that each site of the automaton has access to a reliable read-only classical memory (``instruction set'') on each site with $O(\log L)$ bits.
    In 3D, no such hardwiring is required. 
    \item In all dimensions, our decoders are assumed to be run in discrete time, with all sites of the automaton updating in unison at each time step. 
\end{enumerate}
For the 2D toric code, we emphasize that the $O(\log L)$ bits per site needed to store the instruction set for the gates can be computed in advance once and for all: no new computation needs to be done during the working of automaton, which after initialization proceeds in an entirely parallelized way (as any truly scalable decoding scheme must). \hi{Our first example is what is achievable using concatenated circuits:}

\begin{itemize}
    \item \hi{The approach of concatenated codes by Aharonov \emph{et al}.}~\cite{aharonov1996faulttolerantquantumcomputation,Aharonov1999Jun,Gottesman_2000} \hi{implies the existence a ``hardwired'' fault-tolerant quantum computer already in 1D. By using the trick explained at the end of the previous subsection, this gives a time- (but not space-) translation-invariant scheme in 2D.} 
\end{itemize}

To the best of our knowledge, all existing approaches in the literature \hi{that use topological codes} require either classical computation that uses an additional $O(\log L)$ bits per qubit, or makes use of classical feedback acting non-locally in time over a time scale at least as long as $O(\log L)$.\footnote{The ubiquity of $O(\log L)$ factors in these constructions usually stems from the fact that in a $p$-bounded noise model, the largest size errors one needs to correct in order to ensure a threshold are of size $O(\log L)$.}  We now briefly review earlier work on decoders for topological memories, highlighting the most relevant proposals, and focusing on codes in three or fewer spatial dimensions. 

\begin{itemize}
    \item In Ref.~\cite{kubica2019cellular}, a local cellular automaton decoder was proposed for the 3D toric code only for the membrane (flux) sector of the code and not the charge sector. Thus, this example does not constitute a decoder for a full quantum memory.
\end{itemize}

\noindent In the 2D toric code (or in the charge sector of the 3D toric code), the standard algorithm used for decoding is minimum weight perfect matching (MWPM). This achieves a threshold quite close to optimal~\cite{Dennis_2002}, but requires nonlocal information about all syndromes as input, and has a worst-case time complexity scaling polynomially in $L$, since it is not fully parallelizable.  Decoding schemes that attempt to parallelize MWPM~\cite{skoric2023parallel} are more local from a practical standpoint but still require classical communication that leads to a $\text{poly}(L)$ slowdown.

Several other constructions combine renormalization group ideas and matching~\cite{Harrington2004,duclos2010fast,bravyi2011analytic}, and are more relevant for the purposes of comparison with our construction:

\begin{itemize}
    \item Of particular note is the decoder introduced by Harrington in Ref.~\cite{Harrington2004}, which was also analyzed in detail in Ref.~\cite{breuckmann2016local}.  This is a renormalization group-based decoder based on Gray's simplified model of the G\'acs automaton~\cite{Gray2001Apr} which operates assuming a constant speed of classical communication, but requires an unbounded density of classical bits per qubit and assumes \emph{noiseless} classical computation. In particular the construction utilizes $O(\log L)$ classical bits per qubit that are used to make decisions and employ feedback over a time window of $O(\log L)$\footnote{The fact that Harrington's decoder requires access to a read/write memory---as opposed to our read-only memory---also implies that unlike our model, Harrington's does not appear to admit a straightforward time-translation-invariant extension in three dimensions (for reasons discussed in Sec.~\ref{sec:3D}). }. Harrington's decoder is also no longer fault tolerant if the values stored in these classical bits experience errors. It was speculated that some of these issues could be overcome by employing more sophisticated ideas from G\'acs's construction, but no explicit proposal was put forward. 
    Furthermore, the argument in Ref.~\cite{Harrington2004} does not appear to rigorously prove fault tolerance in the presence of transient faults (a proof of a simpler version which makes the additional assumption of instantaneous classical communication was given in Ref.~\cite{Dauphinais_2017}). Numerical studies appear to indicate a threshold~\cite{Harrington2004,breuckmann2016local} assuming the classical part of the decoder is noiseless.
    
    \item Other decoders based on renormalization group ideas, such as those of Refs.~\cite{duclos2010fast,bravyi2011analytic}, operate non-locally in space on a scale of size $O(\log L)$. 

    \item  A rather different line of approach is found in ``field-based decoders''. These decoders operate according to a dynamics that uses a local field updated according to a Poisson equation which, when the field is updated quickly enough, causes long-range attractive interactions between syndromes, thus preventing them from moving far apart and causing a logical error. This tactic was first suggested in Ref.~\cite{Dennis_2002} and was studied numerically in \cite{herold2015cellular}. In this construction, a 2D (or 3D) cellular automaton is used to induce the attractive interaction by mimicking a discretized ``gravitational'' field. In order for the interaction to be long-ranged enough to produce sufficient suppression of logical errors, the processing speed of this automaton was shown to necessarily diverge polylogarithmically in $L$. 
\end{itemize}

We note that we do not know how to modify any of these constructions to be fault tolerant, local, and \emph{time-translation invariant} even if we allow to go up by one dimension.  This feature further distinguishes our construction from the existing ones.

\subsection{Outlook and discussion} \label{sec:outlook}

Our work opens up many new avenues for local decoders in quantum codes and for studying the fundamental interplay between fault tolerance and locality in quantum systems.  Below, we highlight several classes of interesting questions.

\begin{enumerate}
\item [(1)] \emph{Further generalizations of the toric code automaton.}
\end{enumerate}
A simple starting point would be to generalize our construction to the surface code on manifolds with different cellulations (for example, the triangular lattice) and to open boundary conditions. 

It also appears that there can be generalizations to subsystem topological codes~\cite{bravyi2013subsystemsurfacecodesthreequbit} as well as Floquet codes~\cite{hastings2021dynamically}, which would be interesting as the local decoder would not be simply measuring commuting checks of a stabilizer code; likely the ideas in Ref.~\cite{bauer2024topological} could help construct these generalizations. 
Another generalization would be to other Abelian topological codes, and more interestingly, to non-Abelian topological codes.  In the latter case, a good starting point would be non-Abelian topological models based on solvable groups, whereby repeatedly fusing a collection of anyons allows one to eventually fuse to the vacuum. A local decoder (even a hardwired one) for such a non-Abelian topological code would have the same advantage over the Harrington decoder-inspired construction in~\cite{Dauphinais_2017} as our construction has to Harrington's decoder.

An intriguing possibility would be a local decoder (perhaps even a translation invariant one) for Haah's cubic code~\cite{Haah_2011}, where logical operators have a fractal-like structure.   Additionally, it might be possible to design a simpler rule for the 2D toric code on a hyperbolic manifold (a very simple rule for the 4D hyperbolic toric code was presented in Ref.~\cite{hastings2013decoding}).

\begin{enumerate}
\item  [(2)] \emph{An automaton for practical local error correction.} 
\end{enumerate} 
It remains to estimate the threshold of our automaton under various types of realistic error models. Under the i.i.d. on-site Pauli noise of strength $p$, the threshold noise strength $p_c$ appears to be rather small, with the numerics of Sec.~\ref{ss:tc_numerics} suggesting\footnote{The fact that our automaton is naturally defined only on systems of linear sizes that are powers of 3 makes it difficult to pinpoint $p_c$ using only the modest system sizes considered in Sec.~\ref{ss:tc_numerics}.} $p_c \lesssim 10^{-4}$. 

Our construction was designed for the purpose of proving fault tolerance, not necessarily to optimize $p_c$. Thus, in the future it will be important to explore how large one can make $p_c$.  In addition, there is a big space of possible schemes that should have similar properties to ours but could have much better performance, especially at smaller scales. For example, one could construct error correction procedures on a slightly larger unit cell which could provide larger suppression of the error rate and a simulation scheme that achieves error suppression for every level of the hierarchy (rather than every $k$ levels). In addition, it is of practical benefit to find constructions that work on system sizes that are not powers of a given integer as well as to adapt the construction to different lattices and architectures. 

\begin{enumerate}
\item  [(3)] \emph{A quantum version of G\'acs's automaton.} 
\end{enumerate}
The construction of a local quantum automaton generalizing G\'acs's classical automaton is an open problem. Any fault-tolerant local quantum automaton in three dimensions or less that possesses even some of the following features on top of our construction would constitute an important advance toward solving this problem: (i) translation invariance in both space and time; (ii) constant encoding rate;  (iii) operating in continuous time; (iv) a local model for universal quantum computation. Each of these points is an important open challenge.

We are optimistic about (ii) being readily achievable if spacetime translation invariance is allowed to be violated.  Some progress towards (ii) and (iv) was made in Ref.~\cite{yamasaki2024time} using code concatenation (a local and constant rate reliable classical memory is required for the operation of this proposal but one was not constructed).

Our construction might possibly be extended to realize (i) and (iv) in the future. The spacetime structure of the rules of our automaton resembles self-similar tiling which could be possible to generate in a local and translation-invariant way~\cite{tsirelsonunpublished}. However, the bits needed to generate this structure must also be reliable, which poses a significant challenge.

To make progress towards (iv), a universal local quantum computer, one can use the 3D version of our automaton, which can be modified to allow for the implementation of constant-depth local unitary gates. Achieving the real part of Clifford gates this way is relatively simple; for complex Clifford gates, one would need to modify our construction to work in the presence of boundaries. For a non-Clifford gate, which would complete a universal set of gates, the most natural approach would be to use a two-dimensional spacetime protocol~\cite{Brown2020universal,upcomingDavydova} implementing a logical non-Clifford gate on copies of toric codes. It may be possible to achieve local error correction in these protocols by constructing generalizations to $D_4$ topological order in the presence of domain walls and boundaries. This would be an interesting future direction.

As for (iii), i.e. a continuous time version of our construction, it could be possible to use Toom's rule (which itself does not rely on synchronicity) to synchronize neighboring clocks in 2D and thus fight possible errors that arise due to asynchrony~\cite{10.1109/SFCS.1991.185379}.  In addition, some interesting attempts that were not based on hierarchical methods were presented in 3D in Ref.~\cite{cook2008self} which used some older techniques in Ref.~\cite{berman1988investigations}.  

\begin{enumerate}
\item  [(4)] \emph{Constraints on local and scalable fault tolerance.}  
\end{enumerate}
One interesting question would be whether there are fundamental bounds we can place on quantum systems that are locally fault tolerant.  For example, the quantum extension of G\'acs construction would yield a system undergoing local dynamics
which encodes a constant rate of quantum information fault tolerantly. This would get around the Bravyi-Poulin-Terhal (BPT) bound~\cite{Bravyi_2010} for local stabilizer models by virtue of the stabilizers in such a construction {\it not} being local. The construction in Ref.~\cite{yamasaki2024time}, if implemented locally, would overcome this bound by the same argument. Thus, we know that spatial locality likely does not place any fundamental limit on the encoding rate. However, this still remains to be rigorously shown. 

It is also an interesting question whether all low-dimensional and local schemes of fault tolerance necessarily require a hierarchical construction.  One early proposal for a non-hierarchical classical one-dimensional memory was the G\'acs-Kurdyumov-Levin automaton~\cite{gacs1978one}, but this was shown to lack a threshold in the presence of generic noise~\cite{park1997ergodicity}.  Finally, if we want to develop a physical model for local fault tolerance, we would want such a system to be governed by some kind of universal Linbladian which robustly encodes an input program and fault tolerantly performs a computation.  Understanding how to move towards such a result would (we believe) constitute an important advance in the theory of quantum fault tolerance.

\begin{enumerate}
\item  [(5)] \emph{Consequences for non-equilibrium physics.} 
\end{enumerate}
Our results also have interesting implications for quantum many-body physics, where geometric locality is a natural physical constraint.  Our construction provides an example of a completely stable open quantum system below four dimensions, and we are currently unaware of any other examples that are stable to {\it arbitrary} (but weak) noise.
It will be interesting to see how this example fits into a more general classification of stable phases in open quantum systems, the definition and understanding of which is still in its infancy (see e.g.~\cite{rakovszky2023defining}). 

An important comment is that the dynamics implemented by our circuit is fundamentally non-equilibrium in character, in that it does not satisfy detailed balance, and does not have a thermal steady state. This is an example of how (both classical and quantum) systems that break detailed balance can exhibit ordered phases in lower dimensions than systems undergoing thermalizing dynamics, which obey detailed balance.  Classically, we are aware of an example obeying detailed balance which robustly stabilizes a single bit of information in 3D (the classical Ising gauge theory, see Ref.~\cite{PhysRevB.99.094103}) which can be decoded using ideas in Ref.~\cite{serna2024worldsheet}.  It would be interesting to prove a no-go theorem in lower dimensions or to understand how exactly detailed balance must be broken to lead to ordered phases.  In the quantum case, it would interesting if one could construct a self-correcting quantum memory below four dimensions, either by constructing a 3D topological stabilizer code, or by studying random Hamiltonians with non-commuting terms.

From the perspective of non-equilibrium statistical mechanics, one interesting question is the nature of the phase transition that occurs in Tsirelson's automaton when the noise strength is tuned across the threshold value $p_c$. Since there is no ordering in 1D (locally-interacting) models of equilibrium statistical mechanics, such a transition in the non-equilibrium setting is rather exotic.  With symmetric noise, this transition is associated with the spontaneous breaking of the $\mathbb{Z}_2$ spin-flip symmetry in the system's non-equilibrium steady state. Numerics finds this transition to be continuous, likely corresponding to a new critical point that will be explored in future work.

\section{Tsirelson's automaton and proof of its fault tolerance}\label{sec:tsirelson}


In this section, we review the construction of a fault-tolerant one-dimensional automaton by Tsirelson~\cite{Cirel'son2006Aug} that will inspire our toric code decoder. We then modify Tsirelson's original construction slightly, which allows us to provide a fault tolerance proof using the extended rectangles method introduced by Aliferis, Gottesman, and Preskill~\cite{Aliferis2005Apr, gottesman2009introductionquantumerrorcorrection}. This proof is much simpler than the original approach taken by Tsirelson and provides a framework for the rest of the paper. The local automaton that Tsirelson proposed operates on a one-dimensional line of interacting bits and can reliably store a single logical bit for an exponential amount of time in a power $\alpha > 0$ of the system size. Though the construction is rather simple, it lacks translation-invariance in both space and time, and operates in discrete time.

We now introduce a number of basic definitions which will be used in the remainder of this work. 
In this section, we will let $\cL$ denote a finite 1D lattice of bits, with one bit living at each site of $\cL$. 
\begin{definition}[State]
A classical state is a configuration of bit values $s \in \{0,1\}^{|\cL|}$, where $|\cL|$ is the total number of bits.   We will write $s(t) = (s_1 (t),s_2(t),...,s_{|\cL|}(t) )$, where each $s_i(t) \in \{0,1\}$.  A $t$-history of classical states $H_{t,s(0)}$ is a set of states $\{s(0), s(1), \cdots, s(t)\}$ up to time $t$.
\end{definition}

\begin{definition}[Classical noise]\label{def:classnoise}
A classical noise model is given by the conditional PDF $p(\cdot, t | H_{t-1, s(0)})$, which is a distribution over spatial locations $\bm{x}$ at fixed time $t$ conditioned on the $t$-history of states $H_{t-1,s(0)} = (s(0), s(1), s(2), \cdots, s(t-1))$.  The noise model operates by \hi{assuming that the state $s(t)$ is noisy at locations $\bm{x}_1, \bm{x}_2, \cdots , \bm{x}_n$ with probability distribution} $p(\bm{x}_1, \bm{x}_2, \cdots , \bm{x}_n, t | H_{t-1, s(0)})$.

An i.i.d. noise model is one where $p(\cdot, t | H_{t-1,s(0)}) = p(\cdot, t)$ and \hi{$p(\bm{x}_1, \bm{x}_2, \cdots ,\bm{x}_n, t) =\prod_{i=1}^n q_i$, where $q_i$ is the probability of flipping the bit at $\bm{x}_i$}.
\end{definition}

Throughout the paper, we will refer to the set of spacetime points where the noise occurs as $E$, and will say that $E$ constitutes a particular ``noise realization''.

We will use the word ``gadget'' to refer to the operation of some rule with noise. 
\begin{definition}[Gadget]
A classical gadget over a spatial region $\cR \subset \cL$  is a
map $G : \{0,1\}^{|\cR|}\times E \to \{0,1\}^{|\cR|}$,  which locally rewrites the values of bits in region $\cR$ according to a predetermined rule, subject to noise flipping bit values in $E$. The region $\cR$ will be referred to as the spatial support of gadget $G$.
\end{definition}

\begin{definition}[Composition of gadgets] \label{def:composition}
    Gadgets can be composed one after another in a sequence.  The gadget $G_2 \circ G_1: \{0,1\}^{|\cR|}\times E \to \{0,1\}^{|\cR|}$ is a map, where $\cR$ is the union of the spatial supports of $G_1$ and $G_2$, such that $G_1$ is applied to the input, and is then followed by $G_2$.  Note that the ordering of a sequence of gadgets in all expressions will always be read from right to left (in figures, however, time will run either bottom to top or left to right).  
    
    If two gadgets $G_1$ and $G_2$ occur at the same time and their actions on an input can be parallelized, then we will write their combined action as $G_1 \otimes G_2$.
\end{definition}

\begin{definition} [Automaton]
    A classical automaton $\mathcal{A}_C$ is a circuit of (classical) gadgets, each having spatial support of size $O(1)$, where the union of spatial supports of the gadgets cover the lattice $\cL$.  The gadgets are generally assumed to be noisy.  As a depth-$t$ circuit of gadgets, a classical automaton (subject to a particular noise realization) maps an input classical configuration $s(0)$ to an output configuration $s(t)$.

\end{definition}

\begin{definition} [Damage]
    Consider a classical automaton $\cA_C$ of depth $t$ with noise realization $E$ and another automaton $\widetilde \cA_C$ that is a noiseless version of $\cA_C$. For a given input $s(0)$ and time $0 < t_0 \leq t$, let the states obtained by running these automata up to time $t_0$ be $s(t_0)$ and $\widetilde s(t_0)$, respectively. The damage at time $t_0$ with respect to noise realization $E$ is defined to be the set $\{\bm{x} | \ s(\bm{x}, t_0) \oplus \widetilde s(\bm x, t_0) = 1 \}$.
\end{definition}

We will refer to an automaton as \emph{hardwired} if it applies gadgets in a non time-translation invariant way:
\begin{definition}[Hardwiring]
We say that an automaton $\mathcal{A}$ is {\it hardwired} if the elementary gadgets $G \in \mathcal{A}$ of the automaton are applied in a time-dependent manner. The instruction set specifying this time dependence is always assumed to not be subject to noise. 
\end{definition}

 In this paper, we will work with $p$-bounded noise models (also called local stochastic noise models), which are more general than i.i.d. noise models: 
\begin{definition} [$p$-bounded classical noise] \label{def:p-bounded-classical}
    For an automaton of depth $t$, let a classical noise realization $E$ be a random subset of the set of spacetime coordinates $\cL\times [0,t]$ \hi{(since time is discrete, it is understood that $[0,t] \equiv [0,t] \cap \mathbb{Z}$)}. The noise model is said to be $p$-bounded if for any $A \subseteq \mathcal L$, we have
\be
\mathbb P [ A \subseteq E ] \leq p^{|A|}.
\ee
\end{definition}
In the paper, we will be using several $p$-bounded noise models. One of them will act on individual (qu)bits, in which case $\cL$ will be the spacetime lattice of (qu)bits, and the other one will be a gadget noise model, where each gadget can experience an arbitrary error, in which case $\cL$ will denote the set of spacetime coordinates of all gadgets. The use of $p$-bounded noise (rather than i.i.d. noise) is essential because the operation of the automata introduce noise correlations that cannot be captured with an i.i.d. noise model. 

\subsection{Tsirelson's automaton} \label{ss:tsirelsonOG}

A non-hardwired fault tolerant 1D classical automaton was found by G\'acs in the seminal work~\cite{gacs1983reliable, Gacs_2001}, which refuted the `positive rates' conjecture in statistical mechanics and dynamical systems~\cite{Liggett2005}. While G\'acs' automaton is extremely complicated, the task of fault tolerantly storing a single bit in 1D turns out to be significantly simplified if one allows for hardwiring. This was first noticed by Tsirelson \cite{Cirel'son2006Aug}, 
who constructed a hardwired automaton defined by a set of `substitution' rules. His construction is designed to be hierarchically self-simulating, in a way we will make precise below. 

\begin{definition}[Elementary gadgets]\label{def:tsirelsonelem}
The elementary gadgets of Tsirelson's automaton consist of an `idle' gadget $X_0$, a `swap' gadget $Y_0$, and an `error correction' gadget $Z_0$. In the absence of noise, their actions are defined as follows: 
\begin{equation}
     X_0(s_1) = s_1, \hspace{0.75cm}  Y_0(s_1, s_2) = (s_2, s_1)
\end{equation}
and
\begin{equation}
Z_0(s_1, s_2, s_3) = (\maj(s_1, s_2, s_3), \maj(s_1, s_2, s_3), \maj(s_1, s_2, s_3)),
\end{equation}
and are pictorially denoted below:
\begin{equation} \label{ts:bare}
 \includegraphics[scale=0.27]{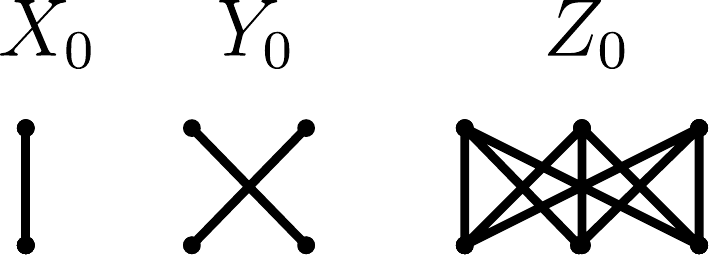}
 \end{equation}
\end{definition}
Thus $X_0$ acts as the identity, $Y_0$ swaps its inputs, and $Z_0$ performs a majority vote of its inputs and broadcasts three copies of the result. 

Tsirelson then defined \emph{higher-level} gadgets $(X_n, Y_n, Z_n)$, which are built from elementary gadgets and behave the same way as the elementary gadgets but at a larger `scale'.  The higher-level gadgets are defined recursively through the following substitution rules:
\begin{definition}[Simulated gadgets]\label{def:tsirelsonaut}
The simulated gadgets of Tsirelson's automaton at level $n$, denoted by $X_n$, $Y_n$, and $Z_n$, are defined through $X_{n-1}$, $Y_{n-1}$, and $Z_{n-1}$ via the substitution rules (where time progresses from bottom to top) 
\begin{equation}
\includegraphics[width = 0.65\textwidth]{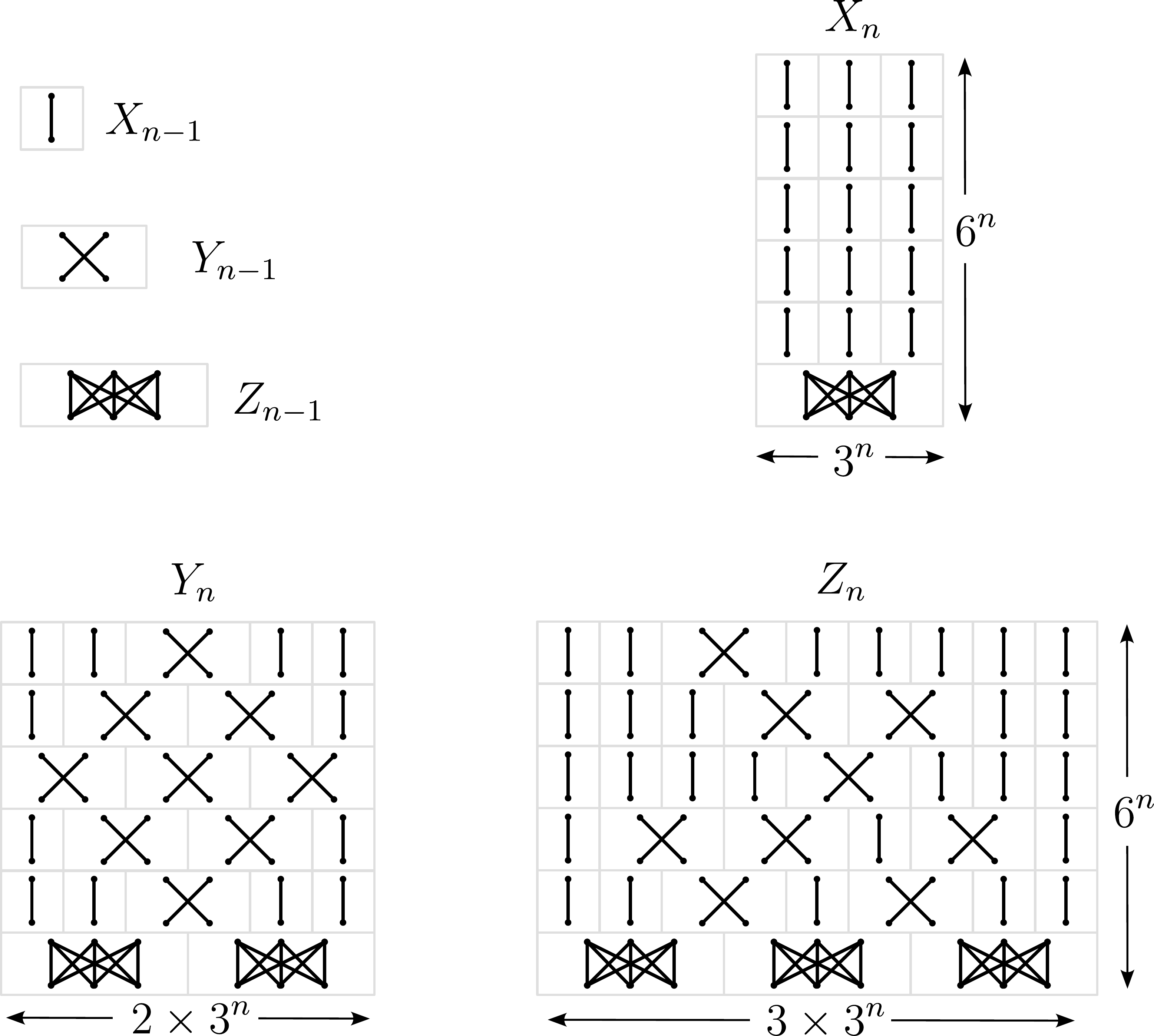}
\end{equation}
These substitution rules uniquely determine $(X_n, Y_n, Z_n)$ in terms of the elementary gadgets $(X_0,Y_0,Z_0)$.  We will call $n$ the \emph{level} of the respective gadget.
\end{definition}
The pattern of $Y_{n-1}$ swaps appearing in the substitution rule for $Z_n$ is designed to redistribute the bits so that the majority vote at higher scales can be implemented in a local way

 Finally, we can define Tsirelson's automaton:
\begin{definition}[Tsirelson's automaton]
Given a 1D line of $L$ bits, denoting $n = \log_3 L$ (or $n = \log_3 2 L$, $n = \log_3 L + 1$, respectively), Tsirelson's automaton for $X$, $Y$ or $Z$ gadget consists of applying the simulated gadget $X_n$, $T_n$ or $Z_n$ repeatedly in time.
\end{definition}
The depth of the circuit implementing Tsirelson's automaton when $X_n$ is repeated $T$ times is  $L^{\log 6/\log 3} \cdot T$. It is the same for the automaton corresponding to $Y_n$ or $Z_n$. The bit that Tsirelson's automaton stores is encoded using the recursive majority function (which is the function computed by a noiseless $Z_n$ gadget).

\begin{definition}[Decoder for Tsirelson's automaton]
The decoder for Tsirelson's automaton is a noiseless circuit $D_n : \{0,1\}^{3^n} \rightarrow \{0,1\}$ which performs recursive majority vote on bit strings of length $3^n$. Explicitly, $D_n$ acts on bit strings via 
\begin{equation} \label{eq:idealdecoderTS}
\begin{split}
    &D_1(s_1,s_2,s_3) = \maj(s_1,s_2,s_3) \\
    &D_k(s_1,...,s_{3^k}) = D_1(D_{k-1}(s_1,...,s_{3^{k-1}}),D_{k-1}(s_{3^{k-1}+1},...,s_{2\times3^{k-1}}),D_{k-1}(s_{2\times3^{k-1}+1}, ...,s_{3^k})) 
    \end{split}
\end{equation}
\end{definition}

Assuming a local $p$-bounded error model where the noise is supported on bits (this noise model \hi{is defined in Prop.}~\ref{prop:wire_error_model}), Tsirelson~\cite{Cirel'son2006Aug} proved a stronger version of the following statement:

\begin{figure}[!htbp]
    \centering
    \includegraphics[width=0.65\linewidth]{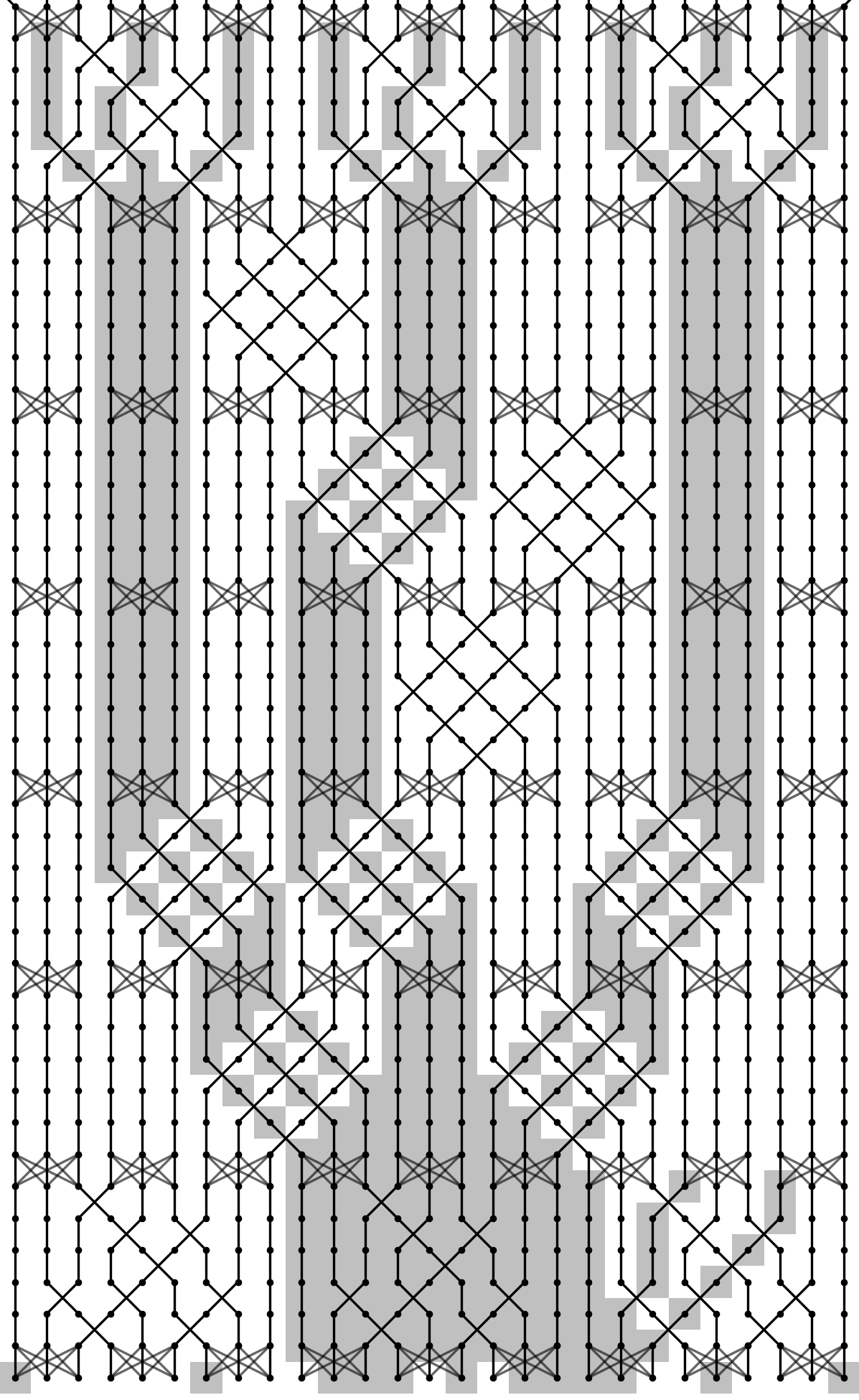}
    \caption{The evolution of an input state with randomly chosen damage under a noise-free level-2 error correction gadget $Z_2$. As in other figures, time goes from bottom to top. White squares represent $0$ and and gray squares represent $1$.}
    \label{fig:clean_operation}
\end{figure}

\begin{theorem}[Tsirelson~\cite{Cirel'son2006Aug}]\label{thm:FT_TS}
Denote $s_i(t)$ to be the value of the $i$th bit after $T$ repetitions of Tsirelson's automaton applied to the initial state $s_i(0) = 1 \, \forall \, i$. Then under a $p$-bounded noise model, provided $p$ is smaller than some value $p_c$  
\begin{equation}
\mathbb{P}\left(D(s(T)) = 1\right) \geq 1 - T \cdot \exp\left(-\alpha L^{\beta}\right)
\end{equation}
for positive constants $\alpha$ and $\beta$.
\end{theorem}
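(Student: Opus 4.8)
The plan is to prove fault tolerance by a recursive (renormalization-group) argument that controls the \emph{damage} at successive levels of the self-simulation hierarchy. The key structural fact is that Tsirelson's automaton is built by iterated substitution: a level-$n$ gadget is a fixed pattern of level-$(n-1)$ gadgets, which bottoms out in the elementary gadgets $X_0,Y_0,Z_0$. I would fix a \emph{simulation rectangle} for each gadget (its spacetime support as a circuit of elementary gadgets) and introduce, for each level $k$, a notion of a ``level-$k$ fault'': a $k$-block (a block of $3^k$ consecutive bits, aligned to the natural $3^k$-grid) that carries damage at a time when the level-$k$ simulated gadget acting there has completed. The first step is a deterministic \emph{correctness-under-sparsity} lemma: if within the spacetime support of a level-$k$ gadget there is at most one level-$(k-1)$ fault (suitably isolated), then the level-$k$ gadget still simulates the ideal level-$0$ gadget correctly and produces no level-$k$ fault. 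This is the analogue of the extended-rectangle ``goodness'' statement; for $Z$ it is exactly the divide-and-conquer cleanup picture of Fig.~\ref{fig:clean_operation} (fragment a domain into three, isolate, recurse), and the $Y_{n-1}$-swap pattern in Def.~\ref{def:tsirelsonaut} is precisely what guarantees the fragments become isolated.

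Second, I would set up the probabilistic recursion. Let $p_k$ be a bound on the probability that any \emph{fixed} level-$k$ block is a level-$k$ fault, in the $p$-bounded sense (i.e., the probability that any given collection of $j$ level-$k$ faults all occur is at most $p_k^{\,j}$; this ``local stochastic at every level'' form is what lets the induction close, and it is the main technical bookkeeping). By the sparsity lemma, a level-$k$ block can become a fault only if at least two level-$(k-1)$ faults (or one level-$(k-1)$ fault together with fresh elementary noise, or two patches of fresh elementary noise) occur within the $O(1)$-sized — in units of level-$(k-1)$ gadgets — support of the relevant level-$k$ gadget. A union bound over the $O(1)$ ways to place two such bad events gives a recursion of the shape $p_k \le C\,(p_{k-1}^2 + p_{k-1}\,p + p^2) \le C' \max(p_{k-1},p)^2$ for an absolute constant $C'$. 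Hence if $p < p_c := 1/(2C')$ and we start from $p_0 = O(p)$, then $p_k \le p_c\,(p/p_c)^{2^k}$, i.e. the per-block fault probability is doubly-exponentially small in $k$.

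Third, I would translate this into the theorem. The automaton on $L=3^n$ bits runs the top-level gadget $X_n$ repeatedly $T$ times; the decoder $D = D_n$ of Eq.~\eqref{eq:idealdecoderTS} reads off the recursive majority. Starting from the all-ones state, if no level-$n$ fault occurs during the $T$ applications then $D(s(T)) = 1$: indeed the ideal level-$n$ gadget preserves the encoded bit, and absence of a level-$n$ fault means the noisy run agrees with the ideal run on the decoded value. A union bound over the $T$ time steps gives $\mathbb{P}(D(s(T)) = 1) \ge 1 - T\,p_n \le 1 - T\,p_c\,(p/p_c)^{2^n} = 1 - T\exp(-\alpha L^\beta)$ with $\beta = \log_3 2$ (so that $2^n = L^{\log_3 2}$) and $\alpha = \log(p_c/p)$, which is positive for $p < p_c$.

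\textbf{Main obstacle.} The hard part is the deterministic sparsity lemma and, relatedly, arranging the definitions so that the probabilistic recursion is genuinely closed. One must (i) choose the level-$k$ fault notion so that ``at most one isolated level-$(k-1)$ fault inside a level-$k$ rectangle'' provably implies correct simulation and no level-$k$ fault — this requires carefully tracking how damage spreads through the $Y$-swap routing and verifying that a single fragment of damage, once isolated by the swaps, is killed by the nested majorities (this is where Tsirelson's specific substitution pattern is essential, and where a small amount of case analysis, or the computer-aided check alluded to later in the paper, is unavoidable); and (ii) maintain the $p$-bounded (rather than merely marginal) form of the bound across levels, since later faults are correlated — this forces the ``fault'' events to be defined on fixed blocks and the union bounds to be taken over fixed finite patterns, exactly as in the extended-rectangles method. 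Everything else (the union bounds, the arithmetic $2^n = L^{\log_3 2}$, the reduction to the all-ones initial state) is routine once the lemma is in place.
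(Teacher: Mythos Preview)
Your plan is the extended-rectangles/RG method, which is exactly the approach the paper takes --- but not for this theorem. The paper does not prove Theorem~\ref{thm:FT_TS}: it is attributed to Tsirelson, whose original proof directly tracks how the joint distribution of errors evolves under the automaton, a rather different (and, as the paper remarks, less generalizable) technique. What the paper proves by the method you outline is Theorem~\ref{th:FT_TS_1}, fault tolerance of a \emph{modified} automaton in which the $Z$-type gadget is redefined to act transversally.

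The modification is precisely because your ``main obstacle'' bites for the original construction. In the natural exRec decomposition of Tsirelson's automaton (leading $Z_0^{\otimes 3}$ as the EC unit, remaining swap pattern as the gate), the Gate~B condition fails: the paper's counterexample (proof of Prop.~\ref{prop:wire_error_model}) is the error-free swap layer acting on $111\,100\,000$, which decodes to $(1,0,0)$ rather than the correct $Z_0(1,0,0)=(0,0,0)$. In your language, the per-gadget correctness lemma --- one damaged sub-block plus an otherwise clean level-$1$ $Z$-gadget implies correct simulation of $Z_0$ --- is false for the original swap-based gadget. The paper's fix is to double the application so that $\cZ_1$ becomes swap--$Z_0^{\otimes 3}$--swap, i.e.\ a transversal majority vote, which restores the lemma. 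So your plan works essentially verbatim for the modified automaton, but to prove the theorem as stated for the \emph{original} automaton you would need either Tsirelson's distributional argument or a subtler correctness notion that survives this failure; the paper pursues neither.

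A smaller issue: the recursion $p_k \le C(p_{k-1}^2 + p_{k-1}p + p^2) \le C'\max(p_{k-1},p)^2$ is broken as written. Once $p_{k-1} < p$ (which happens for $k\ge 2$), the right side stalls at $C'p^2$ and you lose the double-exponential decay. There is no separate ``fresh elementary noise at level $k$'': all level-$0$ noise already sits inside some level-$(k-1)$ sub-rectangle and is accounted for in the level-$(k-1)$ fault events. The correct recursion is simply $p_k \le Cp_{k-1}^2$ with $p_0=O(p)$, which gives $p_k \le (Cp)^{2^k}$ as you ultimately claim.
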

 
By symmetry, this also holds if the initial state is $s_i(0) = 0$. This implies that Tsirelson's automaton can remember a bit of information for infinite time in the limit $L\rightarrow \infty$.

This theorem was originally proved in Ref.~\cite{Cirel'son2006Aug} by explicitly studying how the joint distribution of errors evolves under the automaton.  It is likely that this method does not readily generalize for other constructions.
Thus, of the goals of this section is to provide a simpler, more modern, and generalizable proof that makes use of the extended rectangles formalism, which was also originally used to provide a simpler proof of fault tolerance for the concatenated code scheme~\cite{Aliferis2005Apr}.  
Before proceeding with the proof, let us provide some informal intuition for why Tsirelson's construction works. 

First, we need a more general discussion of stable cellular automaton dynamics.  In order to stabilize a certain logical state, such an automaton must be able to correct domains of errors created by the noise.  For a generic configuration of noise, the largest spatial domain of errors observed at any given time can scale with the system size $L$ as $\sim \log L$, which diverges as $L\rightarrow\infty$. In two (or more) dimensions, there exist ways of locally correcting these large domains quickly enough. One way is to use Toom's rule~\cite{toom1980stable}, which shrinks domains in a time proportional to their linear size.  The reason this can be done locally is because boundaries of error domains form loops, and the curvature of these loops can be determined locally.  

However, in one dimension, an analogous rule is not possible given that boundaries of error domains are point-like (this is also the reason why there are no ordered phases in 1D classical locally-interacting systems in equilibrium).  Thus, Tsirelson's automaton must operate based on a different rule: rather than trying to shrink domains of errors, it instead operates using a ``divide and conquer'' strategy. 
The only error domains eliminated directly by the automaton are those of length 1 (which are cleaned up by local majority votes). 
Larger domains, instead of being directly shrunk, are instead fragmented into smaller domains which are then isolated from each other. These domains recursively continue to be fragmented and isolated until they become small enough to be directly corrected (see Fig.~\ref{fig:clean_operation}). 

Consider first how the gadgets operate at low levels. Every operation $X_1$ consists of a $Z_0$ followed by three columns of $X_0$. This means that $X_1$ takes three bits as an input, computes their majority vote (i.e. error corrects), and then broadcasts this value to all three outputs. $Y_1$ starts by applying a $Z_0$ to two blocks of three bits, and then swapping both blocks. This is equivalent to implementing a fault tolerant swap operation on a pair of encoded bits. Finally, $Z_1$ starts with applying $Z_0$ to three blocks of qubits and redistributing the output via swaps so that each output block contains one bit from each input block. As an explicit example, consider the action of a clean $Z_1$ on the input state `$000 \, 101 \,000$'. Applying majority votes ($Z_0$ gadgets) turns this into `$000\, 111 \, 000$', and the redistribution performed by the pattern of swaps then turns this into `$010 \, 010 \, 010$'. The action of $Z_1$ thus fragments the minority domain of errors into three pieces, which are distributed among each of the three output blocks. Following this with a final layer of local majority votes ($Z_0$ gadgets) completely eliminates the minority domain. 

The elimination of errors at larger scales occurs in a self-similar fashion, repeating the same action at each scale recursively (`hierarchically').  An illustration of how an input error configuration gets self-similarly cleaned up at level 2 is shown in Fig.~\ref{fig:clean_operation}.

\subsection{A modification of Tsirelson's automaton}\label{ss:modification_of_tsirelson}
We will now introduce a minor modification of Tsirelson's construction, which is readily amenable to the extended rectangles method~\cite{Aliferis2005Apr, gottesman2009introductionquantumerrorcorrection} for proving fault tolerance. This modification also allows us to define the notion of a \emph{fault tolerant simulation} and interpret the action of the automaton as repeated fault tolerant simulation of a 3-bit repetition code.

Each gadget in the modified automaton is analogous to a `gate' from quantum computation, with higher-level gadgets being defined in a way resembling transversal application of lower-level gadgets: 
\begin{definition}[Modified elementary and simulated gadgets]\label{def:tsirelsongadgets}
Define the modified elementary gadgets $\cX_0 = X_0$, $\cY_0 = Y_0$, and $\cZ_0 = Z_0$ to be the same as in Tsirelson's original construction.  Define the (modified) gadgets $\cX_1$, $\cY_1$, and $\cZ_1$ to be
\begin{equation} \label{ts:2}
\includegraphics[width = 0.55 \textwidth]{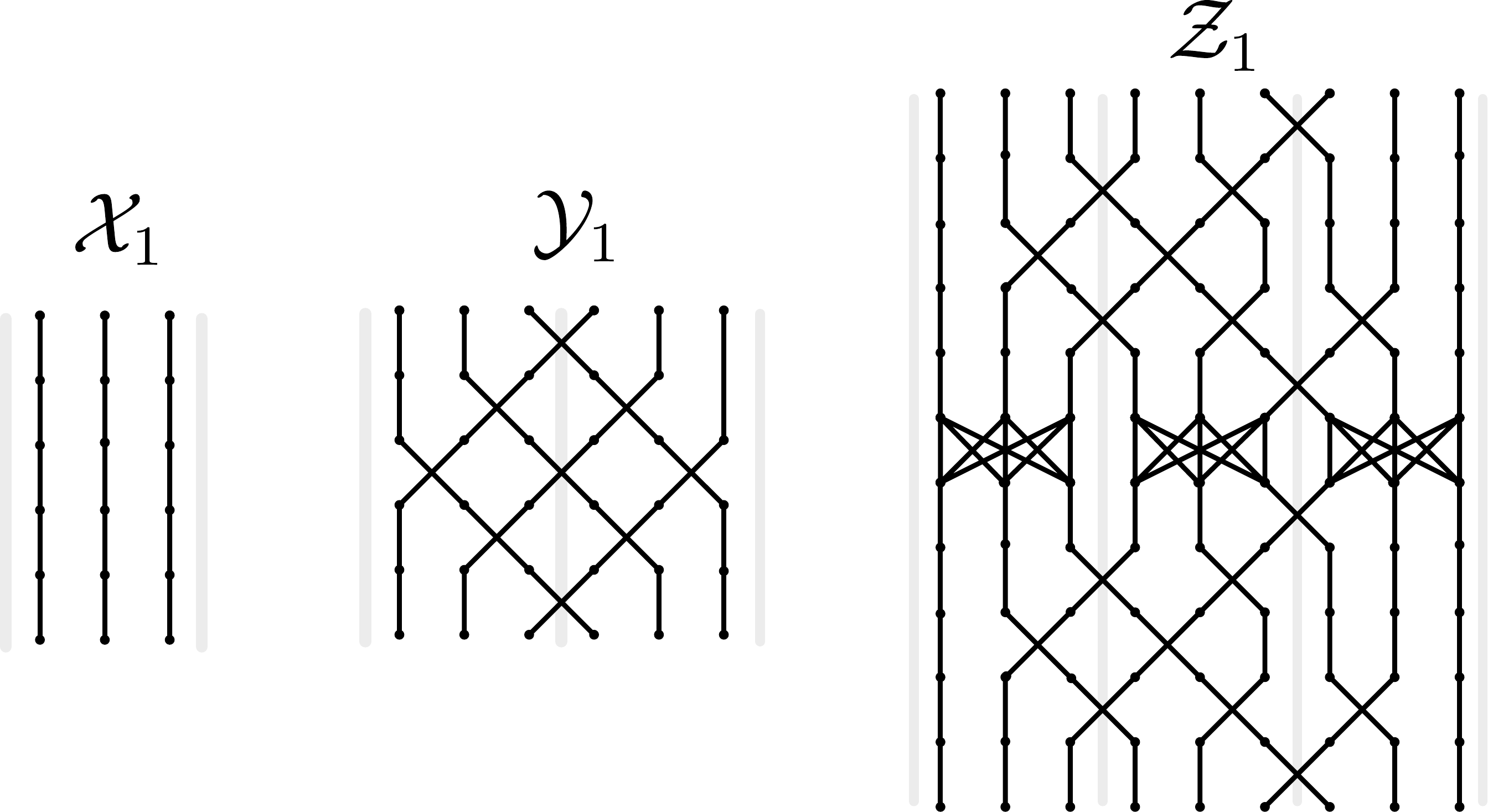}
\end{equation}
We call $\cX$, $\cY$, and $\cZ$-type gadgets `gate gadgets'. 
\end{definition}

In the new notation \hi{(note the difference in the typeface)}, $\cX_0$, $\cY_0$ and $\cZ_0$ are now viewed as 1-, 2-, and 3-bit gates, respectively. 
The new $\cX_1$ and $\cY_1$ are the same as in Tsirelson's original construction, except that they lack the first layer of $Z_0$ gadgets. 
$\cZ_1$ is the same as Tsirelson's gadget repeated twice, with the first layer of $Z_0$ gadgets removed.  The reason why we define $\cZ_1$ differently from Tsirelson's $Z_1$ is that, as previously mentioned, $\cZ_1$ is designed to operate analogously to transversal gates. Namely, the level-1 majority vote is implemented by performing majority votes between encoded inputs in a bitwise manner (i.e. the first majority vote acts on the first bit of each of the encoded inputs). To do this completely locally, we first use a sequence of swaps to bring the desired bits together, apply the majority vote, and send the bits back to their original locations. The transversal nature of this operation ensures that an error confined to a single block of 3 bits does not affect the logical output.

We note that the circuits for $\cX_1$ and $\cY_1$ gadgets have depth 6, while that for $\cZ_1$ has depth 11. Nevertheless, as we will see, an automaton compiled from these gadgets will have well-defined depth despite the gadgets having different depth, because $\cZ$-type gadgets never occur simultaneously with $\cX$ or $\cY$-type gadgets. If this were not the case, one can always add extra idle steps to equalize the depths. 

Nest, we define the notion of a fault tolerant simulation of a circuit $\cC$.
\begin{definition}[Fault tolerant simulation] \label{def:FT_original}
Define $\EC_1 = \cZ_0$. A fault tolerant simulation of a circuit $\cC$ comprised of $\cX_0$, $\cY_0$, and $\cZ_0$ gadgets, denoted $FT(\cC)$, is a circuit wherein (a) each bit in $\cC$ is replaced by a block of 3 bits, (b) each $\cX_0$ is replaced with $FT(\cX_0) = EC_1 \circ \cX_1 \circ EC_1$, (c) each $\cY_0$ is replaced with $FT(\cY_0) = EC_1 \circ \cY_1 \circ EC_1$, and (d) each $\cZ_0$ is replaced with $FT(\cZ_0) = EC_1 \circ \cZ_1 \circ EC_1$, and finally (e) one of the two $\EC_1$ gadgets is removed between each pair of consecutive gate gadgets.
\end{definition}

\begin{definition}[Repeated fault tolerant simulation] \label{def:FT_original_repeated}
A repeated fault tolerant simulation of a circuit $\cC$, denoted $FT^n(\cC)$, is a new circuit formed from the iteration $FT^n(\cC) = FT(FT^{n-1}(\cC))$, with $FT^0(\cC) = \cC$ and $FT^1(\cC) = FT(\cC)$.
\end{definition}
We emphasize that the adjective in the term ``fault tolerant simulation'' is used to indicate that the simulation's aim is produce a circuit with same logical action as the original one that can tolerate certain faults (for example, sparse single-qubit or single-gadget faults), but the fault tolerance in the conventional meaning of this term is only achieved upon repeated simulation and is a property that we subsequently prove.

We also introduce an element that we call the `ideal decoder'. The ideal decoder is a noise-free gadget which performs ideal error correction.  It will play an important role in the proofs.
\begin{definition}[ideal decoder]
An ideal decoder on $3^k$-bit state $s=(s_1,...,s_{3^k})$ is defined to be $D_k$ from Eq.~\eqref{eq:idealdecoderTS}. $D_1$ will be schematically shown as
\begin{equation} \label{tsdec:}
\includegraphics[width = 0.06 \textwidth]{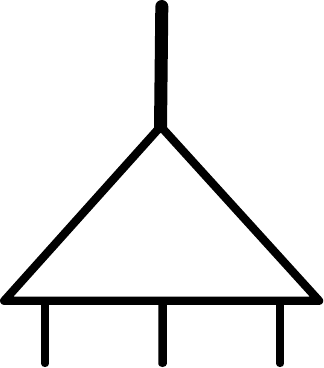}
\end{equation}
\end{definition}

For a noiseless input state where each three bits is viewed as a 3-bit encoded noiseless input, the following graphical identity holds for the ideal decoder acting after noise-free simulated gadgets, which illustrates the self-similar nature of the construction: 
\begin{equation} \label{ts:X1}
 \includegraphics[width = 0.75 \textwidth]{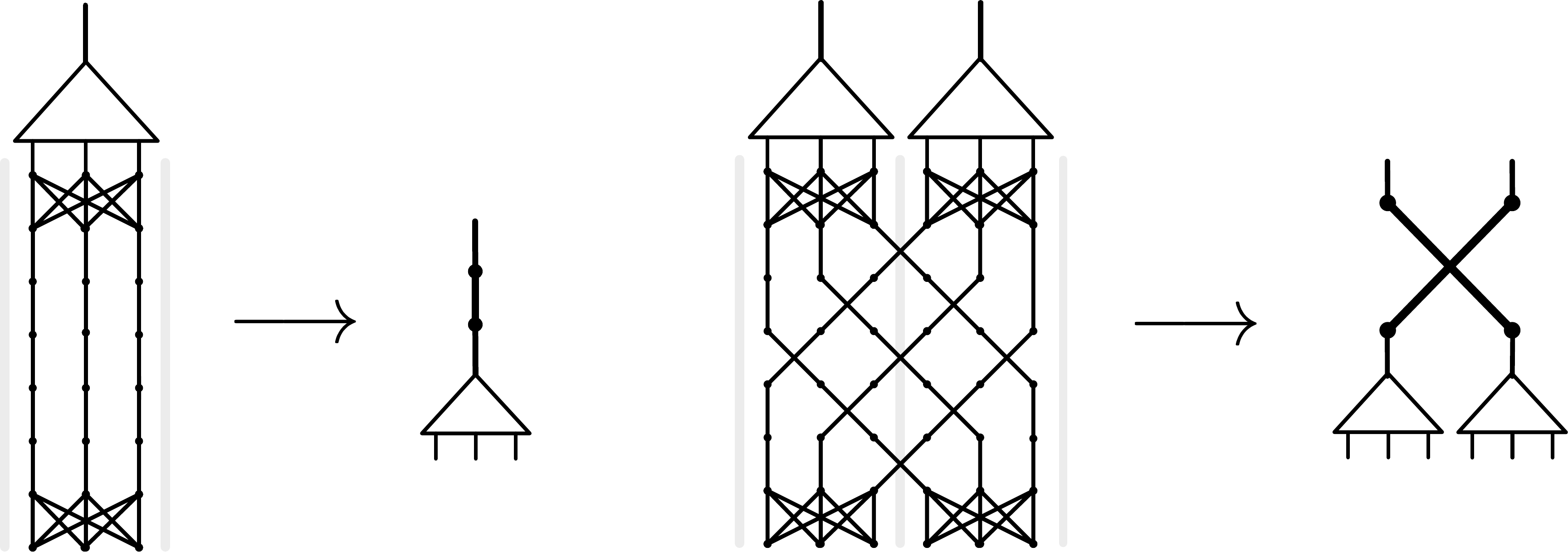}
 \end{equation}
 \begin{equation} \label{ts:Z1}
 \includegraphics[width = 0.45 \textwidth]{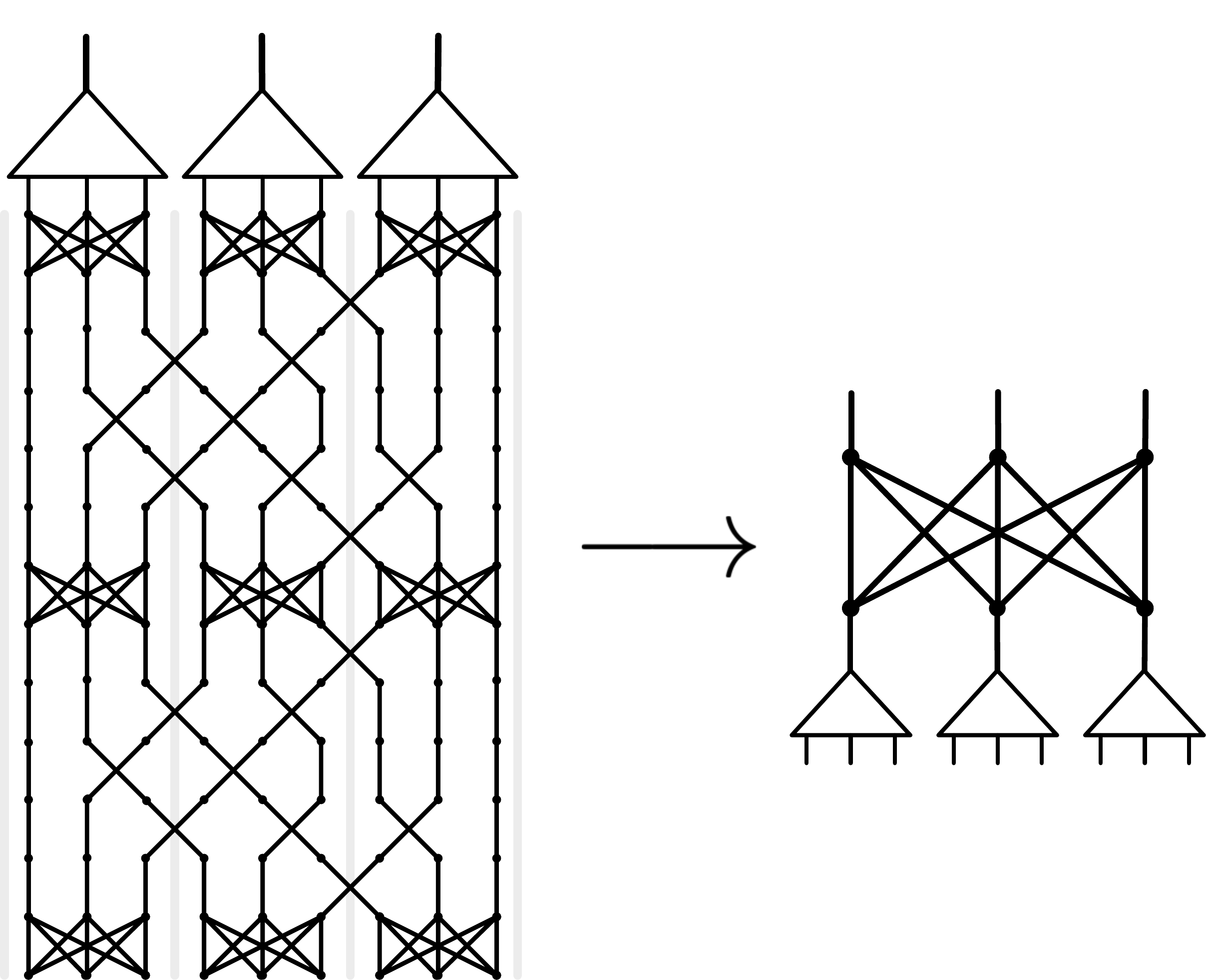}
 \end{equation}
 \begin{figure} [!htbp]
 \includegraphics[width=\textwidth]{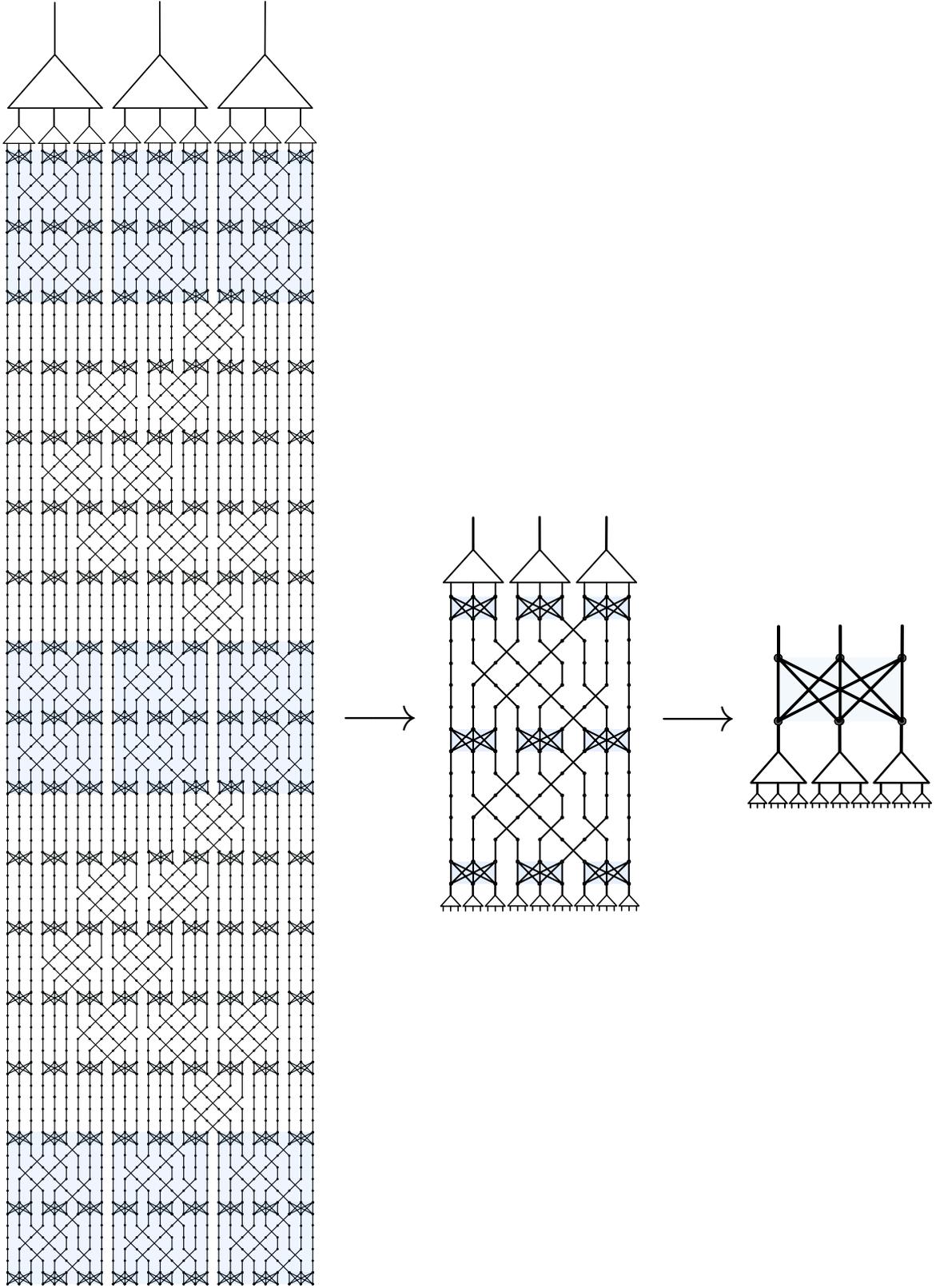}
         \caption{A circuit for $FT^2(\cZ_0)$ followed by two rounds of hierarchical application of the ideal decoder.  Pushing each layer of decoders gives the sequence of circuits $FT^2(\cZ_0) \overset{D_1^{\otimes 9}}{\longrightarrow} FT(Z_0)_{\text{eff} } \overset{D_1^{\otimes 3}}{\longrightarrow}  (\cZ_0)_{\text{eff}'}$. For convenience, the $\cZ$ gadgets before and after pulling ideal decoder through are highlighted in blue. }
         \label{fig:ts:Z2}
 \end{figure}

To further illustrate how self-simulation works, let us show how a hierarchical application of the ideal decoder can reduce the circuit level by level. One application of an ideal decoder reduces $FT^n(\cC)$ to $FT^{n-1}(\cC)$.  An explicit example of this is shown in Fig.~\ref{fig:ts:Z2}. In this example, the input circuit is $FT^2(Z_0)$; upon pushing a layer of ideal decoders through the circuit, we obtain $FT(Z_0)$.  Pushing another layer of ideal decoders through the circuit yields $Z_0$. Thus, pushing layers of the ideal decoder through repeatedly reveals which circuit is being simulated at every level.

In our subsequent proof, we replace Tsirelson's original $X_n$, $Y_n$, and $Z_n$ with $FT^{n}(\cX_0)$, $FT^{n}(\cY_0)$ and $FT^{n}(\cZ_0)$, respectively, which we call the modified Tsirelson automaton:
\begin{definition}[Modified Tsirelson's automaton]
Given a 1D line of $L$ bits, denoting (or $n = \log_3 2 L$, $n = \log_3 L + 1$, respectively), the modified Tsirelson's automaton for $X$, $Y$ or $Z$ gadget consists of applying the simulated gadget $FT^n(X_0^T)$, $FT^n(Y_0^T)$ or $FT^n(Z_0^T)$ where $T$ is the number of repetitions. 
\end{definition}
This modification is rather minor, amounting to repeating each $Z_k$ twice as compared to Tsirelson's original construction. 
However, as we show in the next subsection, one can prove fault tolerance by directly using the exRecs method for quantum concatenated codes.

\subsection{Proof of fault tolerance} \label{subsec:proof1}

We will now prove that the modified Tsirelson's automaton is fault tolerant using the extended rectangles (exRecs) method. Because in this section we use the exRecs method in its unchanged form, we only provide a brief summary of the most used concepts in App.~\ref{app:exRec}.
We encourage the reader to revisit the literature~\cite{Aliferis2005Apr, gottesman2009introductionquantumerrorcorrection} for a more comprehensive introduction\footnote{A 
comprehensive introduction can also be found in notes written by Gottesman that will eventually be published in the form of a book~\cite{GottesmanBook}.}.  In particular, we will be extensively using the following concepts from this method: Gate and EC properties, good and correct exRecs, bad exRecs and truncation, ideal and $\ast$-decoders, as well as the techniques used to prove Theorem~\ref{thm-1level}. 
When proving fault tolerance for the toric code automaton we will need to modify the exRecs method, for which the appropriate definitions will be provided when needed.

Let us start by considering the same error model as in Tsirelson's original paper and show that exRec properties hold for our modified gadgets. Let us refer to the states of the bits at spacetime locations corresponding to the inputs and outputs of each elementary gadget as {\it wires} (gadgets have both incoming and outcoming wires). In the wire error model, errors occur only in the wires between consecutive gadgets.
 \begin{proposition} [Properties of modified Tsirelson's gadgets under wire error model] \label{prop:wire_error_model}
 Assume a $p$-bounded error model (the ``wire error model'') where
 \begin{itemize}
     \item Any wire can fail (including the input and output wires in $\cX_0$, $\cY_0$ and $\cZ_0$ gadgets);
      \item The action of all level-0 gadgets is perfectly reliable.
 \end{itemize} 
 Then, the gate gadgets $\cX_1$, $\cY_1$ and $\cZ_1$ satisfy Gate A and Gate B properties with $t = 1$, and the EC gadget $Z_0$ satisfy EC A and EC B  properties with $t = 1$ (these properties are defined in Def. \ref{def:gateEc-OG}). 
 \end{proposition}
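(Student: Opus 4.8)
The plan is to verify the four exRec properties (Gate A, Gate B, EC A, EC B, all with $t=1$) directly from the explicit circuit descriptions of $\cX_1$, $\cY_1$, $\cZ_1$, and $\EC_1 = \cZ_0$. Recall the general strategy from the exRecs literature~\cite{Aliferis2005Apr}: the EC properties say that an $\EC$ gadget with at most one fault either corrects an incoming error (EC A: if the input is at most distance $1$ from a codeword, the output is a codeword, up to a correctable residual error consistent with an ideal decoder placed after) or, more precisely, that an ideal decoder commutes through a faulty $\EC$ in the appropriate sense (EC B). The Gate properties say that a gate gadget followed by an ideal decoder, with at most one internal fault, agrees with the ideal (unencoded) gate applied to the ideal decoding of the input. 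Since our repetition code has distance $3$, a single fault is always correctable, so all of these reduce to finite case checks.

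First I would establish the EC properties for $\EC_1 = \cZ_0$. Here $\cZ_0$ is a single majority-vote gadget on $3$ bits; in the wire error model the only faults are on its three input wires and three output wires. If at most one input wire is flipped relative to a codeword ($000$ or $111$), the majority vote returns the correct logical value, and if additionally at most one output wire fails, the output is within distance $1$ of the correct codeword — which is exactly what EC A requires (the residual single-bit error is absorbed by the ideal decoder that EC A allows to follow). EC B is the statement that pulling an ideal decoder $D_1$ back through a faulty $\cZ_0$ yields the ideal identity on the logical bit when there is at most one fault; this follows from the same distance-$3$ correctability, essentially by the identity displayed in~\eqref{ts:Z1} specialized appropriately. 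This step is mechanical.

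Next I would handle the Gate properties for $\cX_1$, $\cY_1$, $\cZ_1$. The key structural input is that each of these gadgets is ``transversal-like'': the level-$1$ circuit is a composition of swap layers (which act bitwise and so cannot spread a single fault within a block to more than one bit per block) together with, in the case of $\cZ_1$, majority-vote gadgets $\cZ_0$ applied to triples drawn one-bit-per-block. I would argue that a single fault anywhere inside $\cX_1$ or $\cY_1$ propagates to at most one bit-flip in each of the (one or two) output blocks, so that an ideal decoder placed afterward recovers the correct logical output — giving Gate A with $t=1$; Gate B (the analogous statement with a $\ast$-decoder / the ``at most one fault implies ideal action'' form) follows from the graphical identities~\eqref{ts:X1}. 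For $\cZ_1$, because it is $\cZ_1$ built to mimic a transversal majority vote (three bitwise majority votes, sandwiched between swap layers that gather and return the relevant bits), a single internal fault again touches at most one bit per block on the output, and the logical majority is computed correctly; here I would lean on the fact that $\cZ_1$ is deliberately designed (see the discussion after Def.~\ref{def:tsirelsongadgets}) so that an error confined to a single input block of $3$ bits does not affect the logical output, which is precisely Gate A for the $\cZ$-type gadget. One should also note that $\cZ_1$ is the doubled construction, so the relevant count is that each of the two rounds propagates faults transversally and the residual is still correctable.

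The main obstacle I anticipate is the careful fault-propagation bookkeeping for $\cZ_1$: unlike the pure swap networks $\cX_1$ and $\cY_1$, the $\cZ_1$ circuit has depth $11$ and mixes three majority-vote gadgets with several swap layers, so one must check that a fault occurring \emph{before} a $\cZ_0$ sub-gadget (on one of its three incoming bits, each from a different block) is still only a single-block error from the point of view of the level-$1$ code, and that a fault \emph{after} it produces at most one wrong bit per output block after the returning swaps. This is a finite but slightly tedious verification, and it is the natural place for a computer-aided check of the kind the paper mentions elsewhere; everything else is a direct transcription of the standard exRec arguments used to prove Theorem~\ref{thm-1level} (or Theorem~\ref{thm:FT_TS}) in the concatenated-code setting, now applied to the $3$-bit repetition code with the gate set $\{\cX_0,\cY_0,\cZ_0\}$.
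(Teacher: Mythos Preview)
Your overall approach---direct case-by-case inspection of the circuits---is exactly what the paper does (its proof is literally one sentence: ``all the properties can be seen by inspecting respective gadgets''). However, you have misstated the $\EC$~A and Gate~A conditions in a way that would make your inspection check the wrong things.

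$\EC$~A with $t=1$ says: for \emph{arbitrary} input, if the $\EC$ gadget suffers $s\leq 1$ wire faults, the output lies within distance $s$ of \emph{some} codeword. There is no restriction that the input be close to a codeword, and there is no ideal decoder appended afterward. Your formulation (``if at most one input wire is flipped relative to a codeword\ldots and if additionally at most one output wire fails'') restricts the input and counts two separate faults, neither of which matches the condition. The actual check is immediate: $\cZ_0$ on any three bits outputs a codeword; with one wire fault (input or output) the result is distance $\leq 1$ from a codeword.

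Similarly, Gate~A is a pure filter statement---input passing $r_i$-filters plus $s$ faults with $\sum_i r_i+s\leq t$ implies the output passes a $(\sum_i r_i+s)$-filter---and does \emph{not} involve an ideal decoder. You have collapsed Gate~A and Gate~B into a single decoder-based statement. Your transversality argument (a single fault in $\cX_1,\cY_1,\cZ_1$ leaves at most one wrong bit per output block) is the right content for Gate~A, but it should be stated as a filter bound, and then Gate~B is the separate check that the ideal decoder commutes through.

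Once the conditions are stated correctly, the verification for $\cZ_1$ is a finite check as you anticipate. The paper additionally remarks (and it is worth verifying) that Tsirelson's \emph{unmodified} $Z_1$ fails Gate~B on the input $111\,100\,000$; this is precisely why $\cZ_1$ is defined as the doubled, transversal version rather than the original.
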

\begin{proof}
    All the properties can be seen by inspecting respective gadgets.  One can also see that if we did not choose to change the action of the original $Z_1$, the property Gate B would not hold for it; an explicit example of this can be seen by considering the gate B property with an error-free $Z_1$ acting on the input state $111\, 100\, 000$.
\end{proof}

Under this noise model, one can think about errors as occurring between the application reliable (noiseless) elementary gadgets.
We will now prove fault tolerance of Tsirelson's automaton under the wire error model:  
  \begin{theorem} [FT of modified Tsirelson's automaton with wire error model]
  \label{th:FT_TS_1}
 Consider a classical circuit $\cC$ defined on 1 bit consisting of $T$ idle operations, i.e. $\cC = \cX_0^T$. Construct the repeated fault-tolerant simulation $FT^n(\cC)$.  Assuming the error model in $FT^n(\cC)$ to be the wire error model from Prop.~\ref{prop:wire_error_model}, then it achieves fault-tolerant simulation of $\cC$ with a $(Ap)^{2^{n}}$-bounded wire error model for some constant $A$. The probability that the initial logical state $s(0)$ in the simulated circuit $\cC$ will get corrupted in time $T$ is bounded by
 \begin{equation}
 \mathbb{P}(D_n(s(T)) \neq D_n(s(0))) \leq T \cdot (Ap)^{2^{n}}
 \end{equation}
 \end{theorem}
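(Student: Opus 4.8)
The plan is to run the standard exRecs induction on the level $n$ of fault-tolerant simulation, using Proposition~\ref{prop:wire_error_model} as the base case and Theorem~\ref{thm-1level} (the one-level reduction quoted from the exRecs literature, see App.~\ref{app:exRec}) as the inductive step. The key object is the \emph{effective error model} on the wires of $FT^{n-1}(\cC)$ obtained after one layer of ideal decoders is pushed through $FT^n(\cC)$; the circuit identities \eqref{ts:X1}, \eqref{ts:Z1} together with the fault-tolerant simulation structure of Def.~\ref{def:FT_original} guarantee that pushing ideal decoders through $FT^n(\cC)$ returns exactly $FT^{n-1}(\cC)$, so the reduction is well-defined. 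What must be tracked is how the noise parameter transforms: each level-$1$ extended rectangle (a gate gadget together with its leading and trailing $\EC_1=\cZ_0$) is \emph{good} unless it contains at least two independent wire faults, since by Prop.~\ref{prop:wire_error_model} the gadgets satisfy Gate~A/B and EC~A/B with $t=1$ and hence a single fault is corrected. A good exRec, by the exRecs lemma, behaves like the corresponding level-$0$ gadget followed (or preceded) by a single effective fault on the encoded wire.

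Concretely, the steps are: (i) fix $n$ and write $FT^n(\cC) = FT(FT^{n-1}(\cC))$; identify the level-$1$ exRecs tiling this circuit. (ii) Define a level-$1$ exRec to \emph{fail} if it contains $\ge 2$ wire faults; by the union bound over the $O(1)$ wires inside one exRec and $p$-boundedness, the induced fault pattern on the wires of $FT^{n-1}(\cC)$ is $(Ap)^2$-bounded for a constant $A$ depending only on the (bounded) number of wires per exRec — here one uses that $p$-boundedness is preserved under such coarse-grainings, i.e. if every event in a new "fault set" forces $\ge 2$ events in the old $p$-bounded set within a bounded region, the new set is $(cp^2)$-bounded. (iii) Invoke the one-level exRec reduction (Theorem~\ref{thm-1level}) to conclude that $FT^n(\cC)$ with $p$-bounded wire noise simulates $FT^{n-1}(\cC)$ with $(Ap)^2$-bounded wire noise, the simulation being witnessed by pushing through one layer of ideal decoders. (iv) Iterate: after $n$ rounds the wire noise on the simulated circuit $\cC = \cX_0^T$ is $(Ap)^{2^n}$-bounded, provided $Ap < 1$ so the tower converges, which is the condition $p < p_c := A^{-1}$. (v) Finally, a logical error in $\cC$ after $T$ steps requires at least one effective fault somewhere among the $T$ idle wires of $\cC$, so by $(Ap)^{2^n}$-boundedness and a union bound over the $T$ time steps, $\mathbb{P}(D_n(s(T)) \neq D_n(s(0))) \le T\,(Ap)^{2^n}$.

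The main obstacle — and the only place real work is needed beyond citing the exRecs machinery — is step~(ii): verifying that the "$\ge 2$ faults per exRec" failure event genuinely yields a $(Ap)^2$-bounded noise model on the coarse-grained wires, \emph{with a clean constant $A$}, in the presence of arbitrary spatial correlations allowed by $p$-boundedness. One has to be careful that distinct level-$1$ exRecs overlap (consecutive gate gadgets share an $\EC_1$, per Def.~\ref{def:FT_original}(e)), so a single wire fault can be "charged" to two neighboring exRecs; the bookkeeping must ensure each physical fault is counted with bounded multiplicity so that a set $A$ of failed level-$1$ exRecs of size $m$ forces a set of $\ge 2m/C$ physical faults in a region of bounded size, giving $\mathbb{P}[A \subseteq E_{\mathrm{eff}}] \le \binom{|\text{wires near }A|}{2m/C} p^{2m/C} \le (Ap)^{2m}$. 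This is the standard exRec counting argument adapted to our overlapping-gadget tiling, and it is routine but must be stated with the right constants; everything else follows formally from the quoted theorems. A secondary, purely notational point is checking that the ideal-decoder pushthrough respects the structure at \emph{every} level (not just level~1), but this is exactly what Def.~\ref{def:FT_original_repeated} and the identities \eqref{ts:X1}--\eqref{ts:Z1} are set up to deliver.
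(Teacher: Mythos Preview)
Your high-level structure (exRecs induction, steps (i), (iii)--(v)) matches the paper's approach, but you have misidentified where the genuine work lies.

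The overlapping-exRec issue you flag as ``the main obstacle'' in step~(ii) is not one: it is handled by the standard truncation procedure (Algorithm~\ref{alg:OG}), which renders bad exRecs disjoint, so the counting closes with a factor of at most $2^3$ for the number of truncation patterns (three-bit gates being the largest). The paper dispatches this in one line.

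The step you do \emph{not} address---and which the paper spends most of its proof on---is verifying that the effective error model on the simulated circuit is a genuine $A'p^2$-bounded \emph{wire} error model, not merely that each exRec is bad with probability $O(p^2)$. Theorem~\ref{thm-1level} as stated gives a gate-failure bound; converting this naively to a wire bound for a $K$-output gate yields only $(Ap^2)^{1/K}$-boundedness, which does not suppress errors and so the recursion cannot close. What is needed is that, for each multi-output gadget ($\cY_1$ with two outputs, $\cZ_1$ with three), failing $j$ simulated wires simultaneously costs $O(p^{2j})$. The paper does this by direct inspection of each gadget type. For $\cY_1$ it is straightforward; for $\cZ_1$ there is a subtle case (encoded input $111\,000\,000$, two physical faults producing decoded output $111\,111\,111$, apparently corrupting all three simulated wires with only $O(p^2)$) which is resolved by observing that this event is equivalent to a single \emph{input}-wire failure and hence counts as one wire fault for the purposes of $p$-boundedness. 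Only after this per-gadget verification does the recursion $p_k \leq A p_{k-1}^2$ hold with wire-model semantics on both sides.
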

 \begin{proof}

Consider the situation where an exRec for either $\cX_1$, $\cY_1$, or $\cZ_1$ is bad. For a $\cX_1$-type exRec, the minimal event causing it to be bad has probability $O(p^2)$. For a $\cY_1$-type exRec, the minimal event also has probability $O(p^2)$. However, while for $\cY_1$ the minimal event failing any of the simulated wires (i.e. the encoded output state on one of the triples of outgoing wires) has a probability $O(p^2)$, the minimal event failing both simulated outputs has a probability $O(p^4)$. Thus, pulling a $\ast$-decoder through a bad exRec containing $\cY_1$ corresponds to a simulation of $\cY_0$ with an $A' p^2$-bounded error model. 

Let us similarly determine the error model resulting from pulling a $\ast$-decoder through a $\cZ_1$-type bad exRec. By direct verification, we note that the failure of a single simulated wire (in either the input or output) occurs with probability $O(p^2)$. To check consistency with a $A'p^2$-bounded wire error model, we need to show that two simulated wires fail with probability $O(p^4)$.  However, naively it seems like there are events where two wire failures can cause two simulated wire failures.  For example, consider the scenario where the encoded input is $111 \, 000 \, 000$; then an error with probability $p^2$ can cause an output (after the ideal decoder) of $111 \, 111 \, 111$. The output in the absence of the error would be $000 \, 000 \, 000$. Thus, the outputs in the presence and absence of the error differ on all three encoded bits. However, this error is equivalent to simply failing an additional input wire.  This is because, such an input failure, which occurs with probability $p^2$ would cause the input $111 \, 110 \, 000$, which exits the gadget as $111 \, 111 \, 111$. This is still consistent with an $A'p^2$-bounded error model. Further examination shows that an error with probability at most $O(p^4)$ can cause two simulated wires to fail, even taking into account the kinds of equivalences described above. 
Thus, the error model experienced by the simulated $\cZ_0$ is majorized by a $A' p^2$-bounded wire error model. 

Now that we know that the wire failures in each bad exRec obey an $A'p^2$-bounded error model, we need to show that the error model for the entire simulated circuit $FT^n(\cC)$ is $A p^2$-bounded for some constant $A$.  Upon pulling a layer of $\ast$-decoders through, if a simulated wire has an error, it must have been a part of a bad extended rectangle.  For a given set $W$ of locations on wires, call $E_W$ an event where there is a failure in each location $w$ in $W$. Suppose that $E_W$ was caused by a set $\{ R_w\}$ of exRecs being bad. There are many possible choices for such a scenario because different noise realizations lead to different truncations (determined by Algorithm~\ref{alg:OG} in App.~\ref{app:exRec}, which determines the assignment of good and bad exRecs), and thus, to the different coverings of the circuit with exRecs. Therefore, we can sum over all such scenarios:
\begin{align}
 \mathbb{P}\left(E_W\right) = \sum_{\{R_w\}}\mathbb{P}\left(E_W \text{ occurred due to badness of } R_w \right)
 \end{align}
All bad exRecs can be chosen to be disjoint by the truncation procedure (see App.~\ref{app:exRec} for a more detailed discussion). Because the noise model is $A' p^2$-bounded  for each bad exRec, we can upper bound the sum of probabilities on the right-hand side as 
\begin{align}
 \mathbb{P}\left(E_W\right) \leq \sum_{\{R_w\}} A'^{|W|} p^{2|W|} \leq (2^3 A' p^{2})^{|W|},
 \end{align}
where in the second inequality we used the fact that the possible number of truncations of any exRec is at most $2^k$ with $k \leq 3$ because Tsirelson's construction requires at most 3-bit gates.  Therefore, the error model for the entire circuit is $A p^2$-bounded for $A = 8 A'$.

Having proved that one level of simulation results in error suppression, we iterate the $FT$ simulation $n$ times (sequentially pulling decoders past the circuit at higher and higher levels). Upon doing so, the new error model is a $p_k$-bounded wire error model after $k$ iterations, with
 \begin{equation}
 p_k \leq A p_{k-1}^2;\hspace{0.3cm} p_0 = p
 \end{equation}
 Solving this recurrence gives $A p_n \leq (A p)^{2^{n}}$. Since the error model for $\cC$ is $(A p)^{2^{n}}$-bounded (removing the extra factor of $A^{-1}$ since $A > 1$), the probability of a logical failure upon union bounding over $T$ repetitions of the simulated $\cX_0$ gate completes the proof.
 \end{proof}

Above, we proved fault tolerance only for the $FT^n(\cX_0^T)$ automaton; however, a proof for the $\cY_0$ and $\cZ_0$-generated automata is completely analogous. 
The result above implies a threshold value of $p$ for the modified Tsirelson automaton which is lower bounded by $A^{-1}$.
Tsirelson's Theorem~\ref{thm:FT_TS} for the modified automaton is reproduced by setting $n = O(\log_3 L)$. 

Next, let us now assume a more realistic noise model where both wires and gadgets can fail, and the noise is $p$-bounded (when a gadget fails, it is allowed to produce an arbitrary output, and $p$-boundedness means that if $M$ is a set of gadgets that fail in a certain noise realization, then that noise realization occurs with probability $\leq p^{|M|}$). We call this error model a \emph{gadget error model}. Then we can show the following: 

  \begin{corollary} [FT of modified Tsirelson automaton with gadget error model]
  Consider the same circuit $\cC = \cX_0^T$ as in Theorem~\ref{th:FT_TS_1}. Assuming a $p$-bounded gadget error model for the circuit $FT^n(\cC)$, for $n \ge 1$, the simulated circuit $\cC$ experiences a $(Ap)^{2^{n-1}}$-bounded wire error model for some constant $A$.
 \end{corollary}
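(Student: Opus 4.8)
The plan is to bootstrap off Theorem~\ref{th:FT_TS_1} by showing that the very first level of fault-tolerant simulation already converts a $p$-bounded gadget error model into an $A'p$-bounded wire error model. Writing $FT^n(\cC) = FT(FT^{n-1}(\cC))$, I would argue that running $FT^n(\cC)$ under $p$-bounded gadget noise induces, on the simulated circuit $FT^{n-1}(\cC)$, precisely the wire error model of Prop.~\ref{prop:wire_error_model} at some rate $A'p$; Theorem~\ref{th:FT_TS_1}, applied to $FT^{n-1}(\cC)$ with wire-error rate $A'p$, then yields an effective error model for $\cC$ that is $(A A' p)^{2^{n-1}}$-bounded, and renaming $A A' \to A$ gives the claim. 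The drop in exponent from $2^n$ to $2^{n-1}$ is forced and expected: since $\EC_1 = \cZ_0$ is a single elementary gadget (Def.~\ref{def:FT_original}), one failed gadget can already flip the encoded bit of the block it acts on, so a single level-0 gadget failure suffices to corrupt a level-1 simulated wire — unlike the wire-error case of Theorem~\ref{th:FT_TS_1}, where two failures are needed to make a level-1 exRec bad. Hence the first level of simulation buys only a constant factor, not a squaring.

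First I would record the elementary reduction used throughout: a failed level-0 gadget is equivalent to its reliable action followed by an arbitrary error supported on its (at most three) output wires, since the garbage it produces differs from the correct output by some bit string on those wires. Thus $p$-bounded gadget noise on $FT^n(\cC)$ is majorized by a localized wire error model in which any set of failed wires is covered by the output wires of a set $M$ of failed gadgets occurring with probability $\le p^{|M|}$. The substantive step is then the locality claim: within the structure of $FT(\cdot)$ — three-bit blocks, block-local $\EC_1 = \cZ_0$ gadgets between consecutive gate gadgets, and transversally arranged constituents of $\cX_1,\cY_1,\cZ_1$ — a single level-0 gadget failure causes at most one level-1 simulated wire to fail, localized to a bounded neighborhood. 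This is a finite inspection in the spirit of Prop.~\ref{prop:wire_error_model}: (i) a failed block-local $\EC_1 = \cZ_0$ acts within one block, so it corrupts at most that block's encoded bit — one simulated wire; (ii) a failed transversal $\cZ_0$ inside $\cZ_1$, or a failed swap $\cY_0$, touches only one bit of each block it meets, and a single wrong bit per block is removed by the next $\EC_1$ — zero simulated wire failures; (iii) a failed idle $\cX_0$ touches one bit of one block and is likewise cleaned up. Case (ii) is precisely why Tsirelson's $Z_1$ was replaced by the transversal $\cZ_1$ in Def.~\ref{def:tsirelsongadgets}; in every case the number of level-1 wires affected by one level-0 failure is at most one.

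Granting the locality claim, I would rerun essentially the covering/counting argument from the proof of Theorem~\ref{th:FT_TS_1}: for a set $V$ of level-1 simulated wires, the event that every wire in $V$ fails requires a set $M$ of level-0 gadget failures whose affected level-1 wires cover $V$, so $|M| \ge |V|$; summing over the $\le 2^k$ possible truncations ($k \le 3$, as Tsirelson's construction uses at most $3$-bit gates) and over the $O(1)$ level-0 gadgets that can map to a fixed level-1 wire gives $\mathbb P[V\ \text{fails}] \le (A'p)^{|V|}$. Since the level-0 gadgets inside the level-1 gate gadgets that do not feed an $\EC_1$ are harmless by cases (ii)--(iii), the induced model on $FT^{n-1}(\cC)$ is genuinely the reliable-gadget wire error model of Prop.~\ref{prop:wire_error_model} at rate $A'p$, so feeding it into Theorem~\ref{th:FT_TS_1} finishes the argument (and the logical-failure bound follows by union bounding over the $T$ repetitions, as there). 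I expect the main obstacle to be the locality claim, in particular case (ii): one must check that no single constituent failure inside $\cZ_1$ — accounting both for the error it dumps on its output wires and for how that error propagates through the remaining transversal layers and the trailing $\EC_1$ — can corrupt two distinct blocks, and that the ``at most one'' is localized enough that the disjointness of exRecs used in the counting survives. This is a bounded case analysis, but it is the real content, exactly as the verification of the Gate and EC properties was in Prop.~\ref{prop:wire_error_model}.
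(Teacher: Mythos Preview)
Your plan is exactly the paper's proof: write $FT^n(\cC) = FT(FT^{n-1}(\cC))$, show that one level of simulation under $p$-bounded gadget noise produces an $A'p$-bounded wire error model on the simulated circuit $FT^{n-1}(\cC)$, and then invoke Theorem~\ref{th:FT_TS_1}. The paper phrases the middle step via the exRec formalism with $t=0$ (so any single gadget failure already makes its exRec bad) and then checks by inspection that bad $\cX_1$-, $\cY_1$-, and $\cZ_1$-exRecs each induce an $A'p$-bounded wire model on the simulated gate; your direct case split on where the failed level-0 gadget sits is the same verification unpacked.

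One remark: your case-(ii) assertion that a failed $\cY_0$ or transversal $\cZ_0$ inside $\cZ_1$ always yields \emph{zero} simulated wire failures is stronger than what you need and not obviously correct as stated --- a failed swap late in $\cZ_1$ can leave two corrupted bits that, after the remaining permutations, land in the \emph{same} output block, which the trailing $\EC_1$ then miscorrects to one wrong simulated wire. The weaker claim you actually need for $A'p$-boundedness is ``at most one simulated wire per level-0 gadget failure,'' and for $\cZ_1$ this may require the same input-wire reinterpretation trick used in the proof of Theorem~\ref{th:FT_TS_1}. You correctly flag this as the substantive case analysis.
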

 \begin{proof}
 We will show that for any circuit $\cC'$, a $FT(\cC')$ with gadget failure model simulates $\cC'$ with wire error model instead. By expressing  $FT^n(\cC) = FT\left ( FT^{n-1}(\cC) \right)$, and identifying $\cC'= FT^{n-1}(\cC)$, we see that the $n$ levels of noisy-gadget simulation of the original circuit are analogous to $n-1$ levels of noisy-wire simulation of the original circuits instead, for which we then can use Prop.~\ref{prop:wire_error_model} and Theorem~\ref{th:FT_TS_1}.

A single simulation $FT(\cC)$ operates effectively with $t = 0$ (where the definition of $t$ is provided in App.~\ref{app:exRec}). Therefore, any exRec containing even one error is bad. The gate and EC conditions hold only when there are no errors at all. Consider a bad exRec containing one of the gadgets. For $\cX_1$, the minimal event has simply $O(p)$ probability. For $\cY_1$, by inspection, one sees that the bad exRec induces an error model in the simulated $Y_0$ that is $A' p$-bounded on the wires. For exRec containing $\cZ_1$, by a similar consideration to Theorem~\ref{th:FT_TS_1}, we get $A' p$-bounded error model on the wires. Thus, we have shown that $FT(\cC')$  simulates $\cC'$ with the $A'p$ wire error model instead, which concludes the proof. 
 \end{proof}

Finally, we remark that Tsirelson in fact proved a stronger result for his automaton in~\cite{Cirel'son2006Aug}, namely, that the state input into the automaton is preserved at each site $i$ with some constant probability (which he bounds by $3/4$ from below).  A similar result could likely be derived from our Thm.~\ref{th:FT_TS_1} by considering an (inverted) discrete lightcone of gadgets extending to the past from the point $i$  at a given time $t$, similarly to Tsirelson's original proof.

 \subsection{Numerics for Tsirelson's automaton} \label{ss:tsirelson_numerics}

In this subsection we provide a numerical estimate of the threshold in Tsirelson's original automaton by simulating its performace with Monte Carlo numerics.\footnote{The numerics to follow are run for Tsirelson's original automaton, rather than our modified version, which requires larger circuit depth, making it less convenient for numerical study. } 
Our numerics will employ the ``wire error model'' of Prop.~\ref{prop:wire_error_model} which we model at the circuit level as an i.i.d. noise of strength $p$. Explicitly, at a given time $t$ of the circuit evolution, we 1) noiselessly apply the gates applied by the automaton between times $t$ and $t+1$, and then 2) apply noise: with probability $p$, each spin is replaced by a Bernoulli random variable with mean $(1+\eta)/2$, with the noise being biased towards the \hi{$s_i = 0$ ($s_i = 1$)} logical state if $\eta >0$ ($\eta <0$). See \cite{code} for code generating the data in the plots to follow.

\subsubsection{Decoding fidelity}

\begin{figure}[!htbp]
	\centering
	\begin{subfigure}[b]{0.49\textwidth}
		\centering
		\includegraphics[width=\textwidth]{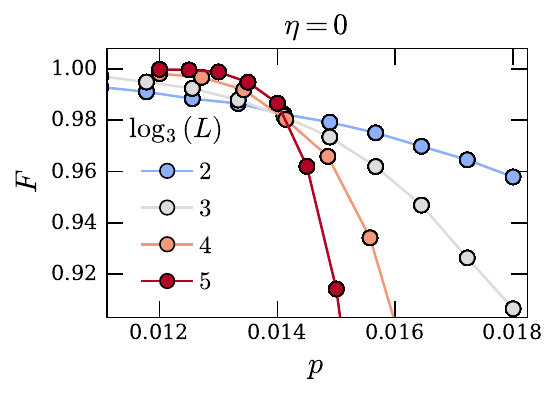} 
	\end{subfigure}
	\hfill
	\begin{subfigure}[b]{0.49\textwidth}
		\centering
		\includegraphics[width=\textwidth]{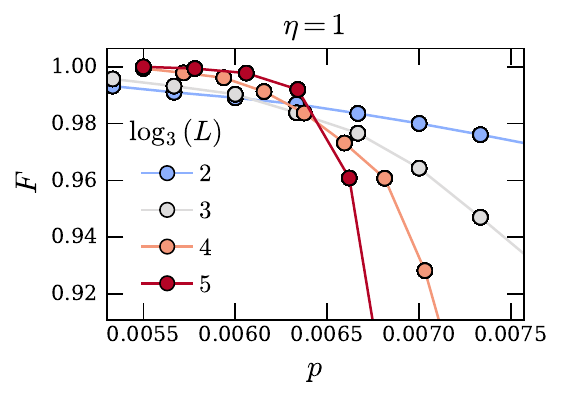} 
	\end{subfigure}
	\caption{Logical fidelities in Tsirelson's automaton after $T=50$ repetitions of the error-correcting gadget $X_n$ subjected to iid wire noise, shown for unbiased noise ($\eta=0$, left) and maximally biased noise ($\eta=1$, right). Changing the value of $T$ does not appreciably shift the threshold point, but does modify the value of $F$ at which the crossing occurs. The initial state in the maximally biased case is chosen to be the logical state disfavored by the bias of the noise. }
	\label{fig:Ft}
\end{figure}

The threshold value $p_c$ is easiest to establish by initializing the system in a logical state and computing the expected time over which the system retains the memory of its initial conditions. 
To this end, we define the logical fidelity $F(T)$ as the probability for ideal decoding to succeed after $T$ repetitions of $X_n$: 
\be F(T) =\min_{s(0)} \mathbb{P}(D(s(T)) = D(s(0))). \ee 
Here the minimum over initial states is needed in the case of biased noise, where it selects out the logical state with the shortest lifetime. 

A threshold can be deduced from the $L$ dependence of $F$ measured after a fixed ($L$-independent) number of applications of $X_n$, where $n = \log_3(L)$. As shown in Fig.~\ref{fig:Ft}, for unbiased noise we find a transition at $p_c|_{\eta = 0} \approx 0.0144$, while for maximally biased noise the threshold is approximately halved: $p_c|_{\eta = 1} \approx 0.0065$. In an i.i.d. gadget failure model with rate $p$, we observe a similar threshold of $p_c|_{\rm gadget} \approx 0.008$.

\subsubsection{Relaxation time}

\begin{figure}[!htbp]
	\centering
\includegraphics[width=.5\textwidth]{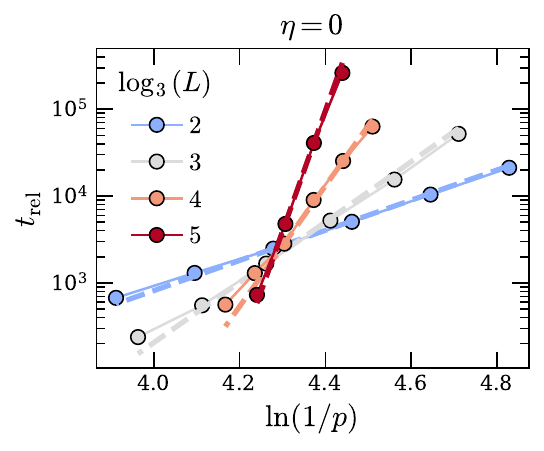}
	\caption{Relaxation times near threshold in Tsirelson's automaton for unbiased noise. The dashed lines are fits to the scaling given in \eqref{wiretrel}. }
	\label{fig:trel}
\end{figure}

To evaluate the lifetime of the memory at noise rates below the threshold, we define the relaxation time $t_{\rm rel}$ as the average time, measured in a number of applications of $X_n$ rather than in circuit depth (the time measured in circuit depth will be $T$ times an exponential in $n$, which will be $\text{poly}(L,T)$), at which ideal decoding fails:
\be \label{tsirelson_trel_def} t_{\rm rel} = \mathbb{E}[\min_{s(0)} \min\{ T \, : \, D(s(T)) \neq s(0)\}], \ee 
where $\mathbb{E}$ is taken over noise realizations. Below threshold, the relaxation time scales as $t_{\rm rel} = (p_c/p)^{L^\lambda}$
for constant $\lambda > 0$. For Tsirelson's automaton with a wire error model our fault tolerance theorem gives $\lambda = \ln(2) / \ln(3)$, producing the scaling behavior
\begin{equation} \label{wiretrel} 
t_{\rm rel} \sim p^{-2^{\log_3(L)}}.
\end{equation}
Fig.~\ref{fig:trel} shows that this functional form is indeed obeyed quite well when $p<p_c$. 

\subsection{Generalization to other codes}

We will briefly discuss a generalization of the modified Tsirelson's automaton by replacing the majority vote procedure with error correction of a different code.  Let $C$ be a (small) classical linear code with parameters $[n,k,d]$, and define the operation $EC(s_1, s_2, \cdots, s_n)$ to be an error correction operation that maps the input bit string $s_1, s_2, \cdots, s_n$ to the closest codeword specified by the parity check matrix $H$ of the code.  As in Tsirelson's construction, define $\cX_0$ to be an idle operation, $\cY_0$ to be a swap of two bits, and $\cZ_0 = EC(s_1, s_2, \cdots, s_n)$ which acts on blocks of $n$ bits. By performing a similar fault tolerant simulation as before, we will obtain a concatenation scheme for the code $C$. This will result in a hierarchical decoder for the code with \hi{the parity check matrix} 
\begin{equation}
   H' =  \begin{pmatrix}
H \otimes I \otimes I... \otimes I   \\
 L \otimes H \otimes I ... \otimes I  \\
  L \otimes L \otimes H ... \otimes I \\
... \\
 L \otimes L \otimes L ... \otimes H  \\
\end{pmatrix}
\end{equation} 
where $\otimes$ denotes the tensor product for parity check matrices \hi{and $L$ is the logical generator matrix}. This corresponds to a classical tensor-product code (not to be confused with tensor products of chain complexes frequently used in quantum codes) with parameters $[n^\ell, k^\ell, d^\ell]$ if the simulation is iterated $\ell$ times.

Let us construct the fault-tolerant gadgets. We define $\EC_1 = \cZ_0$.  $FT(\cX_0)$ is defined on $n$ bits and is a circuit of idles for depth $2n-1$.  $FT(\cY_0)$ is defined on $2n$ bits and is a pattern of swaps and idles designed to function in a transversal way, i.e. the operation results in a bitwise swap of the first block of $n$ bits with the second.  Similarly to Tsirelson's construction, it will be a `diamond-shaped' pattern of swaps whose depth is $2n-1$.  

Finally, the gadget $FT(\cZ_0)$ acts on $n$ blocks of $n$ bits. The first part of this gadget brings the $j$-th bit of block $s$ to the $s$-th location in $j$-th block for all $j$ and $s$ using a pattern of swaps and idles\footnote{More precisely, let the $n^2$ bits be arranged on a line with the bit positions indexed by an integer $j \in \{0,\dots,n^2-1\}$. Writing $j = j_1 n + j_2$, with $j_{1,2} \in \{0,\dots,n\}$, the circuit of swaps and idles sends the bit at location $j_1n + j_2$ to the location $j_2n + j_1$. The specific way in which this circuit is implemented is not important, but one such `greedy' implementation yields a depth $(2n^2 - 4n + 3 + (-1)^{n+1})/2$.}. The gadget then applies $\cZ_0$ on each resulting $n$-bit block. Finally, the circuit performs the same sequence of swaps and idles as in the first step. This way, it implements $\cZ_0$ transversally (bitwise) between each block. The fault tolerant simulation procedure then follows from Def.~\ref{def:FT_original} and the discussion afterward.

\hi{As mentioned at the beginning of Sec}.~\ref{ss:modification_of_tsirelson}, Tsirelson's automaton corresponds to choosing $C$ to be the 3-bit repetition code. Since in general the information will be encoded in a $[n^\ell, k^\ell, d^\ell]$ code, by choosing the code $C$ that encodes more than one logical bit, i.e. $k>1$, we obtain a one-dimensional classical memory encoding a power-law number of bits.

The proof of fault tolerance using the exRecs method goes through essentially unchanged regardless of $C$\footnote{
We remark that this approach also straightforwardly generalizes to the case when we concatenate error correction circuits for several classical codes $C_i = \{ C_1, C_2, \dots\} $. For this, one would need to introduce a family of error correction gadgets $\cZ_0^{(i)}$. It may be possible that the fault tolerance property could still hold if one takes a family of codes with growing parameters such that the resulting tensor-product code has a constant rate, similarly to Refs.~\cite{Gacs_2001} and \cite{yamasaki2024time}. We leave this question as well as its generalization to the quantum setting to future work.}, and one can prove that the probability of incorrectly decoding a logical bit is
\be \mathbb{P}(D(s(T)) \neq D(s(0))) \leq T \cdot (A p)^{(t+1)^l},\ee 
where $A$ is a constant, $t= \lfloor (d-1)/2 \rfloor$, and the $\ast$-decoder $D$ is defined by a hierarchical application of $\EC$ in complete analogy to \eqref{eq:idealdecoderTS}.

\section{The toric code automaton}\label{sec:toriccode}

In this section, we introduce the toric code automaton. We start by defining the measurement and feedback model that will be used in the rest of the paper.   We show that a general $p$-bounded error model can be reduced to studying a $p$-bounded gadget error model for Pauli-$X$ and $Z$ errors separately. We refer to this error model as a decoupled error model.  For the rest of the paper, we will be studying the automaton under the decoupled error model.

After a brief high-level discussion of the construction, we show how to recast Tsirelson's automaton in terms of the measurement and feedback model and use this perspective to construct a basis of elementary gadgets for the toric code automaton.  Finally, we describe how the toric code automaton can be assembled from these elementary gadgets.

\subsection{Measurement and feedback automaton and decoupled error model}~\label{subsec:TC}

Here, we will informally present the most important definitions and results from Appendix~\ref{app:noisedetails}, which contain all of the details and proofs justifying the noise model that we study in the rest of the paper.

The main model for the quantum automaton that we study is the measurement and feedback model, whereby at each time step we measure all stabilizer generators, apply a feedback operator depending on syndromes, and apply a quantum noise channel. 

\begin{definition}[Measurement and feedback model; informal]

A measurement and feedback model is a quantum dynamical system that takes a stabilizer group $\mathcal{S}$ and at time step $t$: 
\begin{itemize}
\item [(a)] Measures all stabilizer generators, obtaining syndromes $\sigma$,
\item [(b)] Applies the feedback operator $\cF_t(\sigma)$ conditioned on the syndrome measurements, 
\item [(c)] Discards the syndrome measurement outcomes, and
\item [(d)] Applies noise channel $\mathcal{E}_t$.
\end{itemize}
\end{definition}

The measurement and feedback model does not yet reference locality.  For this, we define the notion of a stabilizer quantum gadget, which is an operation that consists of a $O(1)$-sized region of stabilizer measurements followed by feedback that only depends on measurement outcomes in this region, followed by noise (see Defs.~\ref{def:elementary-gadget} and ~\ref{def:stab-quant-gadg} for the formal definition).  We can then define a stabilizer quantum automaton by chaining together the action of many local stabilizer quantum gadgets.

\begin{definition}[Stabilizer quantum automaton; informal] 
A stabilizer quantum automaton for a stabilizer code $\mathcal{S} = \langle g_i \rangle$ and a given noise model is a depth-$t$ circuit of local stabilizer quantum gadgets associated with $\mathcal{S}$ that evolves an initial density matrix $\rho(0)$ to a density matrix $\rho(t)$.
\end{definition}
We will simply call the stabilizer quantum automaton an ``automaton''.  In App.~\ref{app:noisedetails}, we argue that due to stabilizer measurements being performed at every timestep, the effect of general noise can be reduced to studying a simpler model where the noise applies a Pauli operator $\cO_t$ at timestep $t$ and the feedback operator $\cF_t(\sigma)$ is also a Pauli operator.  Furthermore, for noise that can be described as a $p$-bounded qubit or $p$-bounded gadget error model, the effective Pauli error model after this simplification can be described by what we call a $p$-bounded \emph{decoupled} gadget error model.  We provide definitions of $p$-bounded gadget error model and the $p$-bounded decoupled gadget error model below:

\begin{definition} [$p$-bounded gadget error model; informal]\label{def:p-bounded-gadget-model}
A gadget error model is $p$-bounded if, given noise realization with a set of noise locations $E$, an associated set of failed gadgets $F_E$, and any subset of gadget locations $A_g$, it satisfies
    \be \label{eq:p-bounded-gadget}
    \mathbb P [  A_g  \subseteq F_E] \leq p^{|A_g|}.
    \ee
\end{definition}

\begin{definition}[$p$-bounded decoupled gadget error model; informal] \label{def:approximate-decoupled-gadget-error-model2}

Consider a noise realization with noise locations $E$ and noise operator history $\{\cO_1, \cO_2, \cdots, \cO_T \}$ with  $\cO_t = \cO_t^{(X)} \cO_t^{(Z)}$.  We partition $E = E_X \cup E_Z$ so that $E_X$ and $E_Z$ denote sets of the spacetime locations of qubits marked for being possibly acted upon by $X$ and $Z$ operators, respectively, and $\bigcup_t \supp (\cO_t^{(X)}) \subseteq E_X$ and $\bigcup_t \supp (\cO_t^{(Z)}) \subseteq E_Z$. 

Specify sets of failed elementary gadgets  $F_{E_X} = \{ F_{1}^{(X)}, F_2^{(X)}, \dots, F_T^{(X)}\}$ and $F_{E_Z} = \{ F_{1}^{(Z)}, F_2^{(Z)}, \dots, F_T^{(Z)}\}$ such that each individual gadget $f^{(X/Z)} \in F_t^{(X/Z)}$ shares support with $E_{X/Z}$, and $E_{X/Z}$ is contained in the union of supports of the gadgets in $F_{E_{X/Z}}$.  

For each time step $t$, we can choose a decomposition $\cO_t^{(X/Z)} = \prod_i \cO_{t,i}^{(X/Z)}$ where each $\cO_{t,i}^{(X/Z)}$ is contained in the support of $f_{t,i}^{(X/Z)} \in F_t^{(X/Z)}$.  The corresponding noise model is a $p$-bounded decoupled gadget noise model if, for any subsets of spacetime locations of $X$-type and $Z$-type elementary gadgets $A_X$ and $A_Z$,

\begin{equation}
\mathbb{P}[(A_X \subseteq F_{E_X}) \wedge (A_Z \subseteq F_{E_Z})] \leq p^{|A_X| + |A_Z|}.
\end{equation}
\end{definition}

As a technical note, the $p$-bounded gadget error model and $p$-bounded decoupled gadget error model should also include the adjective ``approximate error model'', see Def.~\ref{prop:apprnoise}.  An approximate error model will have a noise history (see Lemma~\ref{lemma:noise-collapse} for a definition) that never applies a logical operator at any fixed time step.  From standard percolation arguments, we can prove that an approximate error model is close in total variation distance to the non-approximate error model.  Later on in the text, we may use the adjective ``approximate error model'' to reinforce this technical subtlety.  

The gadgets of the toric code automaton will be chosen to be $X$-type and $Z$-type, see Def.~\ref{def:xzgadgets}.  For $X$($Z$)-type gadgets, one makes $Z$($X$)-stabilizer measurements in some $O(1)$-sized region, and performs $X$($Z$) Pauli feedback based on these measurement outcomes.  Moving forward, we will be heavily relying on (a quantum) notion of damage:
\begin{definition}[Damage] \label{def:damage}
   Call $\ket{\phi}$ the initial logical state of a stabilizer automaton with stabilizer group $\cS$. Consider a state $\ket {\psi} = \cO \ket {\phi}$  where $\cO$ is a Pauli operator. We call any operator $\cO'$ such that $\cO' \cO \in \cS$ the damage of the state $\ket {\psi}$.
\end{definition}
Since we only need to study the toric code automaton under a decoupled noise model, the following fact holds:
\begin{fact} \label{fact:XZ-decoupling}
    Run the toric code automaton with $X$ and $Z$-type gadgets  experiencing a decoupled gadget noise model for $T$ time steps.  Sample a noise realization with a history of noise operators $\{\cO_1, \cO_2, \cdots, \cO_T\}$, and assume that the initial state is a toric code state of the form $\cO_{\mathrm{in}}^{(X)} \cO_{\mathrm{in}}^{(Z)} \ket{\psi}$ where $\cO_{in}^{(X/Z)}$ are the input damage operators and $\ket{\psi}$ is a syndrome-free initial logical state of the toric code.
    
    Then, the total operator applied by the automaton (i.e. due to both its feedback and the noise) under this noise realization can be determined by running the $X$ and $Z$-type automata whose outputs give the operators $\cO_{\mathrm{tot},T}^{(X)}$ and $\cO_{\mathrm{tot},T}^{(Z)}$, and multiplying these operators together.
\end{fact}
This reduces the analyzes to studying two separate $X$ and $Z$ automata, which significantly simplifies the ensuing analysis.

\subsubsection{Classical gadget maps}

While we defined gadgets of an automaton to be quantum operations that measure stabilizers and apply feedback on the system, let us introduce a slightly different way of describing their action which we will extensively use throughout the paper. We focus on an $X$-automaton operating under a $p$-bounded decoupled gadget error model. We can map it to a classical problem corresponding to the dynamics of $X$-error syndromes on the lattice.  The following definition and proposition makes this precise:

\begin{definition}[Classical gadget map] \label{def:classical-action}
Consider a state $\rho_t$ with definite syndrome $\sigma$ experiencing a noise realization $E$ with syndrome $\varepsilon$. Define $\rho_{t+1}'$ to be the state after applying the noisy elementary gadget $G$. We note that, under a decoupled gadget error model, the state $\rho_{t+1}'$ will have definite syndrome, which we denote $\sigma'$. Then, we define $G[\sigma, \varepsilon]$ to be a classical map acting on $\sigma$ that produces $\sigma'$. 

For any pair of neighboring gadgets $G_1$ and  $G_2$, their joint operation is determined by running them simultaneously and following the prescription above.  For a non-elementary gadget $G$, we define the associated classical action $G[\sigma, \varepsilon]$ to be one that composes the classical actions of its constituent elementary gadgets.  

We will also use the notation $G[\sigma] = G[\sigma, \varnothing]$ to refer to the noiseless gadget operation on a state with syndrome $\sigma$. 
\end{definition}

We will use both the operator and the syndrome picture to analyze the properties of the toric code automaton and its gadgets. For the syndrome picture, we will utilize the classical maps associated with gadgets.

\subsection{Basic idea behind the construction}

Let us now discuss the high-level idea behind our construction. Unlike in a classical automaton, it is impossible to directly measure the state of an individual qubit in a quantum code without affecting the stored logical state. Instead the automaton can only measure stabilizers, thereby revealing information about the errors, but not the logical state of the code. Therefore, the measurement and feedback model is the most natural model which will not destructively measure the underlying quantum state. 

In fact, one can recast Tsirelson's automaton as a stabilizer measurement and feedback model, which we do explicitly in Subsec.~\ref{modified_tsirelson2}.  
The corresponding stabilizer code is simply the classical repetition code.  
We then introduce the concepts of a \emph{coarse-grained} lattice for each level that the automaton operates at, and we will see that the $\EC$ gadget either cleans up an error if it is small enough or otherwise coarse-grains it to live on a lattice of a larger scale. This process then repeats self-similarly.  Assuming a low noise rate, repeating this map at higher and higher scales cleans up error configurations. To show that the probability of a logical error is small, we can appeal to probabilistic properties of $p$-bounded noise at low noise rates by  using methods similar to G\'acs' sparsity bound~\cite{gacs1983reliable,gacs_slides}. In particular, with high probability, an error configuration sampled from a $p$-bounded noise model can be hierarchically grouped into clusters, where each level-$\ell$ cluster has a characteristic size growing with $\ell$, and different clusters are sufficiently separated from one another. A fault-tolerant automaton should be able to handle large clusters at a given level even when new smaller clusters of errors are dynamically appearing.  To do this, it must eliminate smaller error clusters at a faster rate than larger ones.

Notice that these ideas are very general; they are not specific to Tsirelson's automaton and are independent of the spacetime dimension. Therefore, we can use them as a framework for building an automaton for the toric code, wherein error configurations in $X$ and $Z$ sectors simultaneously undergo a similar coarse-graining action described above. 

We will construct the $X$-type automaton, and our goal will be to consider the $X$-type noise and protect the state against the $X$-type logical errors. As discussed before, the $Z$-type automaton will be defined analogously except on a dual lattice and will have the same properties. To explain how our toric code automaton functions, we begin by defining the primitive gadgets that appear in its construction. One naive starting point for building such an automaton would be to independently run appropriate analogues of the 1D Tsirelson automaton along each row and column of the lattice. This approach fails to be fault tolerant for reasons explained in Sec.~\ref{ss:twod_maj_vote}. We have not been able to find a way to implement a fault tolerant toric code automaton with (the measurement and feedback versions of the) 1D swap and majority vote gates. 

Our construction instead departs from this naive idea in two places. The first place is by decomposing the 1D majority vote gate used in the measurement and feedback version of the Tsirelson's automaton into the syndrome-operating analogue of the `swap' gate and a new type of `match' gate, which pairs up nontrivial syndromes on neighboring sites.  The second place is in defining an $\EC$ gadget which acts on 3$\times$3 cell groups of the toric code lattice.  Our $\EC$ gadget is constructed by running a version of Tsirelson's $Z_0$ gadget vertically and horizontally, and coupling the two operations with match gates inserted at certain times. This $\EC$ gadget will be designed to either eliminate errors in its support or to send syndromes to the corners of the $3 \times 3$ cell. We will then describe how to make this construction hierarchical, using two procedures we call the inner and outer simulations.

\subsection{Stabilizer approach to Tsirelson's automaton}
\label{modified_tsirelson2}

Tsirelson's automaton performs error correction of a repetition code:
\begin{definition}[1D repetition code]
A repetition code on a one-dimensional line of $L$ qubits is a stabilizer code, with the stabilizer group generated by $A_{i+1/2} = Z_{i} Z_{i+1}$, where $i$ labels an edge and $i+1/2$ labels an adjacent vertex.
\end{definition}

In the spirit of quantum stabilizer codes, we will also formally use the term `qubit' instead of `bit', even though the repetition code is classical.

In the measurement and feedback approach, the SWAP gate is analogous to a new elementary gate gadget we will refer to as $\cT_0(v)$. We will use the notation $G(\bm{x})$ where $\bm x$ is either a spatial or a spacetime coordinate associated with the gadget (depending on the context), which should not be confused with the usage of square brackets to indicate the classical gadget action as in $G[\sigma, \varepsilon]$ from Def.~\ref{def:classical-action}. 
This gate gadget is defined as follows:
\begin{definition}[$\cT_0$ for modified Tsirelson's construction]
Let $v+1/2$ denote the edge to the right from the vertex labeled $v$ on the 1D lattice. The action of the $\cT_0(v)$ gadget is:
 \begin{equation}
\cT_0(v) = \begin{cases}
 I &  \text{ if } Z_{v-1/2} Z_{v+1/2} = 1\\
X_{v-1/2} X_{v+1/2}  &  \text{ if }  Z_{v-1/2} Z_{v+1/2}  = -1
 \end{cases}
 \end{equation}
Namely, it (i) measures the $Z_{v-1/2} Z_{v+1/2} $ check, (ii) applies $X_{v-1/2} X_{v+1/2}$ if the measurement outcome was $-1$, and (iii) does nothing otherwise. Given the $\mathbb F_2$-valued syndromes, we can alternatively define the associated classical action $\cT_0(v)[\sigma,\varnothing]$ in the syndrome picture as
 \begin{equation}
 (\sigma_{v-1}, \sigma_{v}, \sigma_{v+1}) \overset{\cT_0(v)[\sigma,\varnothing]}{\longrightarrow} \begin{cases}
 (\sigma_{v-1}, \sigma_{v}, \sigma_{v+1}), &  \text{ if } 
 \sigma_{v} = 0\\
 (\sigma_{v-1} \oplus 1, \sigma_{v}, \sigma_{v+1} \oplus 1), &  \text{ if }  \sigma_{v} = 1
 \end{cases}
 \end{equation}
\end{definition}
One can directly verify that in terms of bits, this action is equivalent to the action of the `swap' gate $Y_0(v)$. We schematically depict the nontrivial action of this gate gadget as:
 \begin{equation}
 \includegraphics[width =0.33\textwidth]{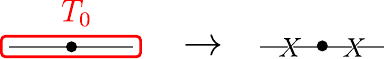}
 \end{equation}
 where the black dot denotes the presence of a nontrivial syndrome.

Next, we discuss the majority vote gadget. Unlike in Tsirelson's construction, rather than using SWAP and majority as a basis, we will use a different (and equivalent) basis of gates. As mentioned above, this is done because it is not natural to perform error correction in the 2D toric code using only 1D majority vote gates. Instead, we will perform error correction in 2D by starting with a grid of intersecting 1D Tsirelson automata, decomposing each 1D majority vote into a sequence of swap and the new `match' gates, and then adding match gates coupling different rows / columns during the action of each 1D majority vote. How exactly this is done will be explained in Sec.~\ref{ss:twod_maj_vote}. First however, we will define the match gate. 

A match gate $\cM_0(e)$ centered on the qubit labeled $e$ is defined as follows:
\begin{definition}[$\cM_0$ for modified Tsirelson's construction]
The elementary $\cM_0(e)$ gadget is defined as
 \begin{equation}
\cM_0(e) = \begin{cases}
 X_{e} &  \text{ if } Z_{e-1} Z_{e} = -1 \text{ and } Z_{e} Z_{e+1} = -1\\
 I & \text{otherwise}
 \end{cases}
 \end{equation}
 i.e. if two nontrivial syndromes are present from measuring checks sharing qubit $e$, the syndromes are matched by flipping the value of the qubit shared by both of the checks. In terms of syndromes, the associated classical action $\cM_0(e)[\sigma, \varnothing]$ is defined as:
 \begin{equation}
 (\sigma_{e-1/2}, \sigma_{e+1/2}) \overset{\cM_0(e)[\sigma, \varnothing]}{\longrightarrow} \begin{cases}
 (0,0) &  \text{ if }  \sigma_{e-1/2} = \sigma_{e+1/2} = 1\\
 (\sigma_{e-1/2}, \sigma_{e+1/2}) & \text{otherwise}
 \end{cases}
 \end{equation}
\end{definition}
We schematically depict the nontrivial action of this gate as:
 \begin{equation} 
 \label{eq:mt_maj_decomp}
 \includegraphics[width =0.45\textwidth]{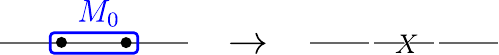}
 \end{equation}

Note that instead of showing the action of $\cT_0$ and $\cM_0$ in spacetime like was done earlier, we just show one time step at a time. This new notation will be useful for the toric code case where the circuits are two-dimensional and a direction illustration of their spacetime action is not feasible.

\begin{definition}[Support of a gadget: 1D repetition code]
Define the location of a syndrome in a 1D repetition code to be the vertex $e+1/2$ for which the syndrome is the measurement outcome of stabilizer generator $Z_e Z_{e+1}$.  The support of a gadget $G$, denoted $\mathrm{supp}(G)$, is the set of spacetime locations that the gadget nontrivially depends on.
\end{definition}

\begin{fact}
The majority gate $Z_0$ can be written in terms of $\cT_0$ and $\cM_0$ as
 \begin{equation}\label{eqn:maj}
 \mathrm{Maj}(e-1,e,e+1) \equiv Z_0 (e-1,e,e+1) =\cM_0(e) \,\cT_0({e+1/2}) \, \cM_0(e) \,\cT_0({e-1/2}) \, \cM_0(e)
 \end{equation}
\end{fact}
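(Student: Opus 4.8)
The plan is to prove the operator identity in the Fact by a short finite case analysis over the four possible values of the pair of internal syndromes $(\sigma_{e-1/2},\sigma_{e+1/2})$ associated with the two checks $Z_{e-1}Z_e$ and $Z_eZ_{e+1}$. Every gadget on the right-hand side --- $\cM_0(e)$ and $\cT_0(e\pm 1/2)$ --- is a measurement-and-feedback operation whose Pauli-$X$ feedback is conditioned only on $Z$-type syndromes, so their composite is again a syndrome-conditioned $X$-type operation. Since $Z_0(e-1,e,e+1)$ is too, it suffices to show that for each syndrome class the total Pauli operator applied by the right-hand word coincides with the one applied by $Z_0$.

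\textbf{Key steps.} First I would record the action of $Z_0$ in the syndrome picture: it leaves the three qubits untouched when they agree (syndrome $(0,0)$) and otherwise flips the unique minority qubit, so its feedback is $I$, $X_{e-1}$, $X_{e+1}$, $X_e$ for syndromes $(0,0)$, $(1,0)$, $(0,1)$, $(1,1)$ respectively. Next I would recall the syndrome-level action of the two gadget types: $\cM_0(e)$ applies $X_e$ (flipping both $\sigma_{e-1/2}$ and $\sigma_{e+1/2}$) precisely when both are $1$; and $\cT_0$ at a vertex $v$, when $\sigma_v=1$, applies $X$ to the two qubits adjacent to $v$, whose net effect on the syndrome vector is to flip the two syndromes neighboring $v$. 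With these rules I would trace the five-gadget word $\cM_0(e)\,\cT_0(e+1/2)\,\cM_0(e)\,\cT_0(e-1/2)\,\cM_0(e)$, read right to left, through each syndrome class, multiplying up only the operators that actually fire. The bookkeeping gives: for $(0,0)$ nothing fires (net $I$); for $(1,1)$ only the first $\cM_0(e)$ fires (net $X_e$); for $(1,0)$ the first $\cM_0(e)$ does nothing, then $\cT_0(e-1/2)$ fires and produces the internal syndrome pattern $(1,1)$, the middle $\cM_0(e)$ then fires, and the last two gadgets do nothing, giving net $X_{e-1}X_e\cdot X_e = X_{e-1}$; and $(0,1)$ is the mirror image, giving net $X_eX_{e+1}\cdot X_e = X_{e+1}$. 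Comparing with the list for $Z_0$ finishes the proof.

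\textbf{Subtleties and main obstacle.} Two points deserve care but do not pose a real obstacle. First, the firing conditions of all three gadget types depend only on the internal syndromes $\sigma_{e\pm 1/2}$, so the external syndromes $\sigma_{e\pm 3/2}$ --- which do get flipped along the way --- never affect which operators fire; moreover in the two asymmetric cases the net operator $X_{e\mp1}$ flips $\sigma_{e\pm 3/2}$ exactly as $Z_0$ does, so the identity is exact, not merely valid modulo the stabilizer group. Second, the mild hazard is notational: the Definitions of $\cT_0$ and $\cM_0$ use opposite conventions for which of qubits versus vertices carry integer labels, so before the case analysis I would fix, once and for all, that here $e-1,e,e+1$ label the three qubits and $e\pm 1/2$ the two internal vertices, and translate the $\cT_0$ rule accordingly. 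Beyond keeping these conventions straight the argument is a routine finite computation, which one could equally well phrase as a check on all $2^3$ input bit strings; I would include that as a cross-check.
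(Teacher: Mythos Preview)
Your proposal is correct and is exactly the direct verification the paper has in mind; the paper does not give a written proof but simply displays the five-step action schematically in the figure of Eq.~\eqref{Z0-action}, which amounts to the same case analysis you trace through. Your explicit bookkeeping of the four syndrome classes and the resulting net Pauli operators is accurate, and your remark about the external syndromes $\sigma_{e\pm 3/2}$ (and the notational translation between the $\cT_0$ and $\cM_0$ conventions) correctly anticipates the only minor pitfalls.
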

Schematically, this looks like 
 \begin{equation} \label{Z0-action}
 \includegraphics[width =0.95\textwidth]{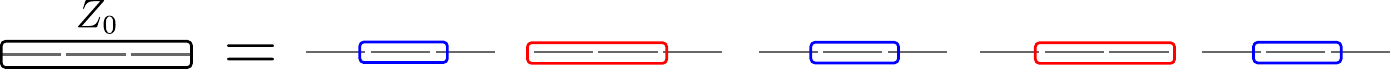}
 \end{equation}

Finally, we denote the elementary `idle' gate gadget as $\cI_0$. We emphasize that the action of the $\cI_0$, $\cT_0$ and $Z_0$ on product states is \emph{identical} to that of $X_0$, $Y_0$ and $Z_0$, respectively.  Thus, using the measurement and feedback paradigm, we can write the modified Tsirelson's automaton in terms of the new gadgets $\cI_0$, $\cT_0$, and $\cM_0$ (with $Z_0$ replaced with the decomposition shown above) with otherwise the same substitution rules introduced in Sec.~\ref{sec:tsirelson}.

In the following, we will also use the following interpretation of the majority vote gadget. 
This gadget removes any single-bit error (a pair of domain walls separated in space by distance 1) within its support. If there is only a single domain wall inside its support, the gadget will move the domain wall to its nearest end. We call the latter action ``\textit{coarse-graining} the error configuration'' (a concept that we discuss more formally shortly). This accounts for all possible input configurations because there are only two possible locations for domain walls within the support of $Z_0$.

This almost completes the new fault-tolerant prescription for expressing a circuit $\cC$ as defined in Def.~\ref{def:FT_original}.  The definitions of $\cI_1$ and $\cT_1$ as well as $\EC$ remain the same as those of $\cX_1$ and $\cY_1$ when interpreted as a substitution rule. We define $FT(\cM_0) = \EC_1 \cM_1 \EC_1$, where $\cM_1$ is defined as:
 \begin{equation}\label{eq:M1-ts}
 \begin{centering}
      \includegraphics[width = 0.23 \textwidth]{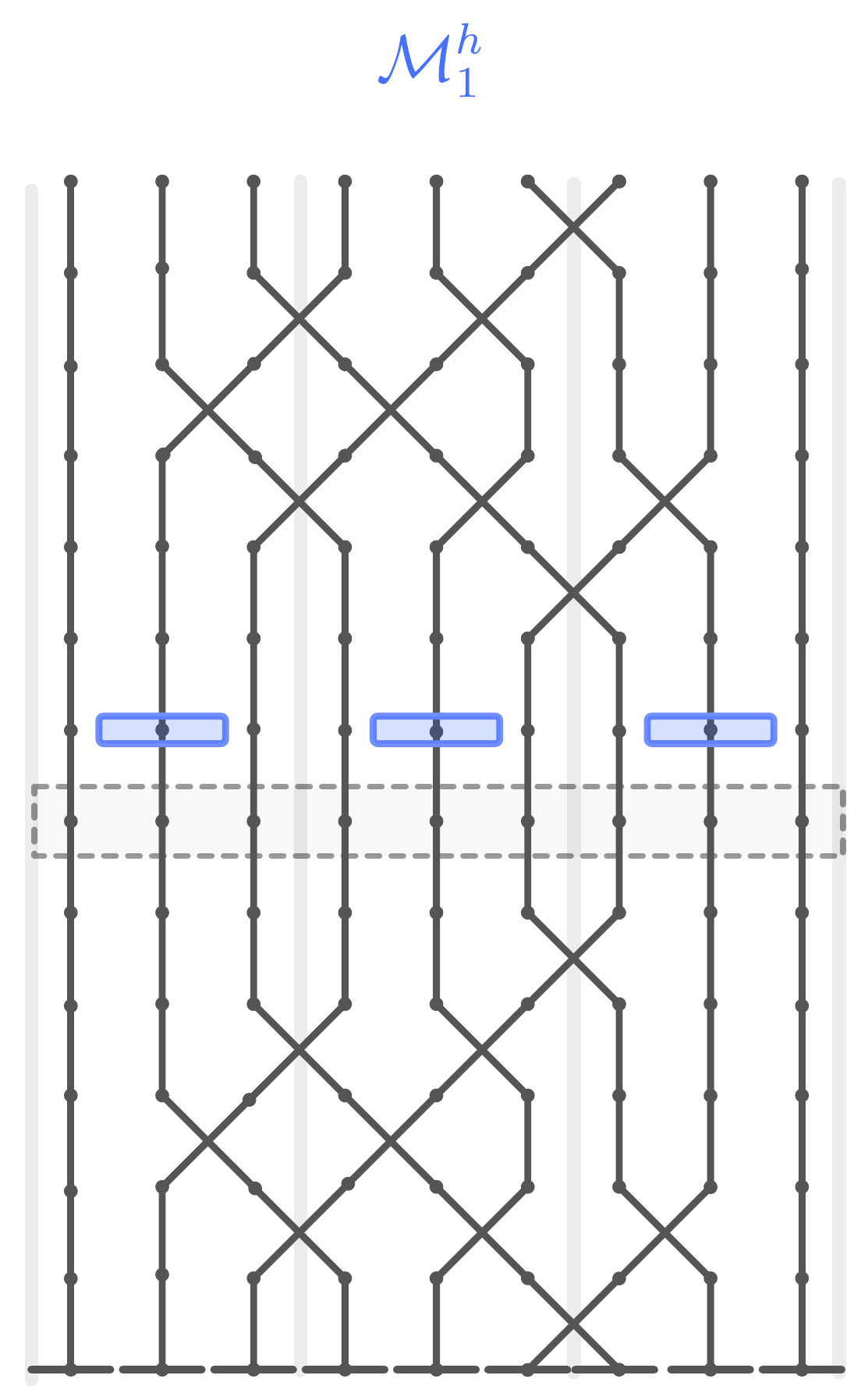}
 \end{centering}
 \end{equation}
with each blue rectangle in the middle of the circuit denoting an $\cM_0$ gadget.
It operates similarly to the modified $\cZ_1$, except the sixth step is now an application of $\cM_0^{\otimes 3}$ rather than $Z_0^{\otimes 3}$. Similarly to $\cZ$-type gadgets in the original construction, $\cM$-type gadgets only occur simultaneously with the gadgets of the same type, and are never combined with $\cI$ or $\cT$ gadgets. Therefore, we do not need to require that the $\cM$ gadgets have the same depth as $\cI$ and $\cT$. 

Finally, the fault-tolerant simulation is defined in the same way as in Sec.~\ref{sec:tsirelson}.

\subsection{Basic definitions for the toric code automaton}

As mentioned previously, we will discuss only the bit-flip ($X$) error correcting part of the automaton in the rest of the paper since the phase-flip error correction has an identical implementation after passing to the dual lattice and exchanging $X \leftrightarrow Z$. 
Therefore, the only stabilizer measurements featured here will be the $A_v$-stabilizer measurements, and thus, $\cL_0$ will be from now on the set of all vertices on the lattice.  We will use the notation $v = (v_x,v_y)$ for the coordinates of the vertex $v$, and denote qubits adjacent to the vertices by half-integer coordinates, such as $(v_x+1/2,v_y)$ for a qubit on a horizontal edge adjacent connecting $(v_x,v_y)$ with $(v_x+1,v_y)$. 
We will also use $e$ to denote qubit locations when convenient.

We start by introducing the basis of gadgets for our automaton, which are very similar to the $\cT$ and $\cM$ gadgets that we introduced for the stabilizer-based treatment of Tsirelson's automaton. 
The difference is that the $\cT_0$ and $\cM_0$ gadgets now live in two dimensions, and we define a horizontal (`$h$') and vertical (`$v$') action for them. They are designed so that, expressed through their action on syndromes, they have the same action as their one-dimensional counterparts.

\begin{definition}[Elementary $\cT$ gadget for the toric code]
    The action of the $\cT^h_0(w)$ gadget, where $w \in \cL_0$ is defined as
 \begin{equation}
 \cT^h_0(w) = \begin{cases}
 I & \text{ if } A_w = 1\\
 X_{w_x-1/2,w_y} X_{w_x+1/2,w_y} & \text{ if } A_w = -1
 \end{cases}
 \end{equation}
Namely, it (i) measures the $A_w$ stabilizer and (ii) applies a horizontal $X_{w_x-1/2,w_y} X_{w_x+1/2,w_y}$ operator if the measurement outcome was $-1$ and does nothing otherwise. In the syndrome picture, we can define the associated classical action $\cT_0^h(w)[\sigma, \varnothing]$ as
 \begin{equation}
 (\sigma_{w_x-1,w_y}, \sigma_{w_x,w_y}, \sigma_{w_x+1,w_y}) \overset{\cT_0^h(w)[\sigma, \varnothing]}{\longrightarrow}  (\sigma_{w_x-1,w_y} \oplus \sigma_{w_x,w_y}, \sigma_{w_x,w_y}, \sigma_{w_x+1,w_y} \oplus \sigma_{w_x,w_y})
 \end{equation}
\end{definition}
$\cT_0^v$ is defined as a $\pi/2$-rotated version of $\cT_0^h$, so that $x$ and $y$ are exchanged in the above definition;
the `$h$' (horizontal) or `$v$' (vertical) superscript accordingly stands for the orientation of the feedback action. Intuitively, the action of $\cT_0^h(v)$ broadcasts information about the syndrome present at location $v$ to its left and right neighboring vertices (and for $\cT^v_0(v)$, to the the top and bottom vertices neighboring $v$).

We now define the ``match'' gate for the toric code.
\begin{definition}[Elementary $M$ gadget for the toric code]
    The $\cM_0^h(e)$ gadget is defined as
 \begin{equation}
\cM_0^h(e) = \begin{cases}
 X_e & \text{ if }  A_{e_x-1/2,e_y} = -1 \text{ and }A_{e_x+1/2,e_y} = -1\\
 I & \text{otherwise}
 \end{cases}
 \end{equation}
 i.e. if two nontrivial syndromes are present from measuring the two checks sharing qubit $e$, the syndromes are matched by applying $X$ to the qubit. In terms of syndromes, the associated classical action $\cM_0^h(e)[\sigma, \varnothing]$ is defined as:
 \begin{equation}
 (\sigma_{e_x-1/2,e_y}, \sigma_{e_x+1/2,e_y}) \overset{\cM_0^h(e)[\sigma, \varnothing]}{\longrightarrow} \begin{cases}
 (0,0) & \text{ if }  \sigma_{e_x-1/2,e_y} = \sigma_{e_x+1/2,e_y} = 1\\
 (\sigma_{e_x-1/2,e_y}, \sigma_{e_x+1/2,e_y}) & \text{otherwise}
 \end{cases}
 \end{equation}
\end{definition}
  Pictorially, the gadgets $\cT_0^h$ and $\cM_0^h$ have the following nontrivial action:
 \begin{equation}\label{eq:M0T0}
 \begin{centering}
      \includegraphics[width = 0.42 \textwidth]{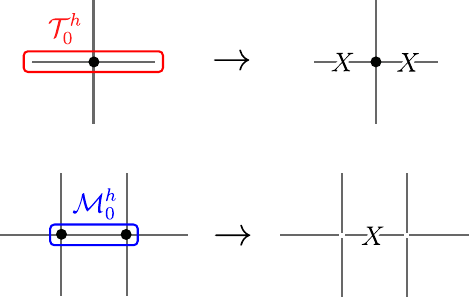}
 \end{centering}
 \end{equation}
$\cM_0^v$ is defined as a $\pi/2$-rotated version of $\cM^h_0$. 
One can see that these gadgets are defined very similarly to the case of the repetition code, except that stabilizers $A_v$ that are measured also have support on adjacent vertical qubits. 

We note that these new gadgets, although mimicking certain properties of Tsirelson's gadgets from the perspective of syndromes, no longer have the same interpretation in terms of their action on error configurations.  In particular, while Tsirelson's $\cT_0$ operation is equivalent to swapping the states of two bits, $\cT_0^{h/v}$ in the toric code no longer has this interpretation: this is because the syndrome no longer detects the difference between two qubits, but rather whether the parity of the states of four neighboring qubits is even or odd.  Consider an example of $\cT_0^h$ acting on an input configuration containing a horizontal error string:

 \begin{equation}
 \begin{centering}
      \includegraphics[width = 0.6 \textwidth]{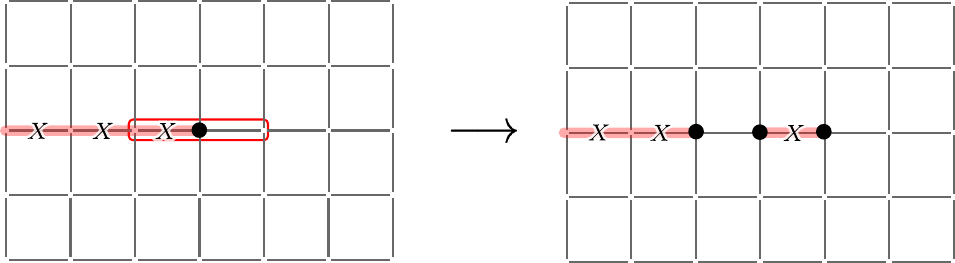}
 \end{centering}
 \end{equation}
In fact, the action shown above is equivalent to the  outcome of the syndrome-operating version of $Y_0$ in Tsirelson's automaton because the error is restricted to the same 1D line as the gadget's feedback. 

However, consider an example with a different input operator corresponding to the same syndrome configuration: 
 \begin{equation}
 \begin{centering}
      \includegraphics[width = 0.6 \textwidth]{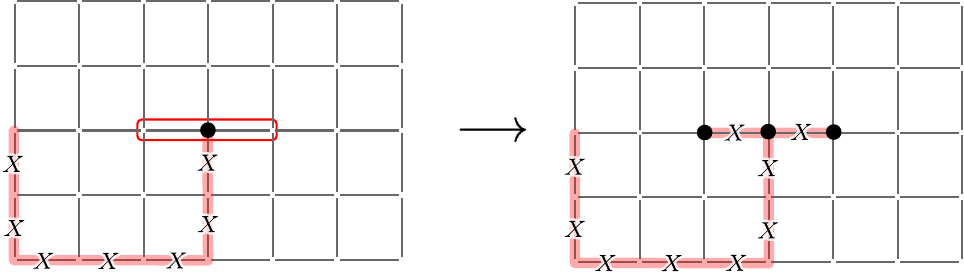}
 \end{centering}
 \end{equation}
In this case, the effect of $T_0^h$ creates an additional perpendicular string. 
Note that the two input error configurations in the examples above are equivalent since they are related to one another by application of a stabilizer.  This example illustrates that, because the error strings acting on the codestates of the toric code can be deformed by stabilizers without affecting the state, it is more meaningful to analyze the evolution of syndromes under the automaton than the evolution of operators.  

\subsection{The construction}\label{ss:construction_of_tc_gadgets}

To build a hierarchical automaton for the toric code, we need to introduce a hierarchy of scales for the toric code first. 
In the earlier sections, the discussion was  similar in spirit to concatenated codes, wherein a $3^k$ qubit repetition code stabilized by Tsirelson's automaton could be viewed as a $k$-fold concatenation of 3-qubit repetition code with itself. 
This was necessary in order to apply the exRecs method for proving fault tolerance of the construction.
However, there is no obvious analogous point of view of the toric code as the toric code cannot be viewed as code concatenation of a number of small codes.
Therefore, our hierarchy will be based on the spatial structure of error configurations instead. 

\begin{definition}
Consider the toric code defined on a square lattice as in Def.~\ref{def:toric-code}. We define the following hierarchy of cells:
\begin{itemize}
    \item Plaquettes of the lattice are called 0-cells and denoted $C_{0,\bm{p}}$, where $\bm{p} = (p_x,p_y)$ is the coordinate of plaquette $p$;
    \item Colonies of neighboring $3\times 3$ groups of 0-cells, such that the 0-cell $C_{0,(3p_x + 1), (3p_y+1)}$ is located at the center are called, 1-cells and denoted $C_{1,\bm{p}}$; 
    \item Colonies of neighboring $3 \times 3$ groups of $(k-1)$-cells, such that the  $(k-1)$-cell $C_{k-1,(3p_x + 1), (3p_y+1)}$ is located at the center, are called $k$-cells and denoted $C_{k,\bm{p}}$.
\end{itemize}
We also define a $k$-link to be the set of locations on a side of a $k$-cell. The collection of vertices at the corners of all 0-cells of the lattice is denoted $\Sigma_0$, and the collection of vertices at the corners of all $k$-cells of the lattice is denoted $\Sigma_k$. We will refer to the lattice formed by vertices in $\Sigma_k$ and the collection of all $k$-links as the $k$-frame.
\end{definition}

\begin{definition}[Coarse-graining]
A coarse-grained state at level $k$ is a state whose nontrivial syndromes lie entirely in $\Sigma_k$.  We also call such a state a ``level-$k$ encoded state''.  
A gadget is said to have a coarse-graining action at level $k$ if a noise-free version of it maps a state with nontrivial syndromes in $\Sigma_0$ to a state with nontrivial syndromes entirely in $\Sigma_k$.  
\end{definition}

\subsubsection{A two-dimensional ``majority vote''} \label{ss:twod_maj_vote}

The toric code is a two-dimensional code; however, since error syndromes are point-like, they bear some resemblance to the syndromes of the repetition code, which are also point-like. As such, we would like to construct an error correction gadget that will play a role analogous to that of the error correction gadget $EC_1 \equiv \cZ_0$ featured in Tsirelson's automaton. In Tsirelson's construction, this was achieved by majority vote, but for quantum codes like the toric code, there is no natural `majority vote' operation. 
We instead need a gadget whose action mimics the action of the majority vote from the perspective of its action on error syndromes.  The desired gadget, which we call $EC_1 \equiv R_0$ ($R$ for `recovery'), is defined through its action on a 1-cell. In particular, we require that it erase single errors that are within its support, defined to be the 1-cell it operates on, together with its boundaries. Otherwise, the remaining syndrome is coarse-grained to the next level. 
This $R_0$ is thus in some sense a natural two-dimensional generalization of $\cZ_0$. 

Let us now construct the $R_0$ gadget with the desired action as a circuit of $\cI_0$, $\cT_0^{h/v}$ and $\cM_0^{h/v}$  gate gadgets. There is no unique choice for $R_0$, but we endeavored to make a choice which is simple, and for which proving fault tolerance is not too complicated.
\begin{definition}[$R_0$]\label{def:R0}
Define the auxiliary gadgets $R_0^{v/h}$ pictorially as
 \begin{equation} \label{R0-definition}
 \includegraphics[width = 0.7 \textwidth]{figs/R0-def}
 \end{equation}
 In the pictures above, the gadget is shown together with its qubit support, and the sequence of gates in $R_0^{v/h}$ acts in six steps displayed from left to right. 
 
In terms of these gadgets, $R_0$ is the following 48-step operation
\begin{equation} \label{R0-actual-definition}
    R_0 = (R_0^h \circ R_0^h \circ R_0^v \circ R_0^v)^2 
\end{equation}

\end{definition}

\begin{remark} \label{remark:doubling}
The alternative definition
\begin{equation} \label{R0-short-definition}
    R_0' = R_0^h \circ R_0^h \circ R_0^v \circ R_0^v 
\end{equation}
which is simply half of $R_0$, would also be a suitable $\EC$ gadget. However, using the doubled version of $R'$ greatly simplifies the proof of the $\EC$ A condition for the toric code automaton (Prop.~\ref{proposition:gate-ec-prime-TC}). This is the only place where the difference between $R'$ and $R$ will be important: apart from the proof of the $\EC$ A condition, 
all properties shown in this paper for $R$-type gadgets also hold for $R'$-type gadgets. 
\end{remark}

Because of the overlap of the support of neighboring gadgets, for composite gadgets, we will adopt a new definition of their action that reflects the fact that the gadgets always tile the entire lattice:

\begin{definition}[Action of a non-elementary gadget] \label{def:action-tiling}
    Consider the \hi{automaton at time step $t$, which can be represented as a set of gadgets spatially covering the entire lattice}.  The action of gadget $G$ at time $t$ is the operation of the entire \hi{automaton at time $t$} restricted to the support of $G$.
\end{definition}
This definition differs from simply performing the operation of the elementary gadgets comprising the non-elementary gadget; in particular, we also include the effect of the neighboring gadgets on the boundaries of the non-elementary gadget.  For example, the operation of a \emph{single} composite gadget (say $R_0$) is normally defined so that no level-0 gadgets act on its north and east boundaries. This is done such that we can tile the entire lattice with gadgets in a well-defined way.  However, when analyzing properties of gadgets (which will be done extensively in Sec.~\ref{sec:FT_proof}), we always assume that these gadgets are part of a layer of gadgets \hi{spatially covering} the lattice, and thus, we cannot ignore the level-0 gadgets in the composite gadget's north and east boundaries.  Thus, we will always be assuming operation of level-0 gadgets on all boundaries of higher-level gadgets when analyzing their properties.

The action of $R_0$ cleans up a single 0-link input error and has a level-1 coarse-graining action otherwise. Let us illustrate this with a few input-output configurations (computed assuming no errors occur during the action of $R_0$):
 \begin{equation} \label{R0-action}
 \includegraphics[width = 0.85 \textwidth]{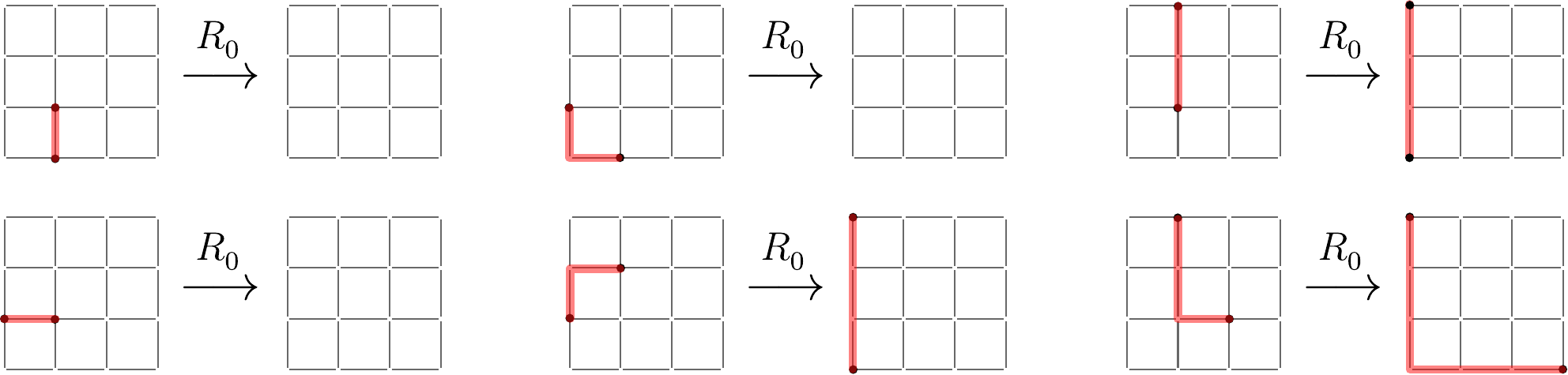}
 \end{equation}
To summarize, any single input 0-link error is removed, as shown in two examples in the first column. The second column shows two examples of `corner errors' (this is an error caused by two $X$ errors that form a right angle so that the syndromes lie on diagonally opposing corners of a 0-cell). These are either completely erased or turn into errors on 1-links. Finally, the third column shows two examples of larger sized input errors which get coarse-grained to the nearest sites in $\Sigma_1$.

Let us provide an informal explanation of how the $R_0$ gadget operates. In our explanation we will instead consider the shorter gadget $R_0'$ as opposed to $R_0$, ignoring the artificial doubling following Remark~\ref{remark:doubling}.  If we ignore steps 3 and 6,  $R_0^v$ is a set of disjoint vertical majority votes on three columns. A direct application of such a one-dimensional majority vote would not behave well in two dimensions: it would turn some of the corner errors into corner errors with syndromes on $\Sigma_1$. This is undesirable because a self-similar gadget operating like this would indefinitely grow a single corner error.  Steps 3 and 6---rounds of $\cM_0$ gadgets oriented perpendicularly to the preceding rounds of $\cT_0$---are introduced to treat specifically this issue.

As an explicit example, let us consider a version of $R_0$ where the additional rounds of $\cM_0$ are skipped, and run this on a corner-shaped input error:
 \begin{equation*} 
 \includegraphics[width = 0.85 \textwidth]{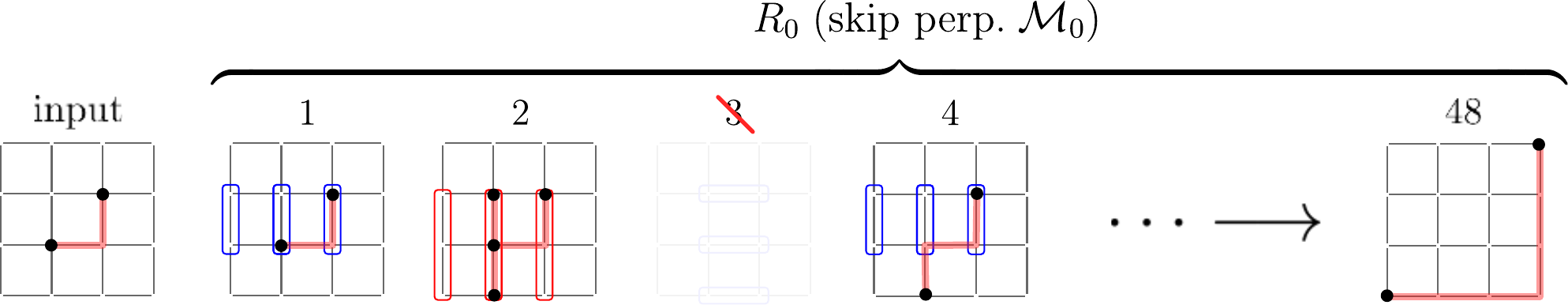}
 \end{equation*}
where we show the syndromes of the output state as well as the total operator acting on the state at the end of each indicated time step (eliminating closed loops as they correspond to stabilizers).  At step 4, the syndromes are already too far separated to be corrected afterwards. Let us consider the action of the actual $R_0$ gadget on the same input state to see the effect of $\cM_0^h$ in step 3:
 \begin{equation*} 
 \includegraphics[width = 0.85 \textwidth]{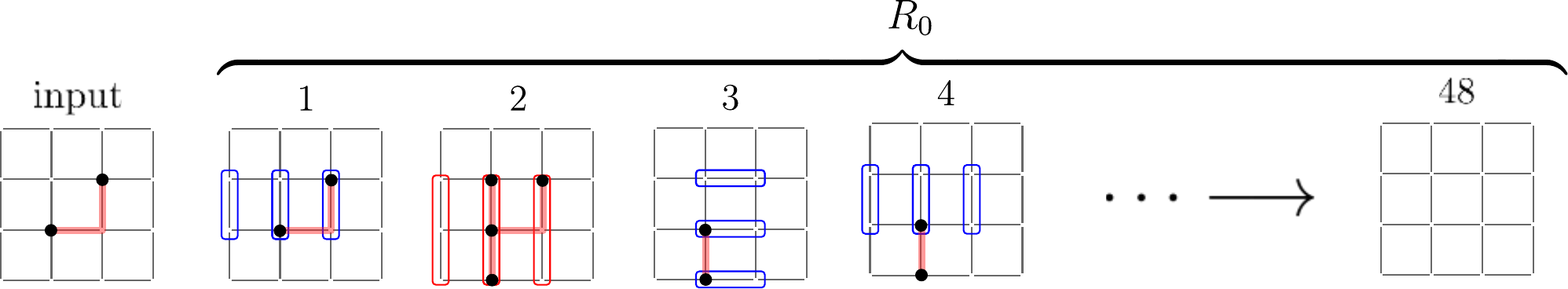}
 \end{equation*}
The perpendicular applications of $\cM$ bring the syndromes close to one another by step 4, allowing them to be eventually annihilated.  One can gain further intuition for the operation of $R_0$ by examining its action on different input syndrome configurations. Two repetitions of $R_0^v$ (and similarly, $R_0^h$) are needed to properly coarse-grain errors in two dimensions. For example, a corner error will undergo some nontrivial change under the first $R_0^v$ but the syndromes will not be fully coarse-grained until the second $R_0^v$ has been applied. 

\begin{remark} \label{R0_search_remark}
An exhaustive numerical search~\cite{code} reveals that $R_0$ acting on a syndrome-free input state with a single link error anywhere in its spacetime support, followed by a noise-free application of $R_0$, yields a syndrome-free state. The same property holds for the shorter version $R_0'$. 
\end{remark}

Finally, let us remark that, unlike in Tsirelson's automaton, the spatial supports of the neighboring  $R_0$ gadgets \hi{covering} the entire lattice overlap on the boundaries of the 1-cells\footnote{In spacetime, the supports of their constituent level-0 gadgets do not overlap -- this is why $R_0$ gadgets can tile the entire space and do not require additional staggering in time.}, which is impossible to avoid. This makes the analysis more difficult than in the one-dimensional case, where neighboring majority votes had disjoint support. As noted in Def.~\ref{def:composition}, we will still use the $\otimes$ symbol to compose overlapping gadgets in space (at a fixed point in time) so long as their operation can be parallelized.

\subsubsection{Fault-tolerant simulation: inner simulation} \label{subsec:inner-simulation}

We will now define the ($X$ part of the) fault-tolerant simulation for the toric code automaton.  It will be based on the error-correcting gadget $R_0$ and fault-tolerant versions of the gate gadgets. However, the fault tolerant prescription will not be as simple as the one discussed in Sec.~\ref{sec:tsirelson}.  Instead, the simulation is divided into two parts.  First, from the level-0 error correction and gate gadgets, we construct error correction and gate gadgets at some fixed level $k$ using a prescription we call the \emph{inner} simulation. The  EC and gate gadgets obtained from the procedure will be shown to satisfy certain EC and gate properties that we will discuss in later sections. We then use these gadgets as primitives for an \emph{outer} simulation (the language is borrowed from \emph{inner} and \emph{outer} codes from code concatenation), which is a conventional fault-tolerant prescription for the circuit, corresponding to a generalization of the procedure in Sec.~\ref{sec:tsirelson}. The inner simulation scheme is used because we need to go $k$ levels up at a time to suppress the error rate from $p$ to $(Ap)^2$, while the outer simulation scheme is used to provide further error suppression similarly to standard circuit concatenation\footnote{It is quite likely that we do not need two separate types of simulations and can get away with just using the outer simulation at all levels.  This would clearly be much more elegant. Then, one would simply perform $k$ iterations of the outer simulation to achieve error suppression $p \to Ap^2$ and continue to perform outer simulation to further amplify the error suppression. Unfortunately, by the time we realized this, circa 50 pages of existing proofs that were tailored to the inner and outer simulation prescription would need to be rewritten. Thus, we decided to leave this modification as an exercise for the interested reader.}.

 For the rest of the paper, we will refer to $\cI$, $\cT$, and $\cM$-type gadgets as gate gadgets or simply gates (not to be confused with unitary gates). We also have conditions similar to the Gate A and Gate B conditions in mind when designing these gadgets: this will be discussed in detail in Sec.~\ref{sec:FT_proof}.

We start by discussing inner simulation. First, we define $\cI_1$, $\cT_1^{v/h}$ and $\cM_1^{v/h}$. 
$\cI_1$ is defined simply as a repetition of $\cI_0$ gadgets layered over a single 1-cell for 5 time steps. $\cT_1^{v/h}$ and $\cM_1^{v/h}$ are defined such that their action restricted to configurations on $k$-frame is analogous to that of  $\cT_0^{v/h}$ and $\cM_0^{v/h}$, respectively (see below).

\begin{definition}\label{def:T1}
     $\cT_1^h$ is defined to operate in the same way as 3 parallel rows of Tsirelson's $\cT_1$ applied horizontally (c.f. Def.~\ref{def:tsirelsongadgets}). 
     Pictorially, it is:
 \begin{equation}
 \includegraphics[width = \textwidth]{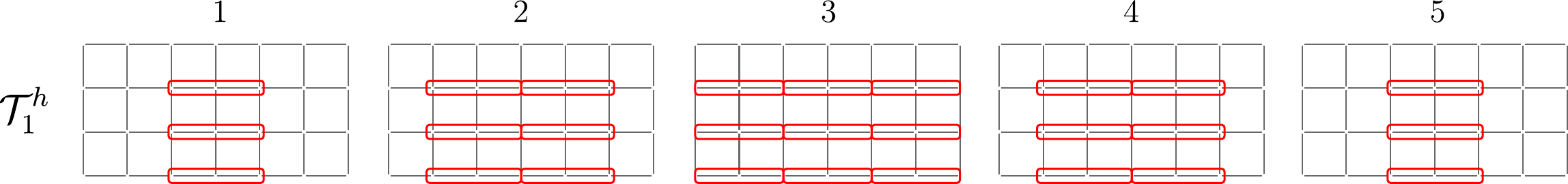}
 \end{equation}
 Similarly, $\cT_1^v$ is defined to be three parallel columns of Tsirelson's $\cT_1$ applied vertically\footnote{One may observe that since coarse-grained syndrome configurations only lie on the corners of the 1-cell, it is unclear whether $\cT_0$'s acting in the bulk of the gadget is necessary. However, this is in fact important because we want the operation of $\mathcal T_1^{h}$ to be unaffected in the presence of single-qubit transient errors. If this error acts perpendicularly to the action of $\cT_0$'s, it has support in the bulk and can result in large output damage due to the operation of $\mathcal T_1^{h}$. To avoid this, it turns out the additional $\cT_0$'s are required in the bulk.}. In the pictures above, the gadget is shown together with its qubit support.
\end{definition}

Next, we define the $\cM_1$ gate. 

\begin{definition} \label{def:M1}
     $\cM_1^h$ is defined to be 3 parallel rows of Tsirelson's $\cM_1$ (Eq.~\eqref{eq:M1-ts}) applied horizontally. Similarly, $\cM_1^v$ is defined to be three parallel columns of Tsirelson's $\cM_1$ applied vertically. The qubit support of this gadget is defined in Def.~\ref{def:stab-quant-gadg}.
\end{definition}

Having introduced $\cI_1$, $\cT_1^{v/h}$ and $\cM_1^{v/h}$, we now define the inner simulation, which shows how to construct higher-level gate gadgets $\cI_n$,  $\cT_n^{v/h}$ and $\cM_n^{v/h}$, as well as error-correcting gadgets $\EC_n$ at different levels for $n \leq k$. We will call $k$ the level of inner simulation. 

\hi{Moving forward,  we will introduce the notation $C[G^{(1)},G^{(2)},...]$ as a shorthand where $C[a,b,c]$ is some fixed circuit composed of basis operations $\{a,b,c..\}$, and $C[G^{(1)},G^{(2)},...]$ is obtained by substituting $\{ a,b,c..\}$ with gadgets $G^{(1)},G^{(2)},...$ \,.     A circuit $C[\cG^{(1)}_k, \cG^{(2)}_k, \cG^{(3)}_k, ...]$ composed only from gadgets at level $k$ will often be shorthanded as $G[\cG_k]$. }

For the definition below, the relevant circuits will be the level-1 gadgets defined in Defs.~\ref{def:T1} and \ref{def:M1} via elementary gadgets, namely $\cI_1 = \cI_1[\cI_0,\cT_0^{h/v},\cM_0^{h/v}]$, $\cT_1^{h/v} = \cT_1^{h/v}[\cI_0,\cT_0^{h/v},\cM_0^{h/v}]$, $\cM_1^{h/v} = \cM_1^{h/v}[\cI_0,\cT_0^{h/v},\cM_0^{h/v}]$, and the error correction gadget $R_0 = R_0[\cI_0,\cT_0^{h/v},\cM_0^{h/v}]$.

\begin{definition}[Inner fault tolerant simulation]\label{def:FT-constr-TC-inner}
    Define the level-1 error correction gadget to be 
    \begin{equation}
        \EC_1 =  R_0[\cI_0,\cT_0^{h/v},\cM_0^{h/v}].
    \end{equation}
    %
    %
    %
    We recursively define the gate and EC gadgets at level $n \leq k$ to be
    %
     \begin{equation} \label{eq:inner-3}
        \EC_n =  R_0[\EC_{n-1}^{\otimes K_{\cI}} \circ \cI_{n-1},\EC_{n-1}^{\otimes K_{\cT}} \circ \cT_{n-1},\EC_{n-1}^{\otimes K_{\cM}} \circ \cM_{n-1}] \circ \EC_{n-1}^{\otimes 9} 
    \end{equation}
    where $K_{\cG}$ is the number of $0$-cells in the support of the level-0 version of the associated gadget $\cG_{n-1}$.  We can write this in shorthand as 
     \begin{equation} \label{eq:inner-1}
        \EC_n =  R_0[\EC_{n-1}^{\otimes K_{\cG'}} \circ \cG'_{n-1}] \circ \EC_{n-1}^{\otimes 9}, 
    \end{equation}
    For the simulated gadgets:
    \begin{equation} \label{eq:inner-2}
        \cG_n =  \cG_1[\EC_{n-1}^{\otimes K_{\cG'}} \circ \cG_{n-1}'] \circ  \EC_{n-1}^{\otimes 9 K_{\cG}}.
    \end{equation}  

    One can interpret these definitions as a substitution rule where gadgets at level $n$ are formed by constructing the sequences $R_0[\cdot]$ and $\cG_1[\cdot]$ using level $n-1$ gate gadgets (instead of level $0$ gadgets), and inserting level $n-1$ EC gadgets in between consecutive level $n-1$ gate gadgets.
\end{definition}

We remark that the inner simulation is slightly different from the conventional fault tolerant simulation  $FT(\cdot)$ defined in Sec.~\ref{sec:tsirelson}. 
In particular, the recursive construction of gadgets (Eqs.~\eqref{eq:inner-1} and \eqref{eq:inner-2}) used in the inner simulation is slightly different. The inner simulation and the conventional fault tolerant simulation are identical if the inner simulation is comprised of precisely one level (i.e. $n = 1$); however, when one compares $FT^{n-1}(\cG_1)$ and $\cG_n$ (the latter corresponding to $n$ levels of inner simulation), as well as  $FT^{n-1}(\EC_1)$ and $\EC_n$ for $n > 1$, the gadgets obtained using the $FT(\cdot)$ prescription and the inner simulation differ by the inner simulation's additional repetitions of $\EC_{i}$ ($i \leq n-2$) gadgets in between certain gates.  With these additional repetitions, the proofs of certain properties are slightly simpler than they would be otherwise.

\begin{definition}[Supports of level-$n$ gadgets: toric code]\label{def:suppsTC}
Define the qubit supports of level-0 gadgets $\mathrm{supp}_q(G_0)$ to be the qubits whose relative position is shown in the following figure:
\begin{equation} \label{eq:supp-TC}
\includegraphics[width = 0.65\textwidth]{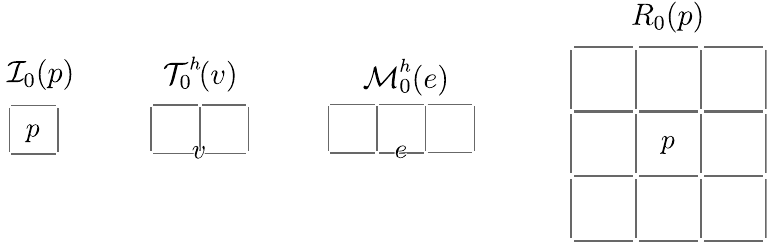}
\end{equation}
The supports of $\cT_0^v(v)$ and $\cM_0^v(e)$ can be obtained by a $90^\circ$ rotation of that of $\cT_0^h(v)$ and $\cM_0^h(e)$, respectively.

We also define the vertex support $\mathrm{supp}_v(G_0)$ of the gadget to be the set of all vertices appearing in the figures above for gadget $G_0$.   

The qubit (vertex) supports of a level-$n$ gate  $\mathrm{supp}_{q(v)}(\cG_n)$ and a level-$n$  error-correction gadget $\mathrm{supp}_{q(v)}(\EC_n)$ are the set of all the level-0 links (vertices) contained in the qubit (vertex) support of all level-0 gadgets that comprise the level-$n$ gadget. 
\end{definition}

Notice that, in this context, when we talk about gadget support we mean spatial support, i.e. set of locations in space at a fixed point in time. 

Our definition of gadget support is counter-intuitive to some extent as it is tailored to correctly capture the effect of noise on the simulated gadgets. To explain this, consider the definition of level-0 gadget support.  First, the qubits that enter the stabilizer measurements required for the gadget do not appear in the support. This is because we assume that, while the measurement outcome can be erroneous, it does not explicitly apply errors on the qubits involved in the measurement. Next, the support includes qubits that the gadget applies the feedback to as well as additional qubits. This is because the action of the gadget at some higher level $k$ involves qubits on more $k$-cells than $0$-cells in the corresponding level-$0$ gadget. Thus, when `renormalized' to level 0, the possible errors that a level-$k$ gadget failure can induce necessarily fit in a larger qubit support.

In the automaton, the elementary gadgets always tile the plane, as we have already discussed in Def.~\ref{def:action-tiling}. The non-elementary gadgets of the inner simulation are defined in a way that the action of neighboring gadgets can be parallelized, and therefore we can also introduce layers of non-elementary gadgets:
\begin{definition}[Layer of gadgets]
We define $\cL(\cG_n)$ to be a \emph{layer} of level-$n$ gadgets in the automaton corresponding to \hi{covering} the space with the level-$n$ gates $\cG_n$ 
or error correction gadgets $\EC_n$. 
\end{definition}
We note that the actions of neighboring gadgets can be parallelized and do not require staggering in time.  We will often be abusing notation, when, for example, part of the plane is tiled with $\cT^{h/v}_n$ gates and another is tiled with $\cI_n$ gates, in which case the notation for such a layer could be $\cL(\cT_n^{h/v})$ or $\cL([\cT_n^{h/v}, \cI_n])$ (and similarly for $\cT$-gate layers); the former notation will be adopted whenever this will not cause ambiguity.

We also define a restriction of a syndrome configuration to a region, which will be used in later sections:

\begin{definition}[Restriction]\label{def:restriction}
    For $\sigma \in \mathbb{F}_2^{|\cL_0|}$, the restriction of $\sigma$ to the spatial region $V$, denoted $\sigma|_{V} \in \mathbb{F}_2^{|\cL_0|}$ is defined such that
    \begin{equation}
    \sigma|_{V}(\bm{x}) = \begin{cases}
    \sigma(\bm{x}), & \bm{x} \in V, \\
    0, & \bm{x} \notin V.
    \end{cases}
    \end{equation}

    For $\varepsilon \in \mathbb{F}_2^{|\cL_0|} \times [0,T]$, the restriction of $\varepsilon$ to the region $V$ over the course of time $t \in [0,T]$, denoted $\varepsilon|_{V}$ is defined such that
    \begin{equation}
  \varepsilon|_{V}(\bm{x}, t) = \begin{cases}
    \varepsilon(\bm{x}, t), & (\bm{x},t) \in V \times [0,T], \\
    0, & (\bm{x},t) \notin V \times [0,T].
    \end{cases}
    \end{equation}

    For a layer of gadgets $\cL(G)[\sigma, \varepsilon]$ (where each gadget $G$ is a circuit of elementary gadgets of depth $T$)   with input syndrome $\sigma$ and noise realization $\varepsilon$, the restriction of the layer of gadgets to the region $V$ over the course of its operation, denoted $\cL(G; V)[\sigma, \varepsilon]$ is defined to be the output syndrome of a modified operation of gadgets where the input syndrome is restricted to $V$, and at each time step $t \in [0,T]$, all constituent elementary gadgets are applied, the operator to the corresponding noise realization is applied (without restricting to $V$), and all syndromes outside of $V$ are set to $0$. Namely, the entire operation of the layer of gadgets is restricted to the spatial support given by $V$, and not just the output. Note that by definition, $\cL(G; V)[\sigma, \varepsilon]\big|_{V} =\cL(G; V)[\sigma, \varepsilon]$.
    Similarly, we can define a restriction of an operation of a single gadget $G[\sigma,\varepsilon]$ to $V$, which we denote as $(G; V)[\sigma, \varepsilon]$.
\end{definition}

The following Lemma establishes an important property of level-$n$ gadgets of the inner simulation:
\begin{lemma} \label{lemma:EC-coarse-grains}
    Acting on an arbitrary input state, the first half of the noiseless gadget $\EC_n$ (understood in terms of the action of a layer of these gadgets restricted to the support of a single $\EC_n$, see Def.~\ref{def:action-tiling}) will coarse-grain the input so that all non-trivial syndromes have support in $\Sigma_n$. 

    Furthermore, the noiseless gadgets $\cM^{h/v}_n$ and $\cT^{h/v}_n$, restricted to their support, map coarse-grained states at level $n$ to coarse-grained states at level $n$ so that scaled-down (by factor $3^n$) versions of the syndromes effectively undergo the action of $\cM^{h/v}_0$ and $\cT^{h/v}_0$.    
\end{lemma}
\begin{proof}
    We provide the proof of the statement for $n > 1$ in Subsec.~\ref{sec:confinement} after introducing additional tools that will make the proof much simpler.  The full proof will be inductive, but here we will prove the base case, corresponding to $n = 1$.  
    
    For the base case, we study the first half of the EC gadget, i.e. $(\EC_1)_{1/2}= R_0 '= R_0^h \circ R_0^h \circ R_0^v \circ R_0^v$. 
    First, let us show that after two repetitions of $R_0^v$, all the syndromes will be moved to the top and bottom 1-links. 
    We verify that the syndromes on the left and right 1-links will be coarse-grained to their corners already after one application of $R_0^v$ since the operation of the gadget on these two links is identical to Tsirelson's $Z_0$ and the syndromes away from these links will not add or remove any additional syndromes on these links during the operation of $R_0^v$.  In addition, the syndromes that are already on the top and bottom 1-links, will either remain there or be cleaned up and will not affect the rest of the gadget.  A brute-force computer search then shows that input syndromes on the middle 4 vertices of the gadget will be supported on the top and bottom 1-links after two rounds of $R_0^v$. 
    
    Next, two rounds of $R_0^h$ are applied to an input state whose syndromes are all on top and bottom 1-links. However, the operation of $R_0^h$ on those 1-links is identical to Tsirelson's $Z_0$ and thus all syndromes will be coarse-grained to $\Sigma_1$.  Moreover, if the input syndrome configuration has already been coarse-grained to lie in $\Sigma_1$, the output syndrome configuration will be the same.  

    In addition, we verify through a brute-force computer search that $\cT_1^{h/v}$ or $\cM_1^{h/v}$ acting on a syndrome configuration with nontrivial syndromes in $\Sigma_1$ will also output a syndrome configuration with nontrivial syndromes in $\Sigma_1$, with their action corresponding to the respective action of $\cT_0^{h/v}$ and $\cM_0^{h/v}$.  This proves the base case.
\end{proof}

In addition, the first half of the $\EC_n$ gadget self-similarly reproduces the behavior of the $R_0'$ gadget albeit at a proportionally larger scale (\hi{for the definition and discussion of $R_0'$}, see the discussion following Def.~\ref{def:R0}). This means that $\EC_n$ does not simply bring the syndromes to the locations in $\Sigma_n$, but also, for small enough deviations of the input syndrome configuration from $\Sigma_n$, will bring the configuration to the closest coarse-grained one.

\subsubsection{Fault-tolerant simulation: outer simulation}
\label{subsec:outer-simulation}

Having defined the inner simulation and fixing some level $k$ for it, we can now finally define the outer simulation, associated with the fault tolerant prescription $\widetilde{FT}(\cdot)$, where the tilde is used to emphasize that this prescription, unlike the $FT(\cdot)$ prescription from Sec.~\ref{sec:tsirelson}, relies on the gadgets obtained from the inner simulation.

\begin{definition}[Outer fault tolerant simulation] \label{def:newFT_\EC}
    Given a circuit $\cC \equiv \mathcal C[\cI_0, \cT_0^{h/v}, \cM_0^{h/v}]$, we use the shorthand $\cC[\cG_0]$ similarly to Def.~\ref{def:FT-constr-TC-inner}. Assuming the inner simulation level is $k$ and using it to define the corresponding gate and EC gadgets $\cG_k$ and $\EC_k$,  the outer fault-tolerant simulation of $\cC$ is defined as 
    \begin{equation}
        \widetilde{FT}(\mathcal C) = \cC[\widetilde{\EC}^{\otimes K_\cG} \circ \cG_k ] \circ \widetilde{\EC}^{\otimes K}
    \end{equation}
    where $\widetilde{\EC} = \EC_k$ and $K$ is the number of $0$-cells in the support of the circuit $\cC$. This definition can be understood as the original circuit with the level-0 gate gadgets replaced with the level-$k$ gate gadgets from the inner simulation, and $\EC_k$ inserted between consecutive level-$k$ gate gadgets. 

    A repeated fault tolerant simulation of a circuit $\cC$, denoted $\widetilde{FT}^n(\cC)$, is a new circuit formed from the iteration $\widetilde{FT}^n(\cC) = \widetilde{FT}(\widetilde{FT}^{n-1}(\cC))$, with $\widetilde{FT}^0(\cC) = \cC$ and $\widetilde{FT}^1(\cC) = \widetilde{FT}(\cC)$.
\end{definition}

Now, we can finally define the toric code automaton based on the inner and outer fault tolerant simulations:

\begin{definition}[2D toric code automaton]
Given an inner simulation level $k$ and a positive integer $T$, the toric code automaton is a $n$-fold outer simulation $\widetilde{FT}^{n}(\cI_0^T)$ (with the simulated circuit being $\cI_0^T$) that operates on an $L \times L$ torus where $L = 3^{n k}$.
\end{definition}

By construction, the toric code automaton is a local measurement and feedback automaton.  Ignoring the difference in the choice of error correction and gate gadgets, the outer simulation is otherwise analogous to the usual fault tolerant simulation in Sec.~\ref{sec:tsirelson}. We note that if $k = 1$, $\widetilde{FT}$ becomes the usual fault tolerant prescription of Defs.~\ref{def:FT_original} and \ref{def:FT_original_repeated}. 

We also emphasize that the inner and outer fault tolerant simulation procedures are slightly different: the outer simulation is built similarly to the inner one, except that by design, there are no extra repetitions of $\widetilde{\EC}$ gadgets at any level of the simulation. For the inner simulation, we instead assume fixed circuits for gadgets $\cG_k$, $\EC_k$ that are then composed together to construct higher-level circuits, which incurs extra repetition of lower-level $\EC$ gadgets. The fixed-circuit convention is convenient when proving some of the properties of these gadgets.  As an illustration, consider the inner simulation circuit for $\cI_2$ preceded and followed by $\EC_2$ (because $\cI_2$ will always appear flanked by $\EC_2$ for any higher-level inner simulation):
\begin{equation} \label{eq:inner-example}
         \EC_2 \circ \cI _2 \circ \EC_2 
          =   \EC_2 \circ (\EC_1^{\otimes 9} \circ \cI_1^{\otimes 9} )^5  \circ \EC_1^{\otimes 9} \circ \EC_2 
\end{equation}
compared to a $k = 1$ outer simulation (i.e. the usual fault-tolerant simulation) repeated twice (where $\EC \equiv \EC_1$):
\begin{equation}\label{eq:outer-example}
\begin{split}
     FT^2(\cI_0) &= FT \left (  \EC\circ \cI_1 \circ \EC\right) \\
     &=  \EC [EC^{K_{\cG}} \circ \cG_1] \circ (\EC^{\otimes 9} \circ \cI_1^{\otimes 9} )^5 \circ \EC [EC^{K_{\cG}} \circ \cG_1] \circ \EC^{\otimes 9} .
\end{split}
\end{equation}
These circuits are only partially expanded for brevity. However, they already illustrate the following point: the circuit for the inner simulation shown in Eq.~\eqref{eq:inner-example} is identical to the circuit for the usual fault-tolerant simulation in Eq.~\eqref{eq:outer-example} except that some of the layers of $\EC_1$ gadgets are now repeated. This is seen, for example, by recalling that $\EC_2$ must also start and end with $\EC_1$ layers and it is additionally either preceded or followed with $\EC_1$ in Eq.~\eqref{eq:inner-example}. At higher levels of simulation, this repetition occurs at higher levels as well.

As discussed in an earlier footnote, we expect that it should be enough to introduce the outer simulation with $k = 1$ (which is the standard fault tolerant simulation prescription) which is then repeated $nk$ times as opposed to $n$ times to account for needing $k$ iterations to obtain $p \to Ap^2$ error suppression.  However, in this paper, we will be using the inner and outer simulation formalism.

\subsubsection{Ideal decoder}
\label{subsec:ideal-decoder-TC}

In this subsection, we will define the ($X$-type) ideal decoder used for the toric code automaton.  Let us first define an ideal decoder that works for a usual fault tolerant prescription $FT(\cdot)$ for the toric code (i.e. $k = 1$), which already requires nontrivial generalization in comparison to concatenated codes. We then generalize this definition upon incorporating inner simulation with $k > 1$.  First, we need to discuss a map that performs a coarse-graining action, which we will refer to as a \emph{pushforward map}.  We note that the pushforward map and the ideal decoder are not actual operations that appear in the automaton, but rather are tools that will help us carry out the fault tolerance proof.

\begin{definition}[Pushforward map] \label{def:pushforward}
    Assume that we are given a toric code state in $\cH_{3^N \times 3^N}$ that can be written as $\cO_X \ket{\psi}$, where $\ket{\psi}$ is a syndrome-free initial logical state of the toric code on a lattice of size $3^N \times 3^N$, and $\cO_X$ is a tensor product of Pauli $X$ operators such that the syndromes $\sigma$ of  $\cO_X \ket{\psi}$ are supported on $\Sigma_1^{3^{N} \times 3^{N}}$ (i.e. are coarse-grained to level 1; the superscript ${3^{N} \times 3^{N}}$ indicates the size of the lattice).  A valid pushforward map $\Phi: \cH_{3^N \times 3^N} \rightarrow \cH_{3^{N-1} \times 3^{N-1}}$ is one which outputs a new toric code state $\cO_X' \ket{\psi'}$ such that 
    \begin{enumerate}
        \item $\ket{\psi'} $ is a syndrome-free logical state of the toric code on a lattice of size  $3^{N-1} \times 3^{N-1}$ that is the same logical state as $\ket{\psi}$ (expressed as a superposition in a chosen logical basis, say of logical $\overline Z$ operators, of a toric code on the appropriate-sized lattice);
        \item Syndromes of $\cO_X'$, which we denote by $\sigma' \in \Sigma_0^{3^{N-1}\times 3^{N-1}}$,  correspond in location with $\sigma$;
        \item The operator $\cO_X'$ is chosen such that the operator  $\varphi(\cO_X')  \cO_X$ is an element of the stabilizer group, where $\varphi$ denotes a natural pullback from lattice of size $3^{N-1} \times 3^{N-1}$ to the 1-frame of a  $3^{N} \times 3^{N}$ lattice. The natural pullback, for example, maps the operator $X$ acting on a 0-link of the $3^{N-1} \times 3^{N-1}$ lattice to $X \otimes X \otimes X$ acting along the associated 1-link of the $3^{N} \times 3^{N}$ lattice.  
    \end{enumerate}
\end{definition}
Informally speaking, the pushforward map implements a coarse-graining procedure: assuming a toric code state with nontrivial syndromes on $\Sigma_1$, we shrink the lattice by a factor of 3 but maintain the same pattern of syndromes (now on $\Sigma_0$), while making sure that no non-trivial logical operation was applied during this process (see point 3 above).  An example of a configuration of syndromes and the corresponding action of the pushforward map is shown below:
\begin{equation} 
\includegraphics[width = 0.6\textwidth]{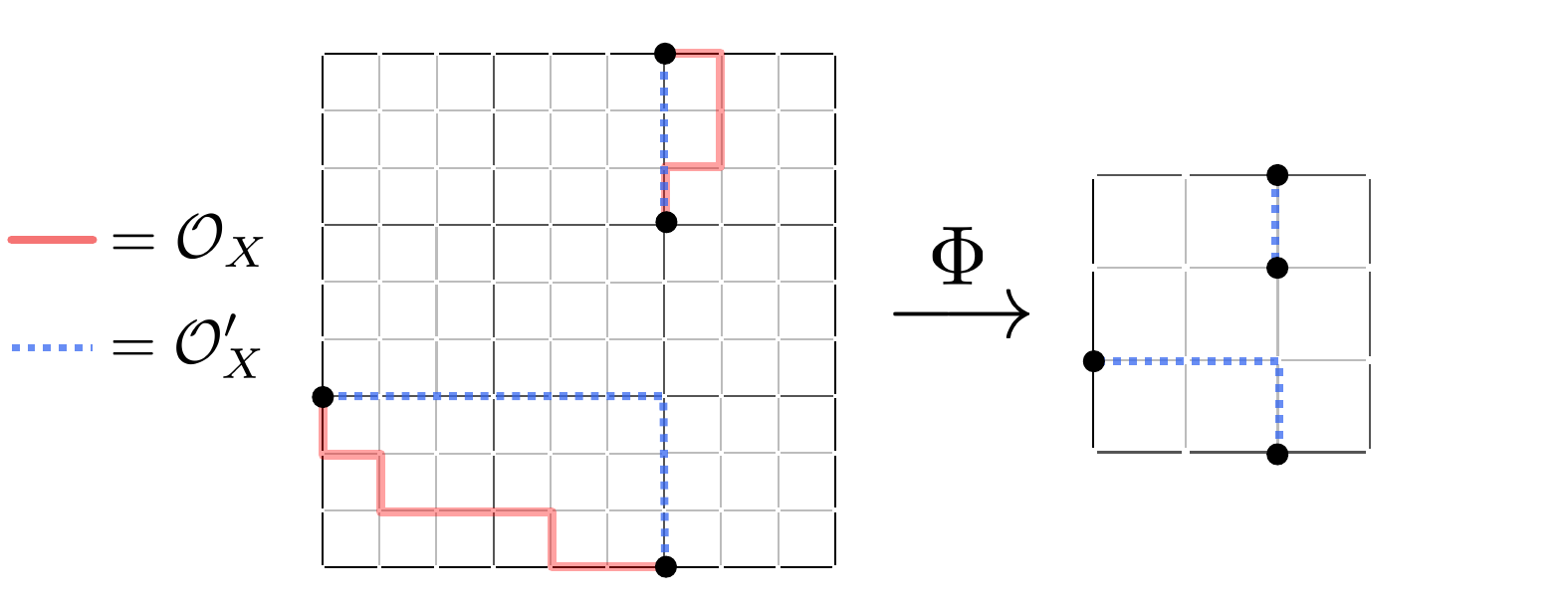}
\end{equation}

We first define an ideal decoder $D_1$ that would work for a conventional  prescription for fault tolerant simulation. From it, we define a level-$k$ decoder, which we will show to be appropriate for our modified prescription based on inner and outer simulations.

\begin{definition}[Ideal decoder for $k = 1$ inner simulation] 
 Assume a noisy toric code state  $\ket{\psi_{3^N \times 3^N}}$ defined on a lattice with linear dimension $3^N$  whose $X$-error syndromes we denote as $\sigma \in \Sigma_0^{3^N \times 3^N}$.  A level-1 ideal decoder $D_1^{(N)}: \cH_{3^N \times 3^N} \rightarrow \cH_{3^{N-1} \times 3^{N-1}}$  is a two-step map consisting of:
 \begin{enumerate}
     \item Application of noiseless $\cL(\EC_1) \equiv \EC_1^{\otimes 3^{N-1}\times 3^{N-1}}$ gadgets to the input toric code state $\ket{\psi_{3^N \times 3^N}}$ producing a new state  $\cL(\EC_1)\ket{\psi_{3^N \times 3^N}} $ with syndromes $\sigma_1 \in \Sigma_1^{3^N \times 3^N}$ (error correction), and
     \item A pushforward map $\Phi$ (Def.~\ref{def:pushforward}) applied to $\cL(\EC_1) \ket{\psi_{3^N \times 3^N}} $ (coarse-graining).
 \end{enumerate} 
  An ideal level-$n$ decoder is defined to be a hierarchical application of $n$ levels of ideal decoder $D_1$, i.e. 
 \begin{equation}
    D_n^{(N)} =  D_1^{(N-n+1)} \circ ... \circ D_1^{(N)}.
 \end{equation}
 
 Moving forward, we will keep the superscript $(N)$ implicit to indicate that the ideal decoder tiles the entire toric code lattice. We will also use the following notation to indicate the two-step action of the ideal decoder:
 \begin{equation}
     D_1 = \Phi \cL(\EC_1)
 \end{equation}
\end{definition}

More informally, the first step of the ideal decoder cleans small enough errors fitting in the support of $R_0$ and otherwise coarse-grains the error configuration to have syndrome support in $\Sigma_1$.  A pushforward map is then applied.  This process is repeated until the lattice is shrunk to some small size.  Therefore, this operation hierarchically eliminates errors at different scales, first eliminating errors at small scales before doing so at larger scales.  As reviewed previously, similar approaches such as renormalization group-based decoders have been proposed for topological codes  (see e.g. Refs.~\cite{bravyi2011analytic,duclos2010fast}), but these constructions assume instantaneous and reliable classical computation, thus suffering from a large amount of non-locality.

Finally, we define the ideal decoder that will be used for the outer simulation.

\begin{definition}[Ideal decoder for the outer simulation for the toric code] \label{def:ideal-decoder-TC}
    Assuming an inner simulation with fixed $k \geq 1$, define the ideal decoder for the outer simulation using a similar procedure to that in Def.~\ref{def:ideal-decoder-TC}.  In particular, the ideal decoder $\widetilde{D}$ is defined to be 
    \begin{equation}
        \widetilde{D} = \Phi^k \cL(\widetilde{EC})
    \end{equation}
    and the associated hierarchical decoder of level $n$ is defined to be a repeated application of the ideal decoder $n$ times:
        \begin{equation}
        \widetilde{D}_n = \widetilde{D}\circ \dots \circ \widetilde{D} = \Phi^k \cL(\widetilde{EC}) \circ \dots \circ \Phi^k \cL(\widetilde{EC})
    \end{equation}
    where the sizes of the tiling and the domain of action of $\Phi$ change appropriately to reflect the change of the size of the lattice as in Def.~\ref{def:ideal-decoder-TC}.
\end{definition}

This is an appropriate decoder for the outer simulation as it is designed to coarse-grain by $k$ levels at a time.
When formulating the generalization of the exRecs method for the toric code (which we do in Sec.~\ref{sec:FT_proof}), we will be pulling levels of hierarchical $\widetilde{D}$ through the system, corresponding to each outer fault tolerant simulation applied to the circuit. 

We once again emphasize that the pushforward is not a physical operation but rather a convenient tool for analyzing the automaton in the presence of noise. In particular, the point of this map is to reduce the size the size of the lattice on which the state is defined  while preserving the logical state and downscaling the operators acting on it. To achieve this for the toric code, the map needs to be global and ``omniscient'': it cannot be expressed through disjoint local operations and requires knowledge of the encoded logical state. Similarly, the ideal decoder should not be viewed as an actual decoder in the canonical sense, but rather as a mathematical tool that first performs an operation of coarse-graining to level $k$ before applying a $k$-fold pushforward map.  These tools are useful for the proofs, as they will allow us to compute the noise model that the automaton experiences as we ``undo'' levels of the simulation.   We note that, in contrast to the toric code, the ideal decoder for concatenated codes can be decomposed into a tensor product of local actions, resulting in a much simpler analysis.

\subsection{Summary of remainder of the paper}

From here on, we will consider a toric code that is initialized in a clean logical state on an $L \times L$ lattice (we will discuss fault-tolerant initialization and readout later in Sec.~\ref{sec:initialization}), and we apply the toric code automaton $T$ times (defined as simultaneous operation of type-$X$ and type-$Z$ circuits $\widetilde{FT}^{n}(\cI_0^T)$ obtained by an outer fault tolerant simulation). The simulation used to construct the circuit makes use of level-$k$ gadgets that are determined by the inner fault tolerant simulation.

In subsequent sections, we will prove the fault tolerance of this automaton in the presence of $p$-bounded gadget noise. In particular, we show that at small enough $p$, the probability of a logical error occurring at at time step $t$ is less than $t$  times an exponentially small quantity in $L^\alpha$ with some constant $\alpha >0$. For this, we will consider the $X$-type automaton separately first and will show that it is fault tolerant in its logical sector (i.e. it protects against $X$-type logical error) under a $p$-bounded decoupled gadget error model. Showing this will constitute the majority of subsequent sections and the majority of the proof as well. Finally, we will show a simple corollary showing that the full quantum toric code automaton is fault tolerant.

To prove the fault tolerance of the $X$-type automaton, we will develop a modification of the exRecs method.  We will then show for some constant $k$ corresponding to the level of the inner simulation, the outer fault tolerant prescription $\widetilde{FT}(\cdot)$ simulates a noisy circuit with $O(p^2)$-bounded ($X$-type) gadget noise, i.e. achieving error suppression. 

The strategy of the proof will be slightly different than the one we used for Tsirelson's automaton. In addition, parts of the proof will be performed by computer-aided search following a very simple algorithm that we explain in detail. The domains of operation of neighboring gadgets overlap, and separate blocks of code do not directly make sense as small codes in the context of the toric code. Therefore, extra care is put into dealing with the effect of overlapping gadget supports.  In addition, the action of each level of the hierarchical ideal decoder is \emph{global} for the toric code (specifically, the pushforward map requires global information about the state to properly operate), which is very dissimilar to Tsirelson's automaton and to concatenated codes. 

To determine $k$, we introduce the notion of \emph{gadget nilpotence}, which formalizes the notion that an elementary-level gadget failure inside another gadget $\cG_\ell$ at some level $\ell$ with a coarse-grained input will not disturb the operation of the gadget.  We will provide a computer-aided proof of this property for $\cI, \cT$ and $\EC$ gadgets at level greater or equal to $3$ and for the $\cM$ gadget at level greater or equal to $6$  (which is probably a loose upper bound). Once gadget nilpotence is established, we use this property along with a modification of the exRec method to provide an estimate of $k$ and show that it is $O(1)$ (our very loose estimate gives $k \leq 34$ but in reality the required value is likely much smaller, e.g. 3 or 4 \hi{based on the trends seen in numerics}). Our modification of the exRecs method treats extended rectangles as overlapping in both space and time since neighboring gadgets in the toric code have overlapping support\footnote{This would not be true for concatenated codes}. We also modify filters used in the original exRec method to \emph{$m$-cell filters}, which project onto states with damage operator confined to a $m$-cell,  as opposed to projecting onto states that are simply some distance from a codeword. This bears some resemblance to the sparsity-based proof method of G\'acs~\cite{gacs1983reliable}, which utilizes a decomposition of an error set into burst errors at different scales.  

Moving forward, we first test a simple version of this approach on the measurement and feedback version of Tsirelson's automaton in Sec.~\ref{sec:tsirelson-proof-2}. This introduces some parts of our method on a simpler example before taking on the more complicated task of proving fault tolerance of the toric code automaton, which we do in Sec.~\ref{sec:FT_proof}.  It also proves some important results that will be used in Sec.~\ref{sec:FT_proof}.

\section{Fault tolerance of measurement and feedback Tsirelson automaton}
\label{sec:tsirelson-proof-2}

This section serves as a warm-up for the proof of fault tolerance of the toric code automaton. Here, we approach the construction of the automaton for the repetition code in a largely analogous way to that for the toric code. We consider a measurement and feedback automaton for the stabilizer description of the repetition code, and follow an analogous inner and outer fault tolerant simulation prescription as for the toric code.  While the approach is more cumbersome, it will form the necessary skeleton for the proof in the toric code case. In addition, some of the results in this section will also be used in the main proof for the toric code automaton.  

We will use the elementary gadgets and level-1 gadgets presented in Sec.~\ref{modified_tsirelson2}. The basis for the gates is now formed by the elementary gadgets $\cI_0$, $\cT_0$, and $\cM_0$; we also use specifically the composite gadget $Z_0$ shown in Eq.~\eqref{Z0-action}, consisting of 5 steps of application of $\cM_0$ and $\cT_0$-type gadgets.  The inner simulation is defined analogously to Subsec.~\ref{subsec:inner-simulation}, except with $\EC_1 = Z_0[\cI_0, \cT_0, \cM_0]$ and using the gadgets of Tsirelson's automaton as opposed to those of the toric code. The outer simulation $\widetilde{FT}(\cdot)$ is defined as in Subsec.~\ref{subsec:outer-simulation}, with similar changes to adapt it for Tsirelson's automaton, and removing the layer notation in the definition (namely, every occurrence of a layer of gadgets $\cL(G)$ is now simply replaced with a standalone gadget $G$).

We assume some fixed level $k$ of inner simulation, which is used to obtain the gadgets $\cG_n$ and $\cE_n$ with $n \leq k$. We then slightly modify the (standard) Gate and $\EC$ conditions used in the exRecs method to apply it to the outer simulation.
In particular, we will need to introduce a filter that selects out states with damage that is constrained to a bounded spatial region, which we will call a cell filter:
    
\begin{definition}[Cell filter on $k$-cells] \label{def:r_prime}
   An $m$-cell filter is the map that projects onto states \hi{$s \in \mathbb{F}_2^{|\cL|}$} on a $k$-cell that such that there exists a codeword \hi{$c \in \mathbb{F}_2^{|\cL|}$} where non-zero bits of $s \oplus c$ are supported in a single $m$-cell.   
\end{definition}

We also use the same definition (with appropriately removed layer notation) for the ideal decoder as in Def.~\ref{def:ideal-decoder-TC}. In figures, we will use the same symbol for it, except with a label $k$ standing for the level of inner simulation.  

\begin{definition}(Gate and EC conditions with cell filters) \label{def:ec_prime_gate_prime}
    For a fixed level of inner simulation $k$ and fixed integer $m$, we define the following gate conditions for level-$k$ gate gadgets $\cG_k$ (where $\cG_0$ is a $K$-qubit gate for some $K$), and with circuit diagrams being read from left to right): 
    \begin{equation*}
        \centering
        \includegraphics[width=1\linewidth]{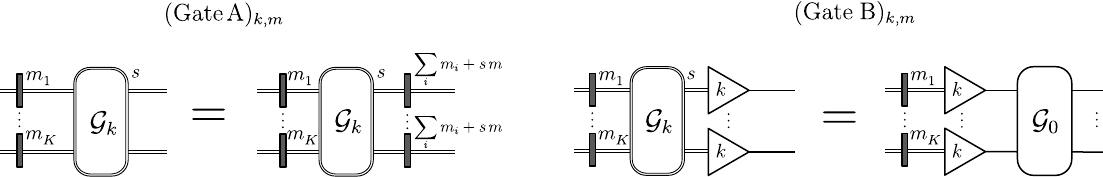}
        \label{fig:gate_prime}
    \end{equation*}
    with $\sum_i m_i + s m \leq m$ as well as $m_i \in \{ 0,m\}$ and $s \in \{ 0,1\}$. The $m$-cell filters are shaded dark to distinguish them from regular filters.  For $\widetilde{\EC} = \EC_k$, we define the ($\EC$ A/B)$_{k,m}$ conditions via
     \begin{equation*}
        \centering
        \includegraphics[width=1\linewidth]{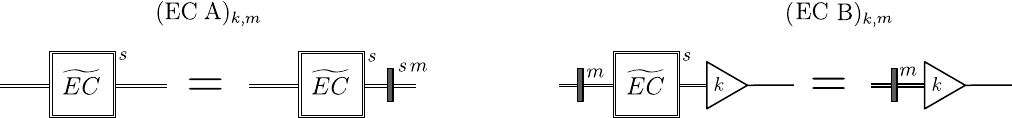}
        \label{fig:\EC_prime}
    \end{equation*}
    with $s \in \{ 0,1\}$. In both conditions, $s$ labels the number of level-0 gadget failures inside a given element. Here and in the rest of the paper, the equality between circuit diagrams holds up to an application of an operator in the stabilizer group of the code to the output state.
\end{definition}

In the definition above, we distinguish these new conditions from the old versions in App.~\ref{app:exRec} by the subscripts in $(\EC A/B)_{k,m}$ and $(\text{Gate } A/B)_{k,m}$.  Throughout the rest of the paper, we will often omit these subscripts, in which case the reader should always assume that we are using the conditions for cell filters.

The following proposition indicates that the new $\EC$ and Gate conditions are still sufficient for showing goodness of exRecs:

\begin{proposition}

The following identity holds for $s + s' + s'' \leq 1$ for any single-qubit gate $\cG_0$:
 \begin{equation*}
        \centering
\includegraphics[width=0.7\linewidth]{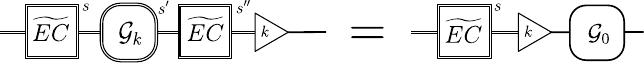}
        \label{fig:goodcorrect1}
    \end{equation*}
and for  $\sum_i s_i + s' + \sum_i s_i'' \leq 1$ for a $K$-qubit gate $\cG_0$:
     \begin{equation*}
        \centering
\includegraphics[width=0.7\linewidth]{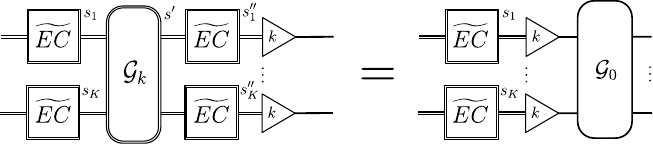}
        \label{fig:goodcorrect2}
    \end{equation*}

Defining a truncated extended rectangle as one without one or more of the trailing $\EC$ gadgets, the following identity holds for $s + s' \leq 1$: 
\begin{equation*}
        \centering
\includegraphics[width=0.6\linewidth]{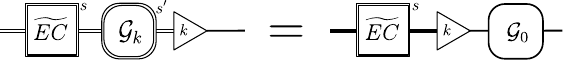}
        \label{fig:goodcorrect3}
    \end{equation*}

There are several scenarios for truncated rectangles for multi-qubit gates, which schematically can be summarized as follows:

    \begin{equation*}
        \centering
\includegraphics[width=0.7\linewidth]{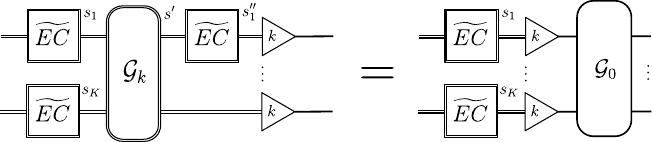}
        \label{fig:goodcorrect4}
    \end{equation*}
and
    \begin{equation*}
        \centering
\includegraphics[width=0.65\linewidth]{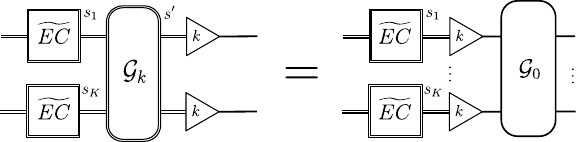}
        \label{fig:goodcorrect5}
    \end{equation*}
    for  $\sum_i s_i + s' + \sum_j s_j'' \leq 1$
\end{proposition}
\begin{proof}
The chain of equalities outlined in App.~\ref{app:exRec} can be used in this case, with the old $\EC$ and Gate properties replaced with the new ones.  For the truncated case, one can replace the missing $\EC$ gadgets with noiseless ones as described in App.~\ref{app:exRec}, and the remainder of the proof is identical.
\end{proof}

The concept of an $\ast$-decoder and its action when an exRec is bad (i.e. when there are two or more gadget failures) is also adopted from the original exRec formulation (see App.~\ref{app:exRec}) with the change $D \rightarrow \widetilde{D}$.

\begin{lemma}[Error suppression]\label{lemma:FT_prime}
Consider an inner simulation for Tsirelson's elementary gadgets for fixed $k$ and associated outer simulation $\widetilde{FT}(\cdot)$, and assume that the gate and EC gadgets  $\cG_k$ and $\widetilde{\EC}$ satisfy $(Gate$ A/B$)_{k,m}$ and $(\EC$ A/B$)_{k,m}$ for some $k$ and $m\leq k$. Then, for a classical circuit $\mathcal{C}$, if the error model in $\widetilde{FT}(\mathcal{C})$ is a $p$-bounded gadget error model, then the resulting circuit simulates $\mathcal{C}$ with an $(Ap)^{2}$-bounded gadget error model for some constant $A$.
\end{lemma}
\begin{proof}
Same as the proof in App.~\ref{app:exRec} if we treat $\widetilde{FT}$ as the fault-tolerant simulation with new fault tolerant gate and EC prescriptions, except using the new Gate and $\EC$ conditions. 
\end{proof}
\begin{definition}\label{def:gadget_linearity}
    Gadget $G$ (which can be either gate $\mathcal G_n$ or $\EC_n$ with $n \leq k$, or a composite gadget comprised of them) is said to act linearly on a subset $\Sigma$ of syndrome locations if, for any input syndrome $\sigma \in \mathbb{F}_2^{|\supp_v(G)|}$ and 
    any other syndrome pattern $\tau \in \mathbb{F}_2^{|\supp_v(G)|}$, the following property holds:  
    \begin{equation}
       G[\sigma, \varepsilon]  = G[\tau|_{\Sigma}] \oplus G[\tau|_{\Sigma} \oplus \sigma, \varepsilon],
    \end{equation}
    where $\oplus$ denotes vector addition in $\mathbb F_2$. If a given vertex $v$ is a location where $G$ acts linearly, $v$ is said to be a linear vertex of $G$. 
\end{definition}

As we will see below, for many of the level-$k$ gadgets appearing in our automaton, all or most of the level-$k$ coarse-grained vertices are linear vertices. The upshot of this is that the linearity property then guarantees that the effect of the noise on the operation of gadgets will not depend on the level-$k$ coarse-grained part of the input state, as long as the syndromes are at linear points. This fact will considerably simplify the proof of fault tolerance for the toric code automaton. 

\begin{proposition} [Linearity properties] \label{prop:linearity_tsirelson}
    The following properties hold:
    \begin{enumerate}
        \item [(1)] $\EC_n$ and $\cI_n$ act linearly on the set of level-$n$ syndrome locations;
        \item [(2)] $\cT_n$ acts linearly on the set of level-$n$ syndrome locations; 
        \item [(3)] $\cM_n$ acts linearly on a subset of level-$n$ syndrome locations, which are the two endpoints of its support.
    \end{enumerate}
\end{proposition}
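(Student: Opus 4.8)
The plan is to prove the three statements simultaneously by induction on the level $n$, leaning on two facts that are already available. The first is Lemma~\ref{lemma:EC-coarse-grains}: the first half of a noiseless $\EC_n$ coarse-grains to $\Sigma_n$, and noiseless $\cT_n,\cM_n$ map level-$n$ coarse-grained configurations to level-$n$ coarse-grained configurations, acting there exactly as the elementary $\cT_0,\cM_0$. The second is the elementary observation that $\cI_0$ and $\cT_0^{h/v}$ are \emph{globally} $\mathbb F_2$-affine on syndromes, $\cI_0[\sigma,\varepsilon]=\sigma\oplus d(\varepsilon)$ and $\cT_0[\sigma,\varepsilon]=M_{\cT}\sigma\oplus d(\varepsilon)$ with $M_{\cT}$ a fixed linear map and $d(\varepsilon)$ a syndrome shift depending only on the noise, whereas $\cM_0$ is the sole non-linear elementary gadget. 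I would also record at the outset the nesting $\Sigma_n\subseteq\Sigma_{n-1}$ of the frames, which implies immediately that a gadget acting linearly (in the sense of Def.~\ref{def:gadget_linearity}) on $\Sigma_{n-1}$ also acts linearly on $\Sigma_n$: one simply takes the reference pattern $\tau$ appearing in that definition to be supported on $\Sigma_n$.

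For the base case $n=1$ I would verify (1)--(3) by exhaustive enumeration over all input syndromes $\sigma$ on $\supp_v(\cG_1)$, all reference patterns $\tau$, and all admissible single–elementary-gadget noise realizations $\varepsilon$; the supports involved are small, and the search code of Remark~\ref{R0_search_remark} already performs checks of this kind. Conceptually the base case holds because in the defining circuits every embedded $\cM_0$ is placed so that its syndrome support meets only \emph{interior} vertices: in $Z_0=\cM_0\cT_0\cM_0\cT_0\cM_0$ each $\cM_0(e)$ touches only $\sigma_{e\pm1/2}$, so the two block-boundary vertices (the elements of $\Sigma_1$ inside $Z_0$) are acted on solely by the $\cT_0$ layers, which are linear; the $\cM_0^{\otimes3}$ step inside $\cM_1$ is likewise positioned so as never to touch the two endpoints of $\cM_1$.

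For the inductive step the workhorse is a composition lemma: if a finite sequence of gadgets has the property that each gadget acts linearly on a fixed location set $\Sigma$ and, \emph{noiselessly}, maps $\Sigma$-supported configurations to $\Sigma$-supported configurations, then the composite acts linearly on $\Sigma$. This follows by pushing the decomposition $G[\sigma,\varepsilon]=G[\sigma|_{\Sigma}]\oplus F(\sigma|_{\overline{\Sigma}},\varepsilon)$, with $G[\,\cdot\,|_{\Sigma}]$ linear, through the circuit one gadget at a time; linearity of the $\Sigma$-restricted action lets one split the sums that arise, and the noiseless $\Sigma$-invariance guarantees that the $\Sigma$-part handed to the next gadget is again $\Sigma$-supported so the split is legitimate. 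For $\cI_n$ and $\cT_n$ (the $\cI$ clause of (1), and (2)) this suffices: by Def.~\ref{def:FT-constr-TC-inner} these circuits are built from $\cI_{n-1},\cT_{n-1},\EC_{n-1}$ only, each of which acts linearly on $\Sigma_{n-1}$ (induction) and noiselessly preserves $\Sigma_{n-1}$ coarse-graining (Lemma~\ref{lemma:EC-coarse-grains}), so the composition lemma with $\Sigma=\Sigma_{n-1}$ gives linearity on $\Sigma_{n-1}$ and hence on $\Sigma_n$. For $\EC_n=Z_0[\EC_{n-1}\circ\cG_{n-1}]\circ\EC_{n-1}$ and $\cM_n=\cM_1[\EC_{n-1}\circ\cG_{n-1}]\circ\EC_{n-1}$ the substitution also introduces $\cM_{n-1}$ sub-gadgets, which by induction are linear only at their two endpoints; here I would use that in the templates $Z_0$ and $\cM_1$ those $\cM_0$-slots occupy interior positions, so after substitution the corresponding $\cM_{n-1}$'s have support disjoint from the two $n$-cell endpoints (for $\EC_n$), respectively from the two endpoints of $\cM_n$ (for $\cM_n$); a sub-gadget whose support misses $\Sigma$ and that fixes the zero syndrome trivially acts linearly on $\Sigma$ and fixes its $\Sigma$-part. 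Running the composition lemma with $\Sigma$ equal to this two-element endpoint set, and then invoking Lemma~\ref{lemma:EC-coarse-grains} to identify the coarse-grained action with that of $Z_0$, respectively $\cM_0$, closes (1) and (3) at level $n$.

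The hard part will be making the ``$\cM$-slots sit in the interior'' claim rigorous in the presence of mid-circuit noise. Noiselessly, the embedded $\cM_{n-1}$'s see only coarse-grained inputs and Lemma~\ref{lemma:EC-coarse-grains} pins down their behaviour, but a single noise event can drive an intermediate state off the level-$(n-1)$ coarse-grained manifold, so one cannot apply that lemma step by step and must instead argue purely at the level of supports: under \emph{any} admissible noise realization the non-linear $\cM_{n-1}$ sub-gadgets neither read nor write the level-$n$ endpoint vertices. This is a confinement statement of the same flavour as the ones underlying Lemma~\ref{lemma:EC-coarse-grains} and the nilpotence analysis used later in the paper, and I expect it to be discharged in the same way: bound the spacetime region each sub-gadget's output can depend on, check at $n=1$ (via the exhaustive search above) that this region avoids the endpoints, and propagate the statement through the substitution by induction on $n$.
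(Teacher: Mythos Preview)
Your composition-lemma machinery is sound and, for the $\cT_n$ midpoint, is exactly what the paper does. The gap is in your treatment of $\EC_n$ and $\cM_n$: the claim that ``after substitution the corresponding $\cM_{n-1}$'s have support disjoint from the two $n$-cell endpoints'' is false. In the template $Z_0(e-1,e,e+1)=\cM_0(e)\cT_0\cM_0(e)\cT_0\cM_0(e)$, the elementary $\cM_0(e)$ reads only the interior vertices $e\pm1/2$, but $\cM_0$ is a three-\emph{bit} gate (it needs bits $e-1,e,e+1$ to form the two stabilizers). Under the substitution each bit becomes an $(n-1)$-block, so the $\cM_{n-1}$ inserted in $\EC_n$ spans all three $(n-1)$-blocks and therefore occupies the \emph{entire} spatial support of $\EC_n$; the expansion $\EC_2=\EC_1^{\otimes3}\circ\cM_1\circ\cdots$ in Eq.~\eqref{eq:EC2} makes this explicit (it is $\cM_1$, not $\cM_1^{\otimes3}$). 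The circuit for $\cM_1$ contains $\cT_0$'s that reach the outermost vertices, so the vertex support of $\cM_{n-1}$ genuinely includes the level-$n$ endpoints $v_0,v_3$ of $\EC_n$. The same happens in $\cM_n$: the leftmost $\cM_{n-1}$ (from $\cM_0^{\otimes3}$ in the $\cM_1$ template) sits on $(n-1)$-blocks $1,2,3$ and touches $w_0$, the rightmost touches $w_9$.

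The repair is immediate once you notice that whenever a $\cM_{n-1}$ touches an endpoint of the ambient gadget, that endpoint is also an endpoint of the $\cM_{n-1}$ itself, hence linear by the induction hypothesis (3). With this, your composition lemma (with $\Sigma=\{v_0,v_3\}$) goes through: every sub-gadget is linear on $\Sigma$, and noiselessly preserves $\Sigma$-supported configurations because the coarse-grained action of $\cM_{n-1}$ only reads/writes the interior vertices $v_1,v_2$. The paper's route is to bypass the composition lemma for this part and argue directly at level $0$: one checks that in the recursive substitution, the endpoints of a level-$(n{+}1)$ gadget are always endpoints of the level-$n$ constituents, hence inductively always endpoints (non-control vertices) of every elementary gadget touching them; since elementary feedback never conditions on endpoint syndromes, linearity is immediate. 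This sidesteps both your ``support disjoint'' claim and your anticipated ``hard part'' about mid-circuit noise, which is in fact a non-issue for your composition lemma as well.
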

\begin{proof}
    
     We will first prove (1) and (3) together with a weaker version of (2) where we show that $\cT_n$ acts linearly on a subset of level-$n$ syndrome locations corresponding to the two endpoints. 

     To prove this, we will need the following fact.  Suppose that during the operation of some non-elementary gadget $G$, a particular vertex $v$ is always an endpoint of any constituent elementary gadget appearing during the operation of $G$ whose support includes $v$.  In this case, we claim that $v$ is a linear vertex.  To show this, we define the notion of control vertices of an elementary gadget, which are vertices whose syndromes influence the feedback operator that the elementary gadget applies.  It is clear that the control vertices for the $\cT_0$ and $\cM_0$ gadgets are the middle vertex and the middle two vertices, respectively.  The presence or lack of syndromes at the endpoints of these gadgets does not affect what feedback operator the elementary gadget applies (so that these vertices are linear vertices of $\cT_0$ and $\cM_0$, respectively).  Then, for any such vertex $v$, the total feedback operator applied by $G$ is independent of the syndrome at this vertex.  Denoting the vertex by $v$, 
     \begin{equation}
       G[\sigma, \varepsilon]  = G[\tau|_{v}] \oplus G[\tau|_{v} \oplus \sigma, \varepsilon],
    \end{equation}
    thus proving linearity for this vertex.

    Next, we provide a proof by induction that $\EC_n$, $\cM_n$, $\cT_n$, and $\cI_n$ act linearly on their endpoints; in particular, that the endpoints satisfy the above criterion.  We have already seen that this is true for the endpoints of the level-$0$ gadgets.  To prove the induction step, we assume the statement is true for level-$n$ gadgets and prove it for level-$n+1$ gadgets.  To do so, we write the level-$n+1$ gadgets in terms of the level-$n$ gadgets.  We note that by inspecting the recursive substitution rule, the endpoints of level-$n+1$ gadgets are always endpoints of level-$n$ gadgets.  Therefore, if the endpoints of the level-$n$ gadgets are always non-control vertices for elementary gadgets, the same must hold for the level-$n+1$ gadgets, completing the proof.

    We finally will prove that the midpoint level-$n$ vertices of $\cT_n$ are also linear vertices; this case needs to be analyzed separately because some of the $\cT_{n-1}$ gadgets appearing in $\cT_n$ have these vertices in their interior (in other words, these are their ``control'' vertices, i.e. the feedback of the gadgets can be conditioned on the syndromes in these vertices).  We will first use the fact that, from its definition, $\cT_0$ is a reversible gate, and satisfies the property that $\cT_0[\sigma_1 \oplus \sigma_2, \varepsilon] = \cT_0[\sigma_1, \varnothing] \oplus \cT_0[\sigma_2, \varnothing] \oplus \cT_0[0, \varepsilon]$.  Similarly, for a circuit $\cC$ of purely $\cT_0$ gates, $\cC[\sigma_1 \oplus \sigma_2, \varepsilon] = \cC[\sigma_1, \varnothing] \oplus \cC[\sigma_2, \varnothing] \oplus \cC[0, \varepsilon]$.  Our induction hypothesis will be that $$\cT_n[\tau|_{\Sigma_n} \oplus \sigma, \varepsilon] = \cT_n[\tau|_{\Sigma_n}] \oplus \cT_n[\sigma, \varepsilon],$$ and we will prove an analogous property at level $n+1$.  We have already shown the base case when $n = 0$.  Using the substitution rule for the modified Tsirelson's automaton, we may write $\cT_{n+1}$ in terms of $\cT_{n}$ with layers of $\EC_{n}$ inserted in between.  By the induction hypothesis and due to the fact that $\cI_n$ has linear vertices at its endpoints, for each layer of $\cT_{n}$ and $\cI_{n}$ gates (denote one such layer by $[\cI_{n}, \cdots, \cT_n]$), we can write
    \begin{equation}
    [\cI_{n}, \cdots, \cT_n][\tau|_{\Sigma_n} \oplus \sigma, \varepsilon_1] = [\cI_{n}, \cdots, \cT_n][\tau|_{\Sigma_n}] \oplus [\cI_{n}, \cdots, \cT_n][\sigma, \varepsilon_1],
    \end{equation}
    where $\varepsilon_1$ denotes restriction of noise to that particular time step, and where the first term importantly has support only on the level-$n$ vertices (due to the coarse-graining action of $\cT_n$ and $\cI_n$).  During the layers of $\EC_n$, linearity holds at the level-$n$ vertices, so an identical statement to the above equation can be made about such layers.
    
    Applying layer by layer and repeating this identity, we find that $\cT_{n+1}[\tau|_{\Sigma_n} \oplus \sigma, \varepsilon_1] = \cT_{n+1}[\tau|_{\Sigma_n}] \oplus \cT_{n+1}[\sigma, \varepsilon_1]$, and further restricting $\tau|_{\Sigma_n}$ to $\tau|_{\Sigma_{n+1}}$ proves the induction step.\footnote{In fact, this proves an even stronger property for the $\cT_n$ gadgets, in particular, that all points in $\Sigma_{n-1}$ are linear.}
\end{proof}
In addition to linearity, another important idea that we will utilize in the fault tolerance proof is the idea of gadget nilpotence, which essentially quantifies how good gadgets are at performing error correction:
\begin{definition}[Gate nilpotence] \label{def:tsirelson_nilpotence}
    Let $\cG_0$ be a $K$-qubit gate. A gate gadget $\mathcal G_\ell$ is said to be nilpotent if for $\varepsilon = \text{syn}(\cO)$ where $\cO$ is a Pauli operator supported within a single level-0 gadget located in $\cG_\ell$,  and for $\sigma = \sigma|_{\Sigma_\ell}$, we have $$(\EC_{\ell}^{\otimes K} \circ \mathcal  G_{\ell})[\sigma, \varepsilon] = \mathcal  G_{\ell}[\sigma],$$ where $\EC_{\ell}^{\otimes K}$ are assumed to be noise-free. A gadget $\cG$ is said to be nilpotent at level $\ell$ if $\cG_\ell$ is nilpotent.
\end{definition}

More informally, if a gadget is nilpotent at level $\ell$, a single level-0 gadget failure within a level-$\ell$ version of the gadget (assuming the input syndrome configuration is coarse-grained at level $\ell$) is eliminated, either during the operation of the gadget or by the subsequent noiseless error correction gadgets afterward.  Note that the nilpotence level $\ell$ refers to the number of iterations of the \emph{inner} simulation needed for the desired error suppression.

To prove fault tolerance, we will need to establish the nilpotence (ideally at as low a level as possible) of the distinct gates that make up the error correction gadgets. The following lemma establishes this: 

\begin{lemma} \label{lemma:nilpotent}
     The following gadgets are nilpotent: (a) $\cI$ is nilpotent at level 1, (b) $\cT$ is nilpotent at level 1, (c) $\cI$  is nilpotent at level 2, (d) $\cT$  is nilpotent at level 2, and (e) $\cM$  is nilpotent at level 2.
\end{lemma}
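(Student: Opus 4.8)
The plan is to verify each of the five nilpotence claims essentially by a finite case analysis, exploiting the linearity properties established in Proposition~\ref{prop:linearity_tsirelson} to drastically cut down the number of input syndrome configurations that must be checked, and then deferring the residual finite checks to the computer-aided search of Ref.~\cite{code}. Recall that nilpotence at level $\ell$ for a gadget $\cG$ means: for every Pauli operator $\cO$ supported inside a single level-$0$ gadget located within $\cG_\ell$, with $\varepsilon = \mathrm{syn}(\cO)$, and every level-$\ell$ coarse-grained input $\sigma = \sigma|_{\Sigma_\ell}$, we have $(\EC_\ell^{\otimes K}\circ\cG_\ell)[\sigma,\varepsilon] = \cG_\ell[\sigma]$ with the trailing $\EC_\ell$'s noise-free.

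First I would dispatch the easy parts (a) and (c): $\cI_\ell$ is by definition simply a stack of idle gadgets $\cI_0$ on a single cell repeated a fixed number of steps, so $\cG_\ell[\sigma]$ is just the coarse-graining/transport induced by the tiling, and a single $\cI_0$-localized fault injects a syndrome supported on $O(1)$ vertices inside one $0$-cell. Because the first half of $\EC_\ell$ coarse-grains any input to $\Sigma_\ell$ (Lemma~\ref{lemma:EC-coarse-grains}) and the injected fault is, by the nilpotence level being $1$ (resp.\ $2$), small enough to be erased by the subsequent noiseless $\EC_\ell$ (this is exactly the ``single link error anywhere in the spacetime support is cleaned'' property verified for the level-$1$ building blocks in Remark~\ref{R0_search_remark}, lifted through the recursion by Lemma~\ref{lemma:EC-coarse-grains}), the output matches the fault-free output. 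For (a)/(c) there is essentially nothing to split off as a separate nonlinear check, since the idle gadget acts linearly on all of $\Sigma_\ell$ by Proposition~\ref{prop:linearity_tsirelson}(1). For the $\cT$ cases (b) and (d), I would use Proposition~\ref{prop:linearity_tsirelson}(2): $\cT_\ell$ acts linearly on all of $\Sigma_\ell$ (indeed on $\Sigma_{\ell-1}$), hence $\cT_\ell[\sigma,\varepsilon] = \cT_\ell[\sigma]\oplus\cT_\ell[0,\varepsilon]$ and likewise $(\EC_\ell\circ\cT_\ell)[\sigma,\varepsilon] = \cT_\ell[\sigma]\oplus(\EC_\ell\circ\cT_\ell)[0,\varepsilon]$ (using that $\EC_\ell$ is linear on $\Sigma_\ell$ and that $\cT_\ell[\sigma]\in\Sigma_\ell$). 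So nilpotence reduces to the single statement $(\EC_\ell\circ\cT_\ell)[0,\varepsilon]=0$ for $\varepsilon$ the syndrome of a fault in any one constituent $0$-gadget, i.e. to checking that $\cT_\ell$ followed by a noiseless $\EC_\ell$ annihilates the syndrome produced by an arbitrary single localized fault starting from the empty configuration. This is again a finite computation once $\ell$ is fixed at $1$ or $2$, reducible via the recursive substitution rule plus Lemma~\ref{lemma:EC-coarse-grains} to the already-verified level-$0$/level-$1$ facts and to a modest exhaustive search over fault locations and Pauli types; this is what one would hand to the computer.

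The main obstacle will be part (e), the match gate $\cM$ at level $2$, which is genuinely the hardest because $\cM_\ell$ is \emph{not} linear on all of $\Sigma_\ell$ --- by Proposition~\ref{prop:linearity_tsirelson}(3) it is linear only at the two endpoints of its support, and its action at the two interior (control) vertices is the nonlinear ``annihilate a pair'' map. So the clean reduction above fails: one cannot pull $\varepsilon$ out past $\cM_\ell$ when the input $\sigma$ has syndromes at the relevant interior level-$\ell$ vertices. My plan here is to split the input $\sigma$ on the (constantly many) level-$\ell$ vertices of $\supp_v(\cM_\ell)$ into cases according to the restriction $\sigma|_{V_{\mathrm{ctrl}}}$ to the non-linear control vertices, of which there are $O(1)$ possibilities, and for each such case use linearity at the endpoints to absorb the endpoint part of $\sigma$; this leaves, for each of the finitely many control-vertex patterns, the statement $(\EC_2^{\otimes K}\circ\cM_2)[\sigma_{\mathrm{ctrl}},\varepsilon]=\cM_2[\sigma_{\mathrm{ctrl}}]$ to be checked over all single-$0$-gadget faults $\varepsilon$ inside $\cM_2$. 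The reason nilpotence needs level $2$ rather than $1$ for $\cM$ (and, in the full toric-code version, level $6$) is precisely that a fault at the ``wrong'' step of $\cM_1$ can, before the next $\EC$, push a syndrome across the control region in a way that survives one round of coarse-graining but not two; I would make this concrete by tracing the worst-case fault (a fault on the $\cM_0^{\otimes 3}$ layer, cf.\ the discussion around Eq.~\eqref{eq:M1-ts} and the $R_0$ ``skip example'') through $\cM_1$, showing its syndrome ends up on $\Sigma_1$ but possibly displaced, then through a noiseless $\EC_1$, then through the outer structure of $\cM_2=\cM_1[\EC_1\circ\cG_1]\circ\EC_1$ and a final noiseless $\EC_2$, and checking the residue vanishes. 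All the genuinely case-heavy parts of (b)--(e) I would, consistent with the paper's stated methodology, delegate to an exhaustive computer search over the finite set of (fault location, Pauli type, control-vertex pattern) triples, citing \cite{code}; the role of the proof text is to justify \emph{why} that finite search suffices, namely the linearity reductions above together with Lemma~\ref{lemma:EC-coarse-grains} and Remark~\ref{R0_search_remark}.

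Finally I would note the uniformity point that makes (d) and (e) follow from (b) and low-level checks without a fresh search at level $2$: once nilpotence holds at level $1$ for $\cT$, the recursive structure $\cG_2 = \cG_1[\EC_1\circ\cG_1]\circ\EC_1$ means a single level-$0$ fault inside $\cG_2$ lies inside exactly one of the level-$1$ sub-gadgets (a $\cG_1$, an $\EC_1$, or a boundary level-$0$ gadget); by level-$1$ nilpotence of $\cG$ and of $\EC$ (the latter from Remark~\ref{R0_search_remark} applied to $\EC_1=Z_0$, which cleans any single link error in its spacetime support), the fault is erased by the time its block's trailing noiseless $\EC_1$ acts, leaving a level-$1$ coarse-grained state on which the remaining (noiseless) level-$\ge1$ structure acts exactly as in the fault-free run; one then only needs the base fact that the level-$2$ gadgets coarse-grain to $\Sigma_2$ (Lemma~\ref{lemma:EC-coarse-grains}). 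The level-$2$ clause for $\cM$ is stated separately precisely because the block containing the fault might be the innermost $\cM_1$, whose own nilpotence is only level $2$ not level $1$; so (e) genuinely requires going up two levels, and that is the content of the computer check I described above.
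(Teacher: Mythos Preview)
Your high-level strategy—use linearity (Proposition~\ref{prop:linearity_tsirelson}) to reduce (a)--(d) to the clean-input case, and for (e) split over the finitely many nonlinear control-vertex patterns—is sound, and your ``uniformity point'' at the end (a single level-$0$ fault inside $\cG_2$ sits in some level-$1$ sub-gadget, so level-$1$ nilpotence cleans it before the trailing $\EC_1$) is exactly the engine the paper uses. However, two things should be corrected.

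First, you have a context confusion: this lemma lives in the \emph{one-dimensional} measurement-and-feedback Tsirelson automaton of Section~\ref{sec:tsirelson-proof-2}, where $\EC_1 = Z_0$ (the five-step majority-vote circuit of Eq.~\eqref{Z0-action}). Remark~\ref{R0_search_remark} and Lemma~\ref{lemma:EC-coarse-grains} concern the two-dimensional toric-code gadget $R_0$ and are not the right citations here; the analogous 1D facts (that $Z_0$ erases a single-bit error, and that $\EC_n$ coarse-grains) are immediate by inspection in one dimension and do not need the 2D machinery.

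Second—and more substantively—the paper does \emph{not} defer anything to a computer search for this lemma; it is proved entirely by hand. The paper first isolates as a standalone fact that nilpotence at level $\ell$ implies a fault inside a $\cG_\ell$ sub-gadget of any larger $\cG'_i$ is inconsequential (your uniformity point, stated upfront). Then (a) and (b) are one-line: a single $\cI_0$ or $\cT_0$ fault inside $\cI_1$ or $\cT_1$ damages at most one bit per $1$-cell, which the trailing $\EC_1$ erases. Parts (c) and (d) then follow by expanding $\cG_2$ in level-$1$ gadgets and invoking the fact just proved for faults landing in a $\cG_1$, with a short direct argument for faults landing in an $\EC_1$ (such a fault leaves damage confined to a single $1$-cell, which survives the subsequent $\cI_1/\cT_1$ and is erased by the final $\EC_2$). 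For (e), rather than enumerating control-vertex patterns by computer, the paper case-splits on whether the fault lies after, before, or during the single $\cM_1^{\otimes 3}$ layer in $\cM_2$. Faults after are handled as in (c),(d). Faults before produce $1$-cell damage entering the $\cM_1^{\otimes 3}$ layer; one then verifies by inspection that $\cM_1$ on a level-$1$ input with $1$-cell damage (or with a single $\cM_0$ fault) outputs damage fitting inside a single $2$-cell, coarse-grained to level $1$ after the next $\EC_1$ layer. At that point one \emph{level-reduces}: the remaining second half of $\cM_2$ followed by $\EC_2$, acting on this level-$1$ coarse-grained damage, is equivalent to the second half of $\cM_1$ followed by $\EC_1$ acting on a single $0$-cell damage, which one checks cleans up. So your plan would succeed, but the paper's argument is sharper: it avoids any machine verification by exploiting the specific layered structure of $\cM_2$ and a single level-reduction step.
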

\begin{proof}
    We will be using the terms ``clean input'' or ``damage-free input'' to denote a configuration of syndromes with no nontrivial syndromes.
    
    First, we show the following fact. If a gate $\mathcal{G}$ (where $\cG_0$ is a $K$-bit gate) is nilpotent at level $\ell$ and $\cG_{\ell}$ is a part of a different gadget $\cG'_i$ with $i > \ell$, then a single level-0 gadget failure located within the support of $\cG_\ell$ in an otherwise noise-free gadget $\cG'_i$ cannot change the output of $\cG'_i$. To prove this, we can write $\cG'_i$ in terms of level-$\ell$ $\EC$ and gate gadgets. Since the $\cG_{\ell}$ where the failure occurs is preceded by a noise-free $\EC_{\ell}^{\otimes K}$ gadget, then the input syndromes to this $\mathcal G_\ell $ must be coarse-grained at level $\ell$. Next, because $i > \ell$, $\mathcal G_{\ell}$ must be followed by $\EC_{\ell}^{\otimes K}$ which is noise-free by assumption that there is only one fault in $\cG'_i$. Thus, by nilpotence, we can replace $\cG_\ell[\sigma,\varepsilon] \rightarrow \cG_\ell[\sigma]$ and the gadget failure will not affect the output of $\cG'_i$. 

    Due to the linearity of endpoints of gadgets, the proof for statements (a-d) does not depend on the syndromes at the linear points.  Calling $\sigma$ the input syndromes and $\sigma_{\ell}$ the input syndromes restricted to linear points, we can evolve the gadget $\cG$ assuming the input $\sigma \oplus \sigma_{\ell}$, and $\mathbb F_2$ add the gadget action $\cG[\sigma_{\ell}]$ at the end. 

    (a) Consider $\EC_{1} \circ \cI_1 [\sigma, \varepsilon]$. The only gate failures in the support of $\cI_1$ that are possible are those of $\cI_0$ gadgets; therefore, the output damage is limited to at most one bit (i.e. whose syndromes are contained in a 0 cell). This is cleaned up by the noiseless $\EC_{1}$ at the end. Together with the linearity of $\cI_1$ on the level-1 coarse-grained syndromes, this gives (a).

    (b) Next, consider $\EC_{1}^{\otimes 2} \circ \cT_1 [\sigma, \varepsilon]$.  Similarly, the possible failures in the support of $\cT_1$ are limited to either  $\cI_0$ or $\cT_0$.  In the former case, one finds that the output damage is limited to one bit; in the latter, the output damage has support on two bits, with each bit belonging to a different 1-cell. As a result, the noise-free $\EC_{1}^{\otimes 2}$ at the end restores the correct coarse-grained configuration.  

    (c) We can write 
    \begin{equation}
        \EC_{2}  \circ \cI _2 =  \EC_{2} \circ (\EC_1^{\otimes 3} \circ \cI_1^{\otimes 3} )^5  \circ \EC_1^{\otimes 3}
    \end{equation}
    One possible level-0 gadget failure occurs within the support of an $\cI_1$; from (a) and the fact from beginning of this proof, such failures are inconsequential. Another possible level 0 failure is within $\EC_1$ (excluding the last layer of $\EC_1$), which will produce damage whose support is at most on a single 1-cell; the support of the damage is left unchanged by $\cI_1$s and is coarse-grained to level 1 by the subsequent noise-free $\EC_1$s. Finally, it is eliminated by the noise-free $\EC_{2}$ at the end.  If the level 0 failure occurs within the last $\EC_1$, the damage due to this failure will also be supported in at most a single 1-cell and will be erased by the subsequent noise-free $\EC_{2}$. 

    (d) We can write 
     \begin{equation}
        \EC_{2}^{\otimes 2}  \circ \cT _2 =  \EC_{2}^{\otimes 2} \circ (\EC_1^{\otimes 6} \circ [\cI_1, ... \cT_1])^5  \circ \EC_1^{\otimes 6}
    \end{equation}
    where $ (\EC_1^{\otimes 6} \circ [\cI_1, ... \cT_1])^5$ stands for layers where $I_1$ and $T_1$ gadgets are applied in some combination that we do not specify. If the failure occurs in the support of $\cI_1$ or $\cT_1$, by properties (a) and (b) and the fact from the beginning of the proof, the damage is cleaned by the subsequent layer of $\EC_1^{\otimes 6}$. If one of the gadgets within the support of $\EC_1$ fails, the output damage will still have support within a 1 cell after the subsequent $\cI_1$ or $\cT_1$ gadgets\footnote{To see this, note that by direct verification, a single gadget failure in $\EC_1$ will lead to damage with support on a 1-cell.  A single layer of clean $\cI_1$ and $\cT_1$ following this will at most transport the damage to a different 1-cell.}.  The damage will finally be coarse-grained to level 1 by the subsequent $\EC_1$, and will remain coarse-grained at level 1 and fitting within a 1-cell until it is eliminated by the final noise-free $\EC_2$. Similarly to part (c), if the level 0 failure occurs within the last $\EC_1$, the damage due to this failure will be erased by the subsequent noise-free $\EC_{2}$. 
    
    (e)  We can write
    \begin{equation}
        \EC_{2}^{\otimes 3}  \circ \cM _2 =  \EC_{2}^{\otimes 3} \circ (\EC_1^{\otimes 9} \circ [\cI_1, ... \cT_1])^5 \circ \EC_1^{\otimes 9} \circ \cM_1^{\otimes 3}  \circ (\EC_1^{\otimes 9} \circ [\cI_1, ... \cT_1])^5\circ \EC_1^{\otimes 9}
    \end{equation}
    where, as before, $ (\EC_1^{\otimes 9} \circ [\cI_1, ... \cT_1])^5$ denotes (now different) layers of level-1 identity and swap gates.  We are not able to use the linearity property anymore to remove non-trivial input syndromes in $\Sigma_2$ because these syndromes can be located where $\mathcal \cM_2$ acts non-linearly. 
    
    If a level-0 gadget failure occurs at any point after the layer of $\cM_1^{\otimes 3}$, then the damage right after any subsequent layer of $EC_1^{\otimes 9}$ (or the noise-free layer of $EC_2$) will be coarse-grained at level 1 and will fit in a 1-cell.  Then, same considerations as in previous parts apply, since the remaining gates to be applied consist entirely of $\cI_1$, $\cT_1$, and $\EC_1$. 

    If the failure occurs at any point before the layer of $\cM_1^{\otimes 3}$, we note that the damage before the layer of $\cM_1^{\otimes 3}$ will always fit in an $1$-cell. This is because the gadgets $\cI_1$ and $\cT_1$ are nilpotent against level-0 failures within them, and failure of any of the $\EC_1$ gadgets produces damage contained within a single 1-cell (which we call $m=1$ damage). We note that $m=1$ input damage continues to remain $m=1$ damage under any sequence of clean $\cI_1$, $\cT_1$, and $EC_1$ (since the action of a single one of these gadgets does not grow the size of the damage beyond a 1-cell).  Hence 
    this case -- as well as the final remaining case where the level-0 failure occurs within the $\cM_1^{\otimes 3}$ layer -- reduces to studying the $\cM_1^{\otimes 3}$ layer with either input damage fitting in a 1-cell or a clean input with a single level-0 gadget failure inside $\cM_1^{\otimes 3}$. By explicitly checking all possible scenarios of this, we find that in all of them, the output damage will have support at most the size of a single $\cM_1$ gadget (which is the size of a 2-cell).  After the following layer of $\EC_1$ gadgets, the resulting damage will be coarse-grained to level 1 and will fit in a single 2-cell.  Thus, we can reduce the level by 1, mapping the problem of considering $\cM_2$ with a level-2 input and a single failed $\cM_1$ in the middle layer to considering an $\cM_1$ gadget with a level-1 input and a single failed $\cM_0$ in the middle layer. Such a failure is cleaned up after running the rest of the gadget and following with a noise-free $\EC$ and we thus find that $\cM_2$ is nilpotent.
\end{proof}

We emphasize that while $\cM_2$ is nilpotent, $\cM_1$ is {\it not}; this is the reason why our proof of fault tolerance in this setting must go up by several levels. As an explicit example demonstrating this, consider $EC_1^{\otimes 3}\cM_1$ acting on the input state $111\, 000\, 000$, 
with a fault occurring in the first $\cT_0$ gate to be applied (c.f. the circuit for $\cM_1$ given in \eqref{eq:M1-ts}). 
This produces the state $111 \, 001 \, 100$ after the first layer of the circuit constituting $\cM_1$. The action of the rest of $EC_1^{\otimes 3} \circ \cM_1$ on this configuration produces the output $111\, 111\, 000$, which contains a logical error (a similar problem does not afflict $\cM_2$ due to the nilpotence of $\cT_1$).

We are now ready to prove the Gate and $\EC$ conditions for the present set of gadgets. 
 \begin{proposition} [] \label{lemma:gate-ec-prime-TS}
 Assume an inner and outer fault tolerant prescription where all elementary gadgets undergo $p$-bounded gadget error model (where each of the gadgets $\cI_0$, $\cT_0$, and $\cM_0$ can fail).   
Then:
\begin{enumerate}
\item The gate gadgets $\cI_\numT$, $\cT_\numT$ and $\cM_\numT$ satisfy $(Gate$ A$)_{k,m}$ and $(Gate$ B$)_{k,m}$  properties and
\item  The $\widetilde{\EC} = \EC_k$ gadget satisfies $(\EC$ A$)_{k,m}$  and $(\EC$ B$)_{k,m}$  properties 
\end{enumerate}
for $k \geq 3$ and $m=1$ (c.f. Def.~\ref{def:ec_prime_gate_prime}).
 \end{proposition}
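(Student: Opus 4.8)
The plan is to verify the four conditions $(\text{Gate }A)_{k,m}$, $(\text{Gate }B)_{k,m}$, $(\EC\,A)_{k,m}$, $(\EC\,B)_{k,m}$ for $k\ge 3$, $m=1$ by reducing each to the two structural facts already at our disposal: the linearity properties (Prop.~\ref{prop:linearity_tsirelson}) and the nilpotence lemma (Lemma~\ref{lemma:nilpotent}), together with the coarse-graining action of $\EC_k$ (Lemma~\ref{lemma:EC-coarse-grains} in the Tsirelson setting, which holds verbatim since the inner simulation there is defined analogously). First I would unwind the definitions: $(\text{Gate }A)_{k,m}$ asks that when we sandwich a level-$k$ gate $\cG_k$ between an incoming $m$-cell filter (plus $s$ internal level-0 failures, with $\sum_i m_i + sm \le m$, hence here at most one unit of ``$m=1$ damage'' total) and a trailing noiseless $\EC_k$, the result agrees with the ideal action of $\cG_0$ followed by an $m$-cell filter. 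The key point is that ``at most one unit of damage'' at $m=1$ means: either a single level-0 gadget failure somewhere inside $\cG_k$, or an incoming configuration differing from a codeword on a single $0$-cell. Both are exactly the hypotheses covered by nilpotence.

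The main steps, in order: (1) Record that the input syndrome to $\cG_k$, after the incoming $m$-cell filter, differs from a coarse-grained (level-$k$) configuration by at most damage confined to a single $0$-cell; by the coarse-graining action of the preceding $\EC$ in the exRec (or, in the truncated case, a noiseless $\EC_k$ inserted per App.~\ref{app:exRec}) we may assume the ``bulk'' part of the input is genuinely level-$k$ coarse-grained. (2) Use the linearity statements of Prop.~\ref{prop:linearity_tsirelson} to split the action of $\cG_k$ into its action on the level-$k$ coarse-grained part (which by Lemma~\ref{lemma:EC-coarse-grains} reproduces the level-$0$ action of $\cG_0$ and stays coarse-grained) plus its action on the residual ``small'' part carrying the single unit of damage; for $\cI$ and $\cT$ all the relevant level-$k$ vertices are linear, so this split is clean, and for $\cM$ the endpoints are linear, which is all that is needed because match gates only act nontrivially when both endpoints carry syndromes — I would treat the non-linear interior of $\cM_k$ by invoking, as in the proof of Lemma~\ref{lemma:nilpotent}(e), that $m=1$ input damage stays $m=1$ under clean $\cI_k,\cT_k,\EC_k$ layers and that one reduces to a single $\cM_1$ (or $\cM_0$) layer with $m=1$ input or one interior failure. (3) Apply nilpotence: $\cI,\cT$ nilpotent at level $1$ and $2$, $\cM$ nilpotent at level $2$ (Lemma~\ref{lemma:nilpotent}), together with the ``failure inside a sub-gadget $\cG_\ell$ of a larger $\cG'_i$, $i>\ell$, is inconsequential'' fact proved there; since $k\ge 3 > 2 \ge$ (the nilpotence level of every gate), every single level-0 failure inside $\cG_k$ is absorbed either during $\cG_k$ or by the trailing noiseless $\EC_k$, leaving output damage confined to a single $0$-cell, i.e. passing the trailing $m$-cell filter. (4) For $(\text{Gate }B)_{k,m}$, the statement is that with the same single unit of damage but \emph{without} assuming the incoming filter, the output after $\cG_k$ differs from the ideal output by damage confined to a single $m$-cell; this follows from the same linearity-plus-nilpotence decomposition, now noting that the coarse-grained part is acted on exactly (Lemma~\ref{lemma:EC-coarse-grains}) and the residual unit of damage, whether from an input $0$-cell defect or an internal failure, produces damage whose support is bounded by the size of one elementary gadget, hence contained in a single $0$-cell (for $\cI,\cT$) or controllably in a single $m$-cell after the relevant cleanup (for $\cM$, using the same interior argument as Lemma~\ref{lemma:nilpotent}(e)). (5) For $(\EC\,A)_{k,m}$ and $(\EC\,B)_{k,m}$ with $\widetilde\EC = \EC_k$: $(\EC\,A)$ is that a noiseless $\EC_k$ following an $m$-cell-filtered (plus one internal failure) input coarse-grains it to level $k$ and agrees with the ideal decoder; this is precisely Lemma~\ref{lemma:EC-coarse-grains} (first half of $\EC_k$ coarse-grains) combined with nilpotence to absorb the single failure, and the pushforward/ideal-decoder matching is definitional. $(\EC\,B)$ is that $\EC_k$ with one internal failure differs from its ideal action by damage in a single $m$-cell, again immediate from nilpotence at level $k\ge 3$ and the fact that a single level-0 failure in $\EC_1$ (the base case, Lemma~\ref{lemma:nilpotent} and Remark~\ref{R0_search_remark}) yields damage in one $1$-cell, propagated inductively.

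The hard part will be step (2)–(3) for the $\cM$ gadget, because $\cM_k$ genuinely fails to be linear on its interior level-$k$ vertices and is only nilpotent at level $2$ (and $\cM_1$ is \emph{not} nilpotent, as the explicit $111\,000\,000$ counterexample shows). So one must be careful that the $m=1$ bound is not exceeded during the intermediate $\cM_1^{\otimes 3}$ layer: I would mirror the bookkeeping in the proof of Lemma~\ref{lemma:nilpotent}(e) almost line for line — push the analysis to the point just before the middle $\cM$ layer, argue the accumulated damage still fits in a single $1$-cell (resp. $0$-cell at the bottom level), reduce the level by one via a clean $\EC$ layer, and iterate down to a base case that is checked by the exhaustive computer search of Remark~\ref{R0_search_remark}. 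Everything else ($\cI,\cT,\EC$) is routine once linearity and nilpotence are in hand, and the exRec chain of equalities in App.~\ref{app:exRec} then upgrades these Gate/$\EC$ conditions to goodness of exRecs as already stated in the preceding proposition; I would end by remarking that $k=3$ is forced precisely because $\cM$ needs nilpotence level $2$ and one extra clean $\EC$ level is consumed in the reduction, so no smaller $k$ works with this proof strategy.
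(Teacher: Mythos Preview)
Your general strategy (linearity from Prop.~\ref{prop:linearity_tsirelson} plus nilpotence from Lemma~\ref{lemma:nilpotent} plus level reduction) is the right one, and your treatment of Gate~A, Gate~B, and $\EC$~B is essentially what the paper does. You are also right that the $\cM$ gadget is the delicate case for the Gate conditions, for exactly the reason you give.

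There is, however, a genuine gap in your handling of $\EC$~A. You have the roles of $\EC$~A and $\EC$~B reversed: in Def.~\ref{def:ec_prime_gate_prime}, $\EC$~A has \emph{arbitrary} input (no incoming filter at all) and asks only that the output pass an $sm$-cell filter, while $\EC$~B is the one with filtered input and the ideal-decoder conclusion. Your argument ``immediate from nilpotence'' cannot cover $\EC$~A, because nilpotence (Def.~\ref{def:tsirelson_nilpotence}) is stated only for coarse-grained input $\sigma=\sigma|_{\Sigma_\ell}$. The real difficulty is when the single level-0 fault lands in the \emph{very first} layer of $\EC_1^{\otimes 9}$ inside $\EC_3$: at that moment the input is still arbitrary, so nilpotence does not apply, and one of the $\EC_1$ blocks can output anything on its three bits. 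The paper resolves this with a separate ``reversibility'' argument: a brute-force table (App.~\ref{app:reversibility}) showing that for every such arbitrarily-damaged block, the subsequent $\EC_1\circ\cM_1$ produces an output that coincides with the output of a noiseless run on \emph{some} level-1 coarse-grained input, i.e.\ never produces the irreversible patterns $(010)_1$ or $(101)_1$. Only after this check can one replace the damaged block by an equivalent coarse-grained input and then invoke nilpotence/level reduction for the rest of $\EC_3$. This reversibility step is not deducible from linearity or nilpotence and is what actually forces the case analysis in the paper's $\EC$~A proof; your proposal does not supply it.
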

In the following, we will split the proof into separate parts, one for each property. In several places we will prove slightly stronger results with $k=2$. 

\begin{proof}[Proof ($\EC$ B)]
    We will show $\EC$ B for $k=2$, and $m=1$. 
    The definition of the $\EC$ B property allows for either a single or no level-0 gadget failures to occur within the support of the $\EC$ gadget; in the former case, the input can be taken to be properly coarse-grained at level $k$. 
    
    Consider first the case where the failure occurs inside the $\EC$ gadget.
    Since the input syndromes are supported on $\Sigma_k$ and $\EC_k$ acts linearly on these locations, and since $\EC_k[\sigma|_{\Sigma_k}] = \sigma|_{\Sigma_k}$, for the purposes of demonstrating $\EC$ B we can assume that we are given a clean codeword as input.
     We can write $\EC_2$ as
    \begin{equation} \label{eq:EC2}
         \EC_2 =   \EC_{1}^{\otimes 3} \circ \cM_1
        \circ \EC_{1}^{\otimes 3} \circ \cT_1 \cI_1        
        \circ  \EC_{1}^{\otimes 3} 
        \circ \cM_1      
        \circ \EC_{1}^{\otimes 3} \circ  \cI_1 \cT_1   
         \circ \EC_{1}^{\otimes 3}  \circ \cM_1 \circ \EC_{1}^{\otimes 3}  
    \end{equation}
    If the level-0 failure occurs in the support of one of the  $\cT_1$ or $\cI_1$ gate gadgets, it is inconsequential because of their nilpotence (Lemma~\ref{lemma:nilpotent}). If it is located inside any of the $EC_1$ or $\mathcal M_1$ gadgets, one can directly verify that the output damage of $\EC_2$  is contained within a single 1 cell (the fact that the input is coarse-grained at level 2 is necessary for this to be true). This damage will be cleaned up by an ideal level-2 decoder.  

    Now let us consider the case when $ s = 0$ and $m = 1$. The first layer of $\EC_1^{\otimes 3}$ inside of $\EC_2$ will coarse-grain this damage to level 1. Then, treating this as an input, the operation of the remainder of the gadget is equivalent to level reducing by 1, in which case the damage fits in a 0-cell and the remaining operation of the gadget is equivalent to $\EC_1$ which cleans up such an input.  This is again consistent with a 1-cell filter acting on the output, which proves $\EC$ B for $k = 2$, $m=1$.

    Finally, let us show $(\EC\text{ }B)_{3,1}$ is true given that  $(\EC\text{ }B)_{2,1}$ is true. First,  $\EC_3[\sigma,\varepsilon]$ can be expressed through level 2 gates and $\EC$ gadgets. When $s = 1$, $m=0$ and the failure occurs inside one of the level 2 gates, it is inconsequential because of the nilpotence property (Lemma~\ref{lemma:nilpotent}). If the error occurs within any of the $\EC_2$ gadgets, we use the fact that $(\EC\text{ }B)_{2,1}$ holds. Therefore, the failure either has no effect or it causes $1$-cell damage at the output of the $\EC_2$ gadget. If this is the last $\EC_2$ in the circuit for $\EC_3$, this proves the claim. Otherwise, the damage is input into the subsequent level-2 gate, wherein it is coarse-grained to level 1 by the first layer of clean $\EC_1$ in the level-2 gate. Therefore, we can formally reduce the level by 1 and consider this damage as level-0 input damage to a level-1 gate. If the gate is $\cI$ or $\cT$, the output damage still fits in a 0-cell, which means (going back up one level) it is cleaned up by the subsequent $\EC_2$. If the gate is $\cM$, one can explicitly check that the damage is either cleaned up or remains unchanged, in which case it will be cleaned up by the subsequent $\EC_2$. This concludes the proof of $(\EC\text{ }B)_{3,1}$. 
\end{proof}

\begin{proof}[Proof (Gate A)]
    We first will show a stronger property, namely $(\text{Gate A})_{2,1}$.  We will assume the gate gadget acts on $K$ bits.  Consider first the case $s = 1$, and $m_i = 0$ for $i = 0,\cdots,K$.  By Lemma~\ref{lemma:nilpotent}, $\cI_2$, $\mathcal{T}_2$ and $\mathcal{M}_2$ are nilpotent,  i.e. $\EC_{2} \circ \mathcal{I}_{2}[\sigma, \varepsilon] = \mathcal{I}_{2}[\sigma]$, $\EC_{2}^{\otimes 2} \circ \mathcal{T}_{2}[\sigma, \varepsilon] = \mathcal{T}_{2}[\sigma]$ and $\EC_{2}^{\otimes 3} \circ \mathcal{M}_{2}[\sigma, \varepsilon] = \mathcal{M}_{2}[\sigma]$.  According to the same logic as in the proof of $\EC$ B, if $\varepsilon$ occurs before the last layer of $\EC_1$ in either of the gate gadgets, then $\mathcal{T}_{2}[\sigma, \varepsilon] = \mathcal{T}_{2}[\sigma]$ and $\mathcal{M}_{2}[\sigma, \varepsilon] = \mathcal{M}_{2}[\sigma]$.  If $\varepsilon$ occurs during the last layer of $\EC_1$ in either of the gate gadgets, then it causes damage confined to at most a single 1-cell, consistent with an 1-cell filter acting on the output.

    We now consider the case $s = 0$, and $m_i = 1$ for some $i$, i.e. when the input damage (relative to the nearest level-2 coarse-grained configuration) is contained within a single 1-cell.  After the first layer of $\EC_1$, the syndromes lie entirely in $\Sigma_1$.  The operation of the remainder of both gadgets can be then understood by reducing the level by 1, i.e. by considering $\mathcal{T}_1$ and $\mathcal{M}_1$ with input damage contained in a 0-cell.  The output will have damage contained at most in a 0-cell per each 1-cell, which when going back up 1 level is consistent with $m = 1$.  

    Finally,  $(\text{Gate A})_{3,1}$ property follows from $(\text{Gate A})_{2,1}$ by similar considerations to those in the proof of $\EC$ B. 
\end{proof}

\begin{proof}[Proof (Gate B)]

    Assume the gate gadget acts on $K$ bits.  For the case $s = 1$, and $m_i = 0$ for $i = 0,\cdots,K$, this condition follows from nilpotence of all gates at level 2 together with $(\EC\text{ }B)_{2,1}$ and considerations from the proof of $\text{Gate}$ A. The output state is thus equivalent to having undergone a noiseless gate with possible $1$-cell damage at the end. The ideal decoder can eliminate this damage. 
    The case $s = 0$, and $m_i = 1$ can be similarly analyzed. 
\end{proof}

\begin{proof}[Proof ($\EC$ A)]

    First, we express $\EC_3$ through level-2 gadgets only, assuming a single level-0 gadget failure and now an \emph{arbitrary} input state. The circuit starts with a layer of $EC_2^{\otimes 9}$, and we denote $t_0$ to be the point in time right after this layer. 
    
    Consider first the situation when the level-0 gadget failure occurs after $t_0$. In that case, the first layer of $EC_2^{\otimes 3}$ is error-free, and the state at $t_0$ is coarse-grained at level 2. A level-0 gate failure then must occur in the support of one of the subsequent level-2 gates $\mathcal G_2$ or inside one of the subsequent layers of $EC_2^{\otimes 3}$. From Lemma~\ref{lemma:nilpotent}, we know that in the former case, the state is not changed. 
    If the failure occurred in the support of one of the $\EC_2^{\otimes 3}$ layers, we combine considerations for $\numT = 2$ from the proofs of $(\EC\text{ }B)_{2,1}$ and $(\text{Gate }A)_{2,1}$ to find that the damage is limited to the support of a 1-cell after each gate or $\EC_2$ layer. This is consistent with adding a 1-cell filter at the end.

    Consider now the case when the level-0 gadget failure occurred before $t_0$ (i.e. in the support of one of the $\EC_2$ gadgets). For reference, this gadget is written out in Eq.~\eqref{eq:EC2}.     
    If the level-0 gadget failure occurred
    within the first layer of $\EC_{1}^{\otimes 9} $, the output of one of these $\EC_1$ gadgets contains arbitrary damage in its support and the damage is coarsed-grained to level 1 for the other $\EC_1$ gadgets. A brute-force computation (by a computer algebra software; see App.~\ref{app:reversibility}) shows that the output of the subsequent $\EC_1 \circ \mathcal M_1$ gadget has the property that there exists (a non-unique) input configuration which is coarse-grained to level 1 that would cause the same output. Let us refer to this property as producing a \emph{reversible output}. Choosing a level-1 coarse-grained input with such an output, the action of the gadget with the level-0 failure can be replaced by the action of the gadget on this input, which will be coarse grained at level 3, fulfilling the $\EC\text{ }A$ condition. 
    
    Next, consider the case when the gadget failure occurs within the first layer of $\EC_2^{\otimes 3}$, in particular after the first layer of $\EC_1^{\otimes 9}$ and before the last layer of $\EC_1^{\otimes 9}$. Because the first layer of $\EC_1$ gadgets is noise-free, if the fault occurs either in the support of subsequent $\cI_1$ or $\cT_1$, due to the nilpotence of those gates, it has no consequence. If it occurs elsewhere, the damage will fit in a 2-cell and after the last layer of $\EC_1^{\otimes 9}$ (which is noise-free) the damage is coarse-grained to level 1. Considering the entire $\EC_3$ gadget, we see that the input to the first $\cM_2$ is now coarse-grained at level 2 apart from one 2-cell where the input is coarse-grained at level 1. This reduces to the consideration in the previous paragraph upon reducing the level by one, i.e. we replace $\cM_2$ with $\cM_1$ and $\EC_3$ with $\EC_2$. The $\cM_1$ in question produces a reversible output, which is equivalent to a level-1 coarse-grained input to $\EC_2$ and a noise-free operation of $\EC_2$. This damaged is clearly coarse-grained at level 2, which upon undoing level reduction, corresponds to being coarse-grained at level 3. 

    Finally, if the error occurred in the last layer of $\EC_1^{\otimes 9}$ (within the first layer of $\EC_2^{\otimes 3}$), the output is unchanged under an $1$-cell filter. After the next gate followed by the next layer of $\EC_2$ the additional damage will be cleaned up, as per the $(\text{Gate }A)_{2,1}$ and $(\EC\text{ }B)_{2,1}$ properties.  Therefore, $(\EC\text{ }A)_{k,m}$ holds for $k = 3$ and $m=1$.

This proves the Proposition.

\end{proof}

Having established the EC and Gate properties, fault tolerance readily follows:
\begin{corollary}\label{cor:Tsirelson_pf2}
Consider a classical circuit $\mathcal{C}$ on 1 bit which consists of $T$ idle operations. Construct a fault-tolerant circuit corresponding to $n$ repetitions of the outer simulation $(\widetilde{FT})^{n}(\mathcal{C})$ with inner simulation level $k = 3$ and $m=1$.   Assuming a $p$-bounded gadget error model in $(\widetilde{FT})^{n}(\mathcal{C})$, $\mathcal{C}$ experiences an error model that is a $(Ap)^{2^{n}}$-bounded wire error model for some constant $A$. The probability that the initial logical state $b(0)$ will get corrupted in time $T$ is
 \begin{equation}
 \mathbb{P}\left (\widetilde{D}_n(s(t = T)) \neq \widetilde{D}_n(s(t = 0)) \right ) \leq T \cdot (Ap)^{2^{n}}
 \end{equation}
 \end{corollary}
 \begin{proof}
     With $k = 3$ and $m=1$, both $\EC$ and Gate properties are satisfied by the previous Lemma.  Consider a circuit $\mathcal{C}$ and its fault tolerant simulation $\widetilde{FT}(\mathcal{C})$.  Proposition~\ref{lemma:FT_prime} indicates that if the gadget error model in $\widetilde{FT}(\mathcal{C})$ is $p$-bounded, adding a layer of ideal decoder and pulling it through the circuit, the gadget error model for $\mathcal{C}$ is $A p^2$-bounded.  Starting from $(\widetilde{FT})^{n}(\mathcal{C})$ and iterating this $n$ times as done in the first proof of fault tolerance for the modified Tsirelson's automaton gives a $(Ap)^{2^{n}}$-bounded gadget failure model.  By a union bound, $\mathbb{P}\left (\widetilde{D}_n(s(t = T)) \neq \widetilde{D}_n(s(t = 0)) \right ) \leq T \cdot (Ap)^{2^{n}}$.  Selecting $n = (\log_3 L)/k =  (\log_3 L)/3$ shows that the memory lifetime is exponential in a power of $L$. 
 \end{proof}

\section{Fault tolerance of the toric code automaton}
\label{sec:FT_proof}


The full toric code automaton is constructed in Sec.~\ref{sec:toriccode}. It is a measurement and feedback circuit consisting of simultaneous action of a layer of $X$- and $Z$-type elementary gadgets which, according to Remark~\ref{remark:full-TC-automaton}, can be defined to operate under a general $p$-bounded gadget error model. 
\hi{From the analysis in Appendix}~\ref{app:noisedetails}, \hi{see Sec.}~\ref{subsec:TC} \hi{for a summary of the results, we can reduce this to studying a decoupled gadget error model.  In this error model}, the $X$-gadgets apply $X$-type feedback and noise operators only, and $Z$ gadgets, similarly, apply $Z$-type feedback and noise. Fact~\ref{fact:XZ-decoupling} shows that under this error model the operation of the full toric code automaton can be reduced to considering separate $X$ and $Z$ automata. At the end of this section in Corollary~\ref{corollary:X-Z-FT}, we finally explain how fault tolerance of the full toric code automaton follows from that of the $X$ and $Z$ automata separately. 

Therefore, most of this section will be devoted to showing the fault tolerance of the $X$ part of the toric code automaton, 
which is comprised of $X$-type gadgets that apply Pauli $X$ feedback only and  experience Pauli $X$ noise. To construct this  automaton, we use the inner and outer fault tolerant simulation prescription.  We then develop a modification of the extended rectangles method, prove that it can be used to derive the appropriate error suppression as a function of the number of fault tolerant simulations, and use this show the fault tolerance of the $X$-type automaton.

The rest of this section is structured as follows. The main difficulty in proving the fault tolerance of $X$-type automaton is that the supports of neighboring gadgets must overlap, which can cause interactions between the operations of the gadgets and enable errors to spread (this is not an issue for standard circuit concatenation). In Subsec.~\ref{sec:confinement}, we explore a property we call syndrome confinement, which allows us to show that the interaction between any pair of neighboring gadgets in our construction is limited to the boundary between them, and moreover, this region of interaction cannot influence error syndromes in the rest of space. This property is needed to adapt the extended rectangles method to the setting of the toric code automaton. In addition, the ideal decoder has to operate in a global way for the toric code (due to the global nature of the pushforward map), while the ideal decoder for Tsirelson's automaton is local. To deal with these challenges, we introduce the notion of spatial truncation and appropriately adapt the concepts of $\ast$-decoders and the Gate and EC conditions. This is done in Subsec.~\ref{sec:new-exrecs}. 

Having developed a modified exRecs formalism for the toric code, we show that fault tolerance follows from Gate and EC conditions, and finally, in Subsec.~\ref{sec:pfnilpTC}, we show that these conditions hold for the gadgets comprising the toric code automaton. To show that these properties hold, we use the strategy partially developed in the previous section via the measurement-and-feedback version of Tsirelson's automaton. Namely, we introduce the notion of nilpotence for the toric code gadgets and provide a computer-aided proof of this property. We show that there exists a level $k$ for the inner simulation such that the associated outer simulation achieves error suppression.  Upon a single iteration of the outer simulation, the original circuit is simulated with $\widetilde p$-bounded noise, where $\widetilde{p} \sim (Ap)^{2}$. Repeated outer simulation thus achieves (quasi)exponential suppression of the logical error rate with the system size. 
Finally, we also discuss fault-tolerant initialization and readout in Subsec.~\ref{sec:initialization}, and discuss some preliminary numerical results in Subsec.~\ref{sec:num-sim-and-conj}.

\subsection{Confinement of syndromes} \label{sec:confinement}

For the reasons explained above, we will focus on $X$-type gadgets and drop the $X$ label for the remainder of the text unless otherwise specified. To deal with gadgets that act on overlapping domains, we introduce a property that we call confinement. In our proof, we interchangeably work with physical operators (which are supported on qubit locations, i.e. the links of the lattice) and error syndromes (located on the vertices of the lattice). This is because some of the arguments are easier to formulate using operators, while others are easier to formulate using syndromes. 

It is more natural to think of confinement in the syndrome language. Intuitively speaking, the input syndromes and the syndromes created due to noise in a confinement region cannot affect the syndromes outside of it. If the overlap of the supports between two neighboring gadgets is a confinement region, then the gadgets can ``interact'' in this region, but what happens in this region cannot influence syndromes elsewhere.
Such a property will limit how neighboring gadgets can affect each other's operation and thus will play an important role in proving fault tolerance.
The gadgets that we introduced for the toric code automaton were designed so that any two neighboring gadgets overlap on confinement regions only.

\begin{definition}[Confinement]\label{def:confinement}
Consider a layer of gadgets $\cL(G)$ of some level $n$ within the toric code automaton. A subset $\Lambda \in \cL_0$ of syndrome locations is called a confinement region if for input syndrome $\sigma \in \mathbb{F}_2^{|\cL_0|}$ and Pauli noise realization with syndrome $\varepsilon$, we have
\begin{equation} \label{eq:def-conf}
\cL(G)[\sigma, \varepsilon] = \eta\big|_{\Lambda} \oplus \cL(G; \cL_0 \setminus \Lambda) [\sigma, \varepsilon]
\end{equation}
where $\eta\big|_{\Lambda} = \cL(G)[\sigma, \varepsilon]\big|_{\Lambda}$. Recall that the notation $\cL(G; \cL_0 \setminus \Lambda)$ is defined in Def.~\ref{def:restriction}.
Syndromes in $\Lambda$ are said to be confined.
\end{definition}
If $\Lambda$ is a confinement region for a layer $\cL(G)$ of the toric code automaton, we will often simply say that $\Lambda$ is a confinement region for $G$.

The ``non-communication'' property between the neighboring gadgets that overlap on confinement regions will be made more precise below. Although the confinement region can be arbitrarily shaped, for simplicity of presentation, we restrict ourselves to the case (relevant for the toric code automaton) where any two level-$n$ neighboring gadgets overlap on $n$-links, which are confinement regions for the associated layer of gadgets. 
\begin{remark} \label{remark:overlaps}
    From this point on, we assume that the gadgets of the toric code automaton (i.e. the gate $\cG_n$ and $\EC_n$ gadgets) have support on connected regions and that neighboring gadgets can overlap on a union of $n$-links or on points in $\Sigma_n$. We also assume that that overlap between any two gadgets is a confinement region for the associated layer of gadgets and the corners of any gadgets are linear vertices; this is proved in Prop.~\ref{prop:confinement-for-TC} for the toric code automaton. 
\end{remark}

\begin{fact}[Confinement separates the bulk from the boundary] \label{fact:0}
Consider a layer of level-$n$ gates $\cL(G)$ as well as a certain level-$n$ gadget $G_1$ in this layer with vertex support $\Lambda_1$ and boundary $\partial \Lambda_1$. Assume that overlap between $G_1$ and any other gadget $G_i$ in the same layer overlaps on a set of boundary $n$-links that we denote as $\lambda_{1i}$, i.e. $\mathrm{supp}_v (G_1) \cap \mathrm{supp}_v (G_i) = \lambda_{1i}$. We also assume that the boundary of the gadget $G_1$, denoted $\partial \Lambda_1 = \cup_i \lambda_{1i}$, is a confinement region. Then, we have
\begin{equation}
    \cL(G) [\sigma, \varepsilon] = \eta|_{\partial \Lambda_1 }  
    \oplus (G_1;\Lambda_1\setminus \partial \Lambda_1)[\sigma, \varepsilon] 
    \oplus \cL(G;\cL_0\setminus \Lambda_1) [\sigma, \varepsilon].
\end{equation}
 where $\eta|_{\partial \Lambda_1 } = \cL(G)[\sigma, \varepsilon]|_{\partial \Lambda_1 }$.
\end{fact}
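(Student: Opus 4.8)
The plan is to deduce Fact~\ref{fact:0} from the confinement property (Def.~\ref{def:confinement}) by a simple two-step unpacking: first peel off the boundary $\partial\Lambda_1$, then peel off the rest of the ambient lattice $\cL_0\setminus\Lambda_1$, leaving only the restricted action of $G_1$ on its interior $\Lambda_1\setminus\partial\Lambda_1$. The key input is Remark~\ref{remark:overlaps}/Prop.~\ref{prop:confinement-for-TC}, which tells us that $\partial\Lambda_1=\bigcup_i\lambda_{1i}$ is a confinement region for the layer $\cL(G)$ (this is the hypothesis we are allowed to assume here, so the work is purely combinatorial bookkeeping about restrictions).

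First I would apply Def.~\ref{def:confinement} directly with $\Lambda=\partial\Lambda_1$. This immediately gives
\begin{equation}
\cL(G)[\sigma,\varepsilon] = \eta|_{\partial\Lambda_1} \oplus \cL(G;\cL_0\setminus\partial\Lambda_1)[\sigma,\varepsilon],
\end{equation}
where $\eta|_{\partial\Lambda_1}=\cL(G)[\sigma,\varepsilon]|_{\partial\Lambda_1}$. The second step is to analyze $\cL(G;\cL_0\setminus\partial\Lambda_1)$, i.e. the layer with all syndromes on $\partial\Lambda_1$ zeroed out at every time step. Once the boundary links $\lambda_{1i}$ are held at zero syndrome throughout, the operation of $G_1$ on its interior $\Lambda_1\setminus\partial\Lambda_1$ decouples from the operation of every other gadget $G_i$ in the layer: the only region through which $G_1$ and $G_i$ overlap is $\lambda_{1i}\subseteq\partial\Lambda_1$, whose syndromes have been forced to zero, so no information can pass between the interior of $G_1$ and the complement $\cL_0\setminus\Lambda_1$. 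Here I would invoke the parallelizability of neighboring gadgets (the $\otimes$-composition of Def.~\ref{def:composition}, together with Def.~\ref{def:action-tiling}) to write the restricted layer as a direct sum (in $\mathbb{F}_2$) of the independent pieces:
\begin{equation}
\cL(G;\cL_0\setminus\partial\Lambda_1)[\sigma,\varepsilon] = (G_1;\Lambda_1\setminus\partial\Lambda_1)[\sigma,\varepsilon] \oplus \cL(G;\cL_0\setminus\Lambda_1)[\sigma,\varepsilon].
\end{equation}
Substituting this into the previous display yields exactly the claimed identity.

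The main obstacle — and the place where care is needed — is the second step: justifying that once $\partial\Lambda_1$ is pinned to zero, the restricted layer genuinely splits as a direct sum over $(G_1;\Lambda_1\setminus\partial\Lambda_1)$ and $\cL(G;\cL_0\setminus\Lambda_1)$, with no cross-terms and no double-counting on the shared set $\partial\Lambda_1$ (which is excluded from both pieces). This requires checking that (i) an elementary gadget's feedback depends only on syndromes in its own vertex support, so a gadget straddling $\partial\Lambda_1$ contributes nothing to the interior once the boundary syndromes vanish, and (ii) the restriction operation $(\,\cdot\,;\,\cdot\,)$ commutes with this decomposition time step by time step — essentially an induction on the depth $T$ of the gadgets, using at each step that setting $\partial\Lambda_1$ to $0$ after applying all elementary gadgets is what both sides do. Since the automaton is built from local gadgets acting linearly-or-controllably with $O(1)$ support and neighboring gadgets overlap only on the $n$-link confinement regions (Remark~\ref{remark:overlaps}), this induction goes through routinely; I would present it briefly rather than in full, noting that it is the same bookkeeping already implicit in Def.~\ref{def:restriction} and Def.~\ref{def:confinement}.
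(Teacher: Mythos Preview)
Your proposal is correct and follows essentially the same two-step argument as the paper: first apply the confinement definition with $\Lambda=\partial\Lambda_1$ to peel off the boundary term, then observe that since $\mathrm{supp}_v(G_1)\cap\mathrm{supp}_v(G_i)\cap(\cL_0\setminus\partial\Lambda_1)=\varnothing$ for all $G_i\neq G_1$, the restricted layer $\cL(G;\cL_0\setminus\partial\Lambda_1)$ factors as the direct sum of $(G_1;\Lambda_1\setminus\partial\Lambda_1)$ and $\cL(G;\cL_0\setminus\Lambda_1)$. The paper's proof is terser about the second step (it simply states the factorization once the supports are seen to be disjoint), whereas you spell out the time-step-by-time-step induction more carefully; both are the same argument.
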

\begin{proof}
Because $\partial \Lambda_1$ is a confinement region, we can write by definition
\begin{equation} 
\cL(G)[\sigma, \varepsilon] = \eta\big|_{\partial \Lambda_1} \oplus \cL(G;\cL_0 \setminus \partial \Lambda_1)[\sigma, \varepsilon] 
\end{equation}
Because of the assumption that all gadgets overlap on boundaries only, for any gadget $G_i \neq G_1$, we can write $\mathrm{supp}_v (G_1) \cap \mathrm{supp}_v (G_i) \cap (\cL_0 \setminus\partial \Lambda_1) = \varnothing$. Thus, the operation of $\cL(G;\cL_0 \setminus \partial \Lambda_1) $ factors as such:
\begin{equation} 
\cL(G;\cL_0 \setminus \partial \Lambda_1)[\sigma, \varepsilon]= (G_1;\Lambda_1\setminus \partial \Lambda_1)[\sigma, \varepsilon]
    \oplus \cL(G;\cL_0\setminus \Lambda_1) [\sigma, \varepsilon].
\end{equation}
This proves the fact.
\end{proof}

\begin{fact}[Confinement implies non-communication]\label{fact:1}
Consider a layer of level-$n$ gadgets $\cL(G)$ and any two neighboring level-$n$ gadgets $G_1$ and $G_2$ in this layer, which have vertex supports $\Lambda_1$ and $\Lambda_2$ respectively. The pair of gadgets overlap on boundary $n$-links only and each such $n$-link is a confinement region with respect to $\cL(G)$. Call $\partial \Lambda_1$ and $\partial \Lambda_2$ the set of boundary vertices of the two gadgets respectively. Then, 
\begin{align}\label{eq:fact2eq}
    \cL(G)[\sigma, \varepsilon] = \eta|_{\partial \Lambda_1 \cup \partial \Lambda_2}  
    \oplus (G_1;\Lambda_1 \setminus \partial \Lambda_1) [\sigma, \varepsilon]  \oplus  (G_2;\Lambda_2 \setminus \partial \Lambda_2) [\sigma, \varepsilon] \oplus  \mu \big|_{\cL_0 \setminus (\Lambda_1 \cup \Lambda_2)}
\end{align}
where $\eta$ and $\mu$ are some syndrome vectors that generally depend on the input syndromes, noise, and the operation of the layer of gadgets. The supports are illustrated below:
\begin{equation*} 
\includegraphics[width = 0.45\textwidth]{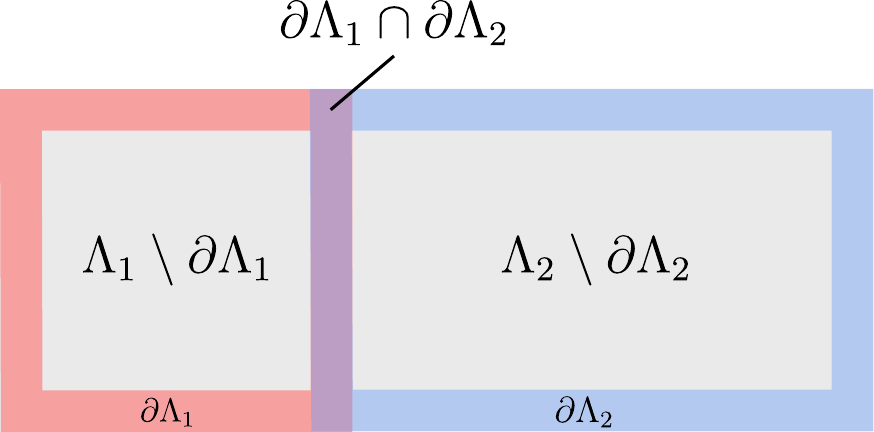}
\end{equation*}

\end{fact}
\begin{proof}
Using the previous fact with respect to the gadget $G_1$, we can write 
\begin{equation}
    \cL(G) [\sigma, \varepsilon] = \eta'|_{\partial \Lambda_1 }  
    \oplus (G_1;\Lambda_1\setminus \partial \Lambda_1)[\sigma, \varepsilon]
    \oplus \cL(G;\cL_0\setminus \Lambda_1) [\sigma, \varepsilon].
\end{equation}
where $\eta'|_{\partial \Lambda_1 } = \cL(G) [\sigma, \varepsilon]|_{\partial \Lambda_1 }$. Therefore,
\begin{equation} \label{eq:aux1}
    \cL(G) [\sigma, \varepsilon]|_{\Lambda_1 \setminus \partial \Lambda_1 }   = 
   (G_1;\Lambda_1\setminus \partial \Lambda_1)[\sigma, \varepsilon].
\end{equation}
Analogously, for the second gadget, using the same logic gives
\begin{equation}\label{eq:aux2}
    \cL(G) [\sigma, \varepsilon]|_{\Lambda_2 \setminus \partial \Lambda_2 }   = 
     (G_2;\Lambda_2\setminus \partial \Lambda_2)[\sigma, \varepsilon].
\end{equation}
Finally, we can write
\begin{align*}
\cL(G) [\sigma, \varepsilon] &= \cL(G) [\sigma, \varepsilon]|_{\partial \Lambda_1 \cup \partial \Lambda_2 } 
\oplus \cL(G) [\sigma, \varepsilon]|_{\Lambda_1 \setminus \partial \Lambda_1 } 
\oplus \cL(G) [\sigma, \varepsilon]|_{\Lambda_2 \setminus \partial \Lambda_2}
\oplus  \cL(G) [\sigma, \varepsilon]|_{\cL_0 \setminus (\Lambda_1\cup\Lambda_2)}\\
 &= \eta|_{\partial \Lambda_1 \cup \partial \Lambda_2 } 
 \oplus (G_1;\Lambda_1\setminus \partial \Lambda_1)[\sigma, \varepsilon]
\oplus  (G_2;\Lambda_2\setminus \partial \Lambda_2)[\sigma, \varepsilon]
\oplus 
\eta|_{\cL_0 \setminus (\Lambda_1\cup\Lambda_2)}
\end{align*}
where in the first equality, we simply decomposed the output of a layer of gadgets over non-overlapping regions. In the second equality, we used Eqs.~\eqref{eq:aux1} and \eqref{eq:aux2} and defined $ \eta|_{\partial \Lambda_1 \cup \partial \Lambda_2 }  =  \cL(G) [\sigma, \varepsilon]|_{\partial \Lambda_1 \cup \partial \Lambda_2 } $ and $\eta|_{\cL_0 \setminus (\Lambda_1\cup\Lambda_2)} = \cL(G) [\sigma, \varepsilon]|_{\cL_0 \setminus (\Lambda_1\cup\Lambda_2)}$. This completes the proof of the fact.
\end{proof}

As stated above, boundary segments of every gadget in the toric code automaton are confinement regions. We will now prove this fact: 
%
\begin{proposition} \label{prop:confinement-for-TC}
Consider the toric code gadgets obtained from inner simulation described in Subsec.~\ref{subsec:inner-simulation}, and recall the definition of gadget linearity in Def.~\ref{def:gadget_linearity}.  The following statements are true: 
\begin{enumerate}
    \item Each $n$-link  $\lambda$ in the support of $\EC_{n}$ is a confinement region with respect to a layer $\cL(\EC_n)$.
    \item The $\EC_n$ gadget is linear on all $n$-frame vertices $\Sigma_{n}$ in its support.
    \item Each $n$-link (or union of links) highlighted in blue in the picture below is a (separate) confinement region for layers of $\cI_n$, $\cT_n^{h}$ and $\cM_n^{h}$ gadgets, respectively: 
    \begin{equation*}
        \centering
        \includegraphics[width=0.8\linewidth]{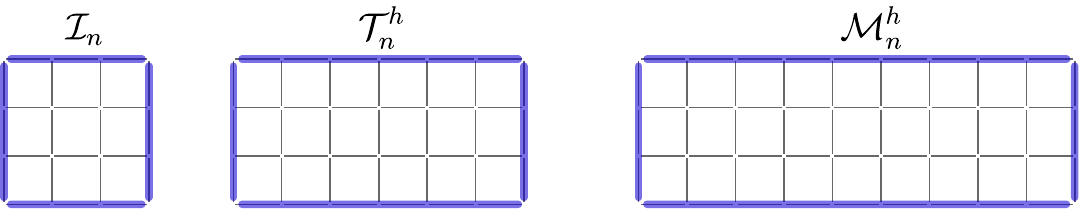}
        \label{fig:confinement}
    \end{equation*}
    where the links that are shown in gray are the $(n-1)$ links of the lattice. Note that confinement regions corresponding to the top and bottom links $\cT_n^{h}$ and $\cM_n^{h}$ are a union of two or three horizontal $n$-links.  For $\cT_n^v$ and $\cM_n^v$, the confinement regions are found similarly by a $90^\circ$ rotation. 
    \item $\cI_n$, $\cT_n^{h}$ and $\cM_n^{h}$ act linearly on the $n$-frame vertices shown below: 
    \begin{equation*}
        \centering
        \includegraphics[width=0.8\linewidth]{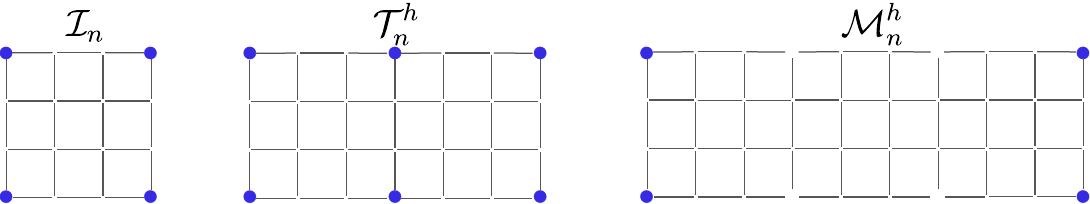}
        \label{fig:linear-TC}
    \end{equation*}
     where the links that are shown in gray are the $(n-1)$ links of the lattice. For $\cT_n^v$ and $\cM_n^v$, the linear points are obtained by a $90^\circ$ rotation.  
\end{enumerate}  
\end{proposition}
\begin{proof}
We first will prove the claim regarding linearity.  In the proof of Prop.~\ref{prop:linearity_tsirelson}, we first proved that if during the operation of some non-elementary gadget a particular vertex is always an endpoint of any constituent elementary gadget, then such a vertex is a linear vertex.  This fact is also true for the toric code for the same reason as given in Prop.~\ref{prop:linearity_tsirelson}, except that endpoints of gadgets become their corners.

First, we show that the corner vertices of the level-$n$ gadgets $\cI_n$, $\cT_n$, and $\cM_n$ are linear vertices.  We show the aforementioned fact using induction on $n$.  When $n = 0$, this is true by definition of elementary gadgets.  We then assume that this is true for $n$.  We write $\cI_{n+1}$, $\cT_{n+1}$, and $\cM_{n+1}$ in terms of $\cI_n$, $\cT_n$, and $\cM_n$ and note that corner vertices of $\cI_{n+1}$, $\cT_{n+1}$, and $\cM_{n+1}$ are always corner vertices of $\cI_n$, $\cT_n$, and $\cM_n$, thus proving the induction step.

Finally, we prove that for $\cT_{n}$, the two remaining level-$n$ vertices are linear vertices.  We follow the proof in Prop.~\ref{prop:linearity_tsirelson}.  However, now one must write $\cT_{n+1}$ in terms of a 2D circuit of $\cT_n$ and $\cI_n$ interspersed with layers of $\EC_{n}$; noting that the corner vertices of $\EC_n$ are linear vertices, the proof follows otherwise identically to that in Prop.~\ref{prop:linearity_tsirelson}.

Next, we prove the claim about confinement regions.  We first show the following fact. Consider a subset of vertices $\lambda$ during the operation of a layer of gadgets $\cL(\cG_n)$.  For an elementary gadget $\cG_0$, define the feedback support $\text{supp}_f(\cG_0)$ to be the three vertices in the bottom border of $\cT_0^h$, the four vertices in the bottom border of $\cM_0^h$ and the two vertices in the bottom border of $\cI_0$ as shown in the figure in Def.~\ref{def:suppsTC}. For vertically oriented gadgets, the feedback support is rotated accordingly.  The use of the term `feedback support' refers to the set of sites where syndromes can be modified due to the action of the gadget.  We claim that $\lambda$ is a confinement region with respect to a layer of $\cG_n$ if any elementary level-0 gadget $\cG_0$ during the operation of a layer of $\cG_n$ gadgets with $\text{supp}_f(\cG_0) \not\subseteq \lambda$ has the property that $\text{supp}_f(\cG_0) \cap \lambda$ is a subset of corners of $\cG_0$. In other words, this means that the feedback support of $\cG_0$ is either contained in $\lambda$ or overlaps with it on linear vertices of $\cG_0$.  Assuming this, calling $\overline{\lambda}$ the complement of $\lambda$, we have $\cL(\cG_0)[\sigma, \varepsilon]|_{\overline{\lambda}} = \cL(\cG_0)[\sigma|_{\overline{\lambda}}, \varepsilon|_{\overline{\lambda}}]|_{\overline{\lambda}}$. Applying the same logic layer by layer, after the entire operation of $\cL(\cG_n)$, we get
\begin{equation}
 \cL(\cG_n)[\sigma, \varepsilon] = (\cL(\cG_n); \overline{\lambda})[\sigma, \varepsilon] \oplus  \cL(\cG_n)[\sigma, \varepsilon]|_{\lambda},
\end{equation}
and setting $\eta|_{\lambda} = \cL(\cG_n)[\sigma, \varepsilon]|_{\lambda}$, this proves that $\lambda$ is a confinement region.

Next, we will show that the above property is true for the boundary $n$-links of each of the gadgets, thereby proving the proposition.  This can be shown by induction.  First, consider $\cL(\cG_1)$ for $\cG_1 = \cI_1, \cT_1, \cM_1$. By direct inspection, the boundary 1-links of these gadgets have the property that any elementary gate that applies feedback at this boundary is either $\cM_0$ and $\cT_0$ whose feedback is contained entirely in a 1-link; otherwise, it is a level-0 gadget whose corners are contained in the 1-link.  This also holds for $\EC_1$, thereby proving the base case.  Then, suppose the fact is true for $\cL(\cG_n)$ where $\cG_n = \cI_n, \cT_n, \cM_n$.  We may write $\cL(\cG_{n+1})$ in terms of $\cL(\cG_{n})$ interspersed with layers of $\cL(\EC_{n})$.  The boundaries of the level-$n+1$ gadgets are always boundaries of the constituent level-$n$ gadgets, and therefore the fact must be true for the level-$n+1$ gadgets.  This proves the induction step, and thus the claim regarding confinement regions.

\end{proof}
We now have all the tools needed to complete the proof of Lemma~\ref{lemma:EC-coarse-grains}, which states that the first half of a noiseless $\EC_n$ gadget coarse-grains input syndromes to $\Sigma_n$, and that $\cM_n^{h/v}, \cT_n^{h/v}$ act on coarse-grained states in a way analogous to the action of $\cM_0^{h/v},\cT_0^{h/v}$. We complete this proof below, and additionally show the statement of the Lemma in the case when the gadgets tile the entire lattice.

\begin{proof} (\textbf{Proof of Lemma~\ref{lemma:EC-coarse-grains}})

    First, we will show the base case for $n=1$ except now for the case when the lattice (or some connected subregion) is tiled with level-1 gadgets rather than for a single gadget restricted to its support.  Suppose we tile the lattice with $\EC_1$ gadgets.  Then, the set of 1-links are confinement regions for $\EC_1$ (as well as $R_0^v$ and $R_0^h$) and therefore, one can evolve the syndromes in the bulk of these gadgets independently of syndromes on the boundaries of these gadgets and assume an arbitrary syndrome configuration on the boundary.  After the first $R_0^v$ there will be no syndromes remaining in the bulk as per the simpler argument (for part of the base case of Lemma~\ref{lemma:EC-coarse-grains}) from Sec.~\ref{sec:toriccode}.  The remaining syndromes will be confined to the boundaries, 
    and the next round of $R_0^v$ will coarse-grain these syndromes to level-1 vertices for the vertical boundaries, since the action of $R_0^v$ restricted to the boundary is the same as that of Tsirelson's 1D error correction gadget.  The remaining syndromes on the horizontal boundaries will be coarse-grained at level 1 by the next two rounds of $R_0^h$.  The base case for the gadgets $\cT_1$ and $\cM_1$ is much simpler: these gadgets are either stacked horizontally or vertically and correspond to stacks of gadgets that apply feedback (similar to that of Tsirelson's gadgets) on 1D lines each, which will send a level-1 syndrome input to a level-1 syndrome output corresponding to the action of a stack of appropriate level-0 gadgets.
    
    Next, we show the induction step. 
    We can write $\EC_n$ in terms of a sequence of gate gadgets $\cG_{n-1}$ and a \hi{layer of $\EC_{n-1}$ gadgets covering the lattice} between each gate gadget.  By the induction step, after the first layer of $\EC_{n-1}$, the syndromes are coarse-grained to level-$(n-1)$.  Thus, we can study the action of the rest of the gadget by ``reducing the level'' by $n-1$.  By the induction hypothesis, $\cG_{n-1}$ on such a coarse-grained input will keep the syndromes coarse-grained at level $n-1$ and the action will correspond to the action of $\cG_{0}$.   Thus, level reduction allows us to formally view $(n-1)$-level frame as an effective level-0 lattice, where each gate $\cG_{n-1}$ is replaced with a corresponding level-0 gate, and each $\EC_{n-1}$ gadget has no effect, and thus can be ignored. Upon performing level reduction on the entire $\EC_n$, we obtain an effective circuit for $\EC_1$ which coarse-grains the syndromes by the analysis in the base case. 

    Finally, we must prove that $\cM^{h/v}_n$ and $\cT^{h/v}_n$ will have the correct operation on coarse-grained syndrome configurations assuming the induction hypothesis.  Once again, we write $\cM^{h/v}_n$ and $\cT^{h/v}_n$ in terms of $\cM^{h/v}_{n-1}$, $\cT^{h/v}_{n-1}$, and $\EC_{n-1}$.  Since the nontrivial syndromes are coarse-grained to live in $\Sigma_n$, we can perform an analogous form of level reduction by $n-1$ levels to show that this analysis is equivalent to that of $\cM^{h/v}_1$ and $\cT^{h/v}_1$, which is dealt with by the base case.
\end{proof}

Finally, we prove the following fact for the toric code automaton which will be very useful later on.
\begin{fact}\label{fact:4}
Consider a \hi{covering} of the lattice with level-$n$ gadgets corresponding to a particular step of the toric code automaton. Call two neighboring gadgets $G_1$ and $G_2$ and $\Lambda_{12}$ the vertices they overlap on.  Call the supports of the two gadgets $\cR_1$ and $\cR_2$.  Then,
\begin{equation}
\cL(G)[\sigma, \varepsilon]|_{\Lambda_{12} \setminus \partial \Lambda_{12}} = (G_1 \otimes G_2; (\cR_1 \cup \cR_2)\setminus \partial (\cR_1 \cup \cR_2))[\sigma, \varepsilon]|_{\Lambda_{12} \setminus \partial \Lambda_{12}}
\end{equation}
\end{fact}
\begin{proof}
        We first note that each boundary of a level-$n$ gadget is a separate confinement region for level-$n$ gadgets by Prop.~\ref{prop:confinement-for-TC}.  Furthermore, if $\Lambda_1$ and $\Lambda_2$ are the union of all boundary segments of gadgets $G_1$ and $G_2$, then $\Lambda_1 \cup \Lambda_2$ is a confinement region.  This follows directly from the arguments in Prop.~\ref{prop:confinement-for-TC}.  In addition, calling $\cR_1 \cup \cR_2 \equiv \cR$, we can write the outer boundary for the pair of gadgets as $(\partial \cR_1 \cup \partial \cR_2) \setminus(\partial \cR_1 \cap \partial \cR_2) \equiv \partial\cR$. It is a confinement region as it is a union of boundary segments of $G_1$ and $G_2$.  

Applying Fact~\ref{fact:1}, with this choice of confinement region, we find
\begin{equation}
\cL(G)[\sigma, \varepsilon] = \eta \oplus (G_1 \otimes G_2; (\cR_1 \cup \cR_2) \setminus \partial \cR)[\sigma, \varepsilon] \oplus \cL(G; \cL_0 \setminus (\cR_1 \cup \cR_2 \setminus \partial R))[\sigma, \varepsilon],
\end{equation}
where $\eta = \cL(G)[\sigma, \varepsilon] |_{\partial \cR}$. Restricting the output to $\Lambda_{12} \setminus \partial \Lambda_{12} \subset \cR \setminus \partial\cR$ proves the fact.
\end{proof}
This implies that the syndromes on a boundary between two gadgets (minus the endpoints of the boundary) can be reconstructed only by knowing the syndromes in the bulks of both of the gadgets.

\subsection{Modification of the exRecs method}
\label{sec:new-exrecs}

\subsubsection{Main definitions}

Now, we will define the technology needed to extend the exRecs method to the ($X$-part of the) toric code automaton. We will assume a fixed level $k$ for inner simulation throughout. Let us first establish some definitions that we will use to count the number of faults due to noise:
\begin{definition}[Failure in the support of a gadget]
    We say that a level-0 gadget failed in the support of a $\EC_{n}$ or $\cG_{n}$ gadget if a failure occurred on this level-0 (elementary) gadget and the gadget's support lies entirely within the support of the $\EC_{n}$ or $\cG_{n}$ gadget.
\end{definition}

\begin{definition}[exRec]
Assuming some fixed level $k$ for inner simulation, an exRec for a gate  $\cG_k$ consists of a spacetime region corresponding to the operation $\widetilde{\EC}^{\otimes K} \circ \cG_k \circ \widetilde{\EC}^{\otimes K}$ where $K$ is the number of $k$-cells in the spatial support of $\cG_k$. We say that such an exRec is centered at gadget $\cG_k$. We call $\supp_q(R)$ the spatial qubit support of an exRec $R$.
\end{definition}
\begin{definition}[Good/bad exRecs]
An exRec is good if the number of level-0 gadget failures in its spacetime support is $\leq 1$.  Otherwise, it is bad.
\end{definition}
We will utilize exRecs to prove the error suppression property for the fault-tolerant prescription $\widetilde{FT}(\cdot)$ corresponding to inner simulation. Notice that the exRecs for the toric code gadgets overlap not only in time but also in space, specifically on $k$-links. Thus, in addition to dealing with spatial truncation, we will also need to utilize a modified procedure for dealing with temporal truncation. Temporal truncation as well as the prescription for ``pulling'' the $\ast$-decoder through the layers of simulation will be explained in Subsec.~\ref{sec-proof-exrec-method}. 

We first introduce the concept of spatial truncation:

\begin{definition}[Spatial truncation]
A spatially truncated exRec for gadget $\cG_k$ is one where one or more $k$-links or sites in $\Sigma_k$ are removed from the exRec's spatial support throughout the temporal duration of the exRec.  

Denoting by $\Lambda$ the spatial region that is removed and the set  of times corresponding to the operation of the exRec by $\tau$, the truncation region of an exRec is  a spacetime region of removed links and sites $\Lambda \times \tau$.
\end{definition}

\begin{definition}[Good/bad truncated exRecs]
Consider a spatially truncated exRec $R$ centered around gadget $\cG_k$. Count the faults belonging to this exRec in the following way. For each failure of an elementary gadget $\cG_0^{(i)}$ fully contained in $\widetilde{EC}^{\otimes K} \circ \cG_k \circ \widetilde{EC}^{\otimes K}$, consider the restriction of the associated operator $\cO$ applied by these gadgets (including noise) to the truncated boundary $\lambda$ with the bad exRec that is causing truncation. If $\supp (\cO) \subseteq \lambda$, we formally count the fault of this gadget towards the neighboring bad exRec that caused the truncation. Otherwise, we count this fault towards the truncated exRec under consideration. 
\end{definition}

We will use the following algorithm for exRec assignment:
\begin{algorithm}[Assigning support to exRecs] \label{alg:goodbad_assignment}
Assume a fixed level $k$ for the inner simulation and let the associated outer simulation applied to circuit $\cC$ yield the simulated circuit $\widetilde{FT}(\cC)$.   Call $T$ the depth of $\cC$. We use the time index $t$ to enumerate the layers of the simulated level-$k$ gates $\cG_k$ associated to the gates in $\cC$. Then:
 \begin{enumerate}
 \item Construct exRecs associated with gates $\cG_k$ at a specific time slice $t \leq T$. 
 \item Pick an arbitrary order to traverse the exRecs in space at this time slice.  If a given exRec is bad, spatially truncate all surrounding exRecs at the same time slice which have overlapping spatial support with the given bad exRec.
 \end{enumerate}
 This assigns support and good and bad labels for the last layer of exRecs in a given simulated circuit (illustrated in Figure~\ref{fig:truncation}).
\end{algorithm}
As we will see shortly, in order to pull layers of the hierarchical ideal decoder  (Def.~\ref{def:ideal-decoder-TC}) through a simulated circuit, we first use the algorithm above to assign supports of exRecs and mark them as good and bad and then pull the ideal decoder through this layer.  We then repeat this process layer by layer.

To explain this in more depth, we need to introduce the $*$-decoder, which will work analogously to its operation in the usual exRecs method (see Def.~\ref{def:conventional-ast-decoder} and discussion afterwards), except that now it is applied to an entire \emph{layer} of exRecs at once: 
\begin{figure}[!htbp]
    \centering
    \includegraphics[width=0.6\textwidth]{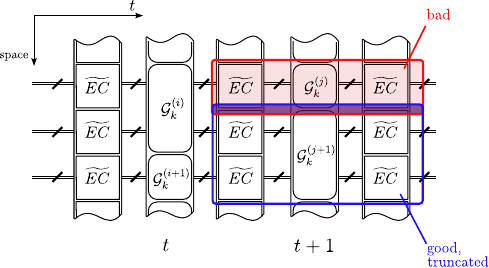}
    \caption{Schematic showing an outer simulation for the toric code automaton and the outcome of one application of Algorithm \ref{alg:goodbad_assignment}. The truncation region is schematically shaded dark.
    In this figure and in the rest of the paper, the crossed wires are used to schematically denote multiple qubits in the support of the operations.  
    }
    \label{fig:truncation}
\end{figure}

\begin{definition}[$*$-decoder for the toric code] \label{def:star-decoder-TC}
Assume a fixed level $k$ for the inner simulation and the associated outer simulation prescription that we use to obtain a simulated circuit $\widetilde{FT}(\cC)$ and run Algorithm~\ref{alg:goodbad_assignment} to label good and bad exRecs and perform truncation.  We append a $*$-decoder $\widetilde{D^*}$ at the end of the circuit and define its operation to be:
\begin{enumerate}
    \item[(a)] For the layer of exRecs that directly precedes the decoder, the decoder stores the syndromes of the noise operators applied in their spacetime support in its classical register, and
    \item[(b)] applies an ideal decoder $\widetilde{D}$. 
\end{enumerate} 
The output of the $\ast$-decoder is the output of the ideal decoder $\widetilde{D}$ together with a classical register containing the data in (a).  Partial tracing out the classical register, this defines a quantum channel supported on qubits of the outputs of bad exRecs.
\end{definition}

Similarly to the ideal decoder, the $\ast$-decoder is not an actual physical operation but rather a mathematical tool that we will use to `undo' levels of simulation in the toric code automaton. Schematically, the $\ast$-decoder's operation can be summarized as follows:
  \begin{equation*}
        \centering
        \includegraphics[width=1\linewidth]{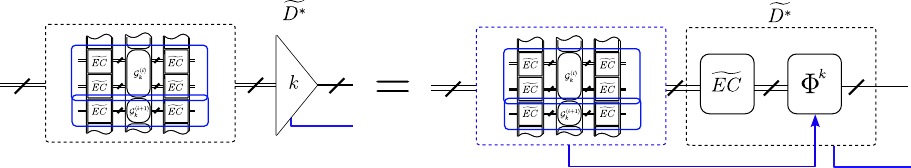}
        \label{fig:confinement1}
    \end{equation*}
where the block on the left stands for the layer of exRecs, and the blue outgoing wire stands for the register with classical information. On the right, the operation of the star decoder can be written as an ideal decoder that consists of applying a noiseless layer of $\widetilde{EC}$ followed by a $k$-fold pushforward map $\Phi^k$ (which is just $k$ applications of $\Phi$), along with the relevant classical register.

We will also need to define a different notion of a filter than the one used for concatenated codes.
Recall that the role of a filter is to project onto a subspace of states that are not too ``far'' from a codestate at the corresponding level of the simulation. A natural way to define this for the toric code is to compare the state with a level-$k$ coarse-grained version obtained by applying $\cL(\EC_k)$. Thus, for each $k$-cell, we define an $m$-cell filter (with $m \leq k-1$) to project on the subspace of states that differ from their coarse-grained version by an operator (call it $\cO'$) whose support fits in an $m$-cell. We also define a $\varnothing$-filter as the filter that forbids the state to differ from a level-$k$ coarse-grained state. 

However, the definition that we present below is somewhat more involved than this description to account for the fact that neighboring $k$-cells share boundaries.  In this case, we first need to specify which filters are assigned to all neighboring cells (if any). Considering boundaries of a specific $k$-cell $\cC_k$ with an $m$-cell filter, there are several scenarios. For a given boundary, if there is no filter on the other $k$-cell that shares this boundary, then we allow anything to happen on the boundary. If there is another $m$-cell filter or $\varnothing$-filter on the other $k$-cell, the $m$-cell within $\cC_k$ containing $\cO'$ is allowed to touch the shared boundary. This applies to every $k$-cell with a filter.
All these scenarios are summarized in the definition below and in Fig.~\ref{fig:m-cellfilter-TC}.

\begin{figure} [!htbp]
    \centering
    \includegraphics[width=0.8\textwidth]{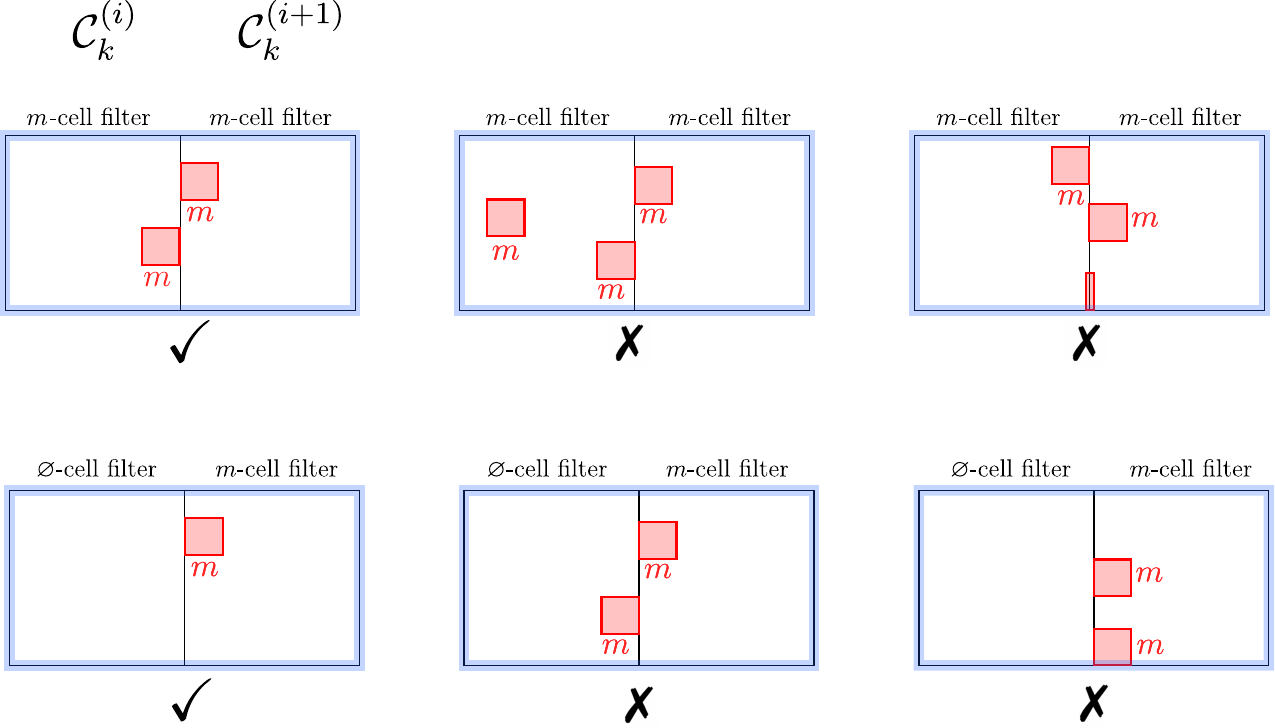}
\caption{
A pair of $k$-cells $\cC_k^{(i)}$ and $\cC_k^{(i+1)}$ with various types of cell filters. The type of the cell filter imposed on $\cC_k^{(i)}$ and $\cC_k^{(i+1)}$ cell is indicated at the top.
All other $k$-cells (i.e. outside of the ones depicted) have no filter assigned to them. 
The check marks indicate that the state passes the appropriate filter, and the cross marks indicate that it does not.
For the examples that pass the filter configuration, for each cell with an $m$-cell filter, there exists a relative damage operator $\cO'$ at level $k$ (i.e. an operator relating the state to its  level-$k$ coarse-grained version) that fits inside an $m$-cell, shaded in red. 
For the examples that do not pass the filter configuration, \emph{any} relative damage operator violates the filter conditions. The red boxes show $m$-cells containing the support of a relative damage operator. }
    \label{fig:m-cellfilter-TC}
\end{figure}
\begin{definition}[$m$-cell filter at level $k$  for the toric code] \label{def:mprime-TC}
We say that a layer of filters has been added or assigned to a toric code state $\ket{\psi}$ with syndrome $\sigma$ if each $k$-cell with $k > m$\footnote{Unless otherwise specified, $k$ coincides with the level of inner simulation for the toric code automaton.} has been assigned an $m$-cell filter, $\varnothing$-cell filter, or no filter.  Each filter associated with a given $k$-cell $\cC_k$ is a projector onto the subspace of states that satisfy the conditions below. We will say a state ``passes'' the filter if it is in the image of the associated projector.  

Suppose  the state $\ket{\psi}$ is related to the initial logical state $\ket \phi$ of the toric code by $\ket \psi = \cO \ket \phi$, with syndrome configuration $\sigma$. Assuming a filter assignment, we obtain a level-$k$ coarse-grained syndrome configuration by applying a noiseless error correction layer $\cL(\EC_k)[\sigma]$. We denote by $\cO_{cg}$ the product of all feedback operators applied by $\cL(\EC_k)$. Define an equivalence class of operators $[\cO_{cg}] = \{\cO: \cO \cO_{cg} \in \cS\}$. 
For the filter conditions under a given filter assignment to be met, we require that there exists an operator $\cO' \in  [\cO_{cg}]$ with qubit support $M$,  such that the following is satisfied.

\noindent For each cell $\cC_k$ that has an $m$-cell filter:
\begin{itemize}
    \item   $M \cap  (\cC_k \setminus \partial \cC_k)$ fits within an $m$-cell.  Here, supports are taken to be on qubits rather than vertices. 
    \item For every boundary $\Lambda$ between $\cC_k$ and a neighboring cell $\cC_k'$:
    \begin{itemize}
        \item [(i)] If the  neighboring cell $\cC_k'$  has an $m$-cell filter, then there exist $m$-cells $\cC_m \in \cC_k$ and $\cC_m' \in \cC_k'$ such that:
    \begin{equation} \label{m-cell-1}
    M \cap \Lambda = ((M \cap \cC_{m}) \cup (M \cap \cC_{m}')) \cap \Lambda.
      \end{equation}  
      \item [(ii)] If the neighboring $\cC_k'$  has a $\varnothing$-cell filter, then
    there exists an $m$-cell $\cC_m \in \cC_k$ such that:
    \begin{equation}
    M \cap \Lambda = (M \cap \cC_{m}) \cap \Lambda.
      \end{equation}  
    \end{itemize}
    \item If there is no filter on the neighboring cell $\cC_k'$, no condition is imposed on the boundary $\Lambda$.
\end{itemize}
For each cell $\cC_k$ that has a $\varnothing$-cell filter:
\begin{itemize}
    \item   $M \cap  (\cC_k \setminus \partial \cC_k) = \varnothing$. 
    \item For every boundary $\Lambda$ between $\cC_k$ and a neighboring cell $\cC_k'$:
    
    \begin{itemize}
        \item [(i)] If the  neighboring cell $\cC_k'$  has an $m$-cell filter, then there exists an $m$-cell $\cC_m' \in \cC_k'$ such that:
    \begin{equation}
    M \cap \Lambda =  (M \cap \cC_{m}') \cap \Lambda.
      \end{equation} 
        \item[(ii)]  If the neighboring $\cC_k'$  has a $\varnothing$-cell filter, then
    \begin{equation} \label{m-cell-2}
    M \cap \Lambda = \varnothing.
      \end{equation}  
    \end{itemize} 
        \item[(iii)] If there is no filter on $\cC_k'$, no condition is imposed on $\Lambda$.
\end{itemize}
\end{definition}

We will use a convention where $m = 0$ will stand for $\varnothing$-filter, whereas $m \geq 1$ will stand for an $m$-cell filter.  The operator $\cO_{cg}$ in the definition above relates the syndrome configuration of the state $\ket{\psi}$ to the level-$k$ coarse-grained version of this syndrome configuration.

Thus, in the presence of a layer of filters, all $k$-cells can be grouped into a set of contiguous clusters where each $k$-cell in a cluster is covered by either an $m$-cell or $\varnothing$-cell filter, and $k$-cells not in any cluster do not have a filter assignment. The conditions in Def.~\eqref{def:mprime-TC} are imposed on the syndrome configuration in the interior of each contiguous cluster minus its boundary, since no constraint is imposed on boundaries of clusters or $k$-cells not in any cluster.
A cluster formed by two $k$-cells $\cC_k^{(i)}$ and $\cC_k^{(i+1)}$ is shown in Fig.~\ref{fig:m-cellfilter-TC} along with examples of damage configurations that either pass or do not pass a pair of adjacent filters of a given type.

Moving forward, when a state $\ket{\psi} = \cO \ket{\phi}$ from the definition above passes an $m$-cell filter, we will also say that the damage $\cO$ passes the $m$-cell filter. We will refer to the operator $\cO'$ from the definition above as relative damage at level $k$.  When we say that the damage has been ``removed'' or ``cleaned up'' at level $k$, we mean that a correction operation $\cO''$ has been applied by the automaton so that $\cO'\cO'' \in \cS$.

We can now define the EC and gate properties for the toric code automaton:

\begin{definition}[$(\mathrm{Gate})_{k,m}$ and $(\EC)_{k,m}$ conditions for toric code] \label{def:TC_gateECprime}
 
Consider a layer of level-$k$ gadgets. For the  $(\mathrm{Gate} \ A)_{k,m}$ conditions, as well as the $(\mathrm{Gate} \ B)_{k,m}$ and $(\mathrm{EC} \ B)_{k,m}$ conditions, we assume a layer with $\varnothing$ or $m$-cell filters present on each $k$-cell in the input state (see Def.~\ref{def:mprime-TC}). 
 
The $(\mathrm{Gate} \ A)_{k,m}$ and $(\mathrm{EC} \ A)_{k,m}$ conditions append a layer of cell filters after the layer of level $k$ gadgets via
   \begin{equation*}
        \centering
        \includegraphics[width=1\linewidth]{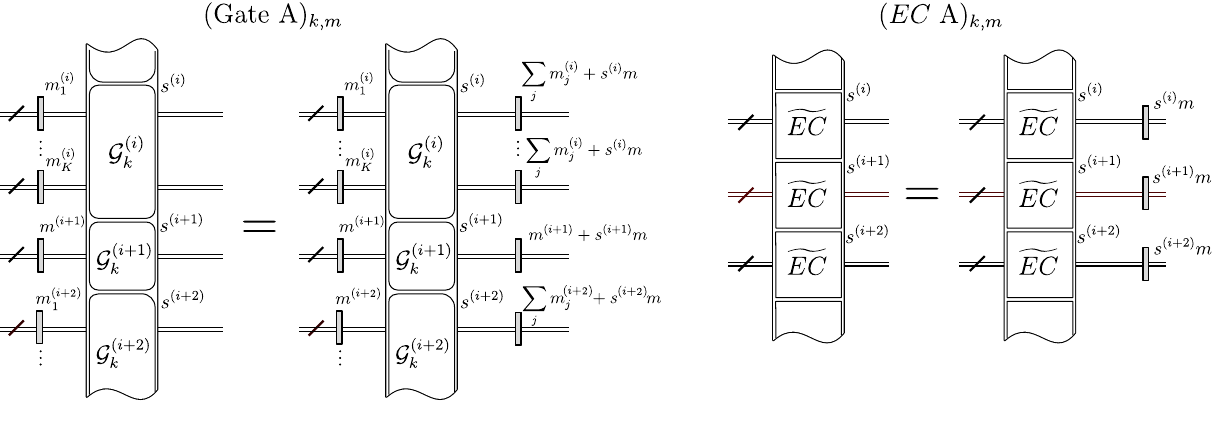}
        \label{fig:TCA}
    \end{equation*}
where we schematically show the input and output qubits as doubled lines (which, in reality, include qubits shared between the supports of neighboring $k$-cells).  $s^{(i)}$ denotes the number of failures of elementary gadgets within the corresponding level-$k$ gadget and $m_j^{(i)} \in \{0,m\}$ determines the size of the cell filter imposed on the $j$-th $k$-cell in the support of $i$-th gadget. In addition, we require that $\sum_{j=1}^{K^{(i)}} \frac{m^{(i)}_j}{m} + s^{(i)} \leq 1$ for each gadget in $(\mathrm{Gate} \ A)_{k,m}$ and $s^{(i)}\leq 1$ in $(\mathrm{EC} \ A)_{k,m}$.
If this holds, a $\left (\sum_{j=1}^{K^{(i)}} m^{(i)}_j + s^{(i)} m\right )$-cell filter is assigned to the output cells of each gate $\cG_k^{(i)}$. 
For the $(\mathrm{EC} \ A)_{k,m}$ condition, a $(s^{(i)} m )$-cell filter is assigned to the output $k$-cell for each $\widetilde{EC}$ gadget.

For the $(\mathrm{Gate} \ B)_{k,m}$ and $(\mathrm{EC} \ B)_{k,m}$ conditions, each gate or $\widetilde{EC}$ gadget in the layer satisfies the following conditions:
  \begin{equation*}
        \centering
        \includegraphics[width=1\linewidth]{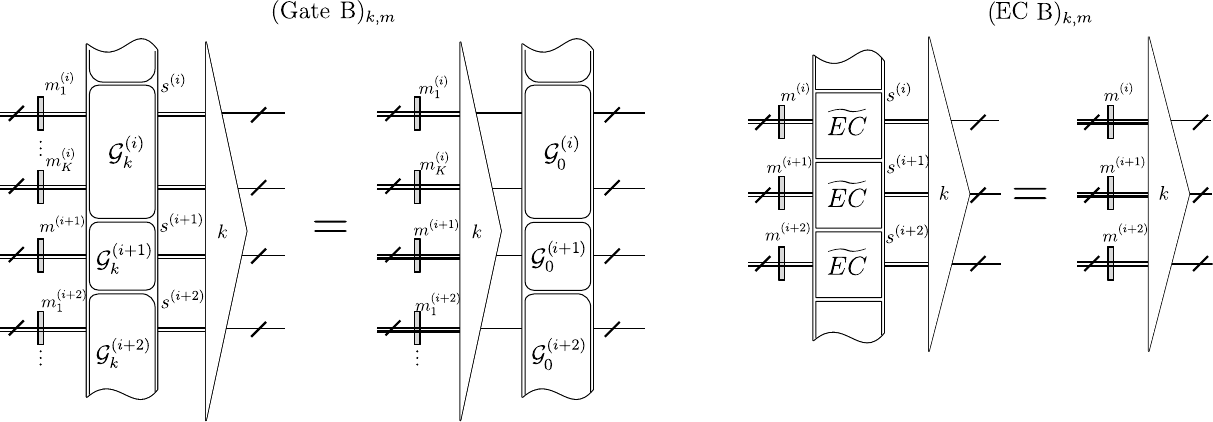}
        \label{fig:TCB}
    \end{equation*}
where the triangle represents the ideal decoder. 
For the $(\mathrm{Gate} \ B)_{k,m}$, we assume that the condition $\sum_{j=1}^{K^{(i)}} \frac{m_j^{(i)}}{m} + s^{(i)} \leq 1$ is fulfilled for each gate gadget. For the $(\mathrm{EC} \ B)_{k,m}$ condition, we similarly assume that $\frac{m^{(i)}}{m} + s^{(i)} \leq 1$ is fulfilled for each gadget. 
\end{definition}
The A-type condition is imposed on a whole layer of gadgets at once because all cell filters are determined simultaneously (since information about which filter to place depends on neighboring filters).  In addition, because the ideal decoder is \emph{global}, its action cannot be separated into disjoint blocks, and, as a consequence, the B-type condition is also imposed on the entire layer.

Note also that for the $\EC$ B and Gate B conditions (we will largely omit the subscript $k,m$ for convenience) one pulls the entire layer of gadgets through the ideal decoder as opposed to pulling one codeblock at a time like for concatenated codes\footnote{An alternative definition of $\EC$ B and Gate B condition where one pulls the action of separate gadgets through the ideal decoder (which would look much more similar to the one in the original exRecs method) could also work. However, it requires a set of careful definitions because the supports of gadgets overlap. As we will shortly see, the layer-wise definition of these conditions adopted here is sufficient and provides a simpler framework for proving fault tolerance. }.

\subsubsection{Confinement in the operator language}

For the proof of our main theorem, we will need to translate ideas like confinement to the operator perspective.  In this subsection, we discuss what the confinement property implies for operators that are applied due to the action of the gadget.

Recall that an elementary gadget acting at time $t$ applies local Pauli feedback $F_t$ conditioned on the syndromes of the input state in its support, followed by the application of the Pauli noise operator $P_t$ restricted to its support. The total operator applied as a consequence of the gadget action is their product $P_t F_t$. In the toric code automaton, elementary gadgets are defined such that a whole layer of them at time $t$ can operate simultaneously. As a consequence, to determine the result of their operation, we can consider each of them independently and then add their outputs.  This cannot be done for composite gadgets because they share boundaries.  As a result, we can still define the operator applied due to the action of a composite gadget but now we have to consider an entire layer of gadgets at once:

\begin{definition} [Operator applied by composite gadget]
    Consider a layer of gadgets, denoted $\cL(\cG)$.  Define $t=t_0$ to be the start of the layer of gadgets and $t=t_1$ to be the end.  For a noise realization $\bm H$ with operator history $\vec P = \{ P_1, ... P_T \}$ (see Def.~\ref{def:gadget_error_model}) and an input state with syndrome $\sigma$, call $\{F_{t}(\sigma_t)\}_{t = t_0}^{t_1}$ to be the Pauli feedback operators applied by the layer of gadget at each timestep (which depend on the gadgets, input syndrome and the noise).  
    
    Define the operator $\cO[\sigma, \vec P; \cL(\cG)]$ to be $\prod_{t=t_0}^{t_1}  P_t F_t(\sigma_t)$, which is the cumulative operator applied by the operator history and the gadget feedback.    
    We will often omit the argument $\cL(\cG)$ since it can usually be deduced from the context.
    
    We will also denote $\cO_{\cR}[\sigma,\vec P; \cL(\cG)]$ to be the restriction of the operator $\cO[\sigma,\vec P; \cL(\cG)]$  to the qubits in the region $\cR$. In particular, we can write it as $\prod_{t=t_0}^{t_1} P_t\big|_{\cR} F_t(\sigma_t)\big|_{\cR} $.
\end{definition}

Recall that, unless otherwise specified, we assume all operators applied due to noise and feedback are of the Pauli-$X$ type.

\begin{definition} [Restriction of operator arguments]
    Consider an input state with syndromes $\sigma$ and noise realization $\bm H$ with noise operator history $\vec P = \{P_1, \cdots, P_T\}$. Consider also a spatial region $\cQ$. Define $\cO\left [\sigma\big|_{\cQ}, \vec P\big|_{\cQ}; \cL(\cG) \right ] $ to be the operator applied by the layer of gadgets $\cL(\cG)$ assuming that input syndromes are restricted to the vertices in $\cQ$ and the noise is restricted to the qubits in $\cQ$. The notation $\vec P\big|_\cQ$ stands for $\{ P_1\big|_\cQ, \cdots, P_T\big|_\cQ \}$. 
\end{definition}

Note that restricting operator arguments to a given region is not the same as performing the restriction at the end.

\begin{lemma}[Operator confinement part 1] \label{lemma:op-conf-1}
    Consider a layer of level-$k$ gadgets $\cL(\cG_k)$ of the toric code automaton acting on an input state with syndromes $\sigma$ along with a noise realization $\bm H$ with operator history $\vec P$. Call $\cR_j$ the qubit support of a gadget $\cG_k^{(j)}$. Fixing a particular gadget $\cG_k^{(i)}$, let $\Lambda$ be the set of $k$-links in the boundary of $\cR_i$, define the extended operator support $\cQ_i$  of $\cG_k^{(i)}$ to be
    \begin{equation}
    \cQ_i = \cR_i \cup \left(\bigcup_{j: \, \cR_j \cap \cR_i \subseteq \Lambda} \overline{\cR}_j\right).
    \end{equation}
    where we introduce the notation $\overline{\cR}_j  = \cR_j \setminus \partial \cR_j $. Then 
    \begin{equation}
        \cO_{\cR_i}[\sigma, \vec P; \cL(\cG_k)] = \cO_{\cR_i}[\sigma \big|_{\cQ_i}, \vec P\big|_{\cQ_i}; \cL(\cG_k)]. 
    \end{equation}
\end{lemma}
\begin{proof}
    The proof is split into two steps.  First, we show that the operator $\cO_{\overline{\cR}_i}[\sigma, \vec P; \cL(\cG_k)]$ only depends on syndromes and noise realization restricted to $\overline{\cR}_i$.  Then we show that the operator $\cO_{\lambda_{ij}}[\sigma, \vec P; \cL(\cG_k)]$ supported on the boundary $\lambda_{ij}$ (which is a set of $k$-links) between $\cR_i$ and $\cR_j$ only depends on syndromes and the noise in the region $\overline{\cR}_i \cup \overline{\cR}_j \cup \lambda_{ij}$.  The Lemma then immediately follows because $$\cO_{\cR_i} = \cO_{\overline{\cR}_i}[\sigma, \vec P; \cL(\cG_k)] \otimes \left( \bigotimes_j \cO_{\lambda_{ij}}[\sigma, \vec P; \cL(\cG_k)]\right).$$

    To show the first step, we write $\cO_{\overline{\cR}_i}[\sigma, \vec P; \cL(\cG_k)] = \prod_{t=t_0}^{t_1}  P_t\big|_{\overline{\cR}_i} F_t\big|_{\overline{\cR}_i}$. We first argue that the feedback operators in the qubit support of $\overline{\cR}_i$ only depend on syndromes in this region. Any level-0 constituent gadget which is a part of higher-level gadget of the toric code automaton either (a) is conditioned on vertices in region $\overline{\cR}_i$ and applies feedback with qubit support in this region or (b) is conditioned on vertices in one of the boundary $\lambda_{ij}$ and applies feedback with qubit support on $\lambda_{ij}$ (an induction-based proof of this is essentially given in the proof of Proposition~\ref{prop:confinement-for-TC}). Restricting to $\overline{\cR}_i$, we see that the total feedback operator must be conditioned on syndromes supported in $\overline{\cR}_i$. The syndromes in this region at each time step will depend on the input syndromes in $\overline{\cR}_i$ and the syndromes of the noise operators $P_t$ for $t = [t_0, t_1]$ restricted to $\overline{\cR}_i$, which we denote $\varepsilon_{\overline{\cR}_i}$.  Because $\varepsilon_{\overline{\cR}_i}$ can be uniquely determined by restricting the operator itself to $\overline{\cR}_i$ (i.e.  $P_t\big|_{\overline{\cR}_i}$), we have
    \begin{equation}
        \cO_{\overline{\cR}_i}[\sigma, \vec P; \cL(\cG_k)] = \cO_{\overline{\cR}_i}[\sigma\big|_{\overline{\cR}_i}, \vec P \big|_{\overline{\cR}_i}; \cL(\cG_k)]. 
    \end{equation}

    Next, we argue that $\cO_{\lambda_{ij}}[\sigma, \vec P; \cL(\cG_k)] $ depends only on syndromes in $\overline{\cR}_i\cup \overline{\cR}_j \cup \lambda_{ij}$. First, we note that endpoints of $\lambda_{ij}$ are linear vertices of all gadgets of level $s \leq k$ by Proposition~\ref{prop:confinement-for-TC}, and thus the operator that is applied along the link does not depend on syndromes at these points.  Using Proposition~\ref{prop:confinement-for-TC}, we note that the region $\partial \cR_i \cup \partial \cR_j \setminus \lambda_{ij}$ is a confinement region.  Therefore, using Fact~\ref{fact:0} for this region as well as the observation that the level-0 gadgets that apply feedback supported on qubits in $\lambda_{ij}$ are only conditioned on syndromes in $\lambda_{ij}$, we write
    \begin{equation}
        \cO_{\lambda_{ij}}[\sigma, \vec P; \cL(\cG_k)] = \cO_{\lambda_{ij}}[\sigma\big|_{\overline{\cR}_i\cup \overline{\cR}_j \cup \lambda_{ij}}, \vec P \big|_{\overline{\cR}_i\cup \overline{\cR}_j \cup \lambda_{ij}}; \cL(\cG_k)]. 
    \end{equation}
    Finally, using that the union of the regions $\overline{\cR}_i\cup \overline{\cR}_j \cup \lambda_{ij}$ for all $j$ (i.e. corresponding to all gadgets $\cG_k^{(j)}$ neighboring $\cG_k^{(i)}$) is $\cQ_i$, we obtain the statement of the Lemma. 
\end{proof}

\begin{lemma}[Operator confinement part 2]  \label{lemma:op-conf-2}
    Under the assumptions of the previous Lemma, consider instead the truncated support of the gadget $\cG^{(i)}_k$ defined as $\widetilde{\cR}_i = \cR_i \setminus \widetilde{\Lambda}$ where $\widetilde{\Lambda} = \{\lambda_j\}$ is the subset of the set of four boundaries that are truncated. Then, the following holds:
    \begin{equation}
        \cO_{\widetilde{\cR}_i}[\sigma, \vec P; \cL(\cG_k)] = \cO_{\widetilde{\cR}_i}[\sigma \big|_{\widetilde{\cQ}_i}, \vec P\big|_{\widetilde{\cQ}_i};\cL(\cG_k)]
    \end{equation}
    where the ``truncated'' extended operator support is defined as 
    \begin{equation}
    \widetilde{\cQ}_i = \cQ_i\setminus \bigcup_{j:\, \cR_i \cap \cR_j \subseteq \widetilde{\Lambda}}  \cR_j.
    \end{equation}
    Namely, we remove the supports of gadgets that neighbor with $\cG^{(i)}_k$ over the truncated boundaries.
\end{lemma}
\begin{proof}
    First decompose $$\cO_{\widetilde{\cR}_i} = \cO_{\cR_i \setminus \partial \cR_i} \otimes \left ( \bigotimes_{j : \lambda_{ij} \notin \widetilde{\Lambda}} \cO_{\lambda_{ij}}\right)$$
    and apply considerations from the proof of previous Lemma to each of the operators in the tensor product.
\end{proof}

\subsubsection{Proof of the modified exRecs method} \label{sec-proof-exrec-method}

In this subsection, we will finally prove that, given a circuit $\cC$ for the $X$-type automaton, one iteration of $X$-type outer fault tolerant simulation $\widetilde{FT}(\cC)$ which experiences a $p$-bounded decoupled gadget error model will result in an $Ap^2$-bounded decoupled gadget error model for $\cC$.  This will be shown assuming that the automaton gadgets satisfy the Gate and EC conditions from Def.~\ref{def:TC_gateECprime} for some $k,m$.  The proof will require us to heavily use the operator language rather than the syndrome language, and the results of the previous subsection will be useful. In general, iterating this fault tolerant simulation will ensure exponential logical error suppression and a threshold. In the next subsection, we will prove that the particular gadgets that we use satisfy the gate and EC properties, and after combining the $X$ and $Z$ sectors of error correction together, will then prove a threshold for the full quantum automaton. 

 \begin{proposition} \label{prop:goodcorrect_TC}
Consider an inner simulation and an associated outer simulation  $\widetilde{FT}(\cdot)$. Assume that the gate and error correction gadgets satisfy $(\mathrm{Gate})_{k,m}$ and $(\EC)_{k,m}$ for some $k$ corresponding to the inner simulation level and some $m$. 

Consider a layer of level-$k$ gates along with leading and trailing layers of the level-$k$ EC gadget, which we will denote $\cL(\widetilde{\EC})\circ \cL ( \cG_k) \circ \cL (\widetilde{\EC})$.  Follow this layer by an ideal hierarchical level-$k$ decoder which we denote $\widetilde{D}$. If all exRecs covering $\cL(\widetilde{\EC})\circ \cL ( \cG_k) \circ \cL (\widetilde{\EC})$ are good, then
   \begin{equation}
       \widetilde{D} \circ  \cL(\widetilde{\EC})\circ \cL ( \cG_k) \circ \cL (\widetilde{\EC}) = \cL ( \cG_0) \circ \widetilde{D}  \circ \cL (\widetilde{\EC})
   \end{equation}
   where each level-0 gate gadget in $\cL(\cG_0)$ on the right hand side is the level-reduced and noiseless version of the associated level-$k$ gate gadget. 
\end{proposition}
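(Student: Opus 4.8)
The goal is a standard ``good exRec $\Rightarrow$ correctness'' statement, adapted to the toric-code setting where exRecs overlap in space (on $k$-links) and the ideal decoder acts globally. The plan is to reduce everything to the one-gadget-at-a-time picture by exploiting the confinement machinery of Subsec.~\ref{sec:confinement} and then invoke the Gate/EC conditions of Def.~\ref{def:TC_gateECprime}. Concretely: since every exRec in the layer is good, each level-$k$ gate gadget $\cG_k^{(i)}$ has at most one level-$0$ failure in the spacetime region $\widetilde{\EC}^{\otimes K}\circ\cG_k^{(i)}\circ\widetilde{\EC}^{\otimes K}$, and because neighboring exRecs share only $k$-links (confinement regions, by Prop.~\ref{prop:confinement-for-TC}), a single failure localized to a shared $k$-link is, by the good/bad truncation bookkeeping, counted toward at most one exRec. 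First I would observe that the leading layer $\cL(\widetilde{\EC})$ is preceded by the trailing $\cL(\widetilde{\EC})$ of the \emph{previous} layer inside the ideal decoder $\widetilde D$ appended to the right of the previous round; but here the relevant fact is simply that $\widetilde D$ begins with a noiseless $\cL(\widetilde{\EC})$ followed by $\Phi^k$. So the composite on the left is $\Phi^k\,\cL(\widetilde{\EC})_{\mathrm{clean}}\circ\cL(\widetilde{\EC})\circ\cL(\cG_k)\circ\cL(\widetilde{\EC})$.

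The key step is the following chain of rewrites, done layer by layer from the right. (1) The trailing noisy $\cL(\widetilde{\EC})$ composed with the clean $\cL(\widetilde{\EC})$ from $\widetilde D$: since each exRec is good, each $\widetilde{\EC}$ gadget in this pair carries at most one failure, so by the $(\EC\ B)_{k,m}$ condition the output (after the clean decoder's EC layer) is a level-$k$ coarse-grained state passing an $(s^{(i)}m)$-cell filter; effectively $\widetilde D\circ\cL(\widetilde{\EC})_{\mathrm{noisy}}$ acts as $\widetilde D$ on a state with at most $m$-cell damage per $k$-cell — this is exactly an input meeting the filter hypothesis of the Gate conditions. (2) Push $\widetilde D$ leftward past $\cL(\cG_k)$: here I would use that $\cL(\cG_k)$ with filtered input and $\le 1$ failure per gadget satisfies $(\mathrm{Gate}\ A)_{k,m}$ and $(\mathrm{Gate}\ B)_{k,m}$; combining these (the standard good-exRec argument, as in the Proposition just before Lemma~\ref{lemma:FT_prime}) gives $\widetilde D\circ\cL(\cG_k) = \cL(\cG_0)\circ\widetilde D$ on the filtered subspace, where $\cL(\cG_0)$ is the level-reduced noiseless gate layer. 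The nontrivial point relative to concatenated codes is that $\widetilde D$ is global, so this identity must be proven at the level of the whole layer; this is permissible precisely because Def.~\ref{def:TC_gateECprime} already states the Gate B condition for the entire layer at once, and the $m$-cell filter of Def.~\ref{def:mprime-TC} imposes its conditions on the interior of each island while leaving island boundaries unconstrained — so the global decoder's action decomposes consistently with the filter structure. (3) Finally, push $\widetilde D$ past the leading $\cL(\widetilde{\EC})$ using $(\EC\ B)_{k,m}$ once more, absorbing it into a single $\widetilde D\circ\cL(\widetilde{\EC})$ on the right-hand side.

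I expect the main obstacle to be step (2): making rigorous the claim that ``pulling the global ideal decoder through a whole layer of gates'' respects the confinement/island structure, so that the per-gadget Gate A/B conditions (with their local $m$-cell filters and local failure counts $s^{(i)}$) can be chained into a single layer-wide identity. The subtlety is that the pushforward map $\Phi^k$ inside $\widetilde D$ mixes information globally, and a failure localized on a $k$-link shared by two gates could in principle affect both downstream gate blocks; the truncation rule in Algorithm~\ref{alg:goodbad_assignment} and the operator-confinement Lemmas~\ref{lemma:op-conf-1}–\ref{lemma:op-conf-2} are exactly what is needed to show the shared-boundary contribution is confined and assigned to a single exRec, so that the ``$\le 1$ failure per exRec'' hypothesis genuinely localizes the damage. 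I would therefore spend most of the proof carefully invoking Fact~\ref{fact:0}, Fact~\ref{fact:1}, and Fact~\ref{fact:4} to argue that the output syndrome configuration after $\cL(\widetilde{\EC})\circ\cL(\cG_k)\circ\cL(\widetilde{\EC})$ decomposes over gadget supports in a way compatible with the filters, and only then apply the Gate/EC conditions gadget-by-gadget. The remaining bookkeeping — that the level-reduced $\cG_0$'s on the right match the original simulated circuit, and that the clean trailing $\cL(\widetilde{\EC})$ reassembles into the $\widetilde D$ on the right-hand side — is routine and parallels the Tsirelson proof in Sec.~\ref{sec:tsirelson-proof-2}.
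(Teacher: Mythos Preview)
Your high-level plan (pull the decoder through the layer using the Gate/EC conditions) is the right one and matches the paper's approach, but there are two concrete issues.

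First, the chain of conditions is misordered. In step (1) you invoke $(\EC\ B)_{k,m}$ on the trailing $\cL(\widetilde{\EC})$, but $\EC$~B in Def.~\ref{def:TC_gateECprime} requires the \emph{input} to that layer to already pass cell filters (with $m^{(i)}/m+s^{(i)}\le 1$). At that point nothing has established a filter on the output of $\cL(\cG_k)$. The correct order is the standard AGP chain: first $\EC$~A on the leading layer to put filters after it (arbitrary input, $s^{(i)}\le 1$, output gets $(s^{(i)}m)$-filter), then Gate~A to push filters past $\cL(\cG_k)$, and only then $\EC$~B to absorb the trailing $\widetilde{\EC}$ into $\widetilde D$, followed by Gate~B to replace $\cL(\cG_k)$ by $\cL(\cG_0)$. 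Your step (3) is also superfluous: after Gate~B the leading $\cL(\widetilde{\EC})$ simply remains to the right of $\widetilde D$, matching the RHS of the proposition; there is no further use of $\EC$~B.

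Second, the confinement worry you flag as the ``main obstacle'' is misplaced \emph{for this particular proposition}. The Gate/EC conditions of Def.~\ref{def:TC_gateECprime} are already formulated for an entire layer at once, and the B-type conditions already pull the full global decoder $\widetilde D$ through a whole layer. So once you have the hypotheses (each exRec good $\Rightarrow$ per-gadget $s^{(i)}\le 1$, hence all the required $\sum_j m^{(i)}_j/m + s^{(i)}\le 1$ inequalities hold at every step of the chain), the layer-wide conditions apply directly with no need to decompose over gadget supports via Facts~\ref{fact:0}--\ref{fact:4}. The paper's proof is accordingly two sentences: run the chain of Prop.~\ref{prop:goodcorrect} verbatim, with the only change that the whole layer is pulled through at once because that is how the conditions are stated. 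The operator-confinement Lemmas~\ref{lemma:op-conf-1}--\ref{lemma:op-conf-2} you mention are needed later, in Thm.~\ref{thm:TCerrsupp}, to localize the effect of \emph{bad} exRecs; when all exRecs are good they are not needed.
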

 \begin{proof}
    As reviewed in  Prop.~\ref{prop:goodcorrect}, the proof from \cite{gottesman2009introductionquantumerrorcorrection} can be adapted almost verbatim to prove this proposition: specifically, $\EC$ and Gate conditions are applied in the same way as in the proof of Prop.~\ref{prop:goodcorrect}.  The only difference is that an entire layer of gadgets is pulled through the ideal decoder simultaneously whenever $\EC$ B or Gate B conditions are applied. 
 \end{proof}

Before proving a version of the above result that allows for some bad exRecs to be present, we need an additional definition that will suitably modify the operation of a layer of gadgets whenever the layer intersects with the support of bad exRecs. This procedure replaces the usual truncation procedure in the conventional exRecs formalism, which is discussed in App.~\ref{app:exRec}.

\begin{definition}[Temporal truncation] \label{def:temporal-truncation}
    Assume a fault-tolerant simulation of circuit $\mathcal{C}$ of elementary gates using the inner and outer simulation prescription of Secs.~\ref{subsec:inner-simulation} and \ref{subsec:outer-simulation}.

    Consider a layer of exRecs corresponding to $\cL_{tr}(\widetilde{\EC}) \circ \cL(\cG_k) \circ  \cL_{le}(\widetilde{\EC})$, where we label the leading and trailing layers of error correction gadgets as `$le$' and `$tr$', respectively. Assume that the $\cL(\cG_k)$, $\cL_{tr}(\widetilde{\EC})$, and the succeeding layer of level-$k$ gates (not contained in the layer of exRecs under consideration) are start at times $t_0$, $t_1$ and $t_2$, see Fig.~\ref{fig:example-temporal}. The following procedure proceeds backward in time.

    Suppose that $t_0$ labels the latest time slice for which exRecs centered at this timeslice have not been temporally truncated. We run Algorithm~\ref{alg:goodbad_assignment} to label and spatially truncate the exRecs in this layer.  We then define temporal truncation as a procedure that removes all level-0 gadget failures in all bad exRecs centered at time $t_0$.  We denote this by replacing $\cL_{tr}(\widetilde{\EC}) \circ \cL(\cG_k) \circ  \cL_{le}(\widetilde{\EC})$ with $\widetilde{\cL^{t_0}_{tr}}(\widetilde{\EC}) \circ \widetilde{\cL^{t_0}}(\cG_k) \circ \widetilde{\cL^{t_0}_{le}}(\widetilde{\EC})$.  If, upon performing this procedure from later times to earlier ones, the last $\cL_{le}(\widetilde{\EC})$ was instead $\widetilde{\cL^{t_2}_{le}}(\widetilde{\EC})$ (i.e. having been modified by temporal truncation at $t_2$\footnote{Note also that temporal truncation will lead to reassignment of good/bad exRecs during the next iteration of the algorithm. }), the result after temporal truncation at $t_0$ is denoted as $\widetilde{\cL^{t_0,t_2}_{tr}}(\widetilde{\EC}) \circ \widetilde{\cL^{t_0}}(\cG_k) \circ \widetilde{\cL^{t_0}_{le}}(\widetilde{\EC})$.
\end{definition}

Let us provide some intuition behind the temporal truncation procedure. To determine the error model in the simulated circuit and show the error rate suppression, we will ``pull'' the $\ast$-decoder backward in time through the circuit. Similarly to concatenated codes, this first requires us to introduce some notion of truncation such that the spacetime supports of bad exRecs do not overlap with any other exRecs (we have already introduced the notion of spatial truncation, which does not fully solve this issue). In addition, unlike for concatenated codes, ``pulling'' the $\ast$-decoder through a single exRec is not well-defined. Instead, we pull the $\ast$-decoder through an entire layer of exRecs at a time. 

For the toric code, we will need to use temporal truncation  to appropriately pull the $\ast$-decoder through the circuit; see Theorem~\ref{thm:TCerrsupp} for a more precise description of this.  For all bad exRecs in a given layer, we will determine the effect of faults in them as compared to the output of the temporally truncated circuit. The difference is captured by some operator $\cO$. Thus, the output after this layer of exRecs is equivalent to the output of a temporally truncated layer of exRecs and subsequent application of  $\cO$. We then move the operator $\cO$ past the $\ast$-decoder to obtain a coarse-grained operator $\cO'$. We are left with a temporally truncated layer of exRecs (which are now all good) followed by the $\ast$-decoder, and we can pull the $\ast$-decoder past the layer of gadgets inside these exRecs by Prop.~\ref{prop:goodcorrect_TC}.  In the remaining circuit, the last layer of $\EC$ gadgets (i.e. immediately followed by the $\ast$-decoder) is now temporally truncated. However, the $\EC$ gadgets that experience the effect of truncation have the same effect as the subsequent noiseless $\EC$ within the $\ast$-decoder. Thus, the outcome is analogous to what one would obtain using conventional truncation for concatenated codes. A more precise understanding of why this method works is provided in Theorem~\ref{thm:TCerrsupp} and its proof.

Fig.~\ref{fig:example-temporal} provides an illustration of temporal truncation being implemented twice, once at timestep $t_2$ followed by another time at timestep $t_0$.

\begin{figure}[!htbp] 
\begin{center}
     \includegraphics[width=0.65\textwidth]{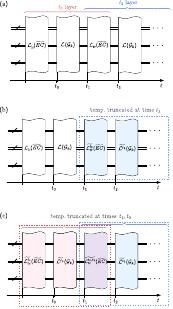}
\end{center}
\caption{ Illustration to Def.~\ref{def:temporal-truncation} of temporal truncation. We start with (a) a simulation where timestamps $t_0, t_1$ and $t_2$ are indicated and the leading and trailing $\widetilde{EC}$ layers are shown for one of the gate layers. Starting temporal truncation from the latest time in the circuit as described in the definition, we eventually arrive at (b), showing the layer of gates at time $t_2$ and the preceding layer of error correction truncated according to exRecs at the associated time. (c) Next, the truncation is performed for  time $t_0$, which involves leading and trailing error correction layers. Since $\cL(\widetilde{EC})$ participates in exRecs centered at both times $t_0$ and $t_2$, it is truncated at each of these times accordingly.   }
    \label{fig:example-temporal}
\end{figure}

Now, we are finally prepared to prove the main result of this subsection.  Recall that we are restricting our attention to either an $X$-type or $Z$-type automaton.  Consider a circuit $\cC$ for one of these automaton and its fault-tolerant simulation $\widetilde{FT}(\cC)$, the latter of which is subjected to a $p$-bounded gadget error model.  We will show that after appending a $*$-decoder to the end of $\widetilde{FT}(\cC)$, and pulling the $\ast$-decoder through exRecs layer by layer, the resulting circuit simulates $\cC$ with an  $Ap^2$-bounded gadget error model, where $A$ is a constant. For this, let us first show the result for a single iteration of pulling the decoder.

\begin{theorem}[Pulling $\widetilde{D^*}$ through one layer of outer simulation]\label{thm:TCerrsupp}
Assume a fault-tolerant simulation of circuit $\mathcal{C}$ of elementary gates using the inner and outer simulation of Secs.~\ref{subsec:inner-simulation} and \ref{subsec:outer-simulation}. Assume $\widetilde{FT}(\cC)$ operates with noise realization $\bm H$ (and associated noise operator history $\vec P$) sampled from a a $p$-bounded gadget error model. Assume also that the assumption in Remark~\ref{remark:overlaps} holds and $(\mathrm{Gate \  A, B} )_{k,m}$  and $(\EC \ \mathrm{ A, B})_{k,m}$  conditions from Def.~\ref{def:TC_gateECprime} are satisfied for some $k$ corresponding to the inner simulation level and some $m$. 

Consider the last layer of simulation centered on gates at time slice $t_0$ (where the succeeding layer of gates occurs at time $t_2$) consisting of $\widetilde{\cL^{t_2}_{tr}}(\widetilde{\EC}) \circ \cL(\cG_k) \circ  \cL_{le}(\widetilde{\EC})$ with the last layer of $\EC$ gadgets temporally truncated. If $t_0$ is the last layer of the entire simulation, no $t_2$ truncation is assumed. 

Run the Algorithm~\ref{alg:goodbad_assignment} to label exRecs centered at gates at time $t_0$ as good or bad, and perform spatial truncation.
Add a $\ast$-decoder after this layer of exRecs, resulting in $\widetilde{D^*} \circ \widetilde{\cL^{t_{2}}_{tr}}(\widetilde{\EC})\circ \cL ( \cG_k) \circ \cL_{le} (\widetilde{\EC})$.
 Pulling the layer of $\ast$-decoder past $\widetilde{\cL^{t_{2}}_{tr}}(\widetilde{\EC})\circ \cL ( \cG_k)$ results in:
\begin{itemize}
    \item[(1)] The layer of gates $\cL ( \cG_0)$ following the $\ast$-decoder, followed by application of some noise operator supported on each of the spatial regions corresponding to bad exRecs.
    \item [(2)] The layer of leading gadgets truncated due to bad exRecs centered at time slice $t_0$; i.e., this layer of gadgets is replaced by $\widetilde{\cL^{t_0}_{le}}(\widetilde{\EC})$; 
    \item[(3)] If the original circuit experiences a $p$-bounded gadget error model, the resulting gadget error model acting after $\cL ( \cG_0)$ is an $Ap^2$-bounded gadget error model for some constant $A$. 
\end{itemize} 

Therefore:
\begin{equation}
    \widetilde{D^*} \circ \widetilde{\cL^{t_{2}}_{tr}}(\widetilde{\EC})\circ \cL ( \cG_k) \circ \cL_{le} (\widetilde{\EC}) =  \cL(\cE(\cG_0)) \circ \cL(\cG_0)  \circ    \widetilde{D^*}  \circ  \widetilde{\cL^{t_0}_{le}}(\widetilde{\EC})
\end{equation}
where $\cE$ is an $Ap^2$-bounded gadget error model (discussed in point (3) above).

\end{theorem}

 Before proceeding with the proof, let us make a remark regarding why we replace the layer of EC gadgets before the ideal decoder with a temporally truncated one, i.e. $\widetilde{\cL^{t_2}_{tr}}(\widetilde{\EC}) \circ \cL(\cG_k) \circ  \cL_{le}(\widetilde{\EC})$. Consider the full circuit $\cC$ and add a $\ast$-decoder to this circuit at the end, in which case the trailing layer of $\widetilde{EC}$ gadgets will not be truncated. Pulling the $\ast$-decoder according to Thm.~\ref{thm:TCerrsupp} once will result in a new and shortened circuit where the EC layer before the $\ast$-decoder is temporally truncated.  In addition, the $\ast$-decoder is now followed by a layer of reduced-level gates experiencing a noise model with suppressed error rate. We then repeat this process and pull the $\ast$-decoder layer by layer through the entire circuit. From Thm.~\ref{thm:TCerrsupp}, we thus obtain the simulated circuit with an $A p^2$-bounded error model. This last conclusion will be formalized in a Corollary after the proof of the Theorem.
 
\begin{proof}
\begin{figure}[!htbp]
    \centering
    \includegraphics[width=0.95\textwidth]{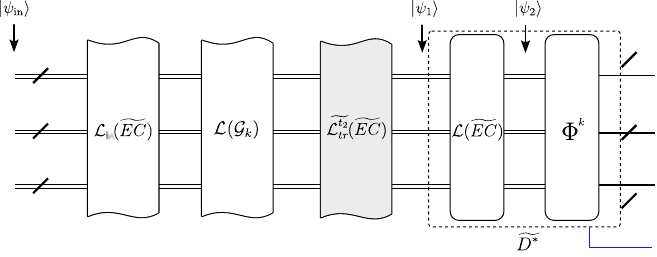}
\caption{ Layers of gadgets $\widetilde{\cL^{t_{2}}_{tr}}(\widetilde{\EC})\circ \cL ( \cG_k) \circ \cL_{le} (\widetilde{\EC})$ followed by a $\ast$-decoder $\cL(D_k^*)$. Also shown are the state of the toric code at the various timesteps that are utilized in the proof of the Theorem. The temporally truncated layer of error correction gadgets is shaded in gray.  }
    \label{fig:wavefunctionillustrations}
\end{figure}
We will show that we can pull all bad exRecs in a given layer past the $\ast$-decoder simultaneously, inducing an error model for the simulated level-0 gadgets in these regions.  Pulling the remaining (good) exRecs past the $\ast$-decoder produces noiseless level-0 gadgets. In this proof, where it does not cause ambiguity, we will use notation for regions to denote their qubit support.

Denoting an input stabilizer state to the entire layer $\widetilde{\cL^{t_{2}}_{tr}}(\widetilde{\EC})\circ \cL ( \cG_k) \circ \cL_{le} (\widetilde{\EC})$ by $\ket{\psi_{\text{in}}}$ (see Fig. \ref{fig:wavefunctionillustrations} for reference),  we write:
\begin{equation} \label{eq:psi1}
    \ket{{\psi}_1} \equiv \widetilde{\cL^{t_{2}}_{tr}}(\widetilde{\EC})\circ \cL ( \cG_k) \circ \cL_{le} (\widetilde{\EC}) \ket{\psi_{\text{in}}} =
     \cO_{\rm trun}  \widetilde{\cL_{tr}^{t_{0},t_2}} (\widetilde{\EC})\circ \widetilde{\cL^{t_0}} ( \cG_k) \circ \widetilde{\cL_{le}^{t_0}} (\widetilde{\EC})  \ket{\psi_{\text{in}}},
\end{equation}
where $\cO_{\rm trun} $ is some operator whose properties we will determine momentarily \footnote{As a notational aside, we do not use $\circ$ for multiplication of operators, as $\circ$ is specifically used for composition of \emph{gadgets}.}. 

Comparing the exRecs in Eq.~\ref{eq:psi1}, we see that they differ only by removing faults in the spacetime supports of bad exRecs centered at time $t_0$.
Consider connected regions of good exRecs and note that each connected region must have its boundary truncated. Using Lemma~\ref{lemma:op-conf-2}, we then conclude that the restriction of the total operator applied by the circuit $\widetilde{\cL^{t_{2}}_{tr}}(\widetilde{\EC})\circ \cL ( \cG_k) \circ \cL_{le} (\widetilde{\EC})$ to any such connected region of good exRecs can only depend on the syndrome and noise in its interior. However, the syndrome and noise in its interior is the same in the circuit $\widetilde{\cL_{tr}^{t_{0},t_2}} (\widetilde{\EC})\circ \widetilde{\cL^{t_0}} ( \cG_k) \circ \widetilde{\cL_{le}^{t_0}} (\widetilde{\EC})$, and thus, we conclude that $\cO_{\rm trun}$ is supported in the union of supports of all bad exRecs. We can thus decompose this operator as:
\begin{equation}
    \cO_{\rm trun}  = \bigotimes_i \cO_{{\mathcal B}^{(i)}_{\mathrm{bad}}} ,
\end{equation}
where ${\mathcal B}^{(i)}_{\mathrm{bad}}$ is the spatial support of $i$-th bad exRec and we assume that some arbitrary resolution has been picked for every pair of operators that share support, and each operator $\cO_{{\mathcal B}^{(i)}_{\mathrm{bad}}}$ is supported in ${\mathcal B}^{(i)}_{\mathrm{bad}}$. Operators  $\cO_{{\mathcal B}^{(i)}_{\mathrm{bad}}}$ are, by definition, any operators that produce the decomposition in equation above. According to Lemma~\ref{lemma:op-conf-1}, each of these operators must exhibit the following dependence:
\begin{equation}
    \cO_{{\mathcal B}^{(i)}_{\mathrm{bad}}} =  \cO_{\cB_{\mathrm{bad}}^{(i)}}[\sigma \big|_{\cQ_{\mathrm{bad}}^{(i)}}, \vec P^{t_{2}}\big|_{\cQ_{\mathrm{bad}}^{(i)}}]
\end{equation}
where, assuming that the boundary of ${\mathcal B}^{(i)}_{\mathrm{bad}}$ is $\Lambda_i$, $\cQ_{\mathrm{bad}}^{(i)}$ is the extended spacetime support associated with the $i$-th bad exRec, defined in the equation below:
\begin{equation} 
\cQ_{\mathrm{bad}}^{(i)} ={\mathcal B}^{(i)}_{\mathrm{bad}} \cup \left(\bigcup_{j: \, \cR_j \cap {\mathcal B}^{(i)}_{\mathrm{bad}} \subseteq \Lambda_i} \overline{\cR}_j\right),
\end{equation} 
where as before $\overline{\cR}_j  = \cR_j \setminus \partial \cR_j $, and $\cR_j$ is the spatial support of the $j$-th exRec (good or bad).  The notation $\vec P^{t_{2}}$ denotes an operator history where the level-0 gadget errors associated with temporal layer truncation at $t_{2}$ have been removed.

We will now consider the action of the $\ast$-decoder on the state $\ket{\psi_1}$. We start by decomposing the $\ast$-decoder into an action of a layer of $\cL(\widetilde{\EC})$, followed by a $k$-fold pushforward map $\Phi^k$ (Def.~\ref{def:pushforward}), i.e. $\widetilde{D^*}  = \Phi^k \cL(\widetilde{\EC}) $ (which is schematic notation since we have not included components like the classical register). 
The action of the noiseless $\cL(\widetilde{\EC})$ layer which is the first part of $\ast$-decoder yields (see Fig.~\ref{fig:wavefunctionillustrations} for reference):

\begin{equation}
    \ket{\psi_2} \equiv \cL(\widetilde{\EC}) \ket{\psi_1} = \cO_{\rm trun}' \cL(\widetilde{\EC}) \circ \widetilde{\cL_{tr}^{t_{0},t_2}} (\widetilde{\EC})\circ \widetilde{\cL^{t_0}} ( \cG_k) \circ \widetilde{\cL_{le}^{t_0}} (\widetilde{\EC})  \ket{\psi_{\text{in}}} ,
\end{equation}
where $\cO'_{\rm trun}$ is an operator whose syndromes are coarse-grained at level $k$ and which has the same support as $\cO_{\rm trun}$.  We now will argue that $\cO'_{\rm trun}$ in fact has support in bad exRecs only.  To see this, note that $\ket{\psi_1}$ and the reference state $\cO_{\rm trun} \ket{\psi_1}$ (which cancels the $\cO_{\rm trun}$ appearing in Eq.~\ref{eq:psi1}) have syndromes differing only in the support of bad exRecs.  Upon applying $\cL(\widetilde{\EC})$, we know from Lemmas~\ref{lemma:op-conf-1} and ~\ref{lemma:op-conf-2} that the operator applied in the support of good exRecs (whose boundaries shared with bad exRecs have been spatially truncated) only depends on the input syndromes restricted to the (truncated) good exRec and bulks of neighboring good exRecs. Therefore, comparing the operator applied by $\cL(\widetilde{\EC})$ to $\ket{\psi_1}$ and the one applied by $\cL(\widetilde{\EC})$ to $\cO_{\rm trun} \ket{\psi_1}$, we see that they  must differ in the support of bad exRecs, implying the same for $\cO'_{\rm trun}$.

We consider the action of $\Phi^k$ on  $\ket{\psi_2}$ and pull the operator $\cO'_{\rm trun}$ through the pushforward map.  The effect of this is 
\begin{equation} \label{eq:oprime}
    \Phi^k \cO'_{\rm trun} = \mathcal{O}'_{\rm red} \Phi^k
\end{equation}
where, using the definition of the pushforward map, we can write
$
\varphi^k(\mathcal{O}'_{\rm red}) \cO'_{\rm trun} \in \cS
$
with $\cS$ the toric code stabilizer group and $\varphi^k$ a $k$-fold pullback map.  From the arguments above, $\mathcal{O}'_{\rm red}$ can be chosen so that
\begin{equation}
\mathrm{supp}\left(\mathcal{O}'_{\rm red}\right) \subseteq \bigcup_i {\cB}_{0,\text{bad}}^{(i)}
\end{equation}
where ${\cB}_{0,\text{bad}}^{(i)}$ is a region on the reduced lattice associated with ${\cB}_{\text{bad}}^{(i)}$ on the original lattice.  

We can now write \begin{equation}
    \Phi^k \ket{\psi_2} \equiv  \Phi^k  \cL(\widetilde{\EC}) \ket{\psi_1} = \cO_{\rm red}'  \widetilde{D^*} \circ \widetilde{\cL_{tr}^{t_{0},t_2}} (\widetilde{\EC})\circ \widetilde{\cL^{t_0}} ( \cG_k) \circ \widetilde{\cL_{le}^{t_0}} (\widetilde{\EC})  \ket{\psi_{\text{in}}} ,
\end{equation}
Using the fact that since all exRecs on the right hand side are now good, Prop.~\ref{prop:goodcorrect_TC} can be used to pull the layer of level-$k$ gates and $\EC$ gadgets through the ideal decoder, to obtain
\begin{equation}
    \widetilde{D^*} \circ \widetilde{\cL_{tr}^{t_{0},t_2}} (\widetilde{\EC})\circ \widetilde{\cL^{t_0}} ( \cG_k) \circ \widetilde{\cL_{le}^{t_0}} (\widetilde{\EC})  \ket{\psi_{\text{in}}}  = \cL(\cG_0)\circ   \widetilde{D^*} \circ \widetilde{\cL_{le}^{t_0}}(\widetilde{\EC}) \ket{\psi_{\text{in}}}.
\end{equation} 
Combining everything together, we obtain:
\begin{equation}
\begin{split}
     \Phi^k \ket{\psi_2} &= \mathcal{O}'_{\rm red} \cL(\cG_0) \circ\widetilde{D^*} \circ \widetilde{\cL_{le}^{t_0}}(\widetilde{\EC}) \ket{\psi_{\text{in}}}
     \\
     &= \cL \left (\cE(\cG_0);\bigcup_i \cB_{0,\text{bad}}^{(i)} \right ) \circ \cL \left (\cG_0;\cL_0^{\rm red} \setminus \bigcup_i \cB_{0,\text{bad}}^{(i)} \right )   \circ \widetilde{D^*}  \circ  \widetilde{\cL_{le}^{t_0}}(\widetilde{\EC}) \ket{\psi_{\text{in}}}
\end{split}
\end{equation}
where we absorbed the effect of $\cO'_{\rm red}$ into the error channel $\cE(\cG_0)$, and the reduced lattice is denoted by $\cL_0^{\rm red}$.  The error $\mathcal{E}$ denotes the effect of the new simulated gadget error model (see Def.~\ref{def:approximate-decoupled-gadget-error-model2}). In the simulated error model, the failed simulated gadgets $F_{E,{t_0}}$ are in one-to-one correspondence with bad exRecs. The operator $\mathcal{O}'_{\rm red}$ is the total operator applied due to the operator history as well as the action of the gadget at time $t_0$ and we can assign a set of failed gadgets corresponding to this operator by partitioning its support over those of the failed gadgets. 

What remains now is to determine whether the simulated noise model is $Ap^2$-bounded. The probability of any extended rectangle being bad is $\leq A p^2$ where $A$ is a constant determined by the maximum number of sites within any type of exRec and the number of possible truncations of an exRec. The probability that all gadgets in a given set $B_0 = \{ \cG_0^{(i)}\}$ are faulty is the probability that all exRecs corresponding to these gadgets are bad, which is $\leq (A p^2)^{|B_0|}$. This proves the theorem. 

\end{proof}

\begin{corollary}\label{cor:toriccodesuppre}
    Consider a circuit $\mathcal{C}[\cI_0, \cT_0, \cM_0]$, and construct the fault tolerant simulation $\widetilde{FT}(\mathcal{C})$ following the prescription outlined in Secs.~\ref{subsec:inner-simulation} and \ref{subsec:outer-simulation}. Assume that the assumptions of Theorem~\ref{thm:TCerrsupp} are satisfied.  Insert a layer of $\ast$-decoder at the end of the circuit.  If the gadget error model for $\widetilde{FT}(\mathcal{C})$ is $p$-bounded, then pulling the circuit through the $\ast$-decoder produces the simulated circuit $\mathcal{C}$ experiencing a gadget error model that is $Ap^2$-bounded for some constant $A$. 
\end{corollary}
\begin{proof}
    We push the circuit through the decoder layer by layer.  Note that due to the truncation procedure, if an elementary gadget failure is assigned to a bad exRec, it cannot be assigned to any good exRecs.  Using the results from the last theorem, a noise model will be induced on gates in $\mathcal{C}$ corresponding to bad exRecs in $\widetilde{FT}(\mathcal{C})$.  The probability that a given exRec is bad is $\leq A p^2$ for some constant $A$, and utilizing a similar argument to that of the previous theorem as well as the fact that a failure cannot be assigned to multiple bad exRecs, the gadget error model in $\mathcal{C}$ is $Ap^2$-bounded.
\end{proof}

Thus, an automaton made out of gadgets satisfying the Gate and $\EC$ conditions simulates a circuit whose noise rate is reduced doubly-exponentially quickly as one goes to higher and higher levels of simulation. To establish fault tolerance of our toric code automaton, it remains to show that the ${\EC}$ and Gate conditions are satisfied for the toric code gadgets constructed in Sec.~\ref{ss:construction_of_tc_gadgets}.

\subsection{Proof of EC and Gate conditions for toric code automaton}\label{sec:pfnilpTC}

Similarly to the analysis of Tsirelson's automaton in Section~\ref{sec:tsirelson-proof-2}, in order to prove the ${\EC}$ and Gate conditions, we will need a suitable generalization of nilpotence for the gadgets in the toric code automaton.  First, we will define the notion of an isolated fault:
\begin{definition}[Isolated faults]\label{def:isolated}
    An elementary gadget fault of gadget $\cG_0$ at spacetime location $\bm x$ is isolated at level $\ell$ if one of the following is true:
    
    \begin{itemize}
        \item There is a composite gadget $ \EC_\ell^{\otimes K} \circ \cG_\ell \circ  \EC_\ell^{\otimes K}$ such that the fault is located either within the leading $\EC_\ell$ gadgets or within $\cG_\ell$, and there are no other faults in the composite gadget as well as any of the neighboring composite gadgets. In addition, the input state must pass through a $\varnothing$-cell filter acting on $\ell$-cells of the composite gadget and its neighboring composite gadgets. We call the isolation region of this fault the union of spacetime supports of the composite gadget $\EC_\ell^{\otimes K} \circ \cG_\ell \circ  \EC_\ell^{\otimes K}$ containing the fault and the supports of the neighboring composite gadgets.
        \item The fault is located within an $\EC_\ell$ gadget that belongs to the very last layer of level-$\ell$ gadgets of the toric code automaton, and there are no other faults in this $\EC_\ell$ gadget and in neighboring $\EC_\ell$ gadgets. The input state to this layer of gadgets is assumed to pass through a $\varnothing$-cell filter in the support of the $\EC_{\ell}$ gadget with the fault and its neighboring $\EC_\ell$ gadgets. We call the isolation region of this fault the union of spacetime supports of the composite gadget $\EC_\ell$ containing the fault and the supports of the neighboring composite $\EC_\ell$ gadgets.
    \end{itemize}
    
    An $m$-cell damage input to a gate or EC gadget is said to be isolated at level $\ell>m$ if there are $\varnothing$-cell filters in the supports of its neighbors, and there are no faults in this gadget and its neighboring gadgets. 
\end{definition}

Next, we define the nilpotence conditions for the toric code automaton.

\begin{definition}[Gate and EC nilpotence for the toric code] \label{def:nilpotenceTC}
    A gate gadget $\mathcal G_{\ell}$ is said to be nilpotent if for coarse-grained input syndrome $\sigma = \sigma|_{\Sigma_\ell}$,  we have 
    \begin{equation} \label{eq:nilp-cond-1}
    \mathcal{L}'(\EC_{\ell}) \circ  \left (   \mathcal{L}(\mathcal G_{\ell}) \circ \mathcal{L}(\EC_{\ell})  \right ) [\sigma, \varepsilon] = \mathcal{L}'(\EC_{\ell}) \circ \mathcal{L}(\mathcal  G_{\ell})[\sigma, \varepsilon \oplus \varepsilon_0],
    \end{equation}
    where we assume the noise restricted to the spacetime support of the above circuit to be an isolated elementary gadget failure in $\mathcal G_{\ell}$ that is associated with syndrome $\varepsilon_0$, and the prime on the last layer of $\EC_\ell$ gadgets denotes that the layer is noise-free in the extended support of $\mathcal G_{\ell}$. The argument $\varepsilon \oplus \varepsilon_0$ denotes the noise syndromes where the isolated gadget failure has been removed.

    An $\EC$ gadget $\EC_{\ell}$ is said to be nilpotent if for coarse-grained input syndrome $\sigma = \sigma|_{\Sigma_\ell}$,  we have 
    \begin{equation} \label{eq:nilp-cond-2}
    \mathcal{L}'(\EC_{\ell}) \circ  \mathcal{L}(\EC_{\ell})[\sigma, \varepsilon] =\mathcal{L}'(\EC_{\ell})[\sigma, \varepsilon \oplus \varepsilon_0],
    \end{equation}
    under the same assumptions made above.
\end{definition}

Note that there is a difference between this definition and the one used for Tsirelson's automaton in Def.~\ref{def:tsirelson_nilpotence} (which did not have the leading layer of $\EC_\ell$s on the left hand side of \eqref{eq:nilp-cond-1}), and that we have additionally separately defined the notion of nilpotence for the $\EC$ gadget.  Below, we will provide a much simpler and equivalent definition that we will be utilizing in the proofs of the main statements.

\begin{lemma}[Simplified nilpotence conditions]
    Define a simplified gate nilpotence condition for $\cG_{\ell}$ to be identical to Def.~\ref{def:nilpotenceTC} except the first two layers $\cL(\cG_{\ell}) \circ \cL(\EC_{\ell})$ are assumed to have precisely one fault which is located in $\cG_{\ell}$ and has syndrome $\varepsilon_0$, with the final layer of $\mathcal{L}'(\EC_{\ell})$ is noiseless. In particular, the simplified gate nilpotence condition is
    \begin{equation} \label{eq:nilp-cond-3}
    \mathcal{L}'(\EC_{\ell}) \circ  \left (   \mathcal{L}(\mathcal G_{\ell}) \circ \mathcal{L}(\EC_{\ell})  \right ) [\sigma, \varepsilon_0] =  \mathcal{L}'(\EC_{\ell}) \circ \mathcal{L}(\mathcal  G_{\ell})[\sigma, \varnothing]
    \end{equation}
    An analogous simplified $\EC$ nilpotence condition is 
    \begin{equation}\label{eq:nilp-cond-4}
    \mathcal{L}(\EC_{\ell}) \circ  \mathcal{L}(\EC_{\ell})[\sigma, \varepsilon_0] =\mathcal{L}'(\EC_{\ell}) [\sigma, \varnothing]
    \end{equation}
    For the toric code automaton, where the assumption in Remark~\ref{remark:overlaps} holds, the simplified nilpotent conditions are the same as the original nilpotence conditions (Def.~\ref{def:nilpotenceTC}) for both gate and EC gadgets.
\end{lemma}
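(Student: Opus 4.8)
The plan is to prove both directions of equivalence: that the original nilpotence conditions of Def.~\ref{def:nilpotenceTC} imply the simplified ones of Eqs.~\eqref{eq:nilp-cond-3} and \eqref{eq:nilp-cond-4}, and conversely. The nontrivial direction is showing that the simplified conditions (with a single fault and $\sigma = \sigma|_{\Sigma_\ell}$, but no extra noise $\varepsilon$) imply the original conditions, where an arbitrary noise realization $\varepsilon$ is allowed to coexist with the isolated failure in the surrounding spacetime region (subject to the isolation constraints). The key tool is linearity: by Prop.~\ref{prop:confinement-for-TC}, the gates $\cI_\ell, \cT_\ell, \cM_\ell$ and the $\EC_\ell$ gadgets act linearly on their $\ell$-frame vertices, and in particular $\EC_\ell$ is linear on all of $\Sigma_\ell$ in its support. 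The idea is that linearity lets one split the action of a layer on the sum $\sigma \oplus \varepsilon$-type configuration into the noiseless coarse-grained part plus the part driven by the failure.

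First I would unpack the isolation hypothesis: in the definition of an isolated fault at level $\ell$ (Def.~\ref{def:isolated}), the input state passes through $\varnothing$-cell filters on the relevant $\ell$-cells, which by Def.~\ref{def:mprime-TC} means the incoming damage (relative to a level-$\ell$ coarse-grained configuration) vanishes in the bulk of those cells and is confined to their boundaries in a matched way. Combined with the coarse-graining action of the leading $\cL(\EC_\ell)$ layer (Lemma~\ref{lemma:EC-coarse-grains}), after the first EC layer the syndrome configuration entering $\cG_\ell$ (or entering the second $\EC_\ell$) is coarse-grained at level $\ell$, i.e. lies in $\Sigma_\ell$. At this point I would invoke the linearity of $\cG_\ell$ (resp.\ $\EC_\ell$) on $\Sigma_\ell$: for input $\sigma|_{\Sigma_\ell}$ and isolated failure syndrome $\varepsilon_0$, we can write $\cG_\ell[\sigma|_{\Sigma_\ell}, \varepsilon_0] = \cG_\ell[\sigma|_{\Sigma_\ell}] \oplus \cG_\ell[0, \varepsilon_0]$, reducing the problem with arbitrary coarse-grained input to the problem with a clean input — which is exactly what the simplified condition \eqref{eq:nilp-cond-3} asserts gets cleaned up by the trailing noiseless $\EC_\ell$ layer. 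One must check that the confinement property (Facts~\ref{fact:0}, \ref{fact:1}, Prop.~\ref{prop:confinement-for-TC}) guarantees that the isolation of the fault is preserved under this decomposition, i.e. that the operator produced by the failure stays inside the extended support of the affected gadget and does not leak through boundaries, so that the $\varnothing$-cell filters on neighbors remain valid. I would then handle the two bullet cases of Def.~\ref{def:isolated} (fault in a bulk composite gadget vs.\ fault in the very last EC layer) separately, the second being simpler since there is nothing downstream except the noiseless trailing layer.

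For the converse direction (original $\Rightarrow$ simplified), I would simply specialize: set $\varepsilon = \varepsilon_0$ with no other noise and take the input $\sigma|_{\Sigma_\ell}$, in which case $\varepsilon \oplus \varepsilon_0 = \varnothing$ and the right-hand side of \eqref{eq:nilp-cond-1} becomes $\mathcal{L}'(\EC_\ell) \circ \mathcal{L}(\cG_\ell)[\sigma, \varnothing]$, recovering \eqref{eq:nilp-cond-3}; likewise for the EC case. This direction is essentially definitional. The main obstacle I anticipate is the careful bookkeeping in the hard direction: verifying that the ``isolated fault + arbitrary compatible $\varepsilon$'' configuration really does decompose, via linearity and confinement, into ``clean coarse-grained input + single isolated fault'' plus a noiseless piece, and in particular that no part of $\varepsilon$ conspires with the boundary structure to violate the $\varnothing$-filter assumptions on neighboring cells. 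This requires threading together Lemma~\ref{lemma:EC-coarse-grains} (coarse-graining), Prop.~\ref{prop:confinement-for-TC} (linearity and confinement on $\ell$-links), and Facts~\ref{fact:0}--\ref{fact:1} (non-communication across boundaries), so the proof is short in spirit but demands care that all the structural hypotheses line up.
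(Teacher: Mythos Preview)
Your plan misidentifies the crux of the equivalence. Both the original and simplified conditions already assume the input is coarse-grained, $\sigma = \sigma|_{\Sigma_\ell}$, so the linearity-based decomposition of $\sigma$ that you propose does not address any actual difference between the two conditions. The real difference is that the original condition (Def.~\ref{def:nilpotenceTC}) allows an arbitrary noise realization $\varepsilon$ elsewhere in the layer---outside the isolation region of the single fault $\varepsilon_0$---whereas the simplified condition has no noise at all except $\varepsilon_0$. The paper's proof dispatches this in one line using confinement (Fact~\ref{fact:1}): the output restricted to the isolation region is independent of syndromes and noise outside it and vice versa, so inside the isolation region the original equation reduces exactly to the simplified one, while outside the isolation region both sides of \eqref{eq:nilp-cond-1} (resp.\ \eqref{eq:nilp-cond-2}) are trivially equal since removing $\varepsilon_0$ changes nothing there.

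Your proposed linearity decomposition $\cG_\ell[\sigma|_{\Sigma_\ell}, \varepsilon_0] = \cG_\ell[\sigma|_{\Sigma_\ell}] \oplus \cG_\ell[0, \varepsilon_0]$ is also not available in general: by Prop.~\ref{prop:confinement-for-TC}, $\cM_\ell^{h/v}$ is linear only on its corner vertices, not on all of $\Sigma_\ell$, so this step fails for precisely the gate that matters most. You do eventually cite Facts~\ref{fact:0}--\ref{fact:1}, but only as a side check for ``leakage''; in fact confinement is the entire argument, and the linearity and coarse-graining steps you foreground (Lemma~\ref{lemma:EC-coarse-grains}, the $\varnothing$-filter bookkeeping) are unnecessary here.
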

\begin{proof}
    From Fact \ref{fact:1}, the output of the gates in the spatial support of the isolation region is independent of the damage and noise syndromes outside of this region and vice versa.  Thus, the original definition of nilpotence splits into separate statements of Eqs.~\eqref{eq:nilp-cond-1} and \eqref{eq:nilp-cond-2} restricted to be inside and outside the isolation region.
    Outside the isolation region, the RHS and LHS of each of these equations are identical. Restricted to the isolation region, the statement is equivalent to that of simplified Eqs.~\eqref{eq:nilp-cond-3} and \eqref{eq:nilp-cond-4}. Thus, the original definition of nilpotence follows from the simplified one.
    \end{proof}

Moving forward, because the conditions in  Remark~\ref{remark:overlaps} are indeed satisfied for the toric code automaton, we will refer to the simplified nilpotence condition as simply the nilpotence condition moving forward. While we can verify nilpotence by explicitly constructing the circuit of level-0 gadgets for $\cG_{\ell}$, brute-force verification is computationally expensive even when going a modest number (say, three) levels up. Instead, we will use analytical insight to reduce the search space while still being able to provide a modest upper bound on the nilpotence level.  This will be discussed in the proof of the gadget nilpotence levels.  In addition, some of these ideas are rather general and can be used in future constructions of fault tolerant automata.  Before this, we will need a useful fact:

\begin{lemma} \label{lemma:higher-nilpotence-TC}
    If there exists $\ell$ such that all gate and $\EC$ gadgets of the toric code automaton are nilpotent at level $\ell$, then all gate and $\EC$ gadgets are nilpotent at all levels $\ell' > \ell$.
\end{lemma}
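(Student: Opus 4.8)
\textbf{Proof plan for Lemma~\ref{lemma:higher-nilpotence-TC}.}

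The plan is to prove the statement by induction on the level $\ell'$, with base case $\ell' = \ell + 1$ assuming nilpotence at level $\ell$; the induction step from $\ell'$ to $\ell' + 1$ is then identical in structure. The key idea is \emph{level reduction}: a nilpotent gate or EC gadget at level $\ell'$ can be expanded in terms of level-$\ell$ (actually level-$(\ell'-1)$, but it suffices to recurse one level at a time) constituent gadgets, with leading layers of $\EC_{\ell'-1}$ gadgets that coarse-grain the input, so that the rest of the circuit operates entirely on level-$(\ell'-1)$ coarse-grained syndrome configurations. By Lemma~\ref{lemma:EC-coarse-grains}, after the first noiseless layer of $\EC_{\ell'-1}$ gadgets the syndromes lie in $\Sigma_{\ell'-1}$, and the subsequent level-$(\ell'-1)$ gate gadgets map level-$(\ell'-1)$ coarse-grained states to level-$(\ell'-1)$ coarse-grained states with an action mirroring that of the level-0 gadgets. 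Therefore the operation of a level-$\ell'$ gadget acting on a level-$\ell'$ coarse-grained input, with all but the leading layer of $\EC_{\ell'-1}$'s present, reduces formally to the operation of a level-1 gadget of the same type acting on a level-1 coarse-grained input.

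First I would set up the level reduction map precisely: given a circuit for $\cG_{\ell'}$ (or $\EC_{\ell'}$) written via the inner/outer substitution rules in terms of $\cG_{\ell'-1}$, $\EC_{\ell'-1}$ primitives (Def.~\ref{def:FT-constr-TC-inner}), and given a single isolated level-0 gadget failure located somewhere in this circuit, I would locate the unique level-$(\ell'-1)$ subgadget $\cG'_{\ell'-1}$ whose support contains the failure. Because the failure is isolated, there are no other failures in the composite gadget, so every other level-$(\ell'-1)$ subgadget (including all but possibly the one containing the fault, and all the $\EC_{\ell'-1}$'s except possibly one) is noiseless. There are then two cases. Case (i): the fault sits inside a level-$(\ell'-1)$ gate or EC subgadget $\cG'_{\ell'-1}$ that is \emph{itself} preceded by a noiseless $\EC_{\ell'-1}$ (so its input is level-$(\ell'-1)$ coarse-grained) and followed by a noiseless $\EC_{\ell'-1}$; by the induction hypothesis (nilpotence at level $\ell'-1$), the failure is cleaned up by that trailing $\EC_{\ell'-1}$, i.e. $\EC_{\ell'-1}^{\otimes K'} \circ \cG'_{\ell'-1}[\sigma',\varepsilon_0] = \EC_{\ell'-1}^{\otimes K'} \circ \cG'_{\ell'-1}[\sigma',\varnothing]$ on the relevant region, and then the whole circuit from that point onward runs noiselessly on a level-$(\ell'-1)$ coarse-grained configuration, which by Lemma~\ref{lemma:EC-coarse-grains} produces precisely the noiseless output — establishing the nilpotence equation. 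Case (ii): the fault occurs in the very last layer of $\EC_{\ell'-1}$ gadgets inside $\cG_{\ell'}$ (or $\EC_{\ell'}$), not followed by another $\EC_{\ell'-1}$ within the composite gadget; here I would use that the input to this last layer is level-$(\ell'-1)$ coarse-grained (all preceding layers being noiseless), invoke the $(\EC\ B)_{\ell'-1}$ / Gate-B-type behavior — more precisely the fact, established in the proof of nilpotence at level $\ell'-1$ (Lemma~\ref{lemma:nilpotent} analogue for the toric code, Def.~\ref{def:nilpotenceTC}), that a single fault in an $\EC_{\ell'-1}$ on a coarse-grained input produces damage confined to a single $(\ell'-1)$-cell — and then note that this damage is cleaned up by the subsequent noiseless layer $\mathcal L'(\EC_{\ell'})$ that appears on the left-hand side of the (simplified) nilpotence condition. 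I would carry out the confinement bookkeeping using Fact~\ref{fact:1} and Prop.~\ref{prop:confinement-for-TC} so that the isolation region genuinely decouples from the rest of space, exactly as in the reduction from Def.~\ref{def:nilpotenceTC} to the simplified condition.

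The main obstacle I anticipate is handling the $\cM$ gadget correctly, since $\cM_n$ is \emph{not} linear on all of $\Sigma_n$ — only on the two endpoints of its support — so one cannot simply strip the coarse-grained part of the input via linearity as one does for $\cI_n$ and $\cT_n$. For $\cM$, the level-reduction argument must keep track of the actual syndromes on the non-linear level-$(\ell'-1)$ vertices, and one has to verify (as in the level-$2$ base case of Lemma~\ref{lemma:nilpotent}, case (e)) that a single fault in or before the $\cM_{\ell'-1}$ layer produces output damage fitting in a bounded ($(\ell'-1)$-cell-sized, or at worst $\cM_{\ell'-1}$-support-sized) region, which then collapses to a single $(\ell'-1)$-cell after the following $\EC_{\ell'-1}$ layer and is ultimately erased by the trailing noiseless EC layer. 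The other, more bookkeeping-heavy, subtlety is making sure the ``isolated at level $\ell'$'' hypothesis (Def.~\ref{def:isolated}) — no faults in the composite gadget \emph{or its neighbors}, $\varnothing$-cell filters on the neighbors — is correctly inherited at level $\ell'-1$: the neighbors of the level-$(\ell'-1)$ subgadget containing the fault are themselves noiseless and receive level-$(\ell'-1)$ coarse-grained (indeed, after the first $\EC_{\ell'-1}$ layer, filter-passing) input, so the induction hypothesis applies to them verbatim. Once these two points are dispatched, the proof is a routine induction, and I would state it as such rather than writing out all the cell-counting.
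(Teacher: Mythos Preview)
Your approach is correct and essentially the same as the paper's: expand the level-$\ell'$ gadget in terms of lower-level constituents interspersed with $\EC$ layers, observe that the single isolated fault lands in one sub-gadget which is followed by a noiseless $\EC$ layer, and invoke nilpotence at the lower level to clean it; the edge case of a fault in the last sub-layer is handled by the trailing noiseless $\EC_{\ell'}$ from the nilpotence condition. The only difference is that the paper expands directly down to level $\ell$ (using the recursive structure of the inner simulation to write $G_{\ell'}$ as a circuit of level-$\ell$ gadgets with $\EC_\ell$'s between them) and applies the hypothesis at level $\ell$ once, whereas you induct one level at a time; both work, the paper's route is slightly shorter.

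Your ``main obstacle'' paragraph about $\cM$ non-linearity is unnecessary here: the nilpotence hypothesis (Def.~\ref{def:nilpotenceTC}) is already stated for \emph{arbitrary} coarse-grained input $\sigma = \sigma|_{\Sigma_\ell}$, not just input supported on linear vertices, so invoking nilpotence of $\cM$ at level $\ell$ (or $\ell'-1$ in your induction) directly handles the nonlinear points without any further case analysis. Similarly, the isolation-inheritance bookkeeping you flag is genuine but routine (and the paper does not spell it out either).
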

\begin{proof}
The proof follows from writing the gadgets $G_{\ell'}$ in terms of gadgets at level $\ell$.  Nilpotence at level $\ell$ shows that any single level-0 gadget failure in some $\cG^{(i)}_{\ell} \circ \EC^{(i)}_{\ell}$ will be cleaned up by applying a noiseless layer of $\EC_{\ell}$, which is the subsequent layer of gadgets.  In the case where the fault resides in the last layer of level-$\ell$ gadgets in $G_{\ell'}$, the subsequent clean layer of $\EC_{\ell'}$ (as required from the definition of nilpotence) will clean the output damage.
\end{proof}

In addition, we will also need a useful definition that keeps track of the location of a level-0 gadget within a higher level gadget:

\begin{definition}[Address]  \label{def:error_location}
    For a circuit $\widetilde{FT}(\mathcal C)$, define the address of an elementary gadget fault to be a sequence of gadgets (either EC or gate) of ascending level, written as
    \begin{equation}
        [\cG_k(\bm{x}_k), \cG_{k-1}(\bm{x}_{k-1}),..., \cG_{0}(\bm{x}_0)],
    \end{equation}
    where $\mathrm{supp}(\cG_{i-1}) \subset \mathrm{supp}(\cG_{i})$ and the argument of $\cG_k(\bm{x}_k)$ denotes a canonical spacetime coordinate for the location of the level-$k$ gadget. The last element $\cG_0$ specifies the faulty elementary gadget and thus, can only be $\cI_0, \cT^{h/v}_0$ or $\cM^{h/v}_0$.  We will usually suppress the argument providing the spacetime coordinate for simplicity. 
\end{definition}

Before proceeding, we will note a few properties of our construction that will simplify the analysis.  First, we will be proving nilpotence for an isolated fault within a level-$\ell$ gadget.  In particular, this means that we tile the lattice with level-$\ell$ gadgets and restrict our consideration to the support of a single level-$\ell$ gadget, \emph{including} level-0 gadgets from the neighboring gadgets applied at its boundary. 
Since the surrounding gadgets are noise-free because of fault isolation and the input is assumed to be level-$\ell$ coarse-grained, then the neighboring level-$\ell$ gadgets can at most add syndromes at the corners of the level-$\ell$ gadget under consideration.  However, we can neglect syndromes on the linear vertices (because of linearity property).  Finally, it suffices to only consider failures of level-0 gadgets whose support is fully contained within a given level-$\ell$ gadget.

\begin{lemma} \label{lemma:nilpotentnumerics-TC} 
    Assuming a gadget error model, the gate gadgets $\cI$,  $\cT^{h/v}$, and the $\EC$ gadget are nilpotent at level 3 and above.  The gate gadget $\cM^{h/v}$ is nilpotent at level 6 and above.
\end{lemma}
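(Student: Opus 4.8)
The plan is to prove nilpotence by a level-by-level induction that exploits the hierarchical structure of the construction together with the confinement and linearity results already established (Prop.~\ref{prop:confinement-for-TC}) and the coarse-graining property of $\EC_n$ (Lemma~\ref{lemma:EC-coarse-grains}). First I would set up the base cases using a restricted computer-aided search: by the remarks preceding the Lemma, it suffices to tile the lattice with level-$\ell$ gadgets, restrict attention to a single gadget together with the level-$0$ gadgets of its neighbors acting on its boundary, take the input to be level-$\ell$ coarse-grained (ignoring syndromes on linear vertices by Prop.~\ref{prop:confinement-for-TC}), and place a single level-$0$ fault whose support lies entirely within the gadget. The key structural reduction is that by Lemma~\ref{lemma:EC-coarse-grains}, after the first noiseless layer of $\EC_{n-1}$ inside any level-$n$ gadget the syndromes are coarse-grained to $\Sigma_{n-1}$, so the remainder of the gadget can be analyzed by ``level reduction'': the $(n-1)$-frame behaves as an effective level-$0$ lattice, $\cG_{n-1}$ gadgets act as $\cG_0$ gadgets on it, and $\EC_{n-1}$ gadgets act trivially. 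This is exactly the move used in the proof of Lemma~\ref{lemma:nilpotent} for Tsirelson's automaton, and I would reuse it verbatim in the 2D setting.

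Concretely, for $\cI$ and $\cT^{h/v}$ at level $3$: write $\EC_3^{\otimes K}\circ\cG_3$ in terms of level-$2$ gadgets, and use the fact (analogue of the opening fact in the proof of Lemma~\ref{lemma:nilpotent}) that a fault located inside a nilpotent sub-gadget $\cG_2$ or $\EC_2$ which is followed by a noiseless $\EC_2$ layer is harmless once we know $\cG_2$, $\EC_2$ are nilpotent at level $2$. So it suffices to establish nilpotence of $\cI$, $\cT^{h/v}$, $\EC$ at level $2$, and for that, nilpotence at level $1$ of the harmless sub-gadgets together with a brute-force check (via \cite{code}) that a single level-$0$ fault inside an $\EC_1$ (equivalently $R_0$) produces damage confined to a single $1$-cell and that this damage remains $1$-cell under clean $\cI_1,\cT_1$ and is coarse-grained and cleaned by a subsequent clean $\EC_1$, then eliminated by the final clean $\EC_2$. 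The numerical input needed is precisely Remark~\ref{R0_search_remark} (a single link error in $R_0$ followed by clean $R_0$ yields a syndrome-free state) plus the verification that $\cT_1^{h/v}, \cM_1^{h/v}$ on a $\Sigma_1$-coarse-grained input with one fault or $1$-cell input damage produce damage bounded by the gadget's own support. Then Lemma~\ref{lemma:higher-nilpotence-TC} bootstraps level $3$ up to all higher levels for $\cI,\cT^{h/v},\EC$.

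The $\cM^{h/v}$ gadget is the main obstacle and is why the bound jumps to $6$. The reason, exactly as in Tsirelson's case (the explicit $111\,000\,000$ counterexample after Lemma~\ref{lemma:nilpotent}), is that $\cM_1$ is \emph{not} nilpotent: a fault in an early $\cT_0$ inside $\cM_1$ can, combined with the subsequent $\cM_0^{\otimes}$ layer, produce a logical error on the encoded block. So I would not try to prove $\cM$ nilpotent at low levels. Instead, the argument mirrors part (e) of Lemma~\ref{lemma:nilpotent}: in $\EC_2^{\otimes 3}\circ\cM_2$ decomposed into level-$1$ gadgets, a fault before the $\cM_1^{\otimes 3}$ layer produces $m=1$ damage (since $\cI_1,\cT_1$ are nilpotent and an $\EC_1$ fault produces $1$-cell damage), which stays $m=1$ under clean $\cI_1,\cT_1,\EC_1$; so the problematic case reduces to studying the $\cM_1^{\otimes 3}$ layer with input damage fitting in a $1$-cell or a single $\cM_0$ fault, whose output (checked by search) is bounded by an $\cM_1$-support, hence after a clean $\EC_1$ is coarse-grained and fits in a $2$-cell; this lets us level-reduce the problem of $\cM_2$ with a fault in its middle $\cM_1$ layer to $\cM_1$ with a fault in its middle $\cM_0$ layer — which \emph{is} cleaned up by the rest of $\cM_1$ followed by a clean $\EC_1$. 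Iterating this reduction once more, the level-$\ell$ version of ``fault before/in the middle $\cM_{\ell-1}$ layer'' reduces to ``fault in $\cM_{\ell-2}$'s middle layer followed by clean error correction'', and one needs enough levels of headroom — propagating through the nested $\cI,\cT,\EC$ which become clean and nilpotent only at level $3$ — for the $\cM$-induced damage to be absorbed; tracking this carefully gives $\ell = 6$ as a (loose) sufficient level, after which Lemma~\ref{lemma:higher-nilpotence-TC} extends nilpotence of $\cM^{h/v}$ to all levels above $6$. The delicate bookkeeping — showing that the non-linearity of $\cM$ on its interior $\Sigma_\ell$ vertices does not spoil the level-reduction argument, since at each stage the relevant input to the offending $\cM$ layer is either coarse-grained-plus-$1$-cell or a single low-level fault — is where the computer-aided search in \cite{code} does the real work, and I would present the search specification (input configurations enumerated, gadget circuit compiled from the substitution rules of Def.~\ref{def:FT-constr-TC-inner}, output damage support recorded) rather than the raw enumeration.
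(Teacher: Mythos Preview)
Your inductive scheme has a structural gap that the 1D proof did not face. The reduction ``level-3 nilpotence of $\cI,\cT,\EC$ $\leftarrow$ level-2 nilpotence of their sub-gadgets'' requires every level-2 sub-gadget in the decomposition to be nilpotent. For $\cI_3$ and $\cT_3$ the sub-gadgets are $\cI_2,\cT_2,\EC_2$; but $\EC_2$ itself contains $\cM_1$ sub-gadgets (since $R_0$ is built from $\cM_0$ gates), and $\cM_1$ is not nilpotent. Your check on $\cM_1$ (``damage bounded by the gadget's own support'') only bounds the output---it does not show that the remainder of $\EC_2$ followed by a trailing clean $\EC_2$ actually eliminates that damage. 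For $\EC_3$ the obstruction appears one level higher: $\EC_3 = R_0[\EC_2\circ\cG_2]\circ\EC_2$ contains $\cM_2$ sub-gadgets directly, and you (correctly) do not claim $\cM_2$ is nilpotent, so a fault landing in $\cM_2\subset\EC_3$ is simply not covered by your bootstrap. In 1D this loop closed because Lemma~\ref{lemma:nilpotent}(e) established $\cM_2$-nilpotence outright; in 2D the $\cM$ gadget has four nonlinear interior $\Sigma_n$-vertices (Prop.~\ref{prop:confinement-for-TC}), and no analogue of (e) is available below level 6.

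The paper's proof sidesteps this circularity by abandoning intermediate nilpotence claims entirely. It defines a damage-propagation map $\Gamma$: place a fault in a unit $\cG_1\circ\EC_1$, apply a trailing clean $\EC_1$, push forward. Iterating $\Gamma$ over \emph{all} fault addresses $[\cG^{(a_\ell)},\ldots,\cG^{(a_0)}]$ simultaneously yields damage sets $D_\cG^{(j)}$ for each gate type, and the search in App.~\ref{app:TC-nilpotence} finds these nonempty at $j=1,2$ but empty at $j=3$ under clean input. This directly gives $\cI,\cT,\EC$-nilpotence at level 3 in one stroke, with $\cM$-type sub-gadgets handled uniformly by the iteration rather than by a separate nilpotence claim. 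For $\cM$ itself the paper splits on whether the fault address contains an $\EC_q$ with $q\geq 4$: if so, linearity of $\EC_q$ on $\Sigma_q$ plus level-3 $\EC$-nilpotence finishes; if not, one level-reduces by $q\leq 3$ and runs a modified $\Gamma$ that explicitly carries input syndromes on $\cM$'s nonlinear vertices, which the search shows empties after three further iterations, yielding level $3+3=6$. Your $\cM$ sketch lands on the right number but does not isolate this address-based case split or the need to track nonlinear-vertex inputs through the recursion.
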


\begin{proof}
The proof is partially due to a simple computer-aided search that we designed, which we expect to be relatively general purpose.  We present the details of algorithms used for this search in App.~\ref{app:TC-nilpotence}. In general, we refer the reader to the App.~\ref{app:TC-nilpotence} for a more technical description of the approach; we will present a `big picture' discussion here.

First, we show the nilpotence of the $\EC$ gadget.  For this, we consider an $\EC_\ell$ gadget (with $\ell \geq 1$) containing a level-0 gadget failure isolated at level $\ell$, and assume that the input is coarse-grained at level $\ell$, i.e. non-trivial input syndromes lie in $\Sigma_\ell$.
The address of the fault can be written as:
\begin{equation} \label{eq:coord-1}
    [\EC_{\ell}, \cG_{\ell-1}^{(a_{\ell -1})},..., \cG_1^{(a_1)}, \cG_0^{(a_0)}]
\end{equation} 
where $\cG_i^{(a_i)}$ with $\ell \geq i \geq 1$ stands for either gate or EC gadgets, and we omit the spacetime coordinates of each gadget. Additionally, the superscript $(a_i)$ keeps track of the associated gadget type (namely, whether it is $\cI$, $\cT^{h/v}$, $\cM^{h/v}$ or $\EC$-type gadget). Due to linearity of the $\EC$ gadget on the corners of its support, we can assume that the input to $\EC_{\ell}$ is syndrome-free and $\mathbb{F}_2$ add the input syndromes to the corners of the output of the gadget at the end.

Assuming $\ell > 1$, we can express $\EC_\ell$ by a sequence of gadgets of the form  $\mathcal{G}_1^{(a_1)} \circ \EC_{1}^{\otimes K_{a_1}}$ (where $\mathcal{G}_1^{(a_1)}$ is a gate supported on $K_{a_1}$ 1-cells) with a final application of a layer of $\EC_1$ gadgets\footnote{A similar decomposition exists of $\EC_\ell$ into $\mathcal{G}_m^{(a_m)} \circ \EC_{m}^{\otimes K_{a_m}}$ and a final layer $\EC_m$, with $m < \ell$.}.
Suppose that the level-0 failure occurs during the operation of a given $\mathcal{G}^{({a_1})}_1\circ   \EC_{1}^{\otimes K_{a_1}}$, which we will denote as $\left ( \mathcal{G}_1^{(a_1)}  \circ \EC_{1}^{\otimes K_{a_1}} \right) [\emptyset,\varepsilon_0]$, where $\emptyset$ denotes a clean input and $\varepsilon_0$ corresponds to the level-0 gadget failure. This combination of gadgets will always be immediately followed by a noiseless layer of $\EC_1$ gadgets. 
Using a computer program, we then explicitly compute the action of the circuit $\EC_{1}^{\otimes K_{a_1}} \circ \left ( \mathcal{G}^{(a_1)}_1 \circ  \EC_{1}^{\otimes K_{a_1}} \right) [\emptyset,\varepsilon_0]$, with the last $\EC_{1}^{\otimes K_{a_1}}$ being noiseless, to determine the level-1 coarse-grained output in the spatial support of the $\mathcal{G}^{(a_1)}_1$, which we define as $\varepsilon_1$.

Next, we treat this coarse-grained damage with syndrome $\varepsilon_1$ as an effective transient error inside a level-2 gadget $\mathcal{G}^{(a_2)}_2 \circ  \EC_{2}^{\otimes K_{a_2}}$ at a spacetime location indicated in the address.  Because it is coarse-grained at level 1, we can perform level reduction by 1 level, which maps the problem onto studying a level-0 gadget failure $\varepsilon_0^{(1)} = \Phi (\varepsilon_1)$\footnote{The pushforward map was originally defined to act on states, but this action can also be thought of as inducing a map on syndromes of a state. Thus when we write $\Phi$ as acting on a pattern of syndromes $\sigma = \sigma|_{\Sigma_1}$, the result of this map is the syndromes of the coarse-grained output state $\Phi\ket{\psi}$, where the input state $\ket{\psi}$ has syndrome $\sigma$. } within an effective level-1 circuit $\mathcal{G}^{(a_2)}_1 \circ 
 \EC_{1}^{\otimes K_{a_2}} = \Phi(\mathcal{G}^{(a_2)}_2 \circ  \EC_{2}^{\otimes K_{a_2}})$.  We subsequently iterate this process for each $k$ and corresponding $\cG^{(a_k)}$ (which due to the level reduction is always analyzed as a level-0 gadget failure in a level-1 gadget) to understand how the possible outcomes of gadget failures change as one goes up a level at a time. Nilpotence is achieved if any input error $\varepsilon_0$ 
 eventually shrinks to nothing (i.e. $\varepsilon_0^{(\ell)} = \varnothing$) after one passes through a finite number of levels $\ell$. In our computer-aided search, we loop through all possible level-0 gadget faults leading to $\varepsilon_0$ as well as all possible addresses of the faults; let us now present a formalism for handling all of these possibilities.

We formalize this procedure by defining a map $\Gamma[\mathcal{G}_1, \varepsilon_0]$ which is a function of the gadget $\mathcal{G}_1$ and single failure $\varepsilon_0$ of one of its constituent level-0 gadgets. The map outputs the gadget failure $\varepsilon_0^{(1)} = \Phi(\varepsilon_1)$ whose action is determined by evolving $ \EC_{1}^{\otimes K} \circ \left (\mathcal{G}_1 \circ \EC_{1}^{\otimes K}\right ) [\varnothing, \varepsilon_0 ]$ and subsequently applying the pushforward map $\Phi$.  
Then, for all types of gadget $\mathcal{G}_0$, we define the sets $D_{\mathcal{G}_0}$ to be syndromes of all all possible damage outputs caused by the failing the gadget $\cG_0$ (which can be either $\cI_0$, $\cT_0^{h/v}$ or $\cM_0^{h/v}$). 

We then consider the address of a given failure (Eq.~\eqref{eq:coord-1}) and start with the gadget failure $\cG_0^{(0)}$ located within level-1 gadget  $\cG_1^{(a_1)} \circ {\EC}^{\otimes K_{a_1}}$. All possible ways to fail the given gadget $\cG_0^{(a_0)}$ lead to the following set of effective failures of the gadget containing it:
\begin{equation}
D_{\mathcal{G}_1^{(a_1)}} = \{\varepsilon: \exists \varepsilon_0 \in D_{\mathcal{G}_0^{(a_0)}}, \, \Gamma[\mathcal{G}_1, \varepsilon_0] = \varepsilon\}.
\end{equation}
This set collects all possibilities for the output damage syndromes due to any possible failure occurring in a specific gadget $\cG^{(a_0)}_0$ located within $\mathcal{G}_1^{(a_1)} \circ \EC_{1}^{\otimes K_{a_1}}$, where the gate types and their locations are specified by the fault address. 

We then iterate this procedure by replacing $D_{\mathcal{G}_1^{(a_s)}}$ with $ D_{\mathcal{G}_0^{(a_{s-1})}}$ (which we will henceforth denote as $D_{\mathcal{G}_1^{(a_s)}} \to D_{\mathcal{G}_0^{(a_{s-1})}}$) and performing level reduction by one level each time. Upon iterating this procedure, we find the smallest integer $\ell$ such that $D_{\cG_1^{(a_\ell)}} = \varnothing$ for all possible addresses of the fault.   The result of the computer-aided search is detailed in App.~\ref{app:TC-nilpotence}, which indicates that for $\ell = 3$, any level-0 gadget failure is always cleaned up. We thus can conclude that $\EC$ is nilpotent at level $\ell = 3$.

Next, let us determine the level of nilpotence for the gadgets $\mathcal{I}$, $\mathcal{T}^{h/v}$, and $\mathcal{M}^{h/v}$.  Note that if a gadget failure occurs inside $\mathcal{I}_s$ and $\mathcal{T}_s$, then because all of the points in $\Sigma_s$ are linear for both gadgets, we can assume a clean input and afterwards $\mathbb{F}_2$-add the syndromes obtained from the clean versions of the same gadgets acting on the true input.  Therefore, a completely analogous approach to the one used for $\EC_{\ell}$  can also be used to show nilpotence for both $\mathcal{T}_{\ell}^{h/v}$ and $\mathcal{I}_{\ell}$. We obtain that both these gate gadgets are nilpotent at $\ell = 3$.

Finally, we need to check the nilpotence of $\mathcal{M}^{h/v}$ for some level $r \geq \ell$.   
If the input syndromes are restricted to lie on the linear points of $\cM_{\ell}^{h/v}$ (which are the corners of the gadget), a computer-aided search reveals that this gadget is also nilpotent at level $\ell = 3$. 
However, there are 4 nonlinear syndrome locations on the coarse-grained lattice, for which we cannot use linearity properties to reduce the input to a syndrome-free one. Therefore, we cannot use the method described above for the purpose of checking the nilpotence of $\mathcal{M}^{h/v}$.

To establish the nilpotence of $\mathcal{M}^{h/v}$, we start with the combination $\mathcal{M}_{r}^{h/v} \circ \EC_r^{\otimes 3}$ with the assumption that $r > \ell=3$, which we can use to simplify the problem. This is justified as we will see shortly because the $\cM$ gate gadget is not nilpotent at level 3 when there is an arbitrary encoded input on the nonlinear points. 

Consider first the scenario when the level-0 fault occurred in the spacetime support of one of the preceding $ \EC_r$ gates. We note that this gadget is followed by a clean layer of $\EC_{r-1}$ gadgets that start the $\cM_r^{h/v}$ gate afterward. Because the input damage is coarse-grained at level $r$ by assumption, we can use linearity of the corner points of the $\EC_r$ gadget to move the action of the input damage to after the first $\EC_r$ layer has been applied. Then, the nilpotence of all gadgets inside $\EC_r$ at level $r-1 \geq \ell$ (including $\cM^{h/v}_{r-1}$ because the input has now been rendered syndrome-free) implies that the effect of the level-0 fault is cleaned up. 

What remains is to consider the case when the fault occurs during the operation of $\mathcal{M}_{r}^{h/v}$ gadget.  
 The address of such fault can be written as
\begin{equation}\label{eq:coord-3}
    [\mathcal{M}^{h/v}_{r}, \cG_{r-1}^{(a_{r -1})},...,\cG_{q+1}^{(a_{q+1})}, \cG_{q}^{(a_q)},\cG_{q-1}^{(a_{q -1})},...,  \cG_1^{(1)}, \cG_0^{(a_0)}]
\end{equation} 
If one of the gadgets $\cG_{q}^{(a_q)}$ in the address is $\EC_q$ and $q \geq 4$, then we can use nilpotence and linearity of $\EC_q$ on points in $\Sigma_q$, from which it immediately follows that the fault will be cleaned up\footnote{In fact, this reasoning also works for when $\cG_{q}^{(a_q)}$ is $\cI_q$ or $\cT_{q}^{h/v}$; however, our computer code treats all gate gadgets on the same footing and the EC gadget on a different footing (by expressing it through gate gadgets at lower levels), and thus, we keep our explanation consistent to the operation of the code.}. If there is no such $\EC$ gadget in the address, we use the following strategy. First, if there is an $\EC_q$ gadget present in the fault's address with $q\leq 3$, we consider the gadget $\cG_q$ that follows it in the circuit as well as the subsequent layer of $\EC_q$ gadgets that coarse-grain the output of $\cG_q$. Due to the coarse-graining, we can reduce the level by $q$, upon which the failure is that of a $\cG_0$ gadget located within level-reduced $\cM_{r - q}$. By assumption, all the gadgets in the level-reduced address must be gate gadgets only (and not EC). Thus, we reduced the problem to considering a level-0 failure of a gadget within $\cM_{r-q}$ that at each level is located within a gate (and not EC) gadget. We then determine the value of $r-q$ under which the nilpotence is achieved in this problem. Let us call this value $\alpha$. Since $q \leq 3$, we have that $r = 3+ \alpha$ lower bounds the level of nilpotence of the gadget $\cM$. 

To formalize this search, we define a new map $\Gamma[\mathcal{G}_1, \sigma, \varepsilon]$ which is a function of the gadget $\mathcal{G}_1$, the single level-0 gadget failure $\varepsilon_0$, and input damage with syndrome $\sigma \in \Sigma_1$. This map outputs the damage with syndrome $\varepsilon^{(1)}_0$ which is determined by performing a two-step procedure. First, we compute $\mathcal{G}_1[\sigma, \varepsilon]$ and apply $\Phi \circ \EC^K$ (where $K$ is the number of 1-cells in the support of $\cG_1$ and $\EC^K$ is noiseless) which yields $\varepsilon'$.  Then, we separately compute $\mathcal{G}_1[\sigma, \varnothing]$ and apply $\Phi \circ \EC^K$ which yields $\varepsilon''$. Finally, $\varepsilon^{(1)}_0$ is determined by $\mathbb{F}_2$-adding both outcomes: $\varepsilon^{(1)}_0 = \varepsilon' \oplus \varepsilon''$.  
We then consider the address of a given failure (Eq.~\eqref{eq:coord-3}) and analyze the failure of $\cG_0^{(a_0)}$ located within $\cG_1^{(a_1)}$. We define $D_{\mathcal{G}_0^{(a_0)}}$ to be all possible failures of the specific gadget $\cG_0^{(a_0)}$ (which can be either $\cI_0$, $\cT_0^{h/v}$ or $\cM_0^{h/v}$) and then proceed to define
\begin{equation}
D_{\mathcal{G}_1^{(a_1)}} = \{\varepsilon: \exists \varepsilon_0 \in D_{\mathcal{G}_0^{(a_0)}}, \, \exists \sigma \in \widetilde \Sigma_1[\mathcal{G}] :  \Gamma[\mathcal{G}_1, \sigma, \varepsilon_0] = \varepsilon\}.
\end{equation}
where $\widetilde \Sigma_1[\mathcal{G}]$ are the set of non-linear points of $\mathcal{G}$. We then repeatedly iterate this procedure by replacing $D_{\mathcal{G}_1^{(a_k)}} \rightarrow D_{\mathcal{G}_0^{(a_{k-1})}}$ and performing level reduction by one level each time. Upon iterating this procedure, we want to find the smallest integer $r$ such that for every possible address for the level-0 gadget failure in Eq.~\eqref{eq:coord-3},  and every possible coarse-grained input syndrome $\sigma^{(r-q)}$ of the overall gadget, we obtain $D_{\cG_1^{(a_{r-q})}} = \varnothing$.    Once again, this can be computed explicitly with a computer program; the result in App.~\ref{app:TC-nilpotence}, indicates that for $r-q \geq 3$, such a level-0 gadget failure is always cleaned up.  Thus, so long as $r \geq 3+3 = 6$, any level-0 gadget failure will be cleaned up, therefore proving nilpotence of $\mathcal{M}$ at level $6$.

\end{proof}

\begin{lemma} [Effective nilpotence level of Tsirelson's automaton]
    Under the modified definition of nilpotence in Def.~\ref{def:nilpotenceTC}, the gadgets of Tsirelson's automaton obtained using inner fault tolerant simulation prescription are nilpotent at level $\ell_T = 3$.
\end{lemma}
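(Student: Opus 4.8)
The plan is to mirror the computer-aided nilpotence argument developed for the toric code (Lemma~\ref{lemma:nilpotentnumerics-TC}) but specialized to the one-dimensional Tsirelson gadgets $\cI$, $\cT$, $\cM$, $\EC = Z_0[\cG_0]$, and to observe that everything works more cleanly here. The key structural input is Prop.~\ref{prop:linearity_tsirelson}: $\EC_n$, $\cI_n$, $\cT_n$ are linear on all level-$n$ syndrome locations, and $\cM_n$ is linear on its two endpoint level-$n$ vertices but has two non-linear interior vertices. First I would restate the modified nilpotence condition (Def.~\ref{def:nilpotenceTC}, in its simplified form from the ``Simplified nilpotence conditions'' lemma, with layers replaced by ordinary gadgets in the 1D setting): a single isolated level-0 fault inside $\cG_\ell$ with a level-$\ell$ coarse-grained input, followed by a noiseless $\EC_\ell^{\otimes K}$, produces a state whose damage is cleaned. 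Then I would follow the same level-by-level pushforward bookkeeping: define the maps $\Gamma[\cG_1,\varepsilon_0]$ (and $\Gamma[\cM_1,\sigma,\varepsilon_0]$ for the match gate) exactly as in Lemma~\ref{lemma:nilpotentnumerics-TC}, using the address formalism of Def.~\ref{def:error_location}, and propagate the sets $D_{\cG_1^{(a_s)}}$ of possible coarse-grained damage outputs up the hierarchy.

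The bulk of the argument is then a finite check. For $\cI$, $\cT$, and $\EC$: by linearity on all of $\Sigma_\ell$, one may assume a clean (syndrome-free) input and $\mathbb F_2$-add the true input's clean image at the end, so the problem reduces to tracking the image of a single level-0 fault. Actually much of this is already done: Lemma~\ref{lemma:nilpotent} established that $\cI$ and $\cT$ are nilpotent at level $1$ (and at level $2$), and the proof of Prop.~\ref{lemma:gate-ec-prime-TS} effectively showed the $\EC = Z_0$ gadget's analogue of nilpotence up to level $3$ inside the $(\EC\ A)_{3,1}$ argument. The remaining work is to re-express those facts in the $\Gamma$-iteration language and confirm that $\EC$, $\cI$, $\cT$ all stabilize to $D = \varnothing$ by level $\ell_T = 3$. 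For $\cM$, exactly as in the toric code case, linearity is unavailable at the two non-linear interior vertices, so one runs the two-argument map $\Gamma[\cM_1,\sigma,\varepsilon_0]$ over all non-linear input syndromes $\sigma \in \widetilde\Sigma_1[\cM]$ and all fault addresses; the explicit example in the text ($EC_1^{\otimes 3}\cM_1$ on $111\,000\,000$ with a fault in the first $\cT_0$) shows $\cM$ is \emph{not} nilpotent at level $1$, and the reasoning just after Lemma~\ref{lemma:nilpotent} shows $\cM_2$ \emph{is} nilpotent because the surrounding $\cT_1$ gadgets are nilpotent. So for $\cM$ the relevant computation is: if the fault's address contains an $\EC_q$ at any level, use nilpotence of $\EC_q$ to reduce; otherwise all address entries are gate gadgets, and the $\cM_1$-layer produces damage fitting in (at most) an $\cM_1$-support, which the following clean $\EC_1$ coarse-grains into a single $2$-cell, letting one level-reduce and land at an $\cM_1$ with a single level-0 fault — cleaned after a noiseless $\EC$. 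Tracing through the bound gives nilpotence of $\cM$ at level $\le 3$ as well, and since by Lemma~\ref{lemma:higher-nilpotence-TC}'s analogue nilpotence at level $\ell$ implies it at all $\ell' > \ell$, all four gadget types are nilpotent at $\ell_T = 3$.

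Concretely the proof would read: \emph{By Lemma~\ref{lemma:nilpotent}(a,b), $\cI$ and $\cT$ are nilpotent at levels $1$ and $2$; applying Lemma~\ref{lemma:higher-nilpotence-TC} (which holds verbatim for the 1D construction since its proof only uses that higher-level gadgets decompose into level-$\ell$ gadgets with intervening clean $\EC_\ell$ layers) they are nilpotent at level $3$. For $\EC = Z_0$, the argument in the proof of $(\EC\ A)_{3,1}$ within Prop.~\ref{lemma:gate-ec-prime-TS} shows that a single isolated level-0 fault inside $\EC_3$ with level-$3$ coarse-grained input is cleaned by the trailing noiseless $\EC_3$; hence $\EC$ is nilpotent at level $3$. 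For $\cM$: $\cM_2$ is nilpotent by the remark following Lemma~\ref{lemma:nilpotent} (nilpotence of $\cT_1$ inside $\cM_2$), and again Lemma~\ref{lemma:higher-nilpotence-TC} upgrades this to all levels $\ge 2$, in particular level $3$. Therefore every gate and $\EC$ gadget of the (inner-simulation) Tsirelson automaton is nilpotent at $\ell_T = 3$.}

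The main obstacle I anticipate is purely bookkeeping rather than conceptual: the modified nilpotence definition of Def.~\ref{def:nilpotenceTC} carries a \emph{leading} clean $\EC_\ell$ layer that the earlier Def.~\ref{def:tsirelson_nilpotence} did not, and one must check this extra leading layer is harmless — but since it acts on a level-$\ell$ coarse-grained input it acts as the identity there (by the idempotence $\EC_\ell[\sigma|_{\Sigma_\ell}] = \sigma|_{\Sigma_\ell}$ implicit in Lemma~\ref{lemma:EC-coarse-grains} and the linearity of $\EC_\ell$), so it genuinely does nothing and the two definitions coincide. A secondary subtlety is making sure the $\cM$ argument's level count (reduce by $q\le 3$ using an $\EC_q$, then add the base nilpotence level) does not push the answer above $3$; here the 1D case is strictly easier than the toric code because Lemma~\ref{lemma:nilpotent}(e) already pins $\cM$ at level $2$ directly, so the final level $\ell_T = 3$ is comfortably attained (and is in fact not tight — $2$ would do for $\cM$, $\cI$, $\cT$, but $3$ is chosen uniformly for convenience, matching $k=3$ used in Cor.~\ref{cor:Tsirelson_pf2}).
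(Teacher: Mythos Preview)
Your approach is correct and essentially matches the paper's one-line proof, which simply cites the $\EC$ and Gate conditions from Prop.~\ref{lemma:gate-ec-prime-TS} at $k=3$, $m=1$; you unpack this by invoking Lemma~\ref{lemma:nilpotent} for the gate gadgets and the arguments inside Prop.~\ref{lemma:gate-ec-prime-TS} for $\EC$, together with the correct observation that the leading $\EC_\ell$ in Def.~\ref{def:nilpotenceTC} is vacuous on level-$\ell$ coarse-grained input. Two minor points of bookkeeping: for $\EC$ nilpotence the natural reference is the proof of $(\EC\ B)$ (coarse-grained input with one fault, absorbed by the decoder, i.e.\ by a trailing clean $\EC_k$) rather than $(\EC\ A)$ (arbitrary input, output only guaranteed to pass a $1$-cell filter); and your appeal to Lemma~\ref{lemma:higher-nilpotence-TC} to lift gate nilpotence from level $2$ to level $3$ tacitly requires $\EC$ nilpotence at level $2$ as well --- this does hold (it is contained in the $k=2$ portion of the $(\EC\ B)$ proof), but you did not state it.
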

\begin{proof}
Follows directly from using $\EC$ and Gate conditions for Tsirelson's automaton shown in Prop.~\ref{lemma:gate-ec-prime-TS} at $k = \ell_T = 3$ and $m = 1$. 
\end{proof}

The following lemma describes how syndromes, if they are restricted to the boundaries of level-$k$ gadgets are treated by the automaton, and will be used in our proofs of the modified EC and gate conditions. 
\begin{lemma} \label{lemma:Tsirelson-at-boundaries}
    Define an additional modification to the Tsirelson's automaton (constructed using inner and outer fault tolerant simulation prescription), which we will refer to as the boundary Tsirelson automaton (which is the 1D automaton occurring along the boundaries of level-$k$ gadgets):
    \begin{itemize}
        \item $\EC'_{n; 1,2}$ to be the circuit $\cL(\EC_n;\lambda_n)$ restricted to any vertical (1) or horizontal (2) $n$-link boundary $\lambda_n$;
        \item $\cI'_{n; \cT}$ to be the circuit $\cL(\cT_n^h;\lambda_n^v)$ restricted to any $n$-link vertical boundary $\lambda_n^v$, and
        \item  $\cI'_{n; \cM}$ to be the circuit $\cL(\cM_n^h;\lambda_n^v)$ restricted to any $n$-link vertical boundary $\lambda_n^v$;
        \item $\cT'_n$ to be the circuit $\cL(\cT_n^h;\lambda_n^h)$ restricted to any $n$-link horizontal boundary $\lambda_n^h$ in the support of any of the $\cT_n^h$ gadgets, and
        \item $\cM'_n$ to be the circuit $\cL(\cM_n^h;\lambda_n^h)$ restricted to any $n$-link horizontal boundary $\lambda_n^h$  in the support of any of the $\cM_n^h$ gadgets.
    \end{itemize}
    The gadgets of this automaton have the following properties. Restricting the supports of the noise operators to $\lambda_n$ only, $\EC'_{n; 1}$ and $\EC'_{n; 2}$ have the same coarse-graining property at level $n$ as $\EC_n$, $EC$ A,B properties at $k = 3$ and $m = 1$, as well as nilpotence at level $\ell_T = 3$. 
    
    The gate gadgets $\cI'_{n; \cT}$ and $\cI'_{n; \cM}$ have the same coarse-graining action as $\cI_n$; $\cT'_n$ and $\cM'_n$ have the same coarse-graining action as $\cT_n$ and $\cM_n$.  Note that $\cI_n$, $\cT_n$, and $\cM_n$ denote Tsirelson's 1D gadgets and not the toric code automaton gadgets.  Additionally, they satisfy Gate A,B properties at $k=3$ and $m = 1$ (see Prop.~\ref{lemma:gate-ec-prime-TS}), and nilpotence at level $\ell_T = 3$.
\end{lemma}
\begin{proof}
Considering the circuits in the definition, we see that they are identical to those of our modification of Tsirelson's automaton with an addition of idle steps and repeated application of some of the $\cM$ gadgets. We note that some of these idle steps appear due to the restriction of the gadgets $\cM^{h/v}$  and $\cT^{h/v}$ of the toric code automaton to the boundaries that are perpendicular to gadget's orientation; upon restricting to the boundary, the action of the gadget is an idle. In addition, $\EC'$ gadgets are doubled in comparison to the $\EC$ used in the modification of Tsirelson's automaton. 

The proofs of the aforementioned properties of the gadgets of Tsirelson's automaton do not change upon an addition of extra idle steps and repetition of $\cM$ gadgets.
\end{proof}

In what follows, recall that all gadgets of the toric code automaton are nilpotent at level $\ell = 6$, using the results of the previous Lemmas.

 \begin{proposition} [Level of fault tolerant inner simulation] \label{proposition:gate-ec-prime-TC}
Consider the fault tolerant inner and outer simulations introduced in Subsecs.~\ref{subsec:inner-simulation} and \ref{subsec:outer-simulation}. 
Then, there exists a level $k$ of inner simulation and an integer $0<m<k$ such that: 
\begin{enumerate}
\item The gate gadgets $\cI_k$, $\cT_k^{h/v}$ and $\cM_k^{h/v}$ satisfy the $(Gate$ A$)_{k,m}$, and $(Gate$ B$)_{k,m}$  properties, and
\item  The $\widetilde{\EC} = \EC_{k}$  gadget satisfies the $({\EC}$ A$)_{k,m}$  and $({\EC}$ B$)_{k,m}$  properties 
\end{enumerate}
defined in Def.~\ref{def:TC_gateECprime}. In particular, setting  $m=\ell + 2 \ell_T + 5 = 17$ ($\ell$ denotes the level of nilpotence of the toric code gadgets determined in Lemma~\ref{lemma:nilpotentnumerics-TC} and $\ell_T$ denotes the level of nilpotence of the gadgets of the modified Tsirelson automaton determined in Sec~\ref{sec:tsirelson-proof-2}) and $k = 2m=34$ satisfies the conditions above.
 \end{proposition}

We will split the proof into separate parts, one for each property.  Our proof will share a lot of similarities with the proof of Lemma~\ref{lemma:gate-ec-prime-TS} but will be more complicated.  Instead of working with the actual numerical values for $m$, we will instead use $m =\ell + 2 \ell_T + 5$, and $k =  2m$, assuming that $\ell$ is the level at which all the gate and EC gadgets of the toric code are nilpotent and $\ell_T$ is the nilpotence level for the Tsirelson's automaton gadgets. The choice of $k  = 2m$ will be made to ensure that failures in neighboring $\widetilde{\EC}$ gadgets can also be properly cleaned up, as we will see in the proof. Keeping $k$, $\ell$, and $\ell_T$ as variables rather than specifying their numerical values will keep the proof applicable in case a tighter upper bound on $\ell$ is found. Otherwise, one can set $\ell = 6$, $\ell_T = 3$ in accordance with Lemma~\ref{lemma:nilpotentnumerics-TC} and Sec.~\ref{sec:tsirelson-proof-2} to obtain $m = 17$ and $k = 34$.
\begin{proof}[Proof ($\EC$ B)]

We begin by showing $\EC$ B, i.e. that an ideal decoder can ``absorb'' a layer of $\widetilde{\EC}$ gadgets if input damage and noise in the layer satisfy the conditions specified in the definition. We note that nowhere in the proof of $\EC$ B will doubling of the operation of the $\EC$ gadget (see Remark~\ref{remark:doubling}) be needed; therefore, this property would still hold if we assumed the ``undoubled'' definition for the gadget. This property will be used to formulate a corollary appearing after this proof.

We split the action of the ideal decoder into  $\Phi^k \circ \cL(\widetilde{\EC})$, and denote the noisy layer of error correcting gadgets in question as $\cL_\varepsilon(\widetilde{\EC})$. Assuming that the input state $\ket{\psi_{\mathrm{in}}}$ and the layer of noisy EC gadgets satisfy the conditions of the ($\EC$ B)$_{k,m}$ property, we will show that
\begin{equation}
    \cL(\widetilde{\EC})\circ \cL_\varepsilon(\widetilde{\EC})\ket{\psi_{\mathrm{in}}} =  \cL(\widetilde{\EC})\circ \cL(\widetilde{\EC})\ket{\psi_{\mathrm{in}}} = \cL(\widetilde{\EC})\ket{\psi_{\mathrm{in}}}. 
\end{equation}
Here, the first equality directly states that the noise is inconsequential, and the second equality simply uses the fact that the output of the first layer of noiseless $\widetilde{\EC}$ gadgets already coarse-grains the input at level $k$, to which a second repetition of this layer would do nothing.

Using the linearity of vertices in $\Sigma_k$ with respect to the $\widetilde{\EC}$ gadget (Prop.~\ref{prop:confinement-for-TC}), we can formally remove the syndromes in $\Sigma_k$, and later $\mathbb{F}_2$ add them to the output.  Thus we will be only concerned with the case of a syndrome-free configuration, with no non-trivial syndromes in $\Sigma_k$ 

Within a layer of $\widetilde{EC}$ gadgets, the filters in the definition of the $\EC$ B condition ensure that each gadget contains either at most one level-0 gadget failure or its input contains damage which fits in an $m$-cell. First consider the case when there is one  level-0 gadget failure inside some gadget $\widetilde{EC}^{(i)}$ isolated at level $k$ and the input is clean in the spatial support of the isolation region, i.e. $s_i = 1$, $m_i = 0$, and $s_j= 0$, $m_j = 0$ for $j$ neighboring $i$. By the confinement property (Fact~\ref{fact:4}), this summarizes the situation of a configuration of faults in $\cL(\widetilde{\EC})$ (acting on a syndrome-free input) which are isolated at level $k$. We will then show what happens when there are faults in neighboring $\widetilde{\EC}$ gadgets (which violates the isolation assumption) and also address the effect of the input damage. Together, this analysis will prove ($\EC$ B)$_{k,m}$.

We can schematically write:
\begin{equation}
\begin{split}
    \widetilde{\EC} \equiv \EC_k = (\EC_{k-1} )^{\otimes 9} &\circ [\cM_{k-1}^v,... \cI_{k-1}] \circ (\EC_{k-1} )^{\otimes 9} \circ [\cT_{k-1}^h,... \cI_{k-1}]\circ (\EC_{k-1} )^{\otimes 9} 
    \\
    &\circ... \circ (\EC_{k-1} )^{\otimes 9} \circ [\cM_{k-1}^v,... \cI_{k-1}] \circ (\EC_{k-1} )^{\otimes 9}
\end{split}
\end{equation}
where $[\cG_{k-1}^{h/v},..., \cI_{k-1}] $ denotes spatially stacked layers of either $\cM$ or $\cT$ gate gadgets (along with $\cI$ gadgets) which depend on the particular timestep (see \eqref{R0-definition} and \eqref{R0-actual-definition} for the precise sequence of gates). This circuit of gadgets is then followed by a noiseless layer of $\widetilde{\EC}$, whose application first starts with a noiseless layer of $\EC_{k-1}$. 
Considering a fault that is isolated at level $k$ and using the assumption that $k > \ell$,  by nilpotence and Lemma~\ref{lemma:higher-nilpotence-TC} (i.e. nilpotence at level $\ell$ implies nilpotence at all higher levels), it follows that the effect of such a fault is completely removed by the end of the noiseless layer of $\EC_{k-1}^{\otimes 9}$ gadgets at the latest.

Now we generalize by considering the case when there are level-0 gadget failures in neighboring $\widetilde{EC}$ gadgets, with the input state still assumed to be syndrome-free. Later, we will argue how the analysis of input damage can be reduced to this case. Suppose a level-0 gadget failure occurred in each of four interacting $\widetilde{\EC}$ gadgets meeting at a corner as shown in Fig.~\ref{fig:communication-TC}. The four failures are schematically labeled $f_{1}, f_2, f_3, f_4$ and they need not occur during the same time step.  We will now use the confinement properties (Prop.~\ref{prop:confinement-for-TC} and Facts~\ref{fact:1} and ~\ref{fact:4}) to understand the nature of the interactions between neighboring gadgets in this case. The gadget failure within a given $\widetilde{\EC}$ can affect the state restricted to the support of this $\widetilde{\EC}$ and its boundary which is a set of four $k$-links.  However, because of confinement (Fact~\ref{fact:1}), the syndromes at any of the $k$-links shared by neighboring $\widetilde{\EC}$ gadgets cannot affect syndromes in the interior of these gadgets.  This means that when faults occur in neighboring $\widetilde{\EC}$ gadgets, the output syndrome configuration can only differ from the syndrome configuration in the isolated case at the boundary between the gadgets.  Furthermore, due to Fact~\ref{fact:4}, the syndromes on a boundary $k$-link can be determined only from the damage and noise in two gadgets it is a boundary of.

Consider the effect of failure $f_1$ within the gadget $\widetilde{\EC}^{(1)}$. We can write the $\widetilde{\EC}$ gadgets, along with the noiseless $\widetilde{\EC}$ gadgets that appear in the following ideal decoder, in terms of gadgets of a lower level $\nil$. Let us assume that the fault $f_1$ occurred within some  composite gadget $\cG^{(1)}_\nil \circ \EC_\ell^{\otimes K}$  in $\widetilde{\EC}^{(1)}$. If the fault $f_1$ is isolated at level $\ell$, its effect would be cleaned up by a similar argument to that above, which requires a number of steps $t_1\leq 3 t_{\nil}$ (where $t_{\nil}$ is $\max (   t(\EC_{\ell}), t(\cG_\ell) )$, and $t(\cG_\ell)$ denotes the maximum number of timesteps in any gate gadget $\cG_\ell$ at level $\ell$\footnote{The factor of 3 in $t_1\leq 3 t_{\nil}$ comes from the fact that an error in a level-$\ell$ gadget, by nilpotence at level $\ell$, can only be cleaned up after acting with a clean layer of $\EC_{\ell}$.  Thus, a loose upper bound on the longest amount of time an error will be cleaned up in is the depth of $\EC^{\otimes K} \circ \mathcal{G} \circ \EC^{\otimes K}$, which is $3 t_{\nil}$.}, and an amount of space that is at most the spatial support of $\cG^{(1)}_\nil \circ \EC_\ell^{\otimes K}$.  Call this spatial region $\Lambda_1$. We note that $\Lambda_1 \cap \partial \, \text{supp}_v(\widetilde{\EC}^{(1)})$ is at most a segment of a $k$-link of length $3^{\nil+1}$. This comes from the fact that the longest side length of a level-$\nil$ gadget is $ 3^{\nil+1}$, coming from $\mathcal{M}^\ell$ gadget.
By confinement (Fact \ref{fact:0}) and nilpotence at level $\nil$, nontrivial syndromes away from the boundary $\text{supp}_v(\widetilde{\EC}^{(1)}) \setminus \partial \, \text{supp}_v(\widetilde{\EC}^{(1)})$ will be cleaned up in time $t_1$. However, there still may be syndromes remaining on the boundary after this time. Thus, the effect of $f_1$ on the boundary can be summarized as creating additional syndromes in $\Lambda_1 \cap \partial (\widetilde{\EC}^{(1)})$ for a duration of time $t_1$.

From confinement Fact~\ref{fact:4}, the syndromes on each $k$-link depend only on the syndromes and gadget operation on this link and in the bulks of the gadgets sharing this link (without the rest of their boundaries). Therefore, each $k$-link can be considered independently and we focus on the $k$-link between gadgets $\widetilde{\EC}^{(1)}$ and $\widetilde{\EC}^{(4)}$, namely $\lambda_{14} = \text{supp} (\widetilde{\EC}^{(1)}) \cap \text{supp}(\widetilde{\EC}^{(4)})$ in Fig.~\ref{fig:communication-TC}. Faults $f_1$ and $f_4$ can both affect the syndrome configuration on the boundary. The effect of each of these faults deposits syndromes in the regions $\Lambda_1 \cap \lambda_{14}$ and  $\Lambda_4 \cap \lambda_{14}$ within temporal intervals of length $t_1$ each (where $\Lambda_4$ is defined analogously to $\Lambda_1$ except for $\widetilde{\EC}^{(4)}$).  For $\Lambda_1 \cap \lambda_{14}$ and $\Lambda_4\cap \lambda_{14}$ to ``interact'' with one another, they cannot be spatially further than a distance $3 \times 3^\nil$ from each other\footnote{Here, `distance' is defined as the minimum distance between a point in $\Lambda_1 \cap \lambda_{14}$ and a point in $\Lambda_4\cap \lambda_{14}$.}, and the temporal duration of their effects must overlap.  
This is because if the damaged regions are further apart than this, the faults that caused them must have been isolated at level $\ell$ and thus would then have been cleaned up by nilpotence.  Under the operation of the two neighboring level-$k$ gadgets $\widetilde{\EC}^{(1)}$ and $\widetilde{\EC}^{(4)}$, call $U$ the set of spacetime events corresponding to when nontrivial syndromes are added to $\lambda_{14}$ due to the fault as well as the action of gadgets oriented perpendicularly to $\lambda_{14}$.  
By the arguments above, $U$ fits in a spacetime box of dimension at most $(6 t_\nil)_t \times (9 \times 3^\nil)_x$. 
Additionally, while these syndromes are added to $\lambda_{14}$, operations along $\lambda_{14}$ can additionally modify the syndrome configuration, and we must quantify how much the spacetime support containing nontrivial syndromes will additionally increase by.

\begin{figure}[!thbp]
    \centering
    \includegraphics[width=0.65\textwidth]{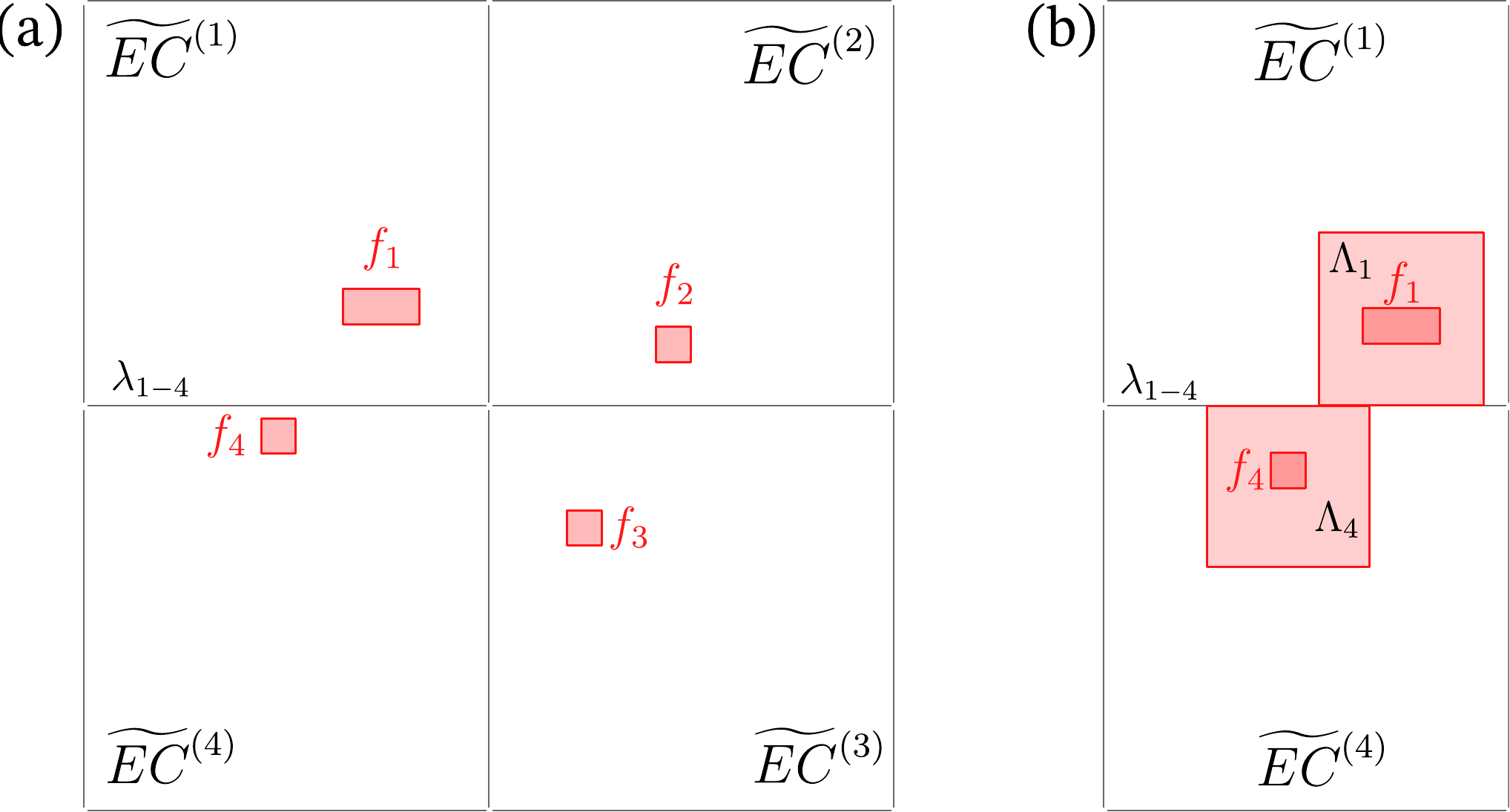}
    \caption{(a) Four neighboring $\widetilde{\EC}$ gadgets and level-0 gadget failures $f_{1-4}$ in their support. (b) Schematic showing the effect of faults $f_1$ and $f_4$ on the link $\lambda_{14}$ between two $\widetilde{\EC}$ gadgets. }
    \label{fig:communication-TC}
\end{figure}
We first observe that the action of the $\widetilde{\EC}$ layer restricted to the $k$-link $\lambda_{14} $ is equivalent to that of the ``boundary'' Tsirelson automaton of Lemma~\ref{lemma:Tsirelson-at-boundaries} at level $k$,  as discussed in Lemma~\ref{lemma:Tsirelson-at-boundaries}. An important modification are the additional ``idle'' steps when no feedback operations occur along the boundary but new syndromes can still be added to the boundary due to gadgets in the bulk.  Call $U$ the set of spacetime events where syndromes are added to the $\lambda_{14}$ boundary due to gadgets supported in the bulk.   After the process of adding syndromes from these gadgets onto the boundary has ended (which, as we discussed above, takes at most $6 t_\ell$ timesteps), we consider the total operator acting on the state and further evolve it under the 1D automaton to determine how much additional spacetime volume is needed to eliminate it. If both faults $f_1$ and $f_4$ occurred in gadgets whose support does not contain the corners of the support of $\widetilde{\EC}$, then the total number of added syndromes must be even.  This follows from the fact that each fault only creates an even number of syndromes, each gadget action preserves the parity of the number of syndromes, and after time $6 t_{\ell}$, syndromes have been removed from $\text{supp}(\widetilde{\EC}^{(1)}) \setminus \partial \, \text{supp}(\widetilde{\EC}^{(1)})$ and $\text{supp}(\widetilde{\EC}^{(4)})\setminus \partial \, \text{supp}(\widetilde{\EC}^{(4)})$.  Else, the number of syndromes on each $k$-link can be odd. In both of these cases, it is possible for us to ``clean'' the resulting damage (via multiplication by stabilizers) so that its support is contained entirely on boundary links.  Call $\cD_{1-4}$ the portion of the damage supported on $\lambda_{14}$.  

Now, we determine how large of a spatial box the damage $\cD_{1-4}$ can fit in. Since $U$ must fit in a spacetime box of the size $(6 t_\nil)_t \times (9 \times 3^\nil)_x$, we will show that the operation of gadgets along the link will spread the $\cD_{1-4}$ operator (we will continue to refer to it as the ``damage operator'' throughout its evolution) to fit in the spatial support of two neighboring level-$(\nil +4)$ gadgets. It will then be coarse-grained to level-$(\nil+4)$ by the soonest layer of level-$(\nil+4)$ $\EC$ gadgets. To show this, let us utilize a simple ``light-cone'' argument. The time $6 t_\ell$ fits at most 6 gate gadgets (i.e. not including EC gadgets). A layer of level-$\ell$ gate gadgets with a layer of preceding level-$\ell$ $\EC$ gadgets spread the damage operator by at most $3 \times 3^\ell$. Therefore, during the time $6 t_\ell$, the damage operator can at most spread over a spatial extent of $2 \times 6 \times 3 \times 3^\ell$. Therefore, the total resulting damage operator will be contained in an interval of size at most $9 \times 3^\ell + 36 \times 3^\ell \leq 3^{\ell + 4}$. Then, we assume that this operator fits in at most a pair of neighboring level-$(\ell+4)$ gate gadgets and is coarse-grained to level-$(\ell+4)$ by the subsequent layer of $\EC_{\ell + 4}$.
We then use level reduction $\nil + 4 \rightarrow 0$ and $k \rightarrow k - (\nil + 4)$, thus resulting in at most two neighboring level-0 gadget failures inside a level $k - (\nil + 4)$ Tsirelson automaton. 

Next, find the smallest value of integer $q$ for which both of the level-0 gadget failures reside in the support of gadget $G_q$ (which could be gate or $\EC$). The value of $q$ is clearly in the range $0 \leq q \leq k - (\nil + 4)$. We split the remaining analysis into two cases. 
First, suppose that $q \geq \nil_T+1$. Since $q$ is the smallest possible level by definition, the faults must belong to supports of different level-$\nil_T$ gadgets. Using nilpotence (this time for the modified Tsirelson automaton, which has nilpotence level $\nil_T$ for all gadgets), we see that the faults are cleaned up separately.  
Next, suppose that $q \leq \nil_T$.  Then we assume that the output configuration of gadget $G_q$ fails arbitrarily, and is subsequently coarse-grained at level $q$.  We perform level reduction by $q$ levels, in which case the fault becomes a single level-0 gadget failure within a level $k - (\nil + 4) - q \geq k - (\nil+ \ell_T + 4)$ gadget.  By nilpotence, this fault can be cleaned up if $k - (\nil+ \ell_T + 4) \geq \ell_T$. Thus, we require $k \geq \ell + 2 \ell_T + 4$ for $\EC$ B to hold (which is true for our choice $m =\ell + 2 \ell_T + 5$, and $k =  2m$).

So far we have shown that $\EC$ B holds in the case when there is strictly no input syndromes. 
Let us now consider the case when the input syndromes are present (subject to the damage passing through an $m$-cell filter for each gadget) but there are no spacetime faults in the (entire) layer. Expressing $\widetilde{\EC}$ through level-$m$ gadgets, we see that the first layer of $\EC_m$ will coarse-grain the damage operator to the corners of a single $m$-cell. Performing level reduction by $m$ levels, each input $m$-cell with non-trivial damage turns into 0-cell input damage within an effective level-$(k - m)$ $\EC$ gadget. If this input damage (which can be viewed as a failure of a level-0 gadget) is isolated at level $\ell$ after the level reduction, from nilpotence, it is cleaned up (assuming $k - m \geq \ell$). If it is not isolated, we use the same argument as in the previous paragraphs. We start with two level-0 gadget faults within a layer of effective level-$(k-m)$ $\EC$ gadgets. In order for a pair of non-isolated faults to be cleaned up, the total level of the gadgets $k-m$ has to be no less than $\ell + 2 \ell_T + 4$ (which is true if we choose $m =\ell + 2 \ell_T + 5$, and $k =  2m$).

Finally, we need to consider the case when there is a combination of faults and input damage, i.e. some $\widetilde{\EC}$ gadgets in $\cL_\varepsilon(\widetilde{\EC})$ are noiseless but there is input damage (i.e. $s_i= 0$ and $m_i  = m$), while others experience no input damage but instead have faults in their spacetime support (i.e. $s_i = 1$, $m_i = 0$).
The analysis can be simplified to analyzing the interaction between faults and input damage for each pair of neighboring gadgets. We have already considered interactions between pairs of faults and between pairs of gadgets with input damage. Thus, what remains is to consider the joint effect of input damage that is not sufficiently isolated from a level-0 gadget failure.
Reducing the level by $m$, the input damage that passed an $m$-cell filter becomes equivalent to failing a level-0 gadget in the reduced-level $\EC_{k-m}$ gadget. In addition, the level-0 gadget failure can at worst be a level-0 gadget failure in the reduced circuit after the soonest layer $\cL(\EC_{k-m})$. This reduces the problem to two level-0 gadget faults in an effective level-$(k-m)$ layer of $\EC$ gadgets. According to the earlier arguments, $k \geq m +  \ell + 2 \ell_T + 4 $ is sufficient for the effect of these two faults will be cleaned up. 

This completes the proof of the $\EC$ B property.
\end{proof}

Before proceeding with proving the rest of the parts in Prop.~\ref{proposition:gate-ec-prime-TC}, let us show a useful corollary of the $\EC$ B property:
\begin{corollary} \label{corollary:remove-input-noise}
    Consider a state $\cO_m \cO^{(k)} \ket{\psi}$, where $\ket{\psi}$ is a logical state, the operator $\cO_m \cO^{(k)} $ passes an $m$-cell filter on all $k$-cells, and the operator $\cO^{(k)}$ has syndromes coarse-grained at level $k$. A noise-free layer $\cL(\widetilde{\EC}^{(1/2)})$ acting on this state yields $\cO^{(k)} \ket{\psi}$. 
\end{corollary}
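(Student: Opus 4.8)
The plan is to decompose the input state $\cO_m\cO^{(k)}\ket\psi$ into its coarse-grained part and its $m$-cell damage, treat each using the linearity of $\widetilde{\EC}^{(1/2)}$ on the level-$k$ frame, and then lift the resulting syndrome identity to the claimed operator identity by a locality argument. Set $\sigma^{(k)} = \mathrm{syn}(\cO^{(k)})$ (supported on $\Sigma_k$ by hypothesis) and $\sigma_m = \mathrm{syn}(\cO_m)$.

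First I would note that, by Prop.~\ref{prop:confinement-for-TC}(2), every vertex of $\Sigma_k$ is a linear vertex of $\widetilde{\EC}$, and that the same proof applies to its first half $\widetilde{\EC}^{(1/2)}$, since it only uses that the corners of the constituent elementary gadgets are non-control vertices and $\widetilde{\EC}^{(1/2)}$ is a sub-sequence of the elementary gadgets of $\widetilde{\EC}$. Hence, by Def.~\ref{def:gadget_linearity} with $\tau = \sigma^{(k)}$, one has $\cL(\widetilde{\EC}^{(1/2)})[\sigma_m\oplus\sigma^{(k)},\varnothing] = \cL(\widetilde{\EC}^{(1/2)})[\sigma^{(k)}] \oplus \cL(\widetilde{\EC}^{(1/2)})[\sigma_m,\varnothing]$, and Lemma~\ref{lemma:EC-coarse-grains} identifies the first term with $\sigma^{(k)}$ (a level-$k$ coarse-grained configuration is fixed by the first half of the EC layer). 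So the problem reduces to (a) showing $\cL(\widetilde{\EC}^{(1/2)})[\sigma_m,\varnothing] = \varnothing$, i.e. the $m$-cell damage is fully erased at the level of syndromes, and (b) showing that no logical operator is applied in doing so.

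Claim (a) is exactly the ``input damage, no faults'' case of the proof of the $(\EC\text{ B})_{k,m}$ property, which (as remarked there) never uses the doubling of the EC gadget and so applies to $\widetilde{\EC}^{(1/2)}$ verbatim. I would reuse that argument: the leading $\cL(\EC_m)$ inside $\widetilde{\EC}^{(1/2)}$ coarse-grains $\cO_m$ to the corners of a single $m$-cell in each $k$-cell; level reduction by $m$ levels turns this into $0$-cell input damage inside an effective level-$(k-m)$ EC gadget; and since $k-m = m = \ell + 2\ell_T + 5 > \ell$ (indeed $k-m \geq \ell + 2\ell_T + 4$), nilpotence of the toric-code gadgets (Lemma~\ref{lemma:nilpotentnumerics-TC}) --- in the non-isolated case, via the two-fault argument already used in that proof --- erases it. Undoing the level reduction, the output syndrome configuration on $\Sigma_k$ is trivial, so $\cL(\widetilde{\EC}^{(1/2)})[\sigma,\varnothing] = \sigma^{(k)}$.

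Finally, for (b) I would argue as follows. The output state is $\cO'\ket\psi$ for some $X$-type operator $\cO'$ with $\mathrm{syn}(\cO') = \mathrm{syn}(\cO^{(k)})$, so $\cO'\cO^{(k)}$ has trivial syndrome and is a stabilizer times a (possibly trivial) logical. To exclude a nontrivial logical, I would use that no elementary gadget of $\widetilde{\EC}^{(1/2)}$ has a $\Sigma_k$ vertex as a control vertex, so the feedback applied depends only on $\sigma_m$ and equals the feedback $\widetilde{\EC}^{(1/2)}$ would apply on input $\sigma_m$ alone; by confinement (Facts~\ref{fact:1} and~\ref{fact:4}) together with nilpotence, that feedback, multiplied by $\cO_m$, is supported within a bounded union of $k$-cells --- of spatial diameter $O(3^{\ell+4})$, controlled by the spread estimates in the $(\EC\text{ B})_{k,m}$ proof --- whose size is independent of $L = 3^{nk}$ and hence is contractible on the torus, so it lies in $\cS$. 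Commuting the (mutually commuting) $X$-type operators then gives $\cL(\widetilde{\EC}^{(1/2)})\,\cO_m\cO^{(k)}\ket\psi = \cO^{(k)}\ket\psi$. I expect the main obstacle to be precisely this last step: assembling the confinement and light-cone spread estimates from the $(\EC\text{ B})_{k,m}$ proof cleanly enough to certify that the damage-cleaning operator stays within a region small enough to be contractible; everything else is bookkeeping with linearity and with results already in hand.
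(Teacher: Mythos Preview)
Your proposal is correct and follows essentially the same route as the paper: both reduce to the ``input damage, no faults'' case of the $(\EC\text{ B})_{k,m}$ proof (which, as the paper notes, does not use the doubling and hence applies to $\widetilde{\EC}^{(1/2)}$), using linearity on $\Sigma_k$ to strip off the coarse-grained part $\sigma^{(k)}$ and nilpotence plus level reduction to erase the $m$-cell damage. The paper's proof is a two-line back-reference to $\EC$ B and does not spell out your step (b) excluding a nontrivial logical; your locality/contractibility argument for that step is the natural way to fill it in and is sound, so your version is, if anything, more complete than the paper's.
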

\begin{proof}
    We use the remark at the beginning of the proof of $\EC$ B saying that the property also holds for undoubled gadgets. In the proof of  $\EC$ B we show a stronger property that the $m$-cell input damage (in the case when $s_i = 0$) is cleared up before the output of the layer of $\widetilde{\EC}^{(1/2)}$, which we can summarize as $ \cL(\widetilde{\EC}^{(1/2)})\cO_m \cO^{(k)} \ket{\psi} = \cO^{(k)} \ket{\psi}$. 
\end{proof}

With this result, we will continue proving Prop.~\ref{proposition:gate-ec-prime-TC}.
\begin{proof}[Proof (Gate B)]

We now prove the Gate B property, which, informally, states that a layer of noisy gates denoted by $\cL_\varepsilon (\cG_k)$ is pulled through an ideal decoder to become noiseless level-reduced gates, provided the noise and the input damage pass through the filters stipulated in Def.~\ref{def:TC_gateECprime}.
Explicitly, for each level-$k$ gate gadget $\cG_k^{(i)}$ in the layer, denoting the number of $m$-cells containing damage operator within a given $k$-cell in its spatial support as $\frac{m_j^{(i)}}{m} \in \{0,1\}$, the total number of these cells in the input as $\sum_{j = 1}^K \frac{m_j^{(i)}}{m}$ and the number of faults inside the gate as $s^{(i)}$,  the condition $\sum_{j = 1}^K \frac{m_j^{(i)}}{m} + s^{(i)}  \leq 1$ must be satisfied.

We again split the action of the ideal decoder into  $ \Phi^k \cL(\widetilde{\EC})$, where the layer of error correction gadgets is noiseless.  Assume an input state $\cO_{\text{in}}\ket{\psi}$ where $\cO_{\text{in}}$ is the operator that, acting on a level-$k$ coarse-grained toric code state $\ket{\psi}$, outputs damage that (together with
the noise realization, whose presence is denoted by the index $\varepsilon$), is consistent with the Gate B condition. Then, we will first show:
\begin{equation} \label{eq:gateb-modification}
 \cL(\widetilde{\EC}) \circ \cL_{\varepsilon}(\cG_k) \circ \cO_{\text{in}} \ket{\psi}=  \cL(\cG_k)\circ  \cL(\widetilde{\EC}) \circ \cO_{\text{in}} \ket{\psi},
\end{equation}
where on the right-hand side, the layer of gates $\cL(\cG_k)$ is noiseless and acts on a level-$k$ coarse-grained state  due to the noiseless $ \cL(\widetilde{\EC})$ preceding it.  Following this state by the pushforward map $\Phi^k$, we can move the layer of noiseless gates $\cL(\cG_k)$ past the map, due to the state at this point being coarse-grained at level $k$. This yields a level-reduced layer of gates $\cL(\cG_0)$, and thus
\begin{equation}
   \Phi^k \cL(\widetilde{\EC}) \circ \cL_{\varepsilon}(\cG_k) \circ \cO_{\text{in}} \ket{\psi} = \cL(\cG_0) \circ \Phi^k \cL(\widetilde{EC}) \circ \cO_{\text{in}} \ket{\psi} 
\end{equation}
which would complete the proof.

Thus, we need to argue that Eq.~\eqref{eq:gateb-modification} holds. 
We will use similar logic to the proof of the $\EC$ B condition. However, one difference consists of having to account for inputs which are coarse-grained at level $k$ (recall that in the analysis of $\EC$ B, we could assume a clean input due to linearity) and also in having a different gadget action at the boundaries. Specifically, the action of gadgets at the boundaries will be a modified Tsirelson automaton according to Lemma~\ref{lemma:Tsirelson-at-boundaries}.

To deal with the coarse-grained input, we can neglect input syndromes at linear points of level-$k$ gadgets (as usual, we can then take these removed input syndromes, act with a noiseless version of the respective level-$k$ gadget, and  $\mathbb{F}_2$ add the result to the output at the end), but we cannot modify syndromes on nonlinear points. 
Despite this, the same arguments as in the proof of $\EC$ B still hold in the presence of these syndromes. This is because the nilpotence property is formulated in the presence of a level-$k$ coarse-grained input and the nilpotence level $\ell_T$ holds for all gadgets.  In particular, we can follow the proof of $\EC$ B -- the analysis is essentially identical but some arguments deserve some mention.  First, the construction of the set $U$ of spacetime events is the same as before.  The level reduction argument by $\ell+4$ is also true even in the presence of input syndromes, and the syndromes from the set $U$ become consecutive level-0 gadget faults after level reduction.  The argument for cleaning up these faults is also identical because definitions like nilpotence work for coarse-grained input damage.  Thus, the faults and the $m$-cell input damage do not have any effect on the output after the application of the noiseless $\cL(\widetilde{\EC})$ layer in the decoder (i.e. the left-hand side of Eq.~\eqref{eq:gateb-modification}). The absence of the effect of the noise is captured by replacing the original layer of gadgets with the noiseless layer of the same gadgets $\cL(\cG_k)$, and the absence of the effect of the $m$-cell input damage can be recast as replacing $\cO_{\text{in}}\ket{\psi}$ with $\cL(\widetilde{EC})\circ \cO_{\text{in}}\ket{\psi}$ (i.e. a layer of $\widetilde{EC}$ gadgets removes damage fitting in an $m$-cell, which is shown in Corollary~\ref{corollary:remove-input-noise}). Upon these replacements, we can additionally remove the last noiseless layer of $\cL(\widetilde{EC})$ as it has no effect on the state that is already coarse-grained to level $k$.  This shows equation Eq.~\eqref{eq:gateb-modification} and thus the Gate B property.
\end{proof}
\begin{proof}[Proof (Gate A)] We now show the Gate A property for the toric code automaton, which states that if $m$-cell filters are present before the action of a layer of gates, cell filters can also be added after the action of the gates if the following condition on noise and input damage is satisfied. Consider gate gadget $\cG_k$, whose spatial support consists of a number of $k$-cells which we denote with $K$. If the number of $m$-cells containing input damage within the $i$-th $k$-cell of $\cG_k$ is $\frac{m_i}{m} \in \{0,1\}$, then the total number of $m$-cells containing input damage in the support of $\cG_k$ (i.e., $\sum_{i = 1}^K\frac{m_i}{m}$) and the number of level-0 faults $s$ in $\cG_k$ must satisfy $\sum_{i = 1}^K \frac{m_i}{m} + s  \leq 1$.

We will use similar logic to our proof of the $\EC$ B and Gate B conditions, this time to show that the damage will undergo limited spread (i.e. will fit within an $m$-cell where $m = \ell + 2 \ell_T + 5 $) before either exiting as an output or being cleaned up completely. There are additional considerations that we will need to address, namely that we cannot assume a clean layer of $\widetilde{\EC}$ gates coming from the ideal decoder at the end (unlike in the $\EC$ B and Gate B conditions). 

First, let us show that when the Gate A conditions are met, the presence of an input $m$-cell damage in the support of a given gate $\cG_k^{(i)}$ cannot lead to an output damage in the support of this gate; this is true even if this $m$-cell damage is not isolated from damage or noise in neighboring level-$k$ gates. As we argued earlier, we can consider each boundary independently of the others.  We then can use a similar argument to that of the proof of $\EC$/Gate B to show that if $k$ is large enough, the effect of the input damage will be cleaned up.
First, we perform reduction by $m$ levels, causing input damage in the support of $\cG_k^{(i)}$, as well as noise or input damage in the supports of neighboring gates, to turn into an effective level-0 gadget failure.  
Thus, we consider a level-0 gadget failure at the very beginning of an effective $\cG_{k-m}^{(i)}$ gate that can possibly interact with neighboring level-0 faults.
First, if $k-m \geq \ell+1$, we note that the damage restricted to regions away from the boundary of the gadget, such that the damage is isolated,  will be cleaned up due to nilpotence (the additional `+1' is because nilpotence can only be applied after expanding a level-($\geq \ell+1$) gate through level-$\ell$ gadgets). Thus, what is left to show is that the damage at the boundary will be cleaned up before the output.

Analogously to the proof of $\EC$ B and Gate B, the set $U$ (corresponding to spacetime events of adding syndromes to the boundary due to level-0 gadgets supported in the bulk of neighboring level-$(k-m)$ gadgets) fits in a spacetime box of size $(6 t_\ell)_t \times (9 \times 3^\ell)_x$.   
A boundary Tsirelson gadget of level $k \geq(\ell + 2 \ell_T + 4)+1$ acting along this boundary will be able to clean up the damage due to events in $U$.
To argue this, we first express this gadget through constituent level-$(\ell + 2 \ell_T + 4)$ Tsirelson $\EC'$ and gate gadgets. The presence of noiseless layers of $\EC'_{\ell + 2 \ell_T + 4}$ gadgets that follow the initial layer of $\EC'$ and gate gadgets allow us to use nilpotence and follow the same reasoning as in the proof of $\EC$/Gate B. This argues that the effect of the $m$-cell input damage in $\cG_{k}^{(i)}$ will be fully cleaned up. 

Thus, only level-0 gadget failures and their interaction with other level-0 gadget failures in neighboring level-$k$ gadgets can lead to output damage. Let us show that the spread of the output damage within a given level-$k$ gate will be contained within an $m$-cell for an appropriately chosen value of $m$.  First, if the level-0 gadget failure is isolated at level $\ell$, then its effect leads to damage spreading so that it still fits in a square with side length at most $3 \times 3^\ell$, and thus, leads to output damage fitting in an $(\ell+1)$-cell. However, if a level-0 gadget failure within a given level-$k$ gate gadget is not sufficiently isolated from a fault within another level-$k$ gadget, the damage along the boundary may grow to have even larger support. As in the proof of $\EC$ B, we wait until the damage from both faults becomes coarse-grained, which, by our loose bound, happens at level $\ell+4$ (thus we already have to assume $m \geq \ell +4$). We then perform level reduction by $\ell + 4$ levels, and consider a level-reduced Tsirelson automaton at the boundary with two neighboring level-0 gadget faults. As before, we assume that there exists the smallest level $q$ such that both faults reside within the same gadget $G_q$. If $q \geq \ell_T +1$, both faults are corrected independently within level-$\ell_T$
 gadgets. Then, the damage from each of the faults spreads such that it fits in at most an $(\ell_T + 1)$-cell. This gives $m \geq (\ell + 4)+ (\ell_T + 1) $. Alternatively, $q \leq \ell_T$, in which case we perform an additional level reduction by $q$ levels, and reduce the problem to a single failure of the $G_q$ gadgets. This single failure may additionally spread by $\ell_T +1$ levels before being cleaned up due to nilpotence. Thus, $m \geq (\ell+ 4) + \ell_T + (\ell_T + 1) =\ell + 2 \ell_T + 5 $. This last lower bound is the largest that we obtained so far, and it determines the acceptable value of $m$.
\end{proof}
\begin{proof}[Proof ($\EC$ A)]
Finally, we show $\EC$ A, which says that if $i$-th $\widetilde{\EC}$ gadget in a layer $\cL(\widetilde{\EC})$ contains $s_i \in \{0,1\}$ faults, its output passes through an $s_i m$-cell filter. 
For this property, we have to assume that the input state is arbitrary. Here is (the only place) where we will use the doubling of the $\widetilde{\EC}$ operation  that naturally divides $\widetilde{\EC}$ into two periods of its operation, as discussed in Remark~\ref{remark:doubling}.   

We will split our analysis into multiple cases.  First, consider the case when faults occur only during the second period of the operation of the entire layer of $\widetilde{\EC}$. Then, the first period coarse-grains the input to level $k> \ell$ by Lemma~\ref{lemma:EC-coarse-grains}. Then, we consider the second period where faults fulfilling $\EC$ A condition can occur. This situation is covered by the proof of Gate A (arguments in which are also applicable to $\EC$ and not only gate gadgets). 
Let us now suppose that the faults only occur during the first period of the entire layer of $\widetilde{\EC}$. In this case, regardless of what the output after the first period was, the second period of operation is noiseless. By the coarse-graining property of a half of the $\widetilde{\EC}$ gadget, it will produce the output that is coarse-grained at level $k$, which is also consistent with the statement of $\EC$ A. 
Finally, we are left to consider a scenario where faults occur during either period in different gadgets. 

The case when the faults can occur at any time does not reduce to the first two cases only if the faults in neighboring $\widetilde{EC}$ gadgets occurred in different periods, are not sufficiently isolated in spacetime, and have occurred near the boundary between the gadgets. Due to Facts~\ref{fact:1} and~\ref{fact:4}, we can restrict our consideration to two neighboring gadgets $\widetilde{EC}^{(1)}$ and $\widetilde{EC}^{(2)}$, and we assume without loss of generality that the fault in gadget $\widetilde{EC}^{(1)}$ occurred during the first period and the fault in neighboring gadget $\widetilde{\EC}^{(2)}$ during the second period.

First, let us consider the effect of the fault inside $\widetilde{EC}^{(1)}$, which we label $f_1$, keeping in mind that it occurred not far from the boundary shared by $\widetilde{\EC}^{(2)}$  and that there is no fault in $\widetilde{\EC}^{(2)}$ until the second period of operation. We label the fault inside $\widetilde{\EC}^{(2)}$ as $f_2$, and also label the boundary between the supports of $\widetilde{EC}^{(1)}$ and $\widetilde{EC}^{(1)}$ gadgets as $\lambda_{12}$. Let us expand the action of the layer of $\widetilde{\EC}$ gadgets through constituent level-$(k-1)$ gadgets. We write:
\begin{equation}
    \cL(\widetilde{\EC}) = \left [\cL'(\EC_{k-1}) \circ \cdots \circ \cL'(\cG_{k-1}) \circ \cL'(\EC_{k-1})\right ] \circ \left [\cL(\cG_{k-1}) \circ \cL(\EC_{k-1}) \circ \cdots \circ \cL(\EC_{k-1}) \right],
\end{equation}
where the first and second periods of operation are grouped into square brackets, and the layers belonging to the second period are additionally marked $\cL'$. 
Consider first the case when $f_1$ occurs before the last layer of $\EC_{k-1}$ gadgets within the first period of operation. In that case, the subsequent $\cL(\EC_{k-1})$ is noiseless and will coarse-grain its input to level $k-1$ in its support, including the boundary $\lambda_{12}$ (because there are no faults in the gadgets in the support of the neighboring $\widetilde{\EC}^{(2)}$ that occur during this coarse-graining). Therefore, at the start of second period, the state is coarse-grained to level-$(k-1)$ in the bulk of the support of $\widetilde{\EC}^{(1)}$ and  $\widetilde{\EC}^{(2)}$, as well as at $\lambda_{12}$. Now, we consider the effect of $f_2$ on top of a level-$(k-1)$ coarse-grained input. If $f_2$ occurred before the last  layer of $\EC_{\ell}$ gadgets within the last $\cL'(\EC_{k-1})$ layer, we can use the fact that this last layer is noiseless, the fact that $k-1 > \ell$,  and that the input is thus coarse-grained at scale $>\ell$ to apply nilpotence and argue that the fault will be cleaned up.  Thus the output syndrome configuration is the same as the one in which the $f_2$ did not occur during the second period, thus being coarse-grained at level $k$ due to the coarse-graining property of $\widetilde{\EC}$.  If $f_2$ occurred in the last layer of $\EC_{\ell}$ gadgets, the damage will simply spread to an $\ell$-cell, which is consistent with $\EC$ A so long as $m > \ell$.

Finally, it is left to consider the case when the fault $f_1$ occurred within the last two layers operating in the first period, i.e. within $\cL(\cG_{k-1}) \circ \cL(\EC_{k-1})$. The input to these layers must be coarse-grained at level $k-1$ (in the region of interest) because the preceding layers must have been noiseless and include $\cL(\EC_{k-1})$. 
We then use the same argument as in the proof of Gate A, taking into account that the fault $f_2$ happens only in the second period which implies that the output after the first period must pass through an $m$-cell filter. Thus, when the second period of operation starts, the input is coarse-grained at least at level $k-1$ up to an $\ell$-cell damage in the support of $\widetilde{\EC}^{(1)}$, after which the fault $f_2$ occurs in the support of $\widetilde{\EC}^{(2)}$. Using the argument from the proof of Gate A again, we conclude that the effect from the input damage and the fault will be cleaned up (including at the boundary $\lambda_{12}$) as long as $(k-1) - \ell \geq \ell + 2 \ell_T + 5$ (which is satisfied if we choose $m =\ell + 2 \ell_T + 5$, and $k =  2m$), with the output configuration passing through the filters in the $\EC$ A condition.

This completes the proof of $\EC$ A and thus, the proof of the Proposition.
\end{proof}

 \begin{corollary} \label{corollary:full-simulaton-TC}
Consider a circuit $\mathcal{C}$ on a 0-cell that is $T$ idle operations, i.e. $\cC = \cI_0^T$.   Construct the fault tolerant simulation of it $\widetilde{FT}(\mathcal{C})$ satisfying the conditions of previous Proposition (\ref{proposition:gate-ec-prime-TC}).  Iterate this construction $n$ times to form ${\widetilde{FT}}^n(\mathcal{C})$ and apply it to a clean toric code state of size\footnote{We assume that $L$ is a power of $3^k$ for simplicity in this proposition, but all the results here and below can be readily generalized to the case when $L$ is any power of $3$ by inserting floor functions in appropriate places.} $L\times L$ with $L = 3^{nk}$.  If the gadget error model for ${\widetilde{FT}}^n(\mathcal{C})$ is $p$-bounded, then pulling the circuit through a hierarchical $\ast$-decoder produces $\mathcal{C}$ with a gadget error model that is $(Ap)^{2^{n}}$-bounded for some constant $A$. 
 \end{corollary}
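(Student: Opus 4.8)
The statement is a clean iteration of the single-step error-suppression result, so the plan is to bootstrap Corollary~\ref{cor:toriccodesuppre} (equivalently, Theorem~\ref{thm:TCerrsupp} applied layer-by-layer through the entire simulated circuit). First I would observe that Proposition~\ref{proposition:gate-ec-prime-TC} guarantees that for $k=34$ and $m=17$ the level-$k$ gate gadgets $\cI_k,\cT_k^{h/v},\cM_k^{h/v}$ and the error-correction gadget $\widetilde{\EC}=\EC_k$ satisfy the $(\mathrm{Gate}\ \mathrm{A,B})_{k,m}$ and $(\EC\ \mathrm{A,B})_{k,m}$ conditions. Combined with the confinement/linearity facts of Remark~\ref{remark:overlaps} (proved in Prop.~\ref{prop:confinement-for-TC}), the hypotheses of Theorem~\ref{thm:TCerrsupp} are all in force. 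Thus, writing $\widetilde{FT}^n(\cC)=\widetilde{FT}(\widetilde{FT}^{n-1}(\cC))$, one application of Corollary~\ref{cor:toriccodesuppre} tells us that pulling a $\ast$-decoder through the outermost layer of outer simulation converts a $p$-bounded gadget error model into an effective $Ap^2$-bounded gadget error model for the circuit $\widetilde{FT}^{n-1}(\cC)$.

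The core of the argument is then a straightforward induction on $n$. Let $p_0=p$ and set $p_{j}\le A p_{j-1}^2$, obtained by applying Corollary~\ref{cor:toriccodesuppre} once to the circuit $\widetilde{FT}^{n-j}(\cC)$ (which is itself an outer fault-tolerant simulation of $\widetilde{FT}^{n-j-1}(\cC)$, so the Corollary's hypotheses are again satisfied — one just has to check that the effective error model produced at each stage is still a genuine $p$-bounded gadget error model, which is exactly what the Corollary's conclusion asserts). Solving the recursion $p_j \le A p_{j-1}^2$ with $p_0=p$ gives $A p_n \le (Ap)^{2^n}$, so after $n$ iterations the simulated circuit $\cC=\cI_0^T$ experiences an effective gadget error model that is $(Ap)^{2^n}$-bounded (absorbing the harmless factor $A^{-1}\le 1$). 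This reproduces exactly the bound claimed in the statement. I would present the recursion essentially verbatim from the proof of Theorem~\ref{th:FT_TS_1}, with the only change being that ``wire error model'' is replaced by ``gadget error model'' and $FT(\cdot)$ by $\widetilde{FT}(\cdot)$.

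\textbf{Where the real work sits.} The only genuinely delicate point — and the reason the proof is a \emph{corollary} rather than a triviality — is the bookkeeping of the truncation procedure when the $\ast$-decoder is pulled through many layers in sequence. Algorithm~\ref{alg:goodbad_assignment} reassigns good/bad labels after each temporal and spatial truncation, and one must be sure that a given elementary-gadget failure is never double-counted: once it is attributed to a bad exRec it must not also be attributed to a good one, and the number of distinct truncations of any exRec is bounded by an $O(1)$ constant (at most $2^4$ spatial truncations times the bounded number of temporal ones), so the constant $A$ multiplying $p^2$ remains a genuine $L$-independent constant. This is precisely the content established in Theorem~\ref{thm:TCerrsupp} and its Corollary, so for the write-up I would simply invoke those results; the main obstacle, conceptually, was already dispatched there. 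What remains for this corollary is a one-line check that the sizes of the tilings and of the domain of $\Phi^k$ shrink consistently by a factor $3^k$ at each iteration (so that after $n$ steps we are on a $3^{nk}\times 3^{nk}=L\times L$ torus, consistent with the definition of the toric code automaton), and that the constant $A$ does not degrade across iterations — it does not, because at every level the exRec combinatorics is identical. Hence the induction closes and $\mathbb{P}$ of logical failure after time $T$ is bounded by $T\cdot(Ap)^{2^n}$, which after combining with Fact~\ref{fact:XZ-decoupling} to reassemble the $X$ and $Z$ sectors yields the fault tolerance of the full quantum toric code automaton.
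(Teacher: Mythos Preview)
Your proposal is correct and follows essentially the same approach as the paper: invoke Proposition~\ref{proposition:gate-ec-prime-TC} to get the Gate and $\EC$ conditions, apply Corollary~\ref{cor:toriccodesuppre} once to obtain $p\to Ap^2$ suppression, and iterate via the recursion $p_j\le Ap_{j-1}^2$ to reach $(Ap)^{2^n}$. One small overreach: your final sentence about reassembling the $X$ and $Z$ sectors and bounding the probability of logical failure by $T\cdot(Ap)^{2^n}$ belongs to the \emph{subsequent} corollaries (\ref{corollary:X-Z-FT} and \ref{cor:logical_error_bound}), not to this one, which concerns only the effective error model for a single ($X$- or $Z$-type) automaton.
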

 \begin{proof}
With the values of $k$ and $m$ in the previous Proposition, both $\EC$ and Gate properties are satisfied.  Consider a circuit $\mathcal{C}$ and its fault tolerant simulation $\widetilde{FT}(\mathcal{C})$.  Corollary~\ref{cor:toriccodesuppre} indicates that if the gadget error model in $\widetilde{FT}(\mathcal{C})$ is $p$-bounded, pulling $k$ levels of hierarchical $\ast$-decoders through the circuit yields a gadget error model in the simulated $\mathcal{C}$ which is $A p^2$-bounded.  Starting from ${\widetilde{FT}}^{n}(\mathcal{C})$ and iterating this $n$ times gives a $(Ap)^{2^{n}}$-bounded gadget failure model.  From $k = 34$,  $n = \frac{1}{34} \log_{3} L$ for an $L\times L$ lattice of qubits, which gives a gadget error model of $(Ap)^{2^{(\log_{3} L)/34}} \approx (Ap)^{L^{0.0186}}$.
 \end{proof}

Before presenting the final theorem, we note that we have thus far been only studying a classical problem corresponding to the operation of $X$-type toric code automaton, which experiences $X$-type errors only.  We first need to combine this $X$-type automaton with the ``dual'' $Z$-type automaton:

 \begin{corollary} \label{corollary:X-Z-FT}
Consider a circuit $\mathcal{C}$ on a 0-cell that is $T$ idle operations, i.e. $\cC = \cI_0^T$.   Construct both the $X$ and $Z$ automata corresponding to the simulation ${\widetilde{FT}}^n(\mathcal{C})$ on a lattice of size $L\times L$ with $L = 3^{nk}$, with the full toric code automaton corresponding to simultaneous operation of the $X$ and $Z$ automata as discussed in Sec.~\ref{subsec:TC}. We run the full toric code automaton on a clean toric code initial state, assuming a $p$-bounded decoupled gadget error model. The full toric code automaton simulates $\mathcal{C}$ with a $(Ap)^{2^{n}}$-bounded decoupled approximate gadget error model (see Def.~\ref{def:approximate-decoupled-gadget-error-model}) for some constant $A$. 
 \end{corollary} 
 \begin{proof}

 Considering the full toric code automaton, we add hierarchical level-$n$ $\ast$-decoders at the end of $X$- and $Z$-type automata. We then simultaneously pull both types of decoders  through the circuit.

 Let us first discuss what happens when we pull one layer of the hierarchical $\ast$-decoders $\widetilde{D}$ of $X$ and $Z$ type through the circuit. For each noise realization this is determined by pulling the $\ast$-decoders for each automata separately and combining the outcomes as in Fact~\ref{fact:XZ-decoupling}. Therefore, pulling the $X$- and $Z$-type $\ast$-decoders through the entire circuit can be done independently and this procedure is then repeated until we pull the hierarchical $\ast$-decoders past the full circuit.

 Consider a noise realization with $E_X, E_Z, F_{E_X}, F_{E_Z}$ sampled from a $p$-bounded decoupled gadget error model. Upon pulling $\ast$-decoders past one simulation layer, we obtain a layer simulated elementary gadgets also of $X$ and $Z$ type. Each failed gadget is in one-to-one correspondence to a bad exRec of $X$ or $Z$ type respectively. Applying this to the outer simulation layer by layer, we obtain a simulated circuit with decoupled gadget error model. The properties of this error model are determined by the distribution of bad exRecs of $X$ and $Z$ type. Any $X$-type exRec is bad if it contains at least two $X$-type gadget failures, and any $Z$-type exRec is then bad if it contains at least two $Z$-type gadget failures.  
 The probability that arbitrary sets of $X$ and $Z$ type exRecs $A_X$ and $A_Z$ are bad is 
 \begin{equation}
 \mathbb{P}[(A_X \subseteq F_{E_X})\wedge (A_Z \subseteq F_{E_Z})] \leq \sum_{a_X, a_Z \text{: min. sets}} \mathbb{P}[a_X \text{ and } a_Z]
 \end{equation}
 where $a_X$ and $a_Z$ are sets of level-0 $X$ and $Z$ gadget failures of minimal size causing all exRecs in $A_X$ and $A_Z$ to be bad.  Since the minimal event for each exRec is two failures, by $p$-boundedness we have
 \begin{equation}
 \mathbb{P}[a_X \text{ and } a_Z] \leq p^{2 |A_X| + 2 |A_Z|}.
 \end{equation}
 There are $\leq A^{|A_X| + |A_Z|}$ such events, with $A$ a combinatorial factor related to the maximal spacetime volume of the exRecs. Thus
 \begin{equation}
 \mathbb{P}[(A_X \subseteq F_X)\wedge (A_Z \subseteq F_Z)] \leq (Ap^2)^{|A_X| + |A_Z|}
 \end{equation}
 thus consistent with an $Ap^2$-bounded decoupled gadget error model.  Iterating this $n$ times to pull the entire hierarchical $\ast$-decoder through proves the Corollary.
 \end{proof}

 The following corollary summarizes the fault tolerance result in a slightly more general way. In particular, it states that if the toric code automaton is run for a superpolynomial amount of time, the logical state can still be ``efficiently recovered''.

\begin{corollary} \label{cor:logical_error_bound}
Assume a syndrome-free initial logical state $\ket{\psi}$ of the toric code on an $L \times L$ lattice of qubits, where $L = 3^{nk}$ for some integer $n$. We apply the full toric code automaton $\widetilde{FT}^n(\cC)$ consisting of a simultaneous operation $X$ and $Z$ circuits as explained in Sec.~\ref{subsec:TC} with $\cC = \cI_0^T$, with the full automaton experiencing a general $p$-bounded gadget error model, and whose respective $X$- and $Z$-type automata satisfy the assumptions of Prop.~\ref{proposition:gate-ec-prime-TC}.

Run the automaton for a time $t$, and if the $t$ did not occur right after a circuit $\widetilde{FT}^n(\cI_0^{T'})$ for some integer $T' \leq T$, append noiseless gates to complete the circuit to $\widetilde{FT}^n(\cI_0^{T'})$. Then, act with a noiseless layer of $\widetilde{FT}^{n-1}(\widetilde{\EC})$ of both $X$ and $Z$ types, producing a state that we call $\rho_t$. Then, $\rho_t$ is a mixture of syndrome-free logical states of the toric code, and
    \begin{equation} \label{main_logical_error_bound}
       \| \rho_t - \ketbra{\psi}{\psi}\|_1 \leq   T'  (A p')^{2^{ (\log_{3} L)/k }}
    \end{equation}
for some constant $A$, where $k = 34$ according to Prop.~\ref{proposition:gate-ec-prime-TC} and $p' = C p^\gamma$ for some positive constants $C, \gamma$.
\end{corollary}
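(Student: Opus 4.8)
The statement is essentially a repackaging of Corollary~\ref{corollary:X-Z-FT} into a trace-distance bound on the actual mixed state after running the automaton for an arbitrary time $t$. The plan is to first reduce the general $p$-bounded gadget error model to an approximate decoupled one, then use the error-suppression machinery to bound the probability of a logical error, and finally translate this probability bound into the claimed $\ell_1$ bound.

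First I would handle the noise-model reductions. By Prop.~\ref{prop:apprnoise} and Lemma~\ref{lemma:approximate-replacement-tc}, running the automaton under the general $p$-bounded gadget error model is, up to trace distance $t \cdot \exp(-KL)$, the same as running it under the $p$-bounded \emph{approximate} gadget error model in which all noise operators are Pauli. By Lemma~\ref{lem:equiv_gadget} and Lemma~\ref{lem:equiv_decoupled_gadget}, this is in turn equivalent (upon rescaling $p \to Cp^\gamma =: p'$) to a $p'$-bounded \emph{decoupled} gadget error model, to which Fact~\ref{fact:XZ-decoupling} and Corollary~\ref{corollary:X-Z-FT} apply. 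I would fold the $t \cdot \exp(-KL)$ term into the final bound by noting it is dominated by the $(Ap')^{2^{(\log_3 L)/k}}$ term for $L$ large (and adjust constants); alternatively one can state the bound as holding up to this exponentially small correction.

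Next, the core step: by Corollary~\ref{corollary:X-Z-FT}, padding the run out to a full $\widetilde{FT}^n(\cI_0^{T'})$ with noiseless gates (which only helps) and appending a noiseless $\ast$-decoder layer — concretely the noiseless $\widetilde{FT}^{n-1}(\widetilde{\EC})$ of both types together with the pushforward maps implicit in the hierarchical $\ast$-decoder — the full automaton simulates $\cC = \cI_0^{T'}$ with an effective $(A p')^{2^n}$-bounded decoupled gadget error model, where $n = (\log_3 L)/k$. The simulated circuit $\cC$ is just $T'$ identity gates on a single encoded (qu)bit; a logical error in $\cC$ requires at least one bad simulated gadget, so by a union bound over the $T'$ timesteps and the $p$-boundedness of the simulated model, $\mathbb{P}[\text{logical error}] \le T' (A p')^{2^{n}} = T' (A p')^{2^{(\log_3 L)/k}}$. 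The state $\widetilde{FT}^{n-1}(\widetilde{\EC})\rho_{\rm run}$ is, by the coarse-graining property of $\widetilde{\EC}$ (Lemma~\ref{lemma:EC-coarse-grains}) applied at all scales through the hierarchical decoder, a mixture of syndrome-free toric code states; decomposing this mixture into the component equal to $\ketbra{\psi}{\psi}$ and the orthogonal ``logical-error'' components, the weight of the latter is exactly the logical error probability above. Since for density matrices $\rho = (1-q)\ketbra{\psi}{\psi} + q\,\tau$ with $\tau$ orthogonal one has $\|\rho - \ketbra{\psi}{\psi}\|_1 = 2q \le 2q$, and absorbing the factor $2$ into $A$, this yields \eqref{main_logical_error_bound}.

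\textbf{Main obstacle.} The genuinely delicate point is the statement that $\rho_t$ is a \emph{mixture of syndrome-free logical states} and that the non-$\ketbra{\psi}{\psi}$ weight is controlled by the decoupled-model bound. One must be careful that: (i) the noiseless $\widetilde{FT}^{n-1}(\widetilde{\EC})$ layer, together with the pushforward maps, genuinely implements the hierarchical ideal decoder $\widetilde{D}_n$ of Def.~\ref{def:ideal-decoder-TC} down to the smallest lattice, so that after it the state is supported on the codespace with trivial syndrome; this uses Lemma~\ref{lemma:EC-coarse-grains} and Prop.~\ref{prop:goodcorrect_TC} at every level. (ii) The $X$ and $Z$ sectors must be recombined correctly: a logical error in the combined sense occurs iff it occurs in the $X$ \emph{or} $Z$ sector, and the decoupled bound of Corollary~\ref{corollary:X-Z-FT} already accounts for both, so a further union bound over the two sectors costs only another factor absorbed into $A$. (iii) The ``if $t$ did not occur right after a full $\widetilde{FT}^n(\cI_0^{T'})$'' caveat: padding with noiseless gates can only decrease the logical error probability (noiseless EC and gates never create logical errors on a coarse-grained input, by the Gate and EC conditions with $s=0$), so the bound for the padded circuit upper-bounds the true one — I would spell this monotonicity out explicitly since it is the one place where one slightly extends Corollary~\ref{corollary:X-Z-FT} rather than quoting it verbatim.
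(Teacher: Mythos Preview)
Your proposal is correct and follows essentially the same approach as the paper: reduce the general $p$-bounded model to an approximate decoupled one via Prop.~\ref{prop:apprnoise}, Lemma~\ref{lemma:approximate-replacement-tc}, and Lemma~\ref{lem:equiv_decoupled_gadget}; invoke Corollary~\ref{corollary:X-Z-FT} after padding to a full $\widetilde{FT}^n(\cI_0^{T'})$ to obtain the $(Ap')^{2^n}$-bounded simulated model; union-bound over $T'$ timesteps; and translate the logical-failure probability into a trace-distance bound using that the noiseless $\widetilde{FT}^{n-1}(\widetilde{\EC})$ layer yields a syndrome-free mixture (via Lemma~\ref{lemma:EC-coarse-grains}) and the identification $\widetilde{D^*}^n = \Phi^{nk}\cL(\widetilde{FT}^{n-1}(\widetilde{\EC}))$, so the pushforward can be dropped without affecting the logical-error probability. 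Your ``monotonicity of padding'' is the paper's ``stochastic domination by a $p$-bounded model on the completed circuit,'' and your extra union bound over $X/Z$ sectors in point (ii) is unnecessary since Corollary~\ref{corollary:X-Z-FT} already treats them jointly.
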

\begin{proof}
By Lemma \ref{lemma:approximate-replacement-tc}, the true state $\rho_t$ is close in trace distance to the state we would obtain under an approximate noise model (see App.~\ref{app:noisedetails} for the definition of the approximate noise model), i.e. $\|\rho_t - \rho_{t, \text{appr}} \|_1 \leq 2 T' \exp(-KL)$. Thus, we will consider $\rho_{t, \text{appr}}$ under a $p$-bounded approximate gadget noise model. Furthermore, by Lemma~\ref{lem:equiv_decoupled_gadget}, the resulting approximate gadget noise model is equivalent to $p'=Cp^\gamma$-bounded decoupled gadget noise model, to which we can now apply Corollary \ref{corollary:X-Z-FT}. 

Due to the application of a noise-free layer of $\widetilde{FT}^{n-1}(\widetilde{\EC})$ gadgets at the end, the state $\rho_{t, \text{appr}}$ is syndrome-free by the coarse-graining property in Lemma \ref{lemma:EC-coarse-grains}. 
This state can be written as a mixture of syndrome-free states from the ensemble $\{\ket{\psi}, L_i \ket{\psi}\}$ where $L_i$ denotes a logical Pauli operator (which can be represented as a product of $X$ and $Z$ logical operators).  

We now compute the probability that a state drawn from this ensemble is not $\ket{\psi}$, which can be used to upper bound the trace distance $\|\rho_{t, \text{appr}} - \ketbra{\psi}{\psi} \|_1$.  
As mentioned above, if $t$ is not at the end of a subcircuit $\widetilde{FT}^n(\cI_0^{T'})$ for some $T'$, we complete the circuit by adding noiseless gates.  Since the noise model is $p$-bounded for the noisy part of the circuit, this distribution is stochastically dominated by a $p$-bounded noise model for the \emph{entire} circuit that now includes the completion.

First, we notice that applying a round of error correction $\widetilde{FT}^{n-1}(\widetilde{\EC})$ at the end (which is noiseless, and thus, the $X$ and $Z$ sectors decouple trivially) can be viewed as applying $\ast$-decoders of $X$ and $Z$ types without the pushforward maps. This is because we can alternatively write $\widetilde{D^*}^n = \Phi^k \cL(\widetilde{EC}) \circ \dots \circ  \Phi^k \cL(\widetilde{EC}) = \Phi^{nk} \cL(\widetilde{FT}^{n-1}(\widetilde{\EC}))$. 
According to Corollary \ref{corollary:X-Z-FT}, if we were to apply the $\ast$-decoders to the output of the automaton, the rate of the error model in the simulation is squared for every level of the outer simulation. In total, we obtain the simulated circuit $\cC$  with a $(Ap')^{2^{n}}$-bounded error model effectively acting on it. In the final simulated circuit, a fault of one of the gadgets in $\cC$ is equivalent to an $X$ or $Z$-type logical operator, and its probability must be $\leq C'  T' (A p')^{2^{ (\log_{3} L)/k }}$ by a union bound.
This is true even if we remove the pushforward map at the end. Once we remove the pushforward map, this becomes the probability that a logical operator has been applied to the state, which is $\leq C' T' (A p')^{2^{n}}$. Thus, we obtain that $\|\rho_{t, \text{appr}} - \ketbra{\psi}{\psi} \|_1 \leq  C' T' (A p')^{2^{ (\log_{3} L)/k }}$. 

Finally, we use Prop.~\ref{prop:apprnoise}, Lemma~\ref{lemma:approximate-replacement-tc}, and Lemma~\ref{lem:equiv_decoupled_gadget} to obtain
    \begin{align}
    \|\rho_t - \ketbra{\psi}{\psi} \|_1 &\leq \|\rho_t - \rho_{t, \text{appr}} \|_1 + \|\rho_{t, \text{appr}} - \ketbra{\psi}{\psi} \|_1 \nonumber\\
    &\leq 2 T'\exp(-KL) +  C'T'(A p')^{2^{ (\log_{3} L)/k }} \nonumber \\
    &\leq  T' (A' p')^{2^{(\log_{3} L)/k }}
    \end{align}
for $p' = C p^\gamma$ and suitable positive constants $K, A, A',C,\gamma$. 
\end{proof}

\subsubsection{Fault tolerance for the probabilistic quantum automaton} \label{sec:5-PCA}

Until now, we assumed a \emph{deterministic} quantum automaton model, not the \emph{probabilistic} model introduced in Def.~\ref{def:pca}. Let us briefly discuss what happens in the probabilistic case and also in a more general situation where we allow for measurements of operators other than the 
correct stabilizer generators.  Our arguments will not be rigorous, but one can readily make them rigorous with more work by incorporating them into the formalism introduced in this paper.

Consider the set $V$ of spacetime locations of vertices associated with the locations of stabilizers whose measurements have been skipped. If this set is drawn from a $p$-bounded distribution, with high probability, one can hierarchically decompose this set into clusters, which, roughly speaking, are constructed in the following way. Call an $(r,R)$-isolated cluster in $V$ a set of points in $V$ that fit in a three-dimensional box with side $r$ and are isolated from any other point in set $V$ by distance at least $R$. 
We can choose positive constants $\alpha, \beta$ (subject to certain constraints) and a hierarchy of scales $ r_1 < r_2 < \dots < r_n <\dots$ (where the scale growth is between exponential and double exponential in the level $n$, i.e. $\gamma < \frac{r_{n+1}}{r_n}< \Theta(2^{\xi^n})$ with some $\gamma>0$, $0<\xi<2$), such that $V = \bigcup_n V_n$ where $V_n$ is the set of all points belonging $(\alpha r_n, \beta r_{n+1})$-isolated clusters in $V \setminus (\bigcup_{i=1}^{n-1}V_i)$ (i.e. where all the smaller scale clusters have been removed).  This is known as sparsity decomposition, and is due to G\'acs~\cite{gacs1983reliable,gacs_slides}. Moreover, the $p$-boundedness of the noise allows us to upper bound the probability of occurrence of clusters inside each set $V_n$\footnote{Probably the most elegant way to explain fault tolerance of a renormalization-group based automaton is through the sparsity decomposition. Consider a decomposition for the set of all error locations $E$, which we decompose $E = \bigcup_n E_n$. For a given automaton, if we can find parameters such that any isolated cluster of size under $\alpha r_n$ is cleaned up in a spacetime volume that is less than $\beta r_{n+1}$ even in the presence of lower-scale clusters, then such an automaton will correct errors in $E_n$ at each level of decomposition. Thus, one needs to show that the automaton at level $n$ can (i) handle isolated error clusters of associated scale and (ii) can in addition tolerate transient smaller-scale errors while doing so.  It would be interesting to prove the fault tolerance of the toric code automaton using this method, which could result in a significant simplification of the current proof. }.  Consider any given cluster $V_n^{(i)}$ in $V_n$. The measurements skipped in this cluster are preceded by the noise channels, assuming there is an additional $p$-bounded gadget noise model that the automaton is experiencing. The noise occurring before the measurements is not projected onto purely Pauli noise due to the skipped measurements. In addition, for each cluster of events where no measurements occur, the effect of this noise associated with a given cluster $V_n^{(i)}$ in $V_n$ can spread due to a light cone argument and will fit in a spacetime box of size $\alpha' r_n$. Let us call the associated spacetime qubit support of this box $Q_n^{(i)}$. We now require a new $(\alpha' r_n, \beta' r_{n+1})$-sparse decomposition $R = \bigcup_n R_n$ of all qubit locations in spacetime (such that the $\beta' r_{n+1} > 2 \alpha' r_n$) where each cluster $R_n^{(j)}$ at level $n$ in this new decomposition covers at least one cluster $Q_n^{(i)}$. Consider a given $R_n^{(j)}$, and a timestep that directly follows it. By definition, the measurements supported on all qubits contained in the spatial extent of $R_n^{(j)}$ must have occurred. This projects the effect of skipped measurements and the noise channels that occurred inside $R_n^{(j)}$ onto an approximate noise model associated with a damage operator $\cO_n^{(j)}$ supported in this spatial region. Then, one can show that there exists a gadget error model such that (i) the set of gadget failures inside $R_n^{(j)}$ results in the same output damage operator $\cO_n^{(j)}$ in the spatial support of $R_n^{(j)}$, (ii) the number of such gadget failures is proportional to the total number of failures in the original error model and finally, that (iii) this gadget error model is $\widetilde{p}$-bounded where $\widetilde{p}$ is polynomially related to $p$. This way, one can reduce the PCA model to a $p$-bounded gadget error model. 

A slightly more involved scenario happens when a gadget measures the wrong operator.  As mentioned in the discussion after Def.~\ref{def:pca}, this assumption is unreasonable if measurements are being made simultaneously, as only commuting operators can be simultaneously measured.  A more realistic scheme is to split each measurement round into several rounds. First, we measure a set of stabilizers  $\cS_1$ (with disjoint supports) and then measure another set of stabilizers $\cS_2$ (with disjoint supports) so that $\cS_1 \cup \cS_2 = \cS'$ (where $\cS'$ now can differ from the toric code stabilizer group $\cS$), and then we finally apply the feedback. It is now reasonable to talk about errors due to measuring an incorrect check.  However, the same considerations as described above (namely, for a given cluster we find the soonest slice in time when correct measurements are performed, and collapse the effect of the noise in the cluster to a damage operator at that time step) should show that this model can similarly be reduced to a $\widetilde{p}$-bounded gadget error model.

\subsection{Fault-tolerant initialization and readout} \label{sec:initialization}

For a fully end-to-end fault tolerant scheme, we need to address both fault-tolerant initialization and readout for the toric code. In fact, both these operations in our construction are closely related to the standard fault-tolerant initialization and readout of the toric code~\cite{Dennis_2002}, except that local initialization takes $O(\text{poly}(L))$ time. It is sufficient to show that we can fault-tolerantly prepare logical states which are eigenstates of the logical $Z$ operators. We show how to prepare the $\ket{\overline {00}}$ logical state (defined after Def.~\ref{def:toric-code}). The other eigenstates are obtained by applying a combination of logical $X$ operators (discussed after Def.~\ref{def:toric-code} and we refer the reader to Ref.~\cite{Dennis_2002} for a more in-depth discussion) which requires at most an additional depth-2 circuit and can thus be fault-tolerantly implemented if $\ket{\overline {00}}$ can.

To prepare the $\ket{\overline {00}}$ logical state, we first start with the product state $\ket{0}^{\otimes N}$ (where $N = 2L^2$ for an $L \times L$ lattice), and then run the 2D toric code automaton, whose operation starts with measuring all stabilizers of the toric code. Upon measuring all stabilizers in the first round, we obtain a state $\ket{\psi}$ which has the property that $X$-type noise (detected by the measurements of $Z$-type star stabilizers) is $p$-bounded. These errors can be effectively `absorbed' into the first layer of noisy $X$-type gadgets with an insignificant change to the noise model; thus, initialization in the Pauli-$X$ error sector can be essentially treated as noiseless. However, upon measuring the $X$-type plaquette stabilizers, the state has an \emph{arbitrary} distribution of $Z$-type noise, which need not be $p$-bounded for small $p$. 

By the results from Sec.~\ref{subsec:TC} and the same arguments as in proofs of Corollaries \ref{corollary:full-simulaton-TC} and \ref{cor:logical_error_bound}, let us assume a $p$-bounded decoupled gadget error model and restrict our attention to the operation of the $Z$-type toric code automaton.  We now show that at times following one full period of the automaton, the state is effectively equivalent to that which would be realized if we instead started with the perfectly clean logical state $\ket{\overline {00}}$\footnote{ That it is this specific logical state with high probability follows from the standard argument from Ref.~\cite{Dennis_2002}. Namely, the starting product state is a $+1$ eigenstate of both logical $Z$ operators. Further operation of the type-$Z$ automaton can only result in logical errors that possibly apply type-$Z$ operators and thus do not change the logical state. The $X$ sector has $p$-bounded noise distribution from the start, and the probability of the logical failure is close to 0. }:
\begin{proposition}[Reliable noisy initialization] \label{prop:noisy-initialization}
    Start with the state $\ket{0}^{\otimes N}$ and run the toric code automaton $\widetilde{FT}^n(\cC)$ where $\cC = \cI_0^T$ under a $p$-bounded decoupled gadget error model, on an $L \times L$ torus with $L = 3^{nk}$. Call $\Delta = \mathrm{poly}(L)$ the depth of circuit $\widetilde{FT}^n(\cI_0)$. Due to the assumption about decoupled gadget error model and Fact~\ref{fact:XZ-decoupling}, we only study the action of the $X$-type automaton under a $p$-bounded gadget error model. 
    
    Call $\rho_\Delta$ the state after time $\Delta$; then, we have 
    \begin{equation}\label{eq:operator-covering1}
    \rho_{\Delta} = \sum_{\cO} p(\cO) \cO \ketbra{\overline{00}}{\overline{00}} \cO
    \end{equation}
    where $\cO$ is a tensor product of Pauli-$Z$ operators.  Call $m_j = m + (j-1) k$ for $j = 1,\cdots, n$, where $m,k$ are constants that are chosen such that Prop.~\ref{proposition:gate-ec-prime-TC} holds.  The probability distribution $p(\cdot)$ has the following property.  Sample $\cO \sim p(\cdot)$.  Then, with high probability, there exists another operator $\cO'$ such that $\cO' \cO \in \cS$ such that the following procedure succeeds:
    \begin{enumerate}
    \item Start with $j = 1$
    \item Find a subset of $jk$-cells such that the operator $\cO'$ restricted to the union of their support passes a layer of $m_j$-cell filters on each of them.  Call $\cC_{m_j}$ the set of  $m_j$-cells containing non-identity Pauli operators in $\cO'$ within these $j k$-cells.
    \item Remove non-identity Pauli operators inside these $j k$-cells, thus updating $\cO'$.
    \item Increment $j \to j+1$, and go to step 2. 
    \item When $j = n$, terminate the algorithm.
    \end{enumerate}
    We say that this procedure succeeds if the remaining operator $\cO' = \mathds{1}$.  Further, denote by $\cC'_{m_j}$ an arbitrary set of $m_j$-cells.  Then,
    \begin{equation} \label{eq:operator-covering2}
    \mathbb{P}(\cC'_{m_j} \subseteq \cC_{m_j}) \leq (Ap)^{2^{j-1} |\cC'_{m_j}|}
    \end{equation}
    for some constant $A$.
\end{proposition}
This proposition shows that the state $\rho_\Delta$ obtained after running one period of the toric code automaton differs from the perfect logical state $\ketbra{\overline{00}}{\overline{00}}$  by an operator that obeys a certain clustering/sparsity property.  In particular, the operator fits in a union of spacetime regions of increasing scale; the probability that a region at scale $k$ is needed to enclose the operator is $\sim (Ap)^{2^k}$.  This property results in $\rho_\Delta$ being essentially indistinguishable from a state one would have obtained by running the noisy automaton on a perfect initial logical state $\ketbra{\overline{00}}{\overline{00}}$ for the same amount of time. 
\begin{proof}
 First, we will show that, upon removing the part of the damage supported in $ \cC_{m_i}$ for $i < j$, the  $m_j$-cell damage in $jk$-cells (corresponding to $ \cC_{m_j}$) occurs according to a $(Ap)^{2^{j-1}}$-bounded distribution. We will do this by a hierarchical application of the $\EC$ A property.

First, we show this for $j=1$ and will then show how to reduce $j > 1$ to this case. The case $j=1$ follows from decomposing the circuit before the time $t$ in terms of  level-$k$ gadgets (which can be done by viewing it as a $\widetilde{FT}\left( \widetilde{FT}^{n-1}(\cI_0) \right)$ where $ \widetilde{FT}^{n-1}(\cI_0)$ is viewed as a circuit of elementary gadgets) and considering the last layer  $\cL(\widetilde{EC})$. We can add an $m$-cell filter or a $\varnothing$-cell filter to the $k$-cells corresponding to $\widetilde{\EC}$ gadgets 
that have at most one level-0 gadget failure.
$\cC_{m_1}$ denotes the set of $m$-cells within which the damage is confined. 
Because the noise is $p$-bounded, $\cC_{m_1}$ must be sampled from an $A p$-bounded distribution for a constant $A$ that depends on the number of elementary gadgets in $\widetilde{\EC}$.

Now, we want to prove the statement for $j > 1$. We show this by `removing' $j-1$ levels of outer simulation with the help of an $\ast$-decoder. For this, consider the last layer of gadgets $\cL(\widetilde{FT}^{j}(\widetilde{EC}))$ in the circuit. Let us append $(j-1)$  levels of a hierarchical application of an $\ast$-decoder to the end of the circuit.  We use the form of the hierarchical decoder $\widetilde{D^*}^{j-1}  = \Phi^{(j-1)k} \cL(\widetilde{FT}^{j-1}(\widetilde{EC}))$. 
Let us first argue that the application of the noiseless $\cL(\widetilde{FT}^{j-1}(\widetilde{EC}))$ contained in the $\ast$-decoder eliminates the same damage as the algorithm in the proposition run up to iteration $j-1$.  This follows from the iterated (and hierarchical) application of Corollary~\ref{corollary:remove-input-noise} with appropriate level reduction.  
Consider the layer of $\widetilde{EC}$ at the beginning of $\cL(\widetilde{FT}^{j-1}(\widetilde{EC}))$. This layer eliminates damage supported in $\cC_{m_1}$ and coarse-grains remaining damage to level $k$.  Additionally, any damage in a $k$-cell remains within this $k$-cell after $\widetilde{EC}$ due to the operator confinement properties. 
Next, consider the layer of $\widetilde{FT}(\widetilde{EC})$ at the beginning of $\cL(\widetilde{FT}^{j-1}(\widetilde{EC}))$, which eliminates damage supported in $\cC_{m_2}$ and coarse-grains all other damage to level-$2k$. 
After repeating this $j-1$ times, we obtain that all damage in $\cC_{m_i}$ for $i \leq j-1$ is removed, and the rest of the state will be coarse-grained to level $ik$. 
Because of confinement, the damage contained originally in a cell in $\cC_{m_j}$ will be coarse-grained at level $(j-1)k$ but will still be supported at most within the same $m_j$-cell. In addition, damage in $jk$-cells that do not permit an $m_j$-cell filter will be coarse-grained to level $(j-1)k$ but the supports will remain confined within the same $jk$-cells. 

Upon pulling hierarchical $\ast$-decoder  $(j-1)$ times through the circuit $\cL(\widetilde{FT}(\cI_0))$ and using Corollary \ref{corollary:full-simulaton-TC}, we obtain a simulated circuit which is $\cL(\widetilde{FT}(\cI_0))$ with an  $(Ap)^{2^{j-1}}$-bounded gadget noise model.  For this effective circuit $\cL(\widetilde{FT}(\cI_0)) = \cL(\widetilde{\EC}) \circ \cL(\cI_k) \circ \cL(\widetilde{\EC})$, we consider the last layer of $\cL(\widetilde{\EC})$ and repeat the argument for $j = 1$ to show that the distribution of permissible locations of filters is also $(Ap)^{2^{j-1}}$-bounded.  Finally, the permissible locations of $m$-cell filters in the simulated circuit are in one-to-one correspondence with the set of cells $\cC_{m_j}$ in the original circuit, thus proving the desired property.

Finally, we want to argue that the algorithm in the proposition statement will eliminate the damage with high probability after $j = n$ iterations.  This follows from the fact that upon pulling a level-$nk$ $\ast$-decoder through, the resulting gadget noise model acting on the simulation is $(Ap)^{2^{n-1}}$-bounded, which means that the probability that an error does not fit inside an $m_n$ cell is $(A'p)^{2^n}$. This is the probability of the damage $\cO$ not being the identity after the algorithm terminates.
\end{proof}

Note that the action of a level-$nk$ $\ast$-decoder on the actual state at time $\Delta$ will produce $\ketbra{\overline{00}}{\overline{00}}$ with high probability.  This is because the only errors that are capable of changing this state are Pauli-$X$ type. However, as we discussed, the initialization is effectively noiseless in $X$-noise sector. Therefore, the probability of the noise during the period of time $[0,\Delta]$ applying an $X$-type logical operator is bounded by $ (A''p)^{2^{n}}$ for some constant $A''$.  The  $Z$-type noise can at worst apply logical $Z$ operators during this time, which do not affect the encoded state.  Finally, other logical basis states can be obtained by applying $X$ logical operators after the fault-tolerant initialization.

\begin{remark}
    Under the same assumptions as in Prop.~\ref{prop:noisy-initialization}, assume that the initial state is a perfectly clean logical state of the toric code $\ketbra{\overline{00}}{\overline{00}}$. We can apply the same considerations as in the proof of Prop.~\ref{prop:noisy-initialization}. Thus, upon one full cycle of the toric code automaton, Prop.~\ref{prop:noisy-initialization} guarantees that the distribution of damage for the state with noisy initialization is the same as that for perfect initialization.  For all purposes, the noisy initialization produces a state with the same properties as perfect initialization.
\end{remark}

Note that we assumed the approximate noise model in Prop.~\ref{prop:noisy-initialization} for the brevity of the proof. The result above can be generalized to a more general $p$-bounded gadget noise model using Lemma~\ref{lemma:approximate-replacement-tc}, similarly to the earlier results in the paper.

The implementation of fault-tolerant readout is similar to the standard techniques. It is again sufficient to read out in one logical basis, such as that of the $Z$ logical operator (in which case the outcome is sensitive to the Pauli $X$ component of the noise as well as to measurement errors).  One can prepare an ancilla layer of the toric code in state $\ket{\overline 0\overline 0}$, apply a transversal CNOT operation between the two layers of the toric code, and destructively measure single-qubit $Z$ operators on the ancilla layer, reducing this to a state that can then be ported to a reliable classical computer for reliable logical readout.

\subsection{Numerical simulations and a conjecture} \label{sec:num-sim-and-conj}

In this subsection, we derive an upper bound on the number of level-0 errors required to produce a logical error 
in the toric code automaton (which we call an upper bound on `minimal weight failures').  Motivated by exact results from simulations on small system sizes, we conjecture that this bound is tight. We then show results of Monte-Carlo simulations that support a second conjecture stating that for several examples of relatively natural $p$-bounded noise models, this upper bound appears to control how the memory lifetime scales with $p$.

\subsubsection{Minimal weight failures} \label{ss:memory_upper_bound}

To formulate an upper bound on minimal weight failures, we will first define a specific setting in which errors will be studied. Consider a layer 
 $\cL(\EC_{s}) \circ \cL(\EC_{s})$
 of gadgets applied to a syndrome-free logical state. Assume a noise realization $\bm H$ with operator history
$\vec \cO = \{\cO_1,\cO_2,\dots,\cO_T\}$ applied at each time step satisfying the following properties: (i) each $\cO_t$ is a product of Pauli $X$ operators of weight $w_t$, (ii) the weights satisfy $\sum_{t=1}^T w_t = w$, and (iii) the union of the spacetime support of all operators is contained within the first $\cL(\EC_s)$.
     
The minimal level-$s$ oriented link failure weight $w_{\mathrm{link},o}^{(s)}$ with $o \in \{h,v\}$ is the minimum value of $w$ such that the output state from applying $\cL(\EC_s)$ has two syndromes at the endpoints of a single $s$-link oriented horizontally ($o=h$) or vertically ($o=v$). We will say that this link ``fails''. The minimal level-$s$ failure weight $w_{\mathrm{link}}^{(s)}$ is defined as the maximum of the two oriented failure weights: $w_{\mathrm{link}}^{(s)} = \max_o(w_{\mathrm{link},o}^{(s)})$. 
	
We will also find it useful to define a failure weight for ``corner-shaped errors'': we define the minimal level-$s$ oriented corner failure weight $w_{\mathrm{c},o}^{(s)}$, $o\in \{1,2\}$ as the minimum value of $w$ such that the output state has two syndromes located at diametrically opposite corners of an $s$-cell.  The value of $o$ specifies one of two possible orientations for the corner error and the minimal level-$s$ corner failure weight is $w_{\mathrm{c}}^{(s)} = \max_o(w_{\mathrm{c},o}^{(s)})$.  

Let us now show that the minimal level-$s$ link and corner failure weights satisfy
    \be\label{minimal_weight} 
    w_{\mathrm{link}}^{(s)} \leq 2^{ \lfloor (s+1)/2 \rfloor}, \hspace{0.55cm}
	w_{\mathrm{c}}^{(s)} \leq 2^{\lfloor (s+2)/2 \rfloor }.
	\ee

We will show this by explicitly constructing operator histories that produce the RHS of the bounds in Eq.~\eqref{minimal_weight}.  For an upper bound, it is sufficient to consider an operator history with support within a \emph{single} $\EC_s$ of the layer. 
First consider $s=1$. By Remark~\ref{R0_search_remark}, both $w^{(1)}_{\mathrm{link}}$ and $w_{\mathrm{c}}^{(1)}$ must be at least 2. Furthermore, for any arrangement of two syndromes on the vertices of a 1-cell, we can construct a weight-2 noise realization that contains errors only at time step 1 which when acted on by $\EC_1 \circ \EC_1$ within the layer of gadgets, result in this syndrome configuration. To show $w^{(1)}_{\mathrm{c}} =2$, consider  a particular $\EC_1$ labeled by the coordinate of its bottom left corner $(x,y) = (1,1)$.  An example of a noise realization failing a $1$-corner with output syndromes at $(1,1)$ and $(4,4)$ occurs due to an error supported at time $t=1$ at locations $(1,2,1)$ and $(2,3,1)$(here $(x,y,o)$ indicates the link that connects the vertex at position $(x,y)$ to that at position $(x+1,y)$ if $o=1$ or at position $(x,y+1)$ if $o=2$). Interchanging the $x$ coordinates of these links leads to an output configuration with syndromes at $(4,1)$ and $(1,4)$. Therefore 
\be \label{init_conds} w_{\mathrm{link}}^{(1)} = w_{\mathrm{c}}^{(1)} = 2.\ee 
	
Next, we will show the following recursion relations: 
\be \label{recursive_relns} w_{\mathrm{link}}^{(s)} \leq w_{\mathrm{c}}^{(s-1)}, \hspace{0.55cm} 
w_{\mathrm{c}}^{(s)} \leq 2 w_{\mathrm{link}}^{(s-1)}. 
 \ee 
We show these by explicitly constructing errors of a given weight that are supported in the first half of one of the layers $\cL(\EC_{s-1})$  inside the noisy layer $\cL(\EC_s)$ gadgets (which is then followed by a noiseless layer of $\cL(\EC_s)$ gadgets), and which cause a level-$s$ link or corner failure. We will use this to upper bound $w_{\mathrm{link}}^{(s)}$ and $w_{\mathrm{c}}^{(s)}$ in terms of $w_{\mathrm{link}}^{(s-1)}$ and $w_{\mathrm{c}}^{(s-1)}$. 

For the bound on $w_{\mathrm{link}}^{(s)}$, we first show that it is possible to fail both vertically- and horizontally-oriented $s$-links by causing a certain level-$(s-1)$ corner failure in the very last layer of noisy $\EC_{s-1}$ gadgets; then, after reducing by $s-1$ levels, we obtain a 1-corner failure (see Eq.~\eqref{R0-action}) as input to the second, noiseless, (level-reduced) $\EC_1$ gadget. 
This shows $w_{\mathrm{link},y}^{(s)} \leq w_{\mathrm{c}}^{(s-1)}$. 
We note that the same argument does not go through for $w_{\mathrm{link},x}^{(s)}$ because $\EC_{s}$ is not invariant under $\frac{\pi}{2}$-rotation of the rules. 
Indeed, an explicit computer-aided search shows that {\it no} input state containing a single level-0 corner-shaped error can lead to a failed $x$-oriented 1-link when acted on by a clean $\EC_1$. However, a level-0 corner error occurring after the application of the first pair of $R_v$ layers but in the first half of operation of $\EC_1$ can lead to a coarse-grained level-1 $x$-link failure in the output (this follows by using the above discussion regarding $w_{\mathrm{link},y}^{(s)}$ as well as the fact that first pair of $R_v$ is followed by a pair of $R_h$ in $\EC_1$). This means that there exists a level-$(s-1)$ corner failure occurring in a layer of $\EC_{s-1}$ within $\EC_{s}$, such that by a level reduction argument the output is an $x$-type $s$-link failure. Thus $w_{\mathrm{link},x}^{(s)} \leq w_{\mathrm{c}}^{(s-1)}$ and consequently $w_{\mathrm{link}}^{(s)} \leq w_{\mathrm{c}}^{(s-1)}$.

We adopt a similar approach for the bound on $w_{\mathrm{c}}^{(s)}$ by considering an operator history supported entirely within the first layer $\cL(\EC_{s-1})$  in  $\cL(\EC_s)$. The discussion above Eq.~\eqref{init_conds}---in particular, the examples of errors which act at time $t=1$ and lead to level-1 corner failures under $R_0$---then shows that $w_{\mathrm{c}}^{(s)} \leq2 w_{\mathrm{link},x}^{(s-1)} \leq 2w_{\mathrm{link}}^{(s-1)}$.  As this is true for both orientations of corner errors, this implies Eq.~\ref{recursive_relns}, which then yields the desired Eq.~\ref{minimal_weight}.

We note that the same minimum weight bounds work for $FT^{s-1}(\EC)$\footnote{Note the absence of the tilde: this is the conventional fault tolerant simulation, which is the same as an outer simulation assuming inner simulation with $k =1$.} as for the $\EC_s$ gadgets.

\subsubsection{Numerics and memory lifetime}\label{ss:tc_numerics}

In this subsection we will numerically study the memory lifetime of the toric code automaton subjected to several simple examples of $p$-bounded noise models. Unfortunately, we are unable to definitively extract a threshold noise strength $p_c$ in any of these noise models. This is due to the fact that our automaton is naturally defined only on systems whose linear dimension is a power of 3, and that increasing this power does not always produce a suppression of the logical failure rate (although the suppression appears to be much more rapid than the bound derived in our fault tolerance result). For simple i.i.d Pauli noise, the results below suggest that $p_c$ could be as large as $\sim 10^{-4}$, but more detailed numerical work (as well as optimization of the gadgets) will be required to produce a computable estimate. 

We will simplify the problem slightly by performing $s$ iterations of the outer simulation (i.e. setting the level of inner simulation $k = 1$) and studying how the error suppression depends on $s$ (which establishes strictly stronger properties than proved analytically in the paper). 
For numerical convenience, we will additionally replace $\EC$ gadgets with undoubled gadgets $\EC^{(1/2)}$ (we will drop the superscript henceforth), which reduces the overall circuit depth but does not seem to affect numerical performance. 
Based on the results in the previous subsection, we also provide a conjecture stating that the memory lifetime diverges with $p\rightarrow 0$ much faster than the scaling guaranteed by our main fault tolerance result in Eq.~\eqref{main_logical_error_bound}. 

In analogy with our discussion for Tsirelson's automaton in Sec.~\ref{ss:tsirelson_numerics}, we define the relaxation time for the toric code automaton as follows: 
\begin{definition}[Relaxation time]
	Let $|\psi\rangle$ be a logical state of the toric code on a $3^{s} \times 3^{s}$ torus. Let $\rho_{t,\bm H_t}$ be the state obtained by acting on $|\psi\rangle\langle \psi|$ repeatedly with $FT^{s-1}(\EC)$ for $t$ time steps and noise history $\bm H_t$, followed by a noise-free application of $FT^{s-1}(\EC)$. The relaxation time is defined as: 
	\be t_{\rm rel}^{(s)} = \mathbb{E}_{\bm H_t}[\min_t( t:  \rho_{t,\bm H_t} \neq |\psi\rangle\langle\psi|) ].\ee 
\end{definition}
The definition of $t_{\rm rel}^{(s)}$ is phrased in this way (instead of being phrased in terms of $\|\rho_t - |\psi\rangle\langle \psi |\|_1$) because it provides a simpler metric for numerics. We will be studying the $X$ logical sector only. 

We now define the three error models that will be considered in this subsection. 
\begin{itemize}
\item i.i.d. qubit error model: defined by applying i.i.d. Pauli $X$ (and $Z$) operators with probability $p$ on qubits after each time step.
\item i.i.d. measurement error model: defined by flipping the syndrome measurement outcome at a given time step before applying feedback according to an i.i.d. distrbution with probability $p$.  No other errors occur.
\item i.i.d. gadget error model: defined such that each elementary gadget fails arbitrarily based on an i.i.d. distribution with probability $p$.
\end{itemize}

\begin{figure}[!htbp]
	\centering
	\begin{subfigure}[b]{0.49\textwidth}
		\centering
		\includegraphics[width=\textwidth]{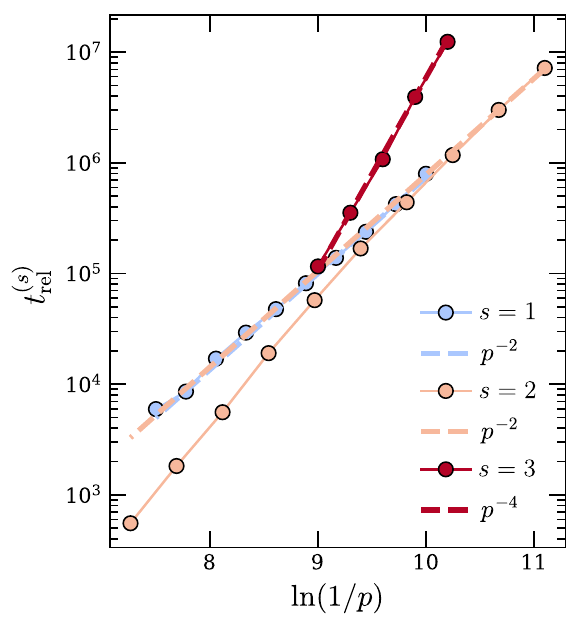} 
	\end{subfigure}
	\hfill
	\begin{subfigure}[b]{0.49\textwidth}
		\centering
		\includegraphics[width=\textwidth]{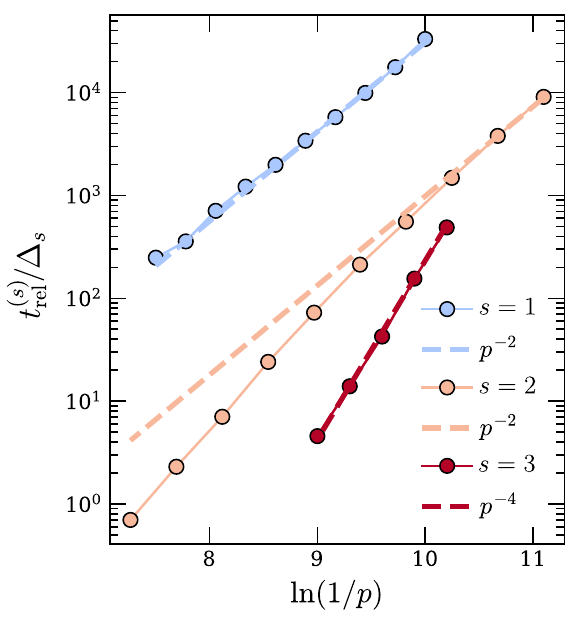} 
	\end{subfigure}
	\caption{ Memory times under repeated application of $FT^{s-1}(\EC)$ gadgets for $s = 1,2,3$ under the i.i.d. qubit noise model with noise strength $p$, showing both the relaxation time $t_{\rm rel}^{(s)}$ (left panel) and the relaxation time normalized by circuit depth of $FT^{s-1}(\EC)$, $t_{\rm rel}^{(s)} / \Delta_s$ (right panel). The dashed lines are fits to power law scalings $t_{\rm rel}^{(s)} \sim p^{-2^{\lfloor (s+1)/2\rfloor}}$ suggested by Eq.~\eqref{conjectured_scaling}. For $s=2$ this fit matches the scaling of $t_{\rm rel}^{(2)}$ only at the smallest values of $p$; at larger values of $p$ the data is better described by a steeper $t_{\rm rel}^{(2)}\sim p^{-3}$ scaling. }\label{fig:tc_monte_carlo}
\end{figure}

In this subsection, we assume that for the gadget error model, elementary gadgets have the supports defined in Def.~\ref{eq:supp-TC}. 

To get an expectation for the scaling of $t_{\rm rel}^{(s)}$ with $p$ in these error models, we first perform a similar analysis to the previous subsection to determine minimal weight failures in the i.i.d. measurement and gadget error models. 
Define a generalized notion of a weight $w$ for an operator history to be the total number of failures (measurement or gadget) involved. 
Using this definition, the link and corner failures $w_{\mathrm{link}}^{(s)}$ and $w_{\mathrm{c}}^{(s)}$ at $s \geq 1$ are otherwise defined in the same way as in the previous subsection. Then, for $s \geq 2$, we can use the same logic as in the previous subsection to show the same bounds 
$ w_{\mathrm{link}}^{(s)} \leq w_{\mathrm{c}}^{(s-1)}, \, w_{\mathrm{c}}^{(s)} \leq 2 w_{\mathrm{link}}^{(s-1)}$.

To produce a more explicit bound, we need to know the values of the minimal failure weights at $s=1$. 
Consider first the measurement error model. It is not hard to argue analytically that $w_{\mathrm{link}}^{(1)} =1$ and $w_{\mathrm{c}}^{(1)} =2$.\footnote{A single measurement error on a clean input can only fail a single $\cT_0^{h/v}$ gadget in a way which creates two anyons along a single row or column of the lattice. This can lead to a link failure, but it is straightforward to see that this can never produce a corner failure. One may also verify that two such $\cT_0^{h/v}$ failures can produce a corner failure (e.g. failures which occur in the second step of the first $R^v_0$ and the fifth step of the first $R^h_0$ in $\EC_1$).} For the gadget error model, failing an $\cM_0$ can fail both a level-1 corner or link, so $w_{\mathrm{link}}^{(1)} = w_{\mathrm{c}}^{(1)} =1$.
Solving the recursion relations generated by 
Eq.~\ref{recursive_relns} with these initial conditions then gives 
	\be \label{general_minimal_weight} 
    w_{\mathrm{link}}^{(s)} \leq 2^{\lfloor (s-s_0)/2 \rfloor}, \hspace{0.55cm} 
	w_{\mathrm{c}}^{(s)} \leq 2^{\lfloor (s+1-s_0)/2 \rfloor}, \ee 
where the offset $s=-1$ for the qubit noise model, $s_0 = 0$ for the measurement error model, and $s_0 = 1$ for the gadget error model. 

\begin{figure}[!htbp]
	\centering
	\begin{subfigure}[b]{0.49\textwidth}
		\centering
		\includegraphics[width=\textwidth]{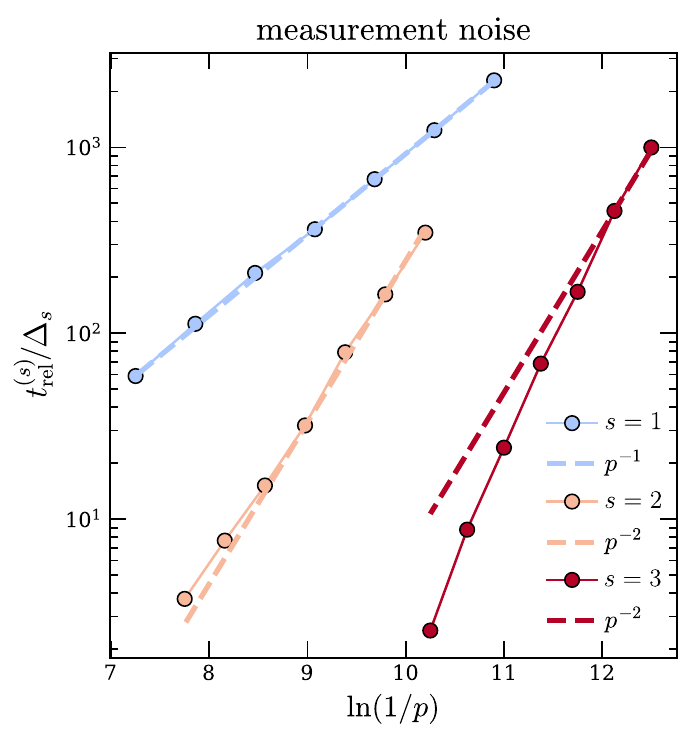} 
	\end{subfigure}
	\hfill
	\begin{subfigure}[b]{0.49\textwidth}
		\centering
		\includegraphics[width=\textwidth]{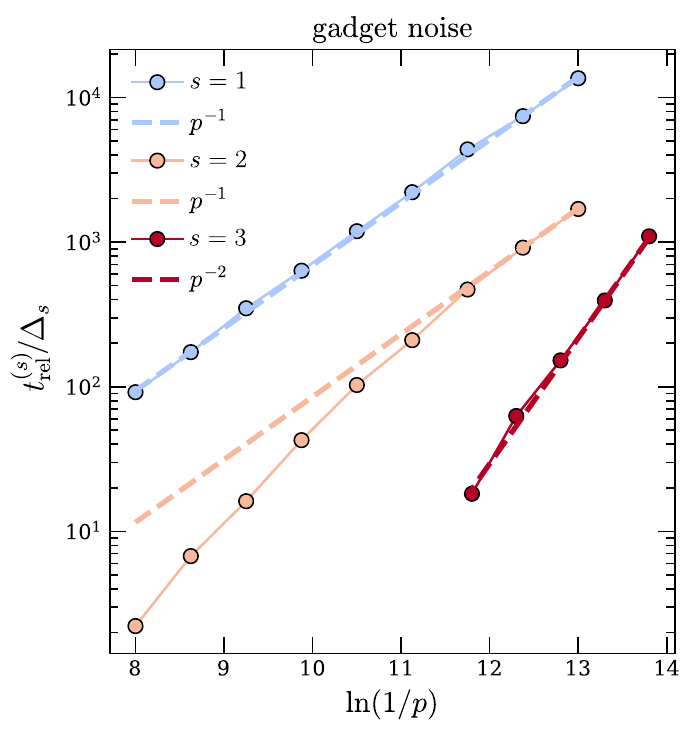} 
	\end{subfigure}
	\caption{Relaxation times in the toric code automaton under strength-$p$ i.i.d. measurement noise (left) and i.i.d gadget noise (right). $t_{\rm rel}^{(s)}$ is shown normalized by $\Delta_s$, the depth of $FT^{s-1}(\EC)$. The dashed lines are fits to power-law scalings suggested by Eq.~\eqref{conjectured_scaling}, with offsets $s_0=0$ for measurement noise and $s_0=1$ for gadget noise. }\label{fig:tc_monte_carlo_other_noise}
\end{figure}

For the gadget error model, these bounds are tight for $s \leq 3$: this follows from $w_{\mathrm{link}}^{(3)}, w_{\mathrm{c}}^{(3)}\geq 2$ by the nilpotence of $FT^{2}(\EC)$ (see Lemma~\ref{lemma:nilpotentnumerics-TC}). 
Since the qubit and measurement error models can be obtained as special cases of the gadget error model, the minimal failure weights in these models must be at least as large as those in the gadget error model; together with the above results regarding  $w_{\mathrm{link/c}}^{(1)}$ in these models, this shows that the bounds are saturated at $s\leq 3$ for all cases except possibly for $w_{\mathrm{link/c}}^{(3)}$ and $w_{\mathrm{c}}^{(2)}$ in the qubit error model, and $w_{\mathrm{link}}^{(2)}, w_{\mathrm{c}}^{(3)}$ in the measurement error model. We have verified that the bounds on the above weights with $s=2$ are saturated by exhaustive numerical search. We were unable to perform a similar search for the above weights with $s=3$, but a Monte Carlo sampling of error configurations suggests that the bounds indeed remain tight. 



At small error rates, we expect the scaling of the relaxation time with $p$ to agree with the expression that one obtains by assuming that the relaxation time is dominated by the probability of the smallest-weight configurations, namely $t_{\rm rel}^{(s)} \sim p^{-w^{(s)}_{\rm link}}$. This is a reasonable assumption in the limit of large system sizes when $p/p_c\ll1$ because in that case, minimal-weight events should dominate the probability of logical failure. With this expectation and our (admittedly limited) evidence about the tightness of the bounds on minimal-weight events, we propose the following conjecture:
\begin{conjecture}\label{conj:scaling_guess}
    The following statements hold:
    \begin{enumerate} 
        \item The bound in Eq.~\eqref{general_minimal_weight} is tight for qubit, measurement and gadget error models with model-specific values of $s_0$.
        \item  For a $p$-bounded error model, there exist model-specific positive constants $C$, $\gamma$, $s_0$ such that the relaxation time $t_{\rm rel}^{(s)}$ below threshold in the $p\rightarrow 0$ limit is determined by the minimum-weight logical failure event: 
	\be \label{conjectured_scaling-0} 
    t_{\rm rel}^{(s)} \sim (Cp^\gamma)^{-w^{(s)}_{\rm link}} \sim (Cp^\gamma)^{-2^{\max(0,\lfloor (s-s_0)/2 \rfloor)}}.
    \ee 
    \item For i.i.d. qubit, measurement and gadget error models with error rate $p$, 
    \be \label{conjectured_scaling} 
    t_{\rm rel}^{(s)} \sim p^{-w^{(s)}_{\rm link}} \sim p^{-2^{\max(0,\lfloor (s-s_0)/2 \rfloor)}}
    \ee 
    where $s_0=-1,0,1$ for qubit, measurement, and gadget error models, respectively. 
    \end{enumerate}
\end{conjecture}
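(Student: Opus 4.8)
The plan is to reduce all three parts of Conjecture~\ref{conj:scaling_guess} to its combinatorial heart, part~(1): once the minimal failure weights $w^{(s)}_{\rm link}$ and $w^{(s)}_{\rm c}$ are pinned down \emph{exactly}, the relaxation-time scalings in parts~(2) and~(3) follow from fairly standard probabilistic estimates. So I would put essentially all the effort into part~(1), and within it into the \emph{lower} bounds $w^{(s)}_{\rm link}\ge 2^{\lfloor(s-s_0)/2\rfloor}$, since the matching upper bounds are already obtained by explicit construction in Sec.~\ref{ss:memory_upper_bound}.

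For part~(1) the first step is to upgrade the recursion \eqref{recursive_relns} from inequalities to equalities. The upper bounds $w^{(s)}_{\rm link}\le w^{(s-1)}_{\rm c}$ and $w^{(s)}_{\rm c}\le 2w^{(s-1)}_{\rm link}$ are shown by exhibiting error histories; I would complement them with the lower bounds $w^{(s)}_{\rm link}\ge w^{(s-1)}_{\rm c}$ and $w^{(s)}_{\rm c}\ge 2w^{(s-1)}_{\rm link}$, proved by induction on $s$ using exactly the triad of tools from the fault-tolerance proof: confinement (Prop.~\ref{prop:confinement-for-TC}, Facts~\ref{fact:1} and~\ref{fact:4}), the coarse-graining property of $\EC$ (Lemma~\ref{lemma:EC-coarse-grains}), and nilpotence (Lemma~\ref{lemma:nilpotentnumerics-TC}). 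Concretely: inside $\EC_s$ the first clean layer of $\EC_{s-1}$ gadgets renders the state level-$(s-1)$ coarse-grained, and any single sub-failure isolated at the nilpotence level would be cleaned; hence a surviving level-$s$ link failure forces an ``effective'' level-$(s-1)$ corner failure, which by the inductive hypothesis costs at least $w^{(s-1)}_{\rm c}$ faults, and a surviving level-$s$ corner failure forces two independent level-$(s-1)$ link failures (one per leg, using that $\EC_s$ runs both $h$ and $v$ passes, $\EC_s=(R_0^h\circ R_0^h\circ R_0^v\circ R_0^v)^2$ at the bottom), costing $2w^{(s-1)}_{\rm link}$. The base cases $s\le 3$ are covered by the exhaustive/Monte-Carlo searches already reported, together with $w^{(3)}_{\rm link},w^{(3)}_{\rm c}\ge 2$ from the nilpotence of $FT^2(\EC)$ for the gadget model and the separate $s=1$ analyses for the other models.

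With part~(1) in hand, the lower bound on $t^{(s)}_{\rm rel}$ in parts~(2)--(3) is a weight-graded union bound: per application of $FT^{s-1}(\EC)$ the logical-failure probability is at most $\sum_{w\ge w^{(s)}_{\rm link}}N_w\,p^w$, where $N_w$ counts weight-$w$ configurations in the bounded spacetime support; no configuration of weight below $w^{(s)}_{\rm link}$ fails (this is what part~(1) buys us), and $N_w$ grows only exponentially in $w$, so the sum is $\le C'p^{w^{(s)}_{\rm link}}$ for small $p$, giving $t^{(s)}_{\rm rel}\gtrsim p^{-w^{(s)}_{\rm link}}$; the passage from a general $p$-bounded model to the decoupled/approximate one and the substitution $p\mapsto Cp^\gamma$ are handled exactly as in Lemma~\ref{lem:equiv_decoupled_gadget} and Cor.~\ref{cor:logical_error_bound}, which is what produces the $(Cp^\gamma)^{-w}$ form in \eqref{conjectured_scaling-0}. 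The matching \emph{upper} bound on $t^{(s)}_{\rm rel}$ (for the i.i.d.\ models, hence part~(3), and part~(2) under a genericity hypothesis on the noise) uses the explicit minimal-weight bad events of Sec.~\ref{ss:memory_upper_bound}: for i.i.d.\ noise each is a fixed set of $w^{(s)}_{\rm link}$ spacetime faults occurring with probability $\Theta(p^{w^{(s)}_{\rm link}})$, and since a minimal bad event is spatially confined (Prop.~\ref{prop:confinement-for-TC}) its translates at well-separated spacetime locations are essentially independent, so a second-moment / Borel--Cantelli argument shows that after $O(p^{-w^{(s)}_{\rm link}})$ steps one fires with high probability, giving $t^{(s)}_{\rm rel}\lesssim p^{-w^{(s)}_{\rm link}}$ and hence the $\sim$ in \eqref{conjectured_scaling}.

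The hard part, and the reason the statement is left as a conjecture, is the lower bound in part~(1), and specifically the factor of $2$ in $w^{(s)}_{\rm c}\ge 2w^{(s-1)}_{\rm link}$. The danger is the very phenomenon that makes $\cM_1$ fail to be nilpotent (the $111\,000\,000$ example): a single well-placed fault can create an anyon pair at a right angle at no extra cost, and one must rule out that the noiseless gadget dynamics themselves ``parasitically'' transport those anyons to opposite corners of an $s$-cell without further faults. Controlling this rigorously needs either a delicate case analysis of how the $R_0^{h/v}$ and match gates split and move anyons across a cell, or --- cleaner but harder to find --- a monotone potential on syndrome configurations that is non-increasing under the noiseless dynamics and changes by a bounded amount per fault, so that reaching a corner-to-corner configuration provably costs $2w^{(s-1)}_{\rm link}$. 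Producing such an invariant (equivalently, a \emph{sharp} rather than order-of-magnitude fault-distance bound for these non-concatenated gadgets) is the main obstacle; by contrast the secondary issue --- near-independence of translated minimal bad events in the i.i.d.\ upper bound --- should yield to the confinement lemmas, since minimal bad events have $O(1)$-bounded spatial footprint at each level.
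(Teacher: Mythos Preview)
The paper does not prove this statement: it is explicitly left as a conjecture, supported only by (i) verification of part~(1) for $s\le 3$ --- and even there incompletely, since for some of the $s=3$ weights only Monte Carlo sampling is offered --- together with the nilpotence of $FT^2(\EC)$, and (ii) Monte Carlo estimates of $t^{(s)}_{\rm rel}$ for $s\le 3$ that are consistent with \eqref{conjectured_scaling}. There is no proof in the paper to compare against; your proposal is a plan of attack on an open problem.

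Your decomposition matches the paper's perspective: part~(1) is the combinatorial core, and the probabilistic arguments for parts~(2)--(3) are essentially standard once part~(1) is known (with the caveat you correctly flag that the upper bound on $t_{\rm rel}$ in part~(2) for a generic $p$-bounded model needs a genericity hypothesis, since an adversarial $p$-bounded distribution can simply never realize the minimal bad event). You also correctly isolate the factor of~$2$ in $w^{(s)}_{\rm c}\ge 2w^{(s-1)}_{\rm link}$ as a serious obstacle.

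However, there is a gap \emph{earlier} than where you place the difficulty. Your argument for the other lower bound, $w^{(s)}_{\rm link}\ge w^{(s-1)}_{\rm c}$, asserts that a surviving level-$s$ link failure ``forces an effective level-$(s-1)$ corner failure,'' and this is not justified. Nilpotence (Lemma~\ref{lemma:nilpotentnumerics-TC}) says only that a \emph{single} fault isolated at level $\ell\in\{3,6\}$ is cleaned; it does not say that the only way for damage to survive one pass of $\EC_s$ and emerge as a level-$s$ link is via a level-$(s-1)$ corner-shaped intermediate. A priori the level-$s$ link could equally arise from two disjoint level-$(s-1)$ link failures, or from a configuration outside your link/corner dichotomy altogether. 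The paper's construction in Sec.~\ref{ss:memory_upper_bound} shows one direction (a level-$(s-1)$ corner \emph{can} cause a level-$s$ link), but inverting it requires exactly the sharp case analysis or monotone invariant you invoke only for the factor-of-$2$ inequality. More broadly, the tools you cite --- nilpotence, confinement, coarse-graining --- are calibrated to prove $p\to O(p^2)$ suppression over $k$ levels (with $k\le 34$ rigorously, conjecturally $k=2$ or~$3$); they do not deliver \emph{exact} single-level fault-weight recursions, which is what turning the inequalities \eqref{recursive_relns} into equalities would demand. Both recursion lower bounds are open, not just one.
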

The factor 1/2 in the exponent of $\lfloor (s-s_0)/2 \rfloor$ can be understood as having to go two levels up each time in order to achieve logical error probability suppression (as opposed to the $34$ levels used in our main fault tolerance result). 
We now use Monte Carlo simulations to investigate the extent to which this conjecture is true at small system sizes $s \leq 3$. 
In these numerics, $t_{\rm rel}^{(s)}$ is estimated by sampling random noise realizations for a particular error model and running the automaton until a logical failure (as determined by the result of performing an ideal decoding) occurs. 
The results of doing this for the simplest i.i.d. qubit error model are shown in Fig.~\ref{fig:tc_monte_carlo}. At the smallest values of $p$, a good fit to the scaling predicted by Conjecture~\ref{conj:scaling_guess} with $s_0=-1$ is observed. For $s=2$ the relaxation time at larger values of $p$ instead scales as $t_{\rm rel}^{(2)} \sim p^{-3}$---faster than the $\sim p^{-2}$ scaling predicted from Conjecture~\ref{conj:scaling_guess}---with the expected $\sim p^{-2}$ scaling appearing to be recovered at smaller values of $p$. This crossover is likely attributable to an entropic effect: the number of ways to fail $FT(\EC)$ using three errors may be much larger than the number of ways to fail using two errors, so Monte Carlo more frequently probes these configurations (such that an expansion of $t_{\rm rem}^{(2)}$ in $1/p$ reads $t_{\rm rel}^{(2)} \sim Ap^{-2} + Bp^{-3} + \cdots$ with $B \gg A$). 

Fig.~\ref{fig:tc_monte_carlo_other_noise} shows relaxation times for the i.i.d. measurement and gadget error models. In both cases, Conjecture~\ref{conj:scaling_guess} yields good fits to the estimated values of $t_{\rm rel}^{(s)}$ at the smallest values of $p$. Similar to the case of $t_{\rm rel}^{(2)}$ in the i.i.d. qubit error model, crossovers between a faster-diverging scaling at larger $p$ to a slower-diverging scaling form at smaller $p$ (which agrees with Conjecture~\ref{conj:scaling_guess}) are observed for $s=2$ in the gadget-based noise model (where $s_0=1$), and $s=3$ in the measurement error model (where $s_0=0$). 

\section{Constructing a 3D time translation-invariant memory} \label{sec:3D}

In the previous section, we proved that the toric code automaton is fault tolerant.  Specifically, a fault tolerant encoding $\widetilde{FT}^{n }(\cC)$ with $p$-bounded noise simulates $\cC$ experiencing $A p^{2^{n}}$-bounded noise (for inner simulation level $k\leq 34$).  However, the toric code automaton is hardwired, and thus, the local rules are not specified in a time translation-invariant way.  An unfortunate consequence of this is that in order to implement the automaton, a reliable read-only ``program'' storing $O(\log L)$ bits is needed per each qubit to encode the instruction set for the level-0 gadgets that are applied at different points in time.  This makes the decoder not intrinsically local since one needs to reliably and fault tolerantly store this read-only memory as well, which is not possible if the size of the memory per site diverges in the thermodynamic limit.

Fortuitously, we may get around this issue simply by going one spatial dimension up. This gives us a simple 3D construction that inherits the properties of the 2D automaton but is fault tolerant without the need for a logarithmically-sized read-only memory per qubit.  This is a fully fault tolerant and local scheme in discrete time which on an $L\times L \times  L'$ lattice on a three-dimensional torus will encode $2L'$ logical qubits.  Thus, due to the large encoding rate, it is capable of performing quantum computation as well, which we will not discuss in this paper.  The scheme is still not translation-invariant in space, but this does not contradict locality (in that no read-only memories of divergent size are required to implement the circuit).

First, we will define the notion of a layering (foliation) for the 2D toric code automaton.

\begin{definition}[Layering of the toric code automaton] \label{def:layering}
Consider a fault-tolerant circuit $\widetilde{FT}^{n}(\cI_0)$ corresponding to one period of the toric code automaton operation.  Choosing, for simplicity, $L = 3^{nk}$, call $\Delta = \text{poly}(L)$ the depth of this circuit. 

Consider a stack of $2\Delta$ two-dimensional layers in three dimensions with periodic boundary conditions, forming a lattice of dimensions $(L)_x \times (L)_y \times (2 \Delta)_z$.  A layering of the toric code automaton is an assigned spatial pattern of elementary gadgets to the $2\Delta$ layers. Layers with coordinate $z=2t$ are associated with elementary gates in the circuit $\widetilde{FT}^{n}(\cI_0)$ at timestep $t$, and layers with coordinate $z=2t+1$ are associated with a layer of $\cI_0$ gates.
\end{definition}

With the description of the layering above, we can define the 3D automaton for the toric code:

\begin{definition}[3D automaton for layers of the 2D toric code] \label{def:3D-aut-definition}
Consider a layering of the 2D toric code automaton with $2\Delta$ layers.  Initialize each of the $2\Delta$ layers to each individually host a logical state of the 2D toric code. The 3D automaton for the toric code applies elementary gadgets according to the following prescription:
\begin{itemize}
\item (t = 4s): Apply the gadgets in each layer determined by the layering, Def.~\ref{def:layering};
\item (t = 4s+1): Apply a transversal swap gate between the layers: $\text{SWAP}((\bm{x}, 2t), (\bm{x}, 2t+1))$ where the notation $(\bm x, t)$ specifies the spatial location $\bm x$ and the layer $t$. The two arguments of the swap gate label the spatial locations of the two qubits which are swapped. This step is labeled `SWAPs (even)' in Fig.~\ref{fig:3dfig} because this occurs for even $(t-1)/2$.
\item (t = 4s+2): Another application of the gadgets in each layer determined by the layering, Def.~\ref{def:layering};
\item (t = 4s+3): Apply a transversal swap gate $\text{SWAP}((\bm{x}, 2t+1), (\bm{x}, 2t+2))$. This step is labeled `SWAPs (odd)' in Fig.~\ref{fig:3dfig} because this occurs for odd $(t-1)/2$.
\end{itemize}
This defines a time-periodic operation with period 4. 
\end{definition}
This also resembles the ``Toom layering construction'' introduced in~\cite{gacs1988simple}.

\begin{figure}[!htbp]
    \centering
    \includegraphics[width=0.6\linewidth]{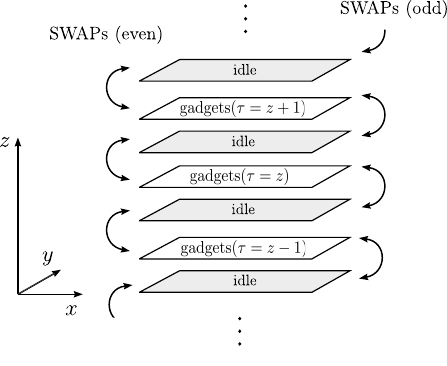}
    \caption{An illustration of the 3D automaton operating on layers of 2D toric code realizing a time translation-invariant memory of $4 \Delta$ logical qubits. This automaton is defined on a three-dimensional torus with dimensions $L \times L \times 2 \Delta$ and operates with a constant period of 4 time steps; $\Delta$ is defined to be the depth of the $\widetilde{FT}^n(\cI_0)$ circuit of the 2D automaton. 
    In each period of its operation, at even times $t$, the 3D automaton first applies elementary gates in individual layers: at spatial position $(x,y,z=2\tau)$ in even-$z$ layers, the rule applies the same elementary gadgets as the 2D automaton executes at location $(x,y)$ and time $\tau$; odd-$z$ layers simply idle.
    At even times $t$, the layers are transversely swapped. When $(t-1)/2$ is even, the layers with coordinates $2z$ with $2z+1$ are swapped (labeled by SWAPS(even)), and when $(t-1)/2$ is odd, the layers with coordinates $2z+1$ with $2z+2$ are swapped (labeled by SWAPS(odd)).  }
    \label{fig:3dfig}
\end{figure}

We now provide some intuition for why these steps are necessary. The 3D automaton works by foliating the toric code automaton in space and moving layers (each corresponding to a toric code) through the 3D space. 
The ``clock'' governing the operation of the 2D automaton is now laid out along the additional spatial dimension. 
Consider a ``moving frame'' that follows any given initial toric code state as it is acted upon by the automata and is swapped from layer to layer. The toric code state will then experience the circuit corresponding to the toric code automaton, interspersed with idle and the timesteps when the entire layer is being transversally swapped.

Let us now explain why $2 \Delta$ (rather than $\Delta$) layers are needed. The operation of simultaneously moving all of the toric code layers by one step cannot be implemented locally (more formally, one would say that such a one-dimension global shift constitutes a nontrivial one-dimensional QCA which cannot be implemented with a constant-depth circuit), and the only way to do so is to add auxiliary layers in between.  With an auxiliary set of layers, the shift of all toric code layers is done in two steps: first, we shift the toric codes from the gadget layer to the auxiliary layer, and second, we shift the toric codes from the auxiliary layer to the next gadget layer.  Half of the toric codes move through space in the one direction while the other half move in the opposite direction. This is fine because the toric code automaton rule is symmetric once laid out in an extra dimension. 
Note that we impose periodic boundaries in the third dimension so that moving a toric code around the 3D substrate emulates the action of repeatedly applying the 2D automaton circuit.

The proof of fault tolerance of this 3D automaton (in the presence of additional errors that are possible on identity gates in auxiliary layers and in SWAP gates) follows analogously to the 2D case:

\begin{theorem} \label{thm:3D}
Consider the 3D automaton defined in Def.~\ref{def:3D-aut-definition}.  Assuming a perfect initialization of each of the $2\Delta$ layers to a logical state of the toric code, so that the layer with coordinate $z$ is in the state $\ketbra{\psi}{\psi}$.  We then run the 3D toric code automaton on this input.  We assume a $p$-bounded gadget noise model, where, in addition to the usual definition, each SWAP gate is also treated as an elementary gadget. Then, the state at time $t$ (which will be located at coordinate $(z + t) \ \mathrm{mod} \ 4 \Delta$, denoted $\rho_t$, fulfills
    \begin{equation}
       \| \rho_{t} - \ketbra{\psi}{\psi}\|_1 \leq C \left \lceil \frac{t}{4\Delta} \right  \rceil \cdot (A p)^{2^{ (\log_{3} L)/k }}
    \end{equation}
where $k = 34$, and $A, C$ are some constants.
\end{theorem}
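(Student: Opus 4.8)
The plan is to reduce the fault tolerance of the 3D automaton to the already-established fault tolerance of the 2D toric code automaton (Corollary~\ref{cor:logical_error_bound}) by exhibiting a correspondence between noise realizations in 3D and noise realizations in 2D. The key observation is that a ``moving frame'' tracking a single toric-code layer as it is swapped around the periodic third dimension experiences exactly the circuit $\widetilde{FT}^n(\cI_0^{T})$ for the appropriate $T$, interspersed with (a) idle layers and (b) transversal SWAP steps. So the first step is to make this frame-following change of description precise: fix a layer initialized to $\ketbra{\psi}{\psi}$, and track the worldline $z(t) = (z+t) \bmod 4\Delta$ (or its mirror image, for layers moving in the opposite direction; by the left-right symmetry of the toric code automaton rule noted in the text, both cases are handled identically). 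Along this worldline the gadgets applied are precisely those of the 2D automaton, up to extra idle steps and SWAP steps.

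**Absorbing the extra operations into a $p$-bounded gadget model.** The second step is to argue that the new error sources present in 3D but absent in 2D — faults on the idle gadgets in the auxiliary layers, and faults on the transversal SWAP gates — can be absorbed into an effective $p$-bounded gadget error model for the 2D circuit $\widetilde{FT}^n(\cC)$ experienced by the moving frame. For idle-layer faults this is immediate since an idle in an auxiliary layer is just an $\cI_0$ gadget. For SWAP faults, the crucial point (already flagged in the introduction and in Remark~\ref{remark:full-TC-automaton}) is that the SWAP is transversal: a failed SWAP followed by a round of stabilizer measurements of each toric code can be absorbed, via Prop.~\ref{prop:apprnoise} and Lemma~\ref{lem:equiv_gadget}, into a $\widetilde p$-bounded gadget failure model with $\widetilde p = C' p^{\gamma'}$ — a failed SWAP affects only an $O(1)$-sized neighborhood of qubits, so it contributes to the failure of $O(1)$ elementary 2D gadgets. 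Since SWAP steps occur only once per two applications of the layer gadgets, and each contributes a bounded combinatorial factor, the composition of ``treat idle-layer and SWAP faults as extra gadget faults'' with the original $p$-bounded 3D gadget model yields a $C''p^{\gamma''}$-bounded gadget error model on the 2D circuit $\widetilde{FT}^n(\cC)$ with $\cC = \cI_0^{T}$, where $T = \lceil t/(4\Delta)\rceil$ counts the number of full traversals of the third dimension.

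**Invoking the 2D result.** With the moving-frame circuit identified as $\widetilde{FT}^n(\cI_0^{T})$ under a $C''p^{\gamma''}$-bounded gadget error model, the third step is simply to apply Corollary~\ref{cor:logical_error_bound} (and the approximate-noise reductions in Prop.~\ref{prop:apprnoise}, Lemma~\ref{lemma:approximate-replacement-tc}, Lemma~\ref{lem:equiv_decoupled_gadget}) to this circuit. We must be slightly careful that the hypotheses of Cor.~\ref{cor:logical_error_bound} are met: if $t$ is not a multiple of $4\Delta$, the moving frame has completed some number of full 2D automaton periods plus a partial one, and we complete the partial period by appending noiseless gates — this only stochastically dominates the noise, so the bound still applies with $T' = \lceil t/(4\Delta)\rceil$. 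We also append the noiseless layer of $\widetilde{FT}^{n-1}(\widetilde{\EC})$ gadgets of both types as in Cor.~\ref{cor:logical_error_bound} to produce a syndrome-free mixture of logical states; the conclusion is then $\|\rho_t - \ketbra{\psi}{\psi}\|_1 \le T' (A'p')^{2^{(\log_3 L)/k}}$ with $p' = Cp^{\gamma}$, $k=34$, which after renaming constants is exactly the claimed bound with $C = 1$ absorbed or left explicit as a prefactor. Finally, since the $2\Delta$ layers are initialized independently and evolve under a product noise model restricted by $p$-boundedness, a union bound over the layers (absorbed into the constant $C$) gives the stated estimate for the full 3D memory.

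**Main obstacle.** The routine parts — the frame-following bijection and invoking Cor.~\ref{cor:logical_error_bound} — are straightforward. The step requiring genuine care is the absorption of SWAP-gate faults into a $p$-bounded 2D gadget model: one must verify that a fault on a transversal SWAP, which entangles two toric-code layers, really does collapse (after the next round of stabilizer measurements in each layer, as in Lemma~\ref{lemma:noise-collapse}) into a Pauli error confined to an $O(1)$ neighborhood that can be assigned to $O(1)$ failed 2D gadgets without spoiling the $p$-boundedness or the decoupling of $X$ and $Z$ sectors. This is conceptually the same move as Remark~\ref{remark:full-TC-automaton} but applied to an inter-layer gate rather than an intra-layer one, and it is where the bulk of the (routine but nontrivial) bookkeeping lives.
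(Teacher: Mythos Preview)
Your proposal is correct and follows essentially the same approach as the paper: pass to the moving frame tracking a single layer, argue that SWAP faults (after the next round of stabilizer measurements) collapse to $p$-bounded Pauli noise on an $O(1)$ neighborhood and hence can be absorbed into the 2D gadget error model, and then invoke Corollary~\ref{cor:logical_error_bound}. The paper also handles the $z\neq 0$ case by noiselessly completing the initial partial period, just as you do for the final partial period; your final union bound over all $2\Delta$ layers is unnecessary since the theorem as stated bounds the error for a single tracked layer.
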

\begin{proof}
The proof follows from tracking a given copy of the toric code (``going to a moving frame'') as its state is moved along the $z$ direction by swap gates and acted upon by the gadgets of the automaton and idle steps in between swaps. The swap steps viewed in the ``moving frame'' become idle steps.  The noise supported on swap gates, upon measuring the stabilizer in the layer we are tracking and the neighboring layers becomes projected onto correlated Pauli noise on qubits in the support of the faulty swap gate.  For the purposes of analyzing the state in a given layer, the failures of these gates can be simply viewed as $p$-bounded noise on individual qubits. 

First, consider the layer with $z = 0$. The action of the 3D automaton viewed in the moving frame associated with this layer becomes equivalent to repeated application of $\cL(\widetilde{FT}^n(\cI_0))$ with the addition of idle steps and the modification of the noise model. The proofs of Corollary~\ref{cor:logical_error_bound} and the properties leading to it apply to this case without major changes.

For the layers initialized at $z \neq 0$, the action of the 3D automaton viewed in the moving frame is similarly equivalent to repeated application of $\cL(\widetilde{FT}^n(\cI_0))$ (similarly with addition of idle steps and the modification of the noise model), except the circuit of  $\cL(\widetilde{FT}^n(\cI_0))$ starts with its action at time $\lceil z/2 \rceil$, possibly starting with between one and three times steps of idles and swaps. We denote the circuit that produces the state at time $t$ as  $\cL(\widetilde{FT}^n(\cI_0))[z,(z+t)\  \mathrm{mod} \ 4 \Delta]$ (where the arguments correspond to the timesteps when the modified circuit stops and ends. In this case, we follow the proofs of Corollary~\ref{corollary:full-simulaton-TC} and \ref{cor:logical_error_bound}, noting that due to noiseless initialization, we can formally write $\ketbra{\psi}{\psi} = \cL(\widetilde{FT}^n(\cI_0))[0,z] \circ \ketbra{\psi}{\psi}$. This completes $\cL(\widetilde{FT}^n(\cI_0))[z,(t+z)\  \mathrm{mod} \ 4 \Delta]$ to $\cL(\widetilde{FT}^n(\cI_0))[0,(t+z)\  \mathrm{mod} \ 4 \Delta]$, where the first $z$ steps are noiseless. The rest of the arguments in the proofs of Corollary~\ref{corollary:full-simulaton-TC} and \ref{cor:logical_error_bound} apply.
\end{proof}

We assumed noiseless initialization in the theorem above. In the case of noisy initialization, one would similarly start by preparing an appropriate product state on all individual qubits and then running the automaton for the duration of time $t \geq 8 \Delta$ which is needed for each layer to certainly undergo a complete circuit of $\cL(\widetilde{FT}^n(\cI_0))$ at least once. Then, similar considerations to those in the proof of Prop.~\ref{prop:noisy-initialization} can be applied to demonstrate reliable initialization.  

While the 3D automaton is time-independent, it heavily relies on the assumption of operating in discrete time.  In particular, if we assume that each operation is applied according to some continuous-time Poisson process,  
the automaton might no longer have a threshold.  As we discussed in the outlook, this leaves open the important question of whether a local {\it continuous-time} quantum memory exists in a physical number of spatial dimensions.

Finally, let us briefly discuss the PCA model, in which the noise model takes into account gadgets skipping measurements that they should apply at each time step. Similarly to Sec.~\ref{sec:5-PCA}, we will outline an argument but leave a rigorous proof to the future work. In fact, similar arguments as in Sec.~\ref{sec:5-PCA} suffice to show that this effect can be absorbed by a $p$-bounded gadget error model with some effective $p$. However, the supports of these gadgets will more naturally be three-dimensional. Nevertheless, the correlations between faults in different layers does not affect the proof of Thm.~\ref{thm:3D} as a $p$-bounded gadget error model with three-dimensional support is equivalent to a $C p^\gamma$-bounded gadget error model assuming a two dimensional support of gadgets for some positive constants $C,\gamma$.

\FloatBarrier

\section*{Acknowledgments}
We are grateful to Peter G\'acs for numerous insightful discussions and for providing many comments that significantly improved this paper.  We thank Daniel Gottesman, Alexei Kitaev, John Preskill, and Ali Lavasani for useful discussions.  SB was supported by the National Science Foundation Graduate Research
Fellowship under Grant No.~1745302, as well as PHY-1818914 and PHY-2325080. MD was supported by the Walter Burke Institute for Theoretical Physics at Caltech. EL was supported by a Miller research fellowship.

\appendix
\section{Noise models and $X/Z$ decoupling} \label{app:noisedetails}

We assume that the reader is familiar with the basics of error correction with the toric code (see, e.g., Ref.~\cite{Dennis_2002}), and make only the minimum amount of definitions needed to establish notation. 

\subsection{Quantum automata and noise model}\label{sss:quantum_noise}

Consider a lattice with a qubit on every edge.  The local Hilbert space describing the quantum states of these qubits will be denoted $\cH$.  We will refer to the collection of all edges of the lattice as $\cL_1$. We will be considering stabilizer codes in what follows, which we will refer to using the notation $\mathcal{S} = \langle g_i \rangle$, with $g_i$ denoting the generators of the stabilizer group (which we will sometimes refer to as ``checks'' of the code). We will only be interested in the situation where $\cS$ is a topological stabilizer code, in which case there exists a generating set such that each generator is associated with some location in space (be it a vertex, edge, or a plaquette of the lattice). The set of all such locations will be denoted by $\cL_0$. 

When we use the term ``quantum noise'', we simply mean a quantum channel:

\begin{definition}
Quantum noise $\mathcal{E}_t: L(\cH) \to L(\cH)$ acting at time $t$ is a quantum operation
\begin{equation} \label{eq:channel}
\mathcal{E}_t(\rho) = \sum_{\mu} \cK_{t,\mu} \rho \cK_{t,\mu}^{\dagger}
\end{equation}
where $\sum_{\mu} \cK_{t,\mu}^{\dagger}\cK_{t,\mu} = \mathds{1}$ and $\cK_{t,\mu}$ can depend arbitrarily on $\bigcup_\nu \{\cK_{1,\nu}, \dots, \cK_{t-1,\nu}\}$.
\end{definition}

\begin{definition}[Error syndromes]
A syndrome $\sigma_{v} \in \mathbb{F}_2$ denotes the value of a measurement of the stabilizer generator (check) at location $v \in \cL_0$: $\sigma_{v} = 0$ corresponds to a measurement outcome of $1$ and $\sigma_i = 1$ corresponds to a measurement outcome of $-1$ (``a violated stabilizer'').  We say that there is a non-trivial syndrome at $v$ if $\sigma_v = 1$.  

For a stabilizer state $\ket{\psi}$, the syndromes of the state ${\rm syn}(|\psi\rangle)$ is a vector valued in $\mathbb{F}_2^n$ where $n$ is the number of stabilizer generators (checks), whose $i$th component is $1$ if $g_i \ket{\psi} = - \ket{\psi}$ and $0$ if $g_i \ket{\psi} =  \ket{\psi}$. 

\end{definition}

\begin{definition}[Syndrome of an operator]
Given a stabilizer code $\mathcal{S} = \langle g_i \rangle$, the syndrome of a Pauli operator $\mathrm{syn}(\mathcal{O})$ is a vector valued in $\mathbb{F}_2^{n}$ where $n$ is the number of checks, whose the $i$-th component is $1$ if $[\mathcal{O}, g_i ] = -1$ and $0$ otherwise. Here, $[A,B] = A^{-1}B^{-1}AB$ denotes the group commutator. 

A non-Pauli operator is associated with a definite syndrome $\sigma$ if it is a linear combination of Pauli operators  $\cO_i$ with $\mathrm{syn}(\cO_i) = \sigma$.
\end{definition}

For a CSS code, it is possible to group syndromes into $\mathrm{syn}_X$ and $\mathrm{syn}_Z$ corresponding to the outcomes of measurement of $Z$- and $X$-type Pauli stabilizers, respectively. A discussion of how one can formally decouple the errors into $X$ and $Z$ types when analyzing the automaton can be found in Subsec.~\ref{subsec:TC}.

Our construction of the toric code automaton utilizes the approach of (local) measurements and (local) feedback: the automaton measures the local stabilizer generators of the toric code (which does not modify the logical state), and applies operations conditioned on nearby measurement outcomes. We call this the ``local measurement and feedback model'':

\begin{definition}[Measurement and feedback model]

A measurement and feedback model is a quantum dynamical system that takes a stabilizer group $\mathcal{S}$ as input and operates on a lattice of qubits in the following way. At each time step $t$, it: 
\begin{itemize}
\item [(a)] Measures all of the stabilizer generators  in $\cL_0$, obtaining syndromes $\sigma$,
\item [(b)] Applies the feedback operator $\cF_t(\sigma)$ in the Pauli group which is conditioned on the syndrome measurements, 
\item [(c)] Discards the syndrome measurement outcomes, and
\item [(d)] Applies a noise channel $\mathcal{E}_t$ with Kraus operators $\cK_{t,\mu}$.
\end{itemize}
As a quantum operation, one step of the measurement and feedback model, which we denote by $\cM_t$, acts on a state $\rho$ via 
\begin{equation} \label{eq:meas-feed}
\cM_t(\rho) = \sum_{\sigma, \mu} \cK_{t,\mu} \cF_t(\sigma)\Pi(\sigma) \rho \Pi(\sigma) \cF_t(\sigma ) \cK_{t,\mu}^{\dagger}.
\end{equation}
where $\Pi(\sigma)$ is a projection onto the subspace of states with syndrome $\sigma$.
\end{definition}

We emphasize that according to (c), measured syndromes do not have to be stored. 

A \emph{local} measurement and feedback model is defined completely analogously, except that the feedback operator at a given time $\cF_t(\sigma)$ factors into a product of local feedback operators where each operator depends only on syndromes in a local region. 

Under the premise of the measurement and feedback model, since all the stabilizer generators of the code are measured at each time step, the following simplification of the noise model can be made:

\begin{lemma}[Collapsed noise model] \label{lemma:noise-collapse}
Assume that we start in a syndrome-free logical state $\ketbra{\psi}{\psi}$ (i.e. one with no non-trivial syndromes) of a quantum stabilizer code, run a measurement and feedback automaton up to time $t-1$ and apply the measurement step at time $t$, obtaining the state $\rho_{t-1}$.  Suppose that $\rho_{t-1}$ can be written as
\begin{align} \label{eq:form-of-rho}
    \rho_{t-1} &= \sum_{{\bm H}_{t-1}} p({\bm H}_{t-1}) \left(\prod_{\tau=1}^{t-1} \cO_{t-\tau}\cF_{t-\tau}(\sigma_{t-\tau})\right) \ketbra{\psi}{\psi} \left(\prod_{\tau=1}^{t-1} \cO_{t-\tau}\cF_{t-\tau}(\sigma_{t-\tau})\right)^{\dagger} \nonumber 
    \\
    &\equiv \sum_{{\bm H}_{t-1}} p({\bm H}_{t-1})  \rho_{t-1,{\bm H}_{t-1}}
\end{align}
where  $p(\cdot)$ is some probability distribution related to the noise model and ${\bm H}_{t-1}$ is the noise history, which is a set of pairs $\{ ({\cO}_{\tau}, E_{\tau}) \}_{\tau = 1}^{t-1}$, with $\cO_\tau$ an operator having definite syndrome which is applied by noise at time $\tau$ and $E_{\tau}$ a set of qubit locations at time $\tau$ where noise can act. We will also call $H_\tau = ({\cO}_{\tau}, E_{\tau})$. 
The operators $\{\cF_\tau\}_{\tau = 1}^t$ are feedback operators whose arguments depend on syndromes of operators $\vec \cO = \{ \cO_1, \dots, \cO_\tau\}$, and $\sigma_t$ is the syndrome of the state $\rho_{t-1, \bm{H}_{t-1}}$. Together, they satisfy the consistency condition
\begin{equation}
   \sigma_{\tau} = \mathrm{syn} \left (  \cO_{\tau-1}\cF_{\tau-1}(\sigma_{\tau-1})\right) \oplus \sigma_{\tau-1}
\end{equation}
for $1 \leq \tau \leq t-1$, where $\sigma_0=0$.  Thus, $\rho_{t-1,{\bm H}_{t-1}} = \ketbra{\phi}{\phi}$ with $\ket{\phi}$ having syndrome $\sigma_t$. Because $\rho_{t-1, \bm{H}_{t-1}}$ depends on $\bm{H}_{t-1}$, $\sigma_t$ depends on it as well, which we keep implicit.

We then apply the feedback step and noise step of $\cM_t$ using noise channel $\cE_t$. After the measurements at time $t+1$ have been performed, there exists a probability distribution $p(H_t|{\bm H}_{t-1})$ such that the effect of the feedback and noise at time $t$ can be equivalently written as the following CPTP (completely positive and trace preserving) map:
\begin{align}  \label{eq:effective-automaton}
    \sum _{\sigma_{t+1}}\Pi(\sigma_{t+1} ) \cM_t(\rho_{t-1}) \Pi(\sigma_{t+1}) &=   \sum _{\bm H_t} p(H_t |{\bm H}_{t-1}) p({\bm H}_{t-1}) \cO_t    \cF_t(\sigma_{t}) \rho_{t-1,{\bm H}_{t-1}} \cF_t(\sigma_{t}) ^\dagger \cO_t^\dagger
\end{align}
where $\sigma_{t+1} = \mathrm{syn}( \cO_t    \cF_t(\sigma_{t}) \rho_{t-1,{\bm H}_{t-1}} \cF_t(\sigma_{t}) ^\dagger \cO_t^\dagger)$ are the syndromes measured at time $t+1$, and the operators $\cO_t$  can be written as
\begin{equation}\label{eq:decomnoise}
\cO_{t} = \alpha(t) P_t + \sum_i  \beta^{(i)}(t)  L_i P_t
\end{equation}
with normalization $|\alpha(t)|^2 + \sum_i |\beta^{(i)}(t)|^2 = 1$,  $P_t \in P_N$ ($P_N$ denoting the $N$-qubit Pauli group) satisfies $\mathrm{syn} (P_t) = \mathrm{syn} (\cO_t)$, and $L_i$ is in the set generated by logical $X$ and $Z$  operators of the stabilizer code (excluding the identity operator).  These operators define what we will henceforth call the collapsed noise model. 
\end{lemma}
\begin{proof}
Assuming the form of $\rho_{t-1}$ in the statement, to obtain $\rho_t$ we first apply the feedback operator $\rho_{t-1} \to \sum_{{\bm H}_{t-1}} p({\bm H}_{t-1}) \cF_t(\sigma_t) \rho_{t-1, {\bm H}_{t-1}} \cF_t^{\dagger}(\sigma_t)$ since the feedback operator only depends on syndromes and $\mathrm{syn}(\rho_{t-1, {\bm H}_{t-1}}) = \sigma_t$.  We then define $\widetilde{\rho}_{t-1, {\bm H}_{t-1}} = \cF_t(\sigma_t) \rho_{t-1, {\bm H}_{t-1}} \cF_t^{\dagger}(\sigma_t)$.  Following Eq.~\ref{eq:meas-feed}, we then apply noise $\cE_{t}$, which acts as
\begin{equation}
\cE_{t}\left( \sum_{{\bm H}_{t-1}} \widetilde{\rho}_{t-1, {\bm H}_{t-1}}\right) = \sum_{{\bm H}_{t-1}} \sum_{\mu} p({\bm H}_{t-1}) \cK_{t,\mu}({\bm H}_{t-1}) \widetilde{\rho}_{t-1, {\bm H}_{t-1}} \cK_{t,\mu}^{\dagger}({\bm H}_{t-1}),
\end{equation}
and thus after the subsequent round of stabilizer measurements at time step $t+1$, we can write 
\begin{align}
\sum_{\sigma_{t+1}}\Pi(\sigma_{t+1}) \cM_t(\rho_{t-1}) \Pi(\sigma_{t+1}) = \sum_{\substack{\sigma_{t+1}, \, \mu,\\ {\bm H}_{t-1}}} p({\bm H}_{t-1}) \Pi(\sigma_{t+1}) \cK_{t,\mu}({\bm H}_{t-1}) \widetilde{\rho}_{t-1, {\bm H}_{t-1}} \cK_{t,\mu}^{\dagger}({\bm H}_{t-1}) \Pi(\sigma_{t+1})
\end{align}
We will now argue that $\cK_{t,\mu}$ can be replaced by a simpler operator.  We are free to decompose $\cK_{t,\mu}$ in terms of a basis of operators
\begin{align}
    \cK_{t,\mu}({\bm H}_{t-1}) = \sum_{\varepsilon} c_{t,\mu}(\varepsilon | {\bm H}_{t-1}) \cO_{t,\mu}(\varepsilon)
\end{align}
where $\cO_{t,\mu}(\varepsilon)$ is a linear combination of Pauli operators whose syndrome is $\varepsilon$.  We note the following identity
\begin{align}
\sum_{\sigma_{t+1}}\Pi(\sigma_{t+1}) \cO_{t,\mu}(\varepsilon) \widetilde{\rho}_{t-1, {\bm H}_{t-1}} \cO_{t,\mu}(\varepsilon') \Pi(\sigma_{t+1}) = \delta_{\varepsilon, \varepsilon'} \cO_{t,\mu}(\varepsilon) \widetilde{\rho}_{t-1, {\bm H}_{t-1}} \cO_{t,\mu}(\varepsilon'),
\end{align}
and therefore, we may write
\begin{equation}
\sum_{\sigma_{t+1}}\Pi(\sigma_{t+1}) \cM_t(\rho_{t-1}) \Pi(\sigma_{t+1}) = \sum_{\mu, \varepsilon, {\bm H}_{t-1}} |c_{t,\mu}(\varepsilon| {\bm H}_{t-1})|^2 p({\bm H}_{t-1}) \cO_{t,\mu}(\varepsilon) \widetilde{\rho}_{t-1, {\bm H}_{t-1}} \cO_{t,\mu}^\dagger(\varepsilon).
\end{equation}
We then define\footnote{One can see that $p(H_t | {\bm H}_{t-1})$ is a probability distribution over $H_t$ by taking a trace of both sides of $\sum_\mu\cK_{t,\mu}^\dagger \cK_{t,\mu} = \mathds{1}$.} $p(H_t | {\bm H}_{t-1}) = |c_{t,\mu}(\varepsilon| {\bm H}_{t-1})|^2$.  Finally, we need to prove the decomposition of $\cO_{t,\mu}(\varepsilon)$.  Consider two operators $P_{t}(\varepsilon)$ and $P'_{t}(\varepsilon)$, which have the same syndrome $\varepsilon$ but have the property that $P_{t}(\varepsilon) P'_{t}(\varepsilon)$ is in the stabilizer group.  For any density matrix $\lambda$ with definite syndromes, we have $g_i \lambda g_i = \lambda$ for $g_i$ any element of the stabilizer group, and we can write $P'_t(\varepsilon) \lambda P'_t(\varepsilon) = P_t(\varepsilon) \lambda P_t(\varepsilon)$.
 Then, we can fix some $P_t(\varepsilon)\equiv P_t$, and any operator $\cO_{t,\mu}(\varepsilon) \equiv \cO_{t,\mu}$ can be written as a sum of terms of the form $L_i P_{t}$, where $L_i$ are generated by the logical operators of the code.   This gives the decomposition in Eq.~\ref{eq:decomnoise} for $\cO_{t,\mu}$.
\end{proof}
Note that the form of the density matrix in Eq.~\ref{eq:form-of-rho} follows from the result in Lemma~\ref{lemma:noise-collapse} if we always formally group the next round of measurements with the current step of the operation of the automaton. In that case, the Lemma above proves the inductive step for the claim that Eq.~\ref{eq:form-of-rho} holds at every point in time (the base case of $t = 0$ is trivial). 

The probability distribution $p({\bm H}_{t-1})$ is over the history of noise locations and operators up to time $t-1$.  Since the above statement was formulated for a general channel, $p({\bm H}_{t-1})$ can be arbitrary, but in this paper we will be only considering channels where $p$ has a certain $p$-bounded property, corresponding to the channel having decaying correlations in spacetime, which we will discuss shortly below.  Furthermore, we can write the effect of $t$ steps of the measurement and feedback channel followed by a measurement at time $t+1$ as $\cM_{p,t} \circ \cM_{p,t-1} \circ \cdots \circ \cM_{p,1} (\ketbra{\psi}{\psi})$ (where the subscript `$p$' stands for Pauli), where 

\begin{equation}
    \cM_{p, t}(\rho_{t-1}) = \sum _{{\bm H}_{t}} p({\bm H}_{t-1}) \cO_t    \cF_t(\sigma_{t}) \rho_{t-1,{\bm H}_{t-1}} \cF_t(\sigma_{t}) ^\dagger \cO_t^\dagger.
\end{equation}
Henceforth, we will be working with the effective channels $\{\cM_{p, t}\}$\footnote{This essentially can be understood a version of the path integral over trajectories corresponding to different noise operator histories.}.
\begin{definition}[Noise history/realization]
Consider $T$ rounds of measurement and feedback dynamics under noise model $\cE$, which, due to the discussion above, can be treated as a set of effective channels $\cM_{p,1}, \cM_{p,2}, \cdots, \cM_{p,T}$. Then, the noise history (which we also call noise realization) $\bm H_T$ is the trajectory specifying the set of spacetime qubit locations for the noise (also called set of noise locations) $E = \bigcup_{t= 1}^T E_t$  and the history of operators $\vec{\cO}_T= \{\cO_1, \cO_2, \cO_3,\cdots, \cO_T\}$ (with time steps indicated by the subscripts) applied due to the noise. Using Lemma \ref{lemma:noise-collapse}, we can write each $\cO_{t} = \alpha_t P_t + \sum_i  \beta_t^{(i)}  L_i P_t$.
\end{definition}

Based on the definition above, it follows that $\bigcup_{t = 1}^T \supp (\cO_t) \subseteq E$.\footnote{Note that $\bigcup_{t = 1}^T \supp (\cO_t) \subseteq E$ is not an equality, since noise need not apply an operator at every point in $E$. }  Moving forward, it is more convenient to deal with a purely Pauli error model.  For this, we will need an additional assumption, with which we can prove the Proposition below:

\begin{proposition}[Approximate noise model]\label{prop:apprnoise}
    Consider a noise history specified by $\bm H_T$ under the noise model $\cE$. By Lemma \ref{lemma:noise-collapse}, we write each $\cO_{t} = \alpha(t) P_t + \sum_i  \beta^{(i)}(t)  L_i P_t$ such that among all operators in the sum, $P_t$ is the Pauli operator with the smallest weight.  Writing
    \begin{equation}\label{eq:effective-channel}
        \mathcal{M}_{p, t}\circ \cdots \circ \mathcal{M}_{p,1}(\ketbra{\psi}{\psi}) = \sum_{\bm H_t} p(\bm H_t) \rho_{t, \bm H_t},
    \end{equation} 
    we introduce an approximate noise model that modifies the probability distribution in the effective operation of the measurement and feedback automaton in the following way:
    \begin{equation} \label{eq:approximate-channel}
       \mathcal{M}_{p,\mathrm{appr}, t}\circ \cdots \circ \mathcal{M}_{p,\mathrm{appr},1}(\ketbra{\psi}{\psi}) = \sum_{\bm H_t} q(\bm H_t) \rho_{t, \bm H_t},
    \end{equation}
    where $q$ is defined such that $q(\bm H_t) = 0$ for each $\bm H_t$ where at least one component of $\vec{\cO}_t= \{ \cO_1, \cO_2, \dots, \cO_t\}$ has $\beta^{(i)} \neq 0$ and $q(\bm H_t) = p(\bm H_t)/\eta$ otherwise, where $\eta$ is a global normalization factor.  Further assume that for a given noise history, the probability that $H_\tau = (\cO_\tau, E_\tau)$ has $\beta_\tau^{(i)} \neq 0$ for some $i$ is $\leq \exp(- K L)$ for some constant $K$.  We will say that this automaton experiences an approximate noise model.

    Then, if $\rho(t)$ is the state of the measurement and feedback automaton under the collapsed noise model at time $t$, and $\rho_{\text{appr}}(t)$ is the state of the automaton under the approximate error model at time $t$, the trace distance $T(\cdot)$ between these states can be bounded by
    \begin{equation}
        T(\rho_t, \rho_{\text{appr},t}) = \frac{1}{2}\|\rho_t -  \rho_{\text{appr},t}\|_1 \leq t \cdot \exp(-K L).
    \end{equation}
\end{proposition}
\begin{proof}
The normalization constant $\eta$ is the complement of the probability that at least one of the operators has non-zero $\beta_\tau^{(i)}$; by a union bound and the assumption of the Proposition, we have $\eta \geq 1 - t\cdot \exp(-K L)$. Finally, calling $\cS$ the set of noise histories $\bm H_t$ where there exists at least one $\cO_\tau$ with corresponding $\beta_\tau^{(i)}$ non-zero, we define
    \begin{equation}
    \lambda = \frac{1}{1-\eta} \sum_{\bm H_t \in \cS} p(\bm H_t) \rho_{t,\bm H_t}
    \end{equation}
    which is a normalized density matrix.  Furthermore,
    \begin{equation}
    \rho_t = \eta \rho_{\mathrm{appr}, t} + (1-\eta)  \lambda.
    \end{equation}
    Then,
    \begin{equation}
    T(\rho_t, \rho_{\mathrm{appr}, t}) \leq \eta T(\rho_{\mathrm{appr}, t}, \rho_{\mathrm{appr}, t}) + (1-\eta) T( \lambda, \rho_{\mathrm{appr}, t}) \leq t \cdot \exp(-K L)
    \end{equation}
\end{proof}

As we will soon show, the realistic noise model that we work with indeed satisfies the condition for the approximate noise realization. 
Before discussing this, let us finally define a quantum automaton and its gadgets.

\begin{definition}[Elementary stabilizer quantum gadget] \label{def:elementary-gadget}
Given a stabilizer code $\mathcal{S} = \langle g_i \rangle$ and a noise realization (under the approximate noise model) specified by $\bm H_T$ (such that each operator $\cO_t$ is in the $N$-qubit Pauli group), an elementary stabilizer quantum gadget applied after the time step $t-1$ is a quantum operation $G: L(\cH) \to L(\cH)$ with local qubit support $\mathrm{supp}_q(G) \in \cL_1$ and local vertex support $\mathrm{supp}_v (G) \in \cL_0$ that operates between the time steps $t-1$ and $t$ and:
\begin{enumerate}
    \item [(a)] Measures stabilizer generators of the code on the input state in some local region $\cR_0 \subseteq \supp_v(G)$ (usually, $\cR_0 \subset \supp_v(G)$), obtaining syndromes $\sigma \in \mathbb{F}_2^{|\cR_0|}$ (called input syndromes), 
    \item [(b)] Applies an operator $\mathcal{O}(\sigma)$ (feedback)  supported on some local region $\mathcal{R}_q \subseteq \supp_q(G)$ (usually, $\cR_q \subset \supp_q(G)$) that is conditioned on the syndromes obtained in step (a) and is an element of the Pauli group, and 
    \item [(c)] Applies the noise operator in its support that is sampled from the noise model $p(H_t|{\bm H}_{t-1})$ (examples mentioned in Remark~\ref{def:qubit-noise-model-application} and Def.~\ref{def:gadget_error_model}).
 \end{enumerate}
We will also refer to the elementary gadget as operating at time $t$ implying that it operates between time steps $t-1$ and $t$. 
\end{definition}

\begin{definition}[Stabilizer quantum gadget] \label{def:stab-quant-gadg}
    Given a stabilizer code $\mathcal{S} = \langle g_i \rangle$ and a noise model, a stabilizer quantum gadget $G$ is a quantum operation  $G: L(\cH) \to L(\cH)$ composed of 
    a finite number of elementary stabilizer quantum gadgets.

    Its spatial qubit support $\mathrm{supp}_q(G)$ and vertex support $\mathrm{supp}_v (G)$ are the union of qubit supports and vertex supports of constituent elementary gadgets, respectively. Its spacetime qubit and vertex supports are the unions of qubit supports and vertex supports of constituent elementary gadgets, respectively, together with their associated time steps.
\end{definition}

In our construction, the measurement and feedback model will be implemented by a set of local gadgets that together constitute a \emph{stabilizer quantum automaton}.

\begin{definition}[Stabilizer quantum automaton] 
A stabilizer quantum automaton for a stabilizer code $\mathcal{S} = \langle g_i \rangle$ and a given noise model is a circuit $\cA: L(\cH) \to L(\cH)$ comprised of (stabilizer quantum) gadgets associated with $\mathcal{S}$ that each has local support and domain. A stabilizer quantum automaton is a depth-$t$ circuit of elementary gadgets that evolves an initial density matrix $\rho(0)$ to a density matrix $\rho(t)$.
\end{definition}

We will call the collection of all elementary gadgets acting at time $t$ in the automaton a \emph{layer} of gadgets, which we will denote by $\cL(G)$. We will always assume that the action of neighboring elementary gadgets is parallelizable (i.e. they can be defined to act simultaneously) even if their supports overlap. In particular, by simultaneous we mean that two gadgets with overlapping supports can be applied in either order and the output would not change.  In our construction, we will always have this property for any pair of gadgets with overlapping supports acting at the same time step.  We will also say that the layer is formed by tiling the gadgets. We will assume that the union of supports of all elementary gadgets in a given layer $\cL(G)$ is the lattice of qubits $\cL_1$.

\begin{definition}[Damage]
   Call $\ket{\phi}$ the initial logical state of a stabilizer automaton with stabilizer group $\cS$. Consider a state $\ket {\psi} = \cO \ket {\phi}$  where $\cO$ is a Pauli operator. We call any operator $\cO'$ such that $\cO' \cO \in \cS$ the damage of the state $\ket {\psi}$.
\end{definition}
According to the definition above, we call damage any suitable operator belonging to an equivalence class of operators with a given syndrome whose equivalence relation is set by multiplication by stabilizers.

We will also call $\cL_{g,t}$ the set indexing the spacetime locations of all elementary gadgets that are applied at step $t$. For $\bm{x} \in \cL_{g,t}$, $G(\bm{x}, t)$ is the gadget at a corresponding spacetime location (we assume that we have fixed some convention to uniquely associate a spacetime location to each elementary gadget in the circuit). We also denote the set of all spacetime locations of the elementary gadgets within the automaton as $\cL_g = \bigcup_t \cL_{g,t}$. 

The definitions and properties of the noise shown until now did not depend on the nature of the probability distribution $p(\bm H)$ and are generally applicable. Now, we will finally discuss the constraints on the probability distribution under which the local automaton will be fault tolerant.

\begin{remark} [Applying qubit error model in automaton] \label{def:qubit-noise-model-application}
    A qubit error model is implemented in an automaton using the following scheme. Given noise realization $\bm H_T$, for each time step $t$, we choose a decomposition  $\cO_t = \prod_i \cO_{t,i}$ where $\supp (\cO_{t,i})\cap \supp (\cO_{t,j}) = \varnothing$ for $i \neq j$ so that $\cO_{t,i}$ is contained in the support of a single elementary gadget. The operators $\{\cO_{t,i} \}$ are applied by elementary gadgets at each step according to Def.~\ref{def:elementary-gadget}(c).
\end{remark}

\begin{definition} [Gadget error model] \label{def:gadget_error_model}
The noise model $\cE$ is called a gadget error model if, for any noise realization $\bm H_T$ determining $\vec \cO_T = \{ \cO_1, \dots , \cO_{T-1}, \cO_T\}$ and $E$, we additionally specify sets of elementary gadgets (identified by their coordinates) $F_E = \{ F_1, F_2, \dots, F_T\}$, with $F_t \subseteq \cL_{g,t}$, such that the support of each $f \in F_t$ overlaps with $E$, and $E$ is contained in the union of supports of the gadgets in $F_E$. We call $F_t$ the set of failed elementary gadgets at time $t$.  

In addition, for each time step $t$, we specify the partition $\cO_t = \prod_i \cO_{t,i}$ where each $\cO_{t,i}$ is contained in the support of $f_{t,i} \in F_t$. These operators are applied by elementary gadgets in Def.~\ref{def:elementary-gadget}(c).
\end{definition}

Technically, any gadget error model is implicitly parametrized by the set of supports of elementary gadgets.
Note that we will always assume that the union of supports of all elementary gadgets applied at each time step $t$ covers the entire lattice.

\begin{definition} [$p$-bounded qubit error model]
A noise model $\cE$ is called a $p$-bounded qubit error model, if, for any noise realization $\bm H$ with set of noise locations $E$ and any set of spacetime qubit locations $A$, it satisfies
    \be \label{eq:p-bounded-gadget1}
    \mathbb P [  A  \subseteq E] \leq p^{|A|}.
    \ee
\end{definition}

The class of $p$-bounded noise includes i.i.d qubit noise, but also allows for quite generic forms of correlated noise.  The analysis in this paper should also be readily modified to account for other types of noise like coherent and non-Markovian noise~\cite{Aliferis2005Apr, GottesmanBook}.  

\begin{definition} [$p$-bounded gadget error model]
A gadget error model $\cE_G$ is a $p$-bounded gadget error model if, given noise realization $\bm H$ with a set of noise locations $E$, an associated set of failed gadgets $F_E$, and any set $A_g \subseteq \cL_g$, it satisfies
    \be \label{eq:p-bounded-gadget2}
    \mathbb P [  A_g  \subseteq F_E] \leq p^{|A_g|}.
    \ee
\end{definition}

We note that the gadget error model is implicitly parameterized by the set of supports of elementary gadgets. 
The condition on the $p$-boundedness of a noise model suppresses the probability of large noise events.

Next, we show that different error models corresponding to different choices of gadgets are equivalent (and they are also equivalent to the qubit error model) as long as the support of gadgets contains a constant number of qubits:
\begin{lemma}\label{lem:equiv_gadget}
Any $p$-bounded gadget error model is a $p^{\alpha}$-bounded qubit error model for $0 < \alpha \leq 1$ depending on the maximum number of qubits in the support of any elementary gadget included in the gadget error model.  

Any $p$-bounded qubit error model is a $C p^{\beta}$-bounded gadget error model for constants $C, \beta \geq 1$ depending on the maximum number of qubits in the support of any elementary gadget.
\end{lemma}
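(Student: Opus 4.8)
The plan is to prove the two directions of Lemma~\ref{lem:equiv_gadget} separately, in each case exploiting the fact that the support sizes of elementary gadgets are bounded by a constant. Let me fix notation: let $w$ be the maximum number of qubits in the (spacetime) support of any elementary gadget appearing in the model under consideration, and let $g$ be the maximum number of elementary gadgets whose supports can contain a fixed qubit location at a fixed time (this is finite by the locality and tiling assumptions — each layer covers $\cL_1$ with $O(1)$-size gadgets, and only $O(1)$ layers touch any given timestep). Both $w$ and $g$ are $O(1)$.

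\emph{Gadget error model $\Rightarrow$ qubit error model.} Suppose $\cE_G$ is $p$-bounded as a gadget model. Given a set $A$ of spacetime qubit locations, we must bound $\mathbb{P}[A\subseteq E]$. By Def.~\ref{def:gadget_error_model}, $E$ is always contained in the union of supports of the failed gadgets $F_E$, so $A\subseteq E$ forces every point of $A$ to lie in the support of some failed gadget. Since each gadget support contains at most $w$ qubit locations, covering $A$ requires at least $\lceil |A|/w\rceil$ distinct failed gadgets; pick such a subcollection $A_g\subseteq F_E$ greedily, so $|A_g|\geq |A|/w$. Then $\mathbb{P}[A\subseteq E]\leq \sum_{A_g} \mathbb{P}[A_g\subseteq F_E]$ where the sum is over the $O(1)$-bounded number of ways to choose such a minimal cover — each qubit of $A$ has at most $g$ gadgets containing it, so the number of candidate minimal covers is at most $g^{|A|}$. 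Using $p$-boundedness of the gadget model, $\mathbb{P}[A\subseteq E]\leq g^{|A|}\, p^{|A|/w} = (g\, p^{1/w})^{|A|}$. This is a $p^{\alpha}$-bounded qubit model once we absorb constants appropriately, with $\alpha$ as any value $\leq 1/w$ that makes $g\,p^{1/w}\leq p^{\alpha}$ for $p$ below threshold; more carefully one states it as a $C p^{\alpha}$-bound and notes that the statement as written ($p^\alpha$-bounded, $0<\alpha\le 1$) holds after relabeling $p$, or for $p$ small enough.

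\emph{Qubit error model $\Rightarrow$ gadget error model.} Suppose $\cE$ is $p$-bounded on qubits. We must construct the data $F_E$ required by Def.~\ref{def:gadget_error_model} and verify the bound. The natural choice: for each timestep $t$ let $F_t$ be the set of all elementary gadgets at time $t$ whose support intersects $E_t$ (this minimal canonical choice automatically satisfies the containment requirements, and one fixes the partition $\cO_t=\prod_i \cO_{t,i}$ by assigning each connected piece of $\cO_t$ to one gadget whose support contains it). Now fix $A_g\subseteq \cL_g$; we bound $\mathbb{P}[A_g\subseteq F_E]$. If a gadget $f$ is in $F_E$ then by construction $\supp(f)\cap E\neq\varnothing$, i.e.\ there is at least one qubit location in $\supp(f)\cap E$. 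Choosing one such witness qubit for each $f\in A_g$, these witnesses need not be distinct, but each qubit can serve as witness for at most $g$ gadgets, so from $A_g$ we can extract a set $A$ of distinct qubit locations with $|A|\geq |A_g|/g$, all lying in $E$. Hence $\mathbb{P}[A_g\subseteq F_E]\leq \sum_{A} \mathbb{P}[A\subseteq E]$ over the at most $w^{|A_g|}$ choices of witness assignment (each of the $|A_g|$ gadgets has $\leq w$ qubits in its support), giving $\mathbb{P}[A_g\subseteq F_E]\leq w^{|A_g|}\, p^{|A_g|/g} = (w\, p^{1/g})^{|A_g|}$, which is the desired $C p^{\beta}$-bounded gadget model form with $\beta = 1/g$ (so $\beta\ge 1$ as stated only after, again, the usual relabeling, or one simply absorbs the constant into $C$ and takes $p$ small). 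Finally, the equivalence with the qubit error model for \emph{any} gadget choice follows by composing the two directions.

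\emph{Main obstacle.} The only subtlety — and the place that deserves care rather than a one-line dismissal — is the combinatorial overcounting in the sums over covers/witness assignments: one must check that the multiplicative blow-up is bounded by $C^{|A|}$ with $C=O(1)$ rather than something growing with $|A|$. This is exactly where the $O(1)$-locality hypotheses are used: a fixed qubit lies in the support of only boundedly many elementary gadgets ($g=O(1)$), because each layer tiles the lattice with constant-size gadgets and only a constant number of layers touch a given time slice. Once that geometric fact is in hand, the probability bounds are immediate from $p$-boundedness of the respective model, and the conversion exponents $\alpha=1/w$, $\beta=1/g$ pop out; the slight awkwardness of matching the literal ranges $0<\alpha\le 1$ and $\beta\ge 1$ in the lemma statement is handled by noting these bounds are meant for $p$ sufficiently small (equivalently, after rescaling $p\mapsto Cp$), which is all that is used downstream.
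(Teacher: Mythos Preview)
Your argument is correct and follows the same covering/witness approach as the paper. The one tactical difference worth noting is in the qubit-to-gadget direction: by setting $F_E$ to be \emph{all} gadgets whose support meets $E$, you allow a single errored qubit to place up to $g$ overlapping gadgets into $F_E$, which is why your distinct-witness count drops to $|A|\ge|A_g|/g$ and you obtain $\beta=1/g\le1$ rather than the stated $\beta\ge1$. The paper notes that $F_E$ may be ``chosen arbitrarily'' subject to Def.~\ref{def:gadget_error_model}; if you instead assign each $e\in E$ to a \emph{single} canonical gadget $g(e)$ containing it and take $F_E=\{g(e):e\in E\}$, then $A_g\subseteq F_E$ forces distinct preimages in $E$ for each $f\in A_g$, so the witness set satisfies $|A|\ge|A_g|$ and $\beta=1$ follows directly---no hedge about ``relabeling $p$'' is needed. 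Your weaker exponent still establishes the equivalence of error models used downstream, but the canonical-assignment construction delivers the lemma as stated. (In the other direction, your extra prefactor $g^{|A|}$ is in fact the more careful bound; the paper's cleaner-looking inequality glosses over it.)
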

\begin{proof}
Suppose we have a $p$-bounded gadget error model.  Call $A$ a set of qubits, and $A_g$ the associated set of gadgets which the qubits in $A$ are supported on.  For a noise realization $E$ and the set of failed elementary gadgets $F_E$ corresponding to $E$, 
\begin{equation}
\mathbb{P}[A \subseteq E] \leq \mathbb{P}[A_g \subseteq F_E] \leq p^{|A_g|}.
\end{equation}
Finally, since $|A_g| \geq \alpha |A|$ for some $0 < \alpha \leq 1$, we have $\mathbb{P}[A \subseteq E] \leq (p^{\alpha})^{|A|}$, and thus, the error model under consideration must be a $p^{\alpha}$-bounded qubit error model.

For the second statement, suppose we have a $p$-bounded qubit error model.  For a noise realization $E$ and the corresponding set of failed elementary gadgets $F_E$ (we can choose any such valid set arbitrarily, i.e. as long as it satisfies Def.~\ref{def:gadget_error_model}),  consider an arbitrary set of gadgets $A_g$. We may then write
\begin{equation}
\mathbb{P}[A_g \subseteq F_E] \leq \mathbb{P}\left[\bigcup_{A: A \sim A_g} \left ( A \subseteq E \right ) \right] \leq \sum_{A:A \sim A_g} \mathbb{P}\left[A \subseteq E\right],
\end{equation}
where, after the first inequality, we take a union over events where $A \sim A_g$ is a subset of qubit locations that are contained in the union of supports of all gadgets in $A_g$ such that support of each gadget in $A_g$ contains an element of $A$.

The number of such sets $A \sim A_g$ is $\leq C^{|A_g|}$ for $C$ a constant related to the maximum number of qubits in the support of a gadget. Therefore 
\begin{equation}
\sum_{A \sim A_g} \mathbb{P}\left[A \subseteq E\right] \leq \sum_{A \sim A_g} p^{|A|} \leq p^{|A|} C^{|A_g|},
\end{equation}
and using $|A| \geq \beta |A_g|$ for $\beta \geq 1$ another $O(1)$ constant related to the gadget support sizes, $\mathbb{P}[A_g \subseteq F_E] \leq (C p^{\beta})^{|A_g|}$ and therefore, the model under consideration must be a $C p^{\beta}$-bounded gadget error model.
\end{proof}

This Lemma proves that any two $p$-bounded gadget error models (even involving a different set of elementary gadgets with different support sizes) are, in fact, equivalent by some rescaling $p \to C p^{\gamma}$ for constants $C$ and $\gamma$, which does not affect the existence of a threshold.  Thus, it suffices to choose a ``canonical'' set of gadget supports for the $p$-bounded gadget error model.

We also need the following fact, as most of our analysis henceforth will apply to approximate noise models rather than exact ones.
\begin{fact}\label{fact:appr_pbounded}
The approximate noise model constructed from a $p$-bounded qubit noise model is a $p (1 + \varepsilon)$-bounded qubit noise model, where $\varepsilon \rightarrow 0$ when $L \rightarrow \infty$. 
\end{fact}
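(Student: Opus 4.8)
The plan is to observe that the approximate noise model of Prop.~\ref{prop:apprnoise} is simply the exact effective (Pauli) noise model conditioned on the event that \emph{no} noise operator $\cO_\tau$ in the history carries a logical component, and then renormalized by $\eta$. Since conditioning multiplies the probability of every event by at most $1/\eta \geq 1$, the claimed bound will follow once I (i) verify that a $p$-bounded qubit model meets the hypothesis of Prop.~\ref{prop:apprnoise}, so that $\eta$ is exponentially close to $1$, and (ii) convert the factor $1/\eta$ into a renormalization $p \to p(1+\varepsilon)$ of the $p$-bound.

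For (i), the argument would go as follows. In any history, $\cO_t = \alpha(t) P_t + \sum_i \beta^{(i)}(t)\, L_i P_t$ is a sum of \emph{distinct} Pauli operators, which are linearly independent, so $\supp(\cO_t)$ equals the union of the supports of the terms with nonzero coefficient; hence if some $\beta^{(i)}(t)\neq 0$ then $\supp(\cO_t) \supseteq \supp(P_t)\cup\supp(L_i P_t) \supseteq \supp(L_i)$, the support of a nontrivial logical operator. Since $\supp(\cO_t)\subseteq E_t$ and every nontrivial logical of the toric code contains a non-contractible self-avoiding path of length at least the distance $d = \Theta(L)$ (on the primal or dual lattice), the noise set $E_t$ must contain such a path. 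A Peierls-type contour count bounds the number of non-contractible self-avoiding paths of length $\ell$ by $O(L^2)\,\mu^{\ell}$, with $\mu$ the connective constant of the lattice (at most $3$), so by the $p$-bound and a geometric sum the probability that $E_t$ contains any such path is at most $\mathrm{poly}(L)\,(\mu p)^{\Theta(L)} \leq e^{-KL}$, provided $p$ is below a fixed constant (say $p < 1/\mu$) and $L$ is large. This is precisely the hypothesis of Prop.~\ref{prop:apprnoise} with $K = K(p) > 0$, and a union bound over the $T$ time steps then gives $\eta \geq 1 - T e^{-KL}$ exactly as in that proof.

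For (ii): since $q({\bm H}) \leq p({\bm H})/\eta$ pointwise, for any set $A$ of spacetime qubit locations one gets $\mathbb{P}_q[A \subseteq E] \leq \eta^{-1}\,\mathbb{P}_p[A \subseteq E] \leq \eta^{-1} p^{|A|}$. Setting $\varepsilon := \eta^{-1} - 1$, one has $(1+\varepsilon)^{|A|} = \eta^{-|A|} \geq \eta^{-1}$ for every $|A|\geq 1$ (the case $A = \varnothing$ being trivial), whence $\mathbb{P}_q[A\subseteq E] \leq (p(1+\varepsilon))^{|A|}$, i.e.\ the approximate model is $p(1+\varepsilon)$-bounded. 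Finally $\varepsilon = (1-\eta)/\eta \leq T e^{-KL}/(1 - T e^{-KL}) \to 0$ as $L \to \infty$, using that the running time $T$ is at most subexponential in $L$, which holds in every regime considered in the paper.

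I expect the only non-routine step to be (i): it is where the geometry of the code enters, through the fact that a logical component forces an $\Omega(L)$-sized connected error together with the contour count; the resulting $K$ (hence the rate at which $\varepsilon \to 0$) depends on the connective constant of the lattice, and the argument needs $p$ below a (generous) constant threshold. The rest is bookkeeping, the one thing to watch being that $|A| = 1$ is the binding case when fixing $\varepsilon$.
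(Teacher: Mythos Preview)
Your proposal is correct, and step (ii) is precisely the paper's argument: the paper's proof is a single line observing $q({\bm H}_t)\le p({\bm H}_t)/\eta$ with $\eta=1-\varepsilon(L)\to 1$ and declaring the Fact immediate; you just make explicit the choice $\varepsilon=\eta^{-1}-1$ and the binding case $|A|=1$ that the paper leaves to the reader.

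Your step (i), however, is not part of the paper's proof of \emph{this} Fact. The paper treats $\eta\to 1$ as already given (from the hypothesis of Prop.~\ref{prop:apprnoise}) and defers the verification that a $p$-bounded model on the toric code actually satisfies that hypothesis to the separate Lemma~\ref{lemma:approximate-replacement-tc}, stated immediately afterward. That lemma uses essentially the same Peierls/self-avoiding-walk count you sketch (there phrased for gadgets rather than qubits, with connective constant replaced by the degree of the gadget adjacency graph). So you have correctly folded the content of a neighboring lemma into your proof of the Fact, but the approach is the same.
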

\begin{proof}
Given a $p$-bounded qubit noise model $p(\bm{H}_t)$, the corresponding approximate noise model $q(\bm{H}_t)$ by construction satisfies $q(\bm{H}_t) \leq p(\bm{H}_t)/\eta$ where $\eta = 1-\varepsilon(L)$ with $\varepsilon(L)$ vanishing as $L \to \infty$.  This immediately proves the fact.
\end{proof}

By the previous Lemma, this Fact also holds for the gadget error model.  Therefore, without loss of generality, we can always assume some canonical $p$-bounded gadget error model where all operators that are applied by the noise are Pauli. 

In the classical cellular automaton literature, it is also standard to treat the cellular automaton as probabilistic, wherein there is a small probability that at a given time a particular rule will not be applied. 
In the classical case, the errors that occur from not applying a given rule 
can be simply absorbed into the gadget error model.  However, in the quantum case, not doing a measurement is distinct from a gadget failure, so this technically comprises a different kind of noise model:

\begin{definition}[Probabilistic quantum automaton]\label{def:pca}
A probabilistic quantum automaton is a stabilizer quantum automaton where the event of not performing a measurement and feedback operation in a set of spacetime points is sampled from a $p$-bounded probability distribution.
\end{definition}

The definition above is not the most general one, as we could technically also include the possibility of performing wrong measurement and feedback operations. Such a wrong measurement does not necessarily commute with the surrounding stabilizers and thus, cannot be performed simultaneously with them. Thus, since we are operating in discrete time and assume that all the measurements are performed simultaneously, we need to slightly modify our model to deal with these kinds of errors. We discuss this setting in more detail in Section~\ref{sec:FT_proof} and provide some arguments for why our construction should still remain fault tolerant in this more general setting.

\subsection{The toric code}

Though we assume the reader is familiar with stabilizer codes, we will present the stabilizer group of the toric code for completeness:

\begin{definition}[Toric code stabilizer group] \label{def:toric-code}
    Consider a square lattice with periodic boundary conditions (i.e., the torus $\mathbb{T}^2$) and associate qubits with the links. The stabilizer group of the toric code is generated by the following operators:
    \begin{equation}
        A_v = \bigotimes_{  e \in \delta v} Z_e, \quad B_p = \bigotimes_{e \in \partial p} X_e
    \end{equation}
    for every vertex $v$ and cell (square plaquette) $p$. Here, $\delta v$ denotes the set of links that touch vertex $v$ and $\partial p$ denotes the set of links in plaquette $p$. 
\end{definition}

This code defines 2 logical qubits on a torus.  Generalizing our decoder for toric codes on other lattices (and to more general cellulations) and manifolds will not be addressed in this work. For the toric code, the logical $X$ operators correspond to products of Pauli $X$ operators on homologically nontrivial loops of the primal lattice (and for logical $Z$, a product of Pauli $Z$ operators around homologically nontrivial loops of the dual lattice). The codestate which is the +1 eigenstate of both logical $Z$ operators (we denote it $\ket{\overline 0 \overline 0}$) is an equal-weight superposition of all possible configurations of $\ket{0},\ket{1}$ states of physical qubits where $\ket{1}$ states form a homologically trivial configuration of loops on the lattice. The remaining logical basis states can be generated by applying logical $X$ operators to this state (which is achieved in a depth-1 circuit). Moving forward, the goal of the toric code automaton will be to preserve an arbitrary logical state, which can be written as a superposition $\ket{\psi} = \alpha_{00} \ket{\overline 0\overline 0}+ \alpha_{01} \ket{\overline 0 \overline 1} + \alpha_{10} \ket{\overline 1\overline 0} + \alpha_{11} \ket{\overline 1\overline 1}$.

Let us show that for a measurement and feedback automaton operating with the toric code stabilizers, we can replace the noise model with an approximate noise model if the former is a $p$-bounded gadget error model. The argument above can be extended to an automaton for any stabilizer code as long as it has a distance scaling polynomially with $L$ (though the combinatorial aspect of the argument would need to be suitably modified).
\begin{lemma} \label{lemma:approximate-replacement-tc}
Consider a stabilizer quantum automaton for the 2D toric code on an $L \times L$ torus.  If the noise model $\cE_p$ is a $p$-bounded gadget error model, then for sufficiently small $p$, it can be replaced with an approximate noise model while obeying $T(\rho_t, \rho_{\text{appr},t}) \leq t \cdot \exp(-K L)$ for some positive constant $K$ and sufficiently large $L$. 
\end{lemma}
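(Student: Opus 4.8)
The plan is to verify the hypothesis of Proposition~\ref{prop:apprnoise} for the 2D toric code; once that is in place, the bound $T(\rho_t,\rho_{\mathrm{appr},t})\le t\cdot\exp(-KL)$ is an immediate consequence of that proposition. Concretely, I would show that for each time step $\tau$ the probability that the noise operator $\cO_\tau$ applied at that step carries a nonzero logical component (i.e.\ some $\beta^{(i)}_\tau\neq 0$ in the decomposition of Lemma~\ref{lemma:noise-collapse}) is at most $\exp(-KL)$. Since everything here is insensitive to a rescaling $p\mapsto Cp^{\gamma}$, I would first invoke Lemma~\ref{lem:equiv_gadget} to replace the $p$-bounded gadget error model by a $p'$-bounded qubit error model with $p'=p^{\alpha}$, so that it suffices to control the single-time-step set $E_\tau$ of noise locations at time $\tau$.

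The key geometric point is that a nonzero logical component forces $E_\tau$ to be large. Writing $\cO_\tau=\alpha(\tau)P_\tau+\sum_i\beta^{(i)}(\tau)\,L_iP_\tau$ with each $L_i$ a nontrivial logical operator (Lemma~\ref{lemma:noise-collapse}), and using that $\cO_\tau$ is supported inside $E_\tau$ together with the fact that any operator supported in a region $R$ has vanishing $\mathrm{tr}(\,\cdot\,Q^\dagger)$ overlap with every Pauli $Q$ acting nontrivially outside $R$, it follows that every Pauli occurring with nonzero coefficient in this expansion is itself supported inside $E_\tau$; in particular both $P_\tau$ and, whenever $\beta^{(i)}\neq 0$, the Pauli $L_iP_\tau$ are supported inside $E_\tau$. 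Hence $L_i$, being the product up to a phase of $L_iP_\tau$ and $P_\tau$, is supported inside $E_\tau$ as well. Since every nontrivial logical operator of the $L\times L$ toric code has weight at least the code distance $d=L$ (it is a homologically nontrivial primal and/or dual cycle), the event ``$\beta^{(i)}_\tau\neq 0$ for some $i$'' implies that $E_\tau$ contains the support of such an operator, i.e.\ a ``thin'' union of $O(1)$ self-avoiding homologically nontrivial loops of total length at least $L$.

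I would then union-bound this event over all such candidate supports $W$. A crude self-avoiding-walk count shows that the number of thin winding supports of total size $w$ is at most $\mathrm{poly}(L)\cdot C^{w}$ for some lattice-dependent constant $C$ ($O(L^2)$ choices of a starting edge, a bounded number of continuations per step, and an extra polynomial factor to account for the $O(1)$ separate loops), while $p'$-boundedness gives $\mathbb{P}[W\subseteq E_\tau]\le (p')^{|W|}$ for each fixed $W$. Summing over $w\ge L$,
\begin{equation}
\mathbb{P}\big[\exists\, i:\ \beta^{(i)}_\tau\neq 0\big]\ \le\ \sum_{w\ge L}\mathrm{poly}(L)\,(Cp')^{w}\ \le\ \frac{\mathrm{poly}(L)}{1-Cp'}\,(Cp')^{L},
\end{equation}
which for $p$ small enough that $Cp^{\alpha}<1$ and for $L$ large enough to absorb the polynomial prefactor is bounded by $\exp(-KL)$ for a suitable $K>0$. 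Feeding this into Proposition~\ref{prop:apprnoise} (with the union bound over $\tau\le t$ carried out exactly as in its proof) then yields the claimed trace-distance bound.

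I expect the combinatorial counting in the last step to be the main obstacle: one must verify that the set $E_\tau$ forced to contain a logical operator is geometrically constrained enough that its realizations of a given size are only exponentially numerous, so that they are beaten by the $(p')^{|W|}$ suppression; a little extra bookkeeping is needed because $L_i$ may be a product of an $X$- and a $Z$-type logical, hence supported on both a primal and a dual winding loop (this changes only constants). An alternative that avoids passing to the qubit model is to argue directly with the gadget error model: the bad event forces $\Omega(L)$ failed gadgets whose supports cover a winding loop, and since consecutive covering gadgets along the loop lie within a bounded distance of one another they form a ``chain'' of failed gadgets, which can be counted exponentially in its length and suppressed using $p$-boundedness of the gadget model directly. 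Either way, the same argument applies to any stabilizer code whose distance grows polynomially in $L$, with only the combinatorial constants affected.
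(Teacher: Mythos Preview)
Your proposal is correct and follows essentially the same approach as the paper: verify the hypothesis of Proposition~\ref{prop:apprnoise} by showing that a nonzero logical component in $\cO_\tau$ forces $E_\tau$ to contain a nontrivial winding loop, then union-bound over self-avoiding paths using $p$-boundedness. The only minor difference is that you first pass to the qubit error model via Lemma~\ref{lem:equiv_gadget} and count qubit-level loops, whereas the paper works directly with the gadget model (counting connected chains of $\geq L/a$ failed gadgets on the ``gadget lattice'' of degree $\leq 12$)---precisely the alternative you sketch at the end.
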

\begin{proof}
Following Prop.~\ref{prop:apprnoise}, we need to show the assumption that the marginal distribution $p(H_\tau)$ with associated operator $\cO_\tau$ has the property that $\beta_{\tau}^{(i)} \neq 0$ has probability $\leq \exp(- K L)$ for some constant $K$.  Note that $\cO_\tau$ can only be a Pauli operator $P_\tau$ (and not a linear combination thereof) if the union of supports of failed gadgets $A_t$ does not support a logical representative $L$.  Therefore, $\beta_\tau^{(i)}$ can be non-zero only if the gadgets that failed at timestep $\tau$ contain the support of a logical representative, which in the toric code forms a non-contractible cycle of the torus.  

We now show that this event occurs with negligible probability.  In order to contain a logical representative at time slice $\tau$, a connected path of gadgets of length $\geq L/a$ must fail, where $a$ is an $O(1)$ constant denoting the maximum linear extent of an elementary gadget (the toric code automaton has $a=3$).  Call $B_t$ a path of failed gadgets.  From the definition of $p$-boundedness, $\mathbb{P}[B_\tau \subset E] \leq p^{L/a}$.  The number of such paths $B_\tau$ is the number of self-avoiding walks on the ``gadget lattice'' at fixed time slice $\tau$, where vertices correspond to gadgets, and edges connect two vertices if two gadgets are adjacent or share corners.  The number of such walks is $\leq L^2 (D-1)^{L/a}$, where $D$ is the degree of the gadget lattice (including gadgets touching corners, the degree is $\leq 12$ for the toric code automaton) and the factor of $L^2$ comes from the number of starting points of the walk at fixed time slice $\tau$.  Thus
\begin{equation}
\mathbb{P}[\exists L_i: \text{supp}(L_i) \subset E] \leq \sum_{B_\tau} \mathbb{P}[B_\tau \subset E] \leq L^2 (11 p)^{L/a}. 
\end{equation}
where $L_i$ is some nontrivial logical operator. Therefore, following Prop.~\ref{prop:apprnoise}, we show that the probability that the above holds for all time slices up to time slice $t$ is $\leq L^2 t \cdot (11 p)^{L/a}$, which for some constant $K$ proves the Lemma.
\end{proof}

Thus, due to Prop.~\ref{prop:apprnoise}, Fact~\ref{fact:appr_pbounded} and Lemma \ref{lemma:approximate-replacement-tc}, it is sufficient to consider the approximate Pauli noise model from here on.

Because the toric code is a CSS quantum code (meaning that each stabilizer is either an $X$- or $Z$-type Pauli operator), it is possible to decompose its action into two independent $X$ and $Z$ parts. 

Under an approximate noise model, consider an arbitrary noise operator history $\{\cO_1, \cO_2, \cdots, \cO_T \}$.  We can decompose the noise operator at each time step into Pauli $X$ and $Z$ parts (ignoring inconsequential phase factors $i^n$ if the error is $Y = i X Z$), i.e. $\cO_t = \cO_t^{(X)} \cO_t^{(Z)}$. Using this, we define an $X$-type and $Z$-type automaton and the associated gadgets for the toric code:
\begin{definition}[$X$ and $Z$-type elementary gadgets]\label{def:xzgadgets}
    Assume a gadget noise model that, for each noise operator history, specifies $X$- and $Z$-type noise operators $\cO_{t,i}^{(X)}$ and $\cO_{t,i}^{(Z)}$ which are applied due to failed elementary gadgets. 

    An $X$-type elementary gadget $G^{(X)}$ is defined similarly to Def.~\ref{def:elementary-gadget}, but which involves measurements of $Z$-type stabilizers only and applies $X$-type Pauli feedback only. Then, to an $X$-type gadget at time step $t$ we assign the Pauli noise operator $\cO_{t,i}^{(X)}$ on the support of the gadget $\mathrm{supp}_q(G^{(X)})$.

    Similarly, a $Z$-type elementary gadget $G^{(Z)}$ involves measurements of $X$-type stabilizers and applies $Z$-type Pauli feedback. The Pauli noise operator $\cO_{t,i}^{(Z)}$ is assigned to the support of a $Z$-type gadget at time $t$.
   
\end{definition}

If the toric code automaton is such that its elementary gadgets are of $X$ and $Z$ type only, one can view it as a layer of simultaneously applying type-$X$ elementary gadgets and type-$Z$ elementary gadgets at each time step. We note that, with this definition, the operator $\cO_t$ that is applied can be interpreted as gadget failure of a certain set of $X$ and $Z$ gadgets.

We define an $X$-type toric code automaton associated with the full toric code automaton to be the isolated action of $X$-type gadgets only, with noise restricted to be of $X$-type only. This automaton is built from $X$-type non-elementary gadgets, which are comprised of $X$-type elementary gadgets. The $Z$-type toric code automaton is defined analogously. 

We now show that the simultaneous operation of $X$-type and $Z$-type automata can be equivalently determined by running the two automata separately and combining their actions in a simple way.

\begin{fact} \label{fact:XZ-decoupling-2}
    Assume that we have a toric code automaton that operates with $X$ and $Z$-type gadgets running for $T$ time steps that experiences an approximate gadget noise model.  Sample a noise realization $\bm H_T$ with a history of noise operators $\{\cO_1, \cO_2, \cdots, \cO_T\}$, and assume that the initial state is a toric code state of the form $\cO_{\mathrm{in}}^{(X)} \cO_{\mathrm{in}}^{(Z)} \ket{\psi}$ where $\cO_{in}^{(X/Z)}$ are the input damage operators and $\ket{\psi}$ is a syndrome-free initial logical state of the toric code.
    
    Then, one can determine the total operator applied by the automaton (i.e. due to both its feedback and the noise) under this noise realization by independently running the $X$ and $Z$-type automata whose outputs give the operators $\cO_{\mathrm{tot},T}^{(X)}$ and $\cO_{\mathrm{tot},T}^{(Z)}$, and multiplying these operators together.
\end{fact}
\begin{proof}
Recall that due to Lemma~\ref{lemma:noise-collapse} we can absorb the effect of the arbitrary noise operation followed by stabilizer measurements into some effective probability distribution of Pauli operators applied at each time step, where the distribution is $p$-bounded. At each timestep $t$, the total operator applied by the toric code automaton consists of the feedback applied by all the $X$-type gadgets $\cF_t^{(X)}$, the feedback applied by all the $Z$-type gadgets $\cF_t^{(Z)}$, and the noise operator $\cO_t = \cO_t^{(X)} \cO_t^{(Z)}$.  Assuming that the input state at this timestep has a total operator that has been applied to it due to the noise and operation of the automaton $\cO_{\rm{tot}, t-1} = \cO_{\rm{tot}, t-1}^{(X)} \cO_{\rm{tot}, t-1}^{(Z)}$, the feedback applied by the $X$-type ($Z$-type) gadget depends only on the $X$ ($Z$) part of this operator, i.e. $\cF_{t}^{(X)} \equiv \cF_{t}^{(X)}[\cO_t^{(X)}, \cO_{\rm{tot}, t-1}^{(X)}]$ and $\cF_{t}^{(Z)} \equiv \cF_{t}^{(Z)}[\cO_t^{(Z)}, \cO_{\rm{tot}, t-1}^{(Z)}]$ except for $\cF_{1}^{(X)} \equiv \cF_{1}^{(X)}[\cO_1^{(X)}, \cO_{\rm{in}}^{(X)}]$ and $\cF_{1}^{(Z)} \equiv \cF_{1}^{(Z)}[\cO_{1}^{(Z)}, \cO_{\rm{in}}^{(Z)}]$. The total operator applied to the toric code state relative to the initial state after time step $t$ is $\left ( \cF_t^{(X)} \cO_t^{(X)}\right)\cO_{\rm{tot}, t-1}^{(X)}   \left ( \cF_t^{(Z)}  \cO_t^{(Z)}\right )\cO_{\rm{tot}, t-1}^{(Z)}$.
In addition, we can freely commute $X$ and $Z$-type operators acting on the state because we can ignore any inconsequential negative signs.

Using the given form of the input state and using this decomposition at each time step, we find that the total operator acting on the toric code state during the entire operation decomposes into $\left (\prod_{t = 1}^T \cF_t^{(X)} \cO_t^{(X)} \cO_{\rm{in}}^{(X)} \right) \left (\prod_{t = 1}^T \cF_t^{(Z)}  \cO_t^{(Z)} \cO_{\rm{in}}^{(Z)} \right )$. We recognize each of these operations as the operator applied due to separate action of the $X$ and $Z$-type toric code automaton.
\end{proof}

Due to this decoupling, we will introduce a $p$-bounded ``decoupled'' error model where $Z$-type gadgets experience $Z$ errors and the $X$-type gadgets experience $X$ errors, defined below.

\begin{definition}[$p$-bounded decoupled gadget error model] \label{def:approximate-decoupled-gadget-error-model}

Consider an approximate noise model, where for noise realization $\bm H_T$ with set of noise locations $E$ and noise operator history $\{\cO_1, \cO_2, \cdots, \cO_T \}$ we write  $\cO_t = \cO_t^{(X)} \cO_t^{(Z)}$.  We then partition $E = E_X \cup E_Z$ so that $E_X$ and $E_Z$ denote sets of the spacetime locations of qubits marked for being possibly acted upon by $X$ and $Z$ operators, respectively, and $\bigcup_t \supp (\cO_t^{(X)}) \subseteq E_X$ and $\bigcup_t \supp (\cO_t^{(Z)}) \subseteq E_Z$. 

We additionally specify sets of failed elementary gadgets  $F_{E_X} = \{ F_{1}^{(X)}, F_2^{(X)}, \dots, F_T^{(X)}\}$ and $F_{E_Z} = \{ F_{1}^{(Z)}, F_2^{(Z)}, \dots, F_T^{(Z)}\}$ with $F_t^{(X/Z)} \subseteq \cL_{g,t}^{(X/Z)}$, such that the support of each $f^{(X/Z)} \in F_t^{(X/Z)}$ overlaps with $E_{X/Z}$, and $E_{X/Z}$ is contained in the union of supports of the gadgets in $F_{E_{X/Z}}$.  For each time step $t$, we can choose a decomposition $\cO_t^{(X/Z)} = \prod_i \cO_{t,i}^{(X/Z)}$ where each $\cO_{t,i}^{(X/Z)}$ is contained in the support of $f_{t,i}^{(X/Z)} \in F_t^{(X/Z)}$. 

We call this noise model a $p$-bounded decoupled gadget noise model if, for any subsets of spacetime locations of $X$-type and $Z$-type elementary gadgets $A_X$ and $A_Z$, we have:

\begin{equation}
\mathbb{P}[(A_X \subseteq F_{E_X}) \wedge (A_Z \subseteq F_{E_Z})] \leq p^{|A_X| + |A_Z|}.
\end{equation}
\end{definition}

\begin{lemma}\label{lem:equiv_decoupled_gadget}
A $p$-bounded decoupled gadget error model is a $Cp^{\alpha}$-bounded approximate qubit error model for some $\alpha < 1$.  

A $p$-bounded approximate qubit error model is a $C'p^{\beta}$-bounded decoupled gadget error model for constants $C',\beta > 1$.
\end{lemma}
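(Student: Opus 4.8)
The plan is to follow the proof of Lemma~\ref{lem:equiv_gadget} almost verbatim, the only new feature being that the qubit-level noise now carries an $X$ or $Z$ label and the elementary gadgets split into two sectors. The joint bound $\mathbb{P}[(A_X \subseteq F_{E_X}) \wedge (A_Z \subseteq F_{E_Z})] \le p^{|A_X| + |A_Z|}$ from Def.~\ref{def:approximate-decoupled-gadget-error-model} will play exactly the role that $\mathbb{P}[A_g \subseteq F_E] \le p^{|A_g|}$ did there. Since a decoupled gadget model is by definition built on top of an approximate noise model, the resulting qubit model will automatically be an \emph{approximate} qubit model, so only the $p$-boundedness condition needs checking.

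For the first direction, I would fix a set $A$ of spacetime qubit locations and note that, because $E = E_X \cup E_Z$, the event $A \subseteq E$ is the union over the at most $2^{|A|}$ ways of declaring each location of $A$ to be ``of $X$-type'' or ``of $Z$-type'', i.e.\ over partitions $A = A' \sqcup A''$ with $A' \subseteq E_X$ and $A'' \subseteq E_Z$. For a fixed such partition, $A' \subseteq E_X$ forces the existence of a set of failed $X$-type gadgets whose supports cover $A'$ (and similarly for $A''$ and the $Z$-sector); since every elementary gadget has $O(1)$ qubit support, any such cover has size at least a fixed fraction of $|A'|$, and the number of covers is single-exponential in $|A'|$ by the bounded local geometry. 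Plugging these into the joint $p$-bound and absorbing the single-exponential prefactors from the assignment choice and the cover choices yields $\mathbb{P}[A \subseteq E] \le (Cp^\alpha)^{|A|}$ with $\alpha$ the reciprocal of the largest gadget qubit-support size.

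For the converse, I would run the same argument in reverse, in both sectors simultaneously. Given sets $A_X$ and $A_Z$ of $X$- and $Z$-type gadget locations, the event that all of them fail forces, for each failed gadget, some qubit location in its support lying in the corresponding $E_X$ or $E_Z$; collecting one such witness per gadget gives sets $B_X \subseteq E_X$ and $B_Z \subseteq E_Z$ with $|B_X|, |B_Z|$ at least a fixed fraction of $|A_X|, |A_Z|$ respectively (each qubit sits in only $O(1)$ gadget supports), and $B_X \cup B_Z \subseteq E$ since $E_X, E_Z \subseteq E$. A union bound over the single-exponentially many choices of $(B_X, B_Z)$, together with $\mathbb{P}[B_X \cup B_Z \subseteq E] \le p^{|B_X \cup B_Z|}$ and $|B_X \cup B_Z| \ge \tfrac12(|B_X| + |B_Z|)$, then gives the asserted $(C'p^\beta)^{|A_X| + |A_Z|}$ bound with $\beta, C'$ constants fixed by the gadget support sizes and the degree of the gadget lattice.

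The step I expect to require the most care is the bookkeeping of the overlap $E_X \cap E_Z$ (a $Y$-type error marks a location for both sectors) and of the fact that a single spacetime qubit can lie in the supports of several neighboring gadgets, possibly of both types. One must verify that these overlaps only inflate the combinatorial factors by a single exponential --- which they do, since the gadget lattice has bounded degree and the toric code automaton's elementary gadgets have bounded support --- and that the surviving exponent $|B_X \cup B_Z|$ still dominates $|A_X| + |A_Z|$ up to an $O(1)$ factor via $|B_X \cup B_Z| \ge \tfrac12(|B_X| + |B_Z|)$. Once these points are pinned down, the rest is the same geometric-series estimate already carried out in Lemma~\ref{lem:equiv_gadget}.
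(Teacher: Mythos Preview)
Your proposal is correct and follows essentially the same approach as the paper's own proof: both directions reduce to the argument of Lemma~\ref{lem:equiv_gadget} run in the two sectors simultaneously, with a union bound over the ways of assigning qubit locations to the $X$- or $Z$-sector (the paper uses a factor $3^{|A|}$ for the three-way assignment instead of your $2^{|A|}$ for a strict partition, but this is immaterial). Your explicit handling of the overlap via $|B_X\cup B_Z|\geq \tfrac12(|B_X|+|B_Z|)$ is in fact slightly more careful than the paper, which simply asserts $|A|\geq \beta(|A_g^{(X)}|+|A_g^{(Z)}|)$ for some $\beta>0$ without isolating this step.
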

\begin{proof}
The proof is similar to that of  Lemma~\ref{lem:equiv_gadget}.

To show the first statement, consider an arbitrary set of spacetime locations of qubits $A$.  We can write
\begin{equation}
    \mathbb{P}[A \subseteq E] \leq \sum_{A_g^{(X)}, A_g^{(Z)}: A_g^{(X)}\cup A_g^{(Z)}\sim A}\mathbb{P}[(A_g^{(X)} \subseteq F_{E_X}) \wedge (A_g^{(Z)} \subseteq F_{E_Z})] 
\end{equation}
where we take the sum over all possible sets of $X$- and $Z$-type gadget locations $A_g^{(X)}$ and $A_g^{(Z)}$ such that the support of each gadget in the set contains at least one element in $A$ and the union of supports of all gadgets in $A_g^{(X)}$ and $A_g^{(Z)}$ contains $A$ (denoted by $A_g^{(X)}\cup A_g^{(Z)}\sim A$).

Under a $p$-bounded decoupled gadget error model as well as using the fact that there exists a constant $0 < \alpha \leq 1$ such that 
$|A_g^{(X)}|+|A_g^{(Z)}| \geq \alpha'|A|$ for all $A_g^{(X)}$ and $A_g^{(Z)}$ with $A_g^{(X)}\cup A_g^{(Z)}\sim A$, we have that $\mathbb{P}[(A_g^{(X)} \subseteq F_{E_X}) \wedge (A_g^{(Z)} \subseteq F_{E_Z})] \leq p^{\alpha|A|}$. Using that the total number of such sets is $D^{3|A|}$ (the factor of 3 is because each qubit location is either assigned to $A_g^{(X)}$, $A_g^{(Z)}$, or to both) for some constant $D$, we can write 

\begin{equation}
    \sum_{A_g^{(X)}, A_g^{(Z)}: A_g^{(X)}\cup A_g^{(Z)}\sim A}\mathbb{P}[(A_g^{(X)} \subseteq F_{E_X}) \wedge (A_g^{(Z)} \subseteq F_{E_Z})] \leq (D^3p)^{\alpha |A|}
\end{equation}
choosing $C = D^3$, we obtain the first statement of the Lemma.

For the second statement of the Lemma, we designate a partitioning $E = E_X \cup E_Z$ such that $p(\cO,E)$ can be now represented as $p(\cO,E_X,E_Z)$. $E_{X,Z}$ are in addition chosen such that if a given qubit is in $E_X$ but not in $E_Z$, the noise operator supported on this qubit can only be Pauli $X$ or identity, and similarly for $E_Z$. We then choose any set of $X$-type gadgets $F_{E_X}$ such the support of any gadget $f_X \in F_{E_X}$ contains at least one location in $E_X$ and $E_X$ is contained within the union of supports of all the gadgets in $F_{E_X}$. Similarly, we choose any set of $Z$-type gadgets $F_{E_Z}$ such the support of any gadget $f_Z \in F_{E_Z}$ contains at least one location in $E_Z$, and $E_Z$ is contained within the union of supports of all the gadgets in $F_{E_Z}$.  Then, choose arbitrary sets of $X$-type and $Z$-type gadgets $A_g^{(X)}$ and $A_g^{(Z)}$. We can write 
\begin{equation}
    \mathbb{P}[(A_g^{(X)} \subseteq F_{E_X}) \wedge (A_g^{(Z)} \subseteq F_{E_Z})]  \leq \mathbb{P}\left[\bigcup_{A: A \sim A_g^{(X)} \cup A_g^{(Z)}} \left ( A \subseteq E \right ) \right] \leq \sum_{A: A \sim A_g^{(X)} \cup A_g^{(Z)}} \mathbb{P}\left[A \subseteq E\right],
\end{equation}
where $A \sim A_g^{(X)} \cup A_g^{(Z)}$ denotes that $A$ is contained in the union of all gadget supports in $A_g^{(X)}$  and $A_g^{(Z)}$, and additionally that the support of each gadgets in $A_g^{(X/Z)}$ contains at least one qubit from $A$. 
The number of such sets is $\leq C^{|A_g^{X}| + |A_g^{Z}|}$ for some constant $C$. 
Therefore 
\begin{equation}
\sum_{A: A \sim A_g^{(X)} \cup A_g^{(Z)}} \mathbb{P}\left[A \subseteq E\right] \leq \sum_{A: A \sim A_g^{(X)} \cup A_g^{(Z)}} p^{|A|} \leq p^{|A|} C^{|A_g^{X}| + |A_g^{Z}|},
\end{equation}
We can then write $|A| \geq \beta (|A_g^{X}| + |A_g^{Z}|)$ for some $\beta >0$, and obtain the second statement of the Lemma.
\end{proof}

We remark that  Lemma~\ref{lem:equiv_gadget} will also hold for $p$-bounded \emph{approximate} qubit and gadget noise models. Together with the Lemma that we just proved, this implies that any  $p$-bounded decoupled gadget noise model is equivalent to any $p'$-bounded approximate gadget noise model for some appropriately chosen $p'$, and is equivalent to a $p''$-bounded approximate qubit noise model for some appropriately chosen $p''$. 

When defining the toric code automaton, we will first define its $X$ and $Z$-type elementary gadgets. Due to Fact~\ref{fact:XZ-decoupling}, we can then focus on $X$-type gadgets only, which are used to define the $X$-type toric code automaton. The $Z$-type automaton can be constructed in the same way by working on the dual lattice. The full toric code automaton can be obtained by considering $X$- and $Z$-type automata separately and combining their actions at the end.

We will then show that the $X$-type toric code automaton is fault tolerant (in the sense that it protects the state against $X$-type logical failures) under an approximate $p$-bounded gadget noise model. Because the $Z$-type automaton is analogous, it will have the same properties as the $X$-type one. Then, in Corollary~\ref{corollary:full-simulaton-TC}, we will show that the fault tolerance of the toric code automaton under (a more general) $p$-bounded noise model follows from these properties.

\begin{remark} \label{remark:full-TC-automaton}
    We remark that thus far, we defined the toric code automaton under an approximate decoupled gadget error model only. This is because analyzing an error model with $X$ and $Z$-type elementary gadget failures is much more convenient.

    A more natural error model for the quantum toric code automaton should include the possibility for a failed gadget to apply an \emph{arbitrary} Pauli noise channel. A toric code automaton operating under such a noise model can be defined in the following way. Consider the same automaton of $X$- and $Z$-type elementary gadgets, but modify the way the noise is applied. In particular, ``pair'' $X$ and $Z$ gadgets acting in the same time step, such that an $X$-type gadget at location $(x,y)$ is associated with a $Z$-type gadget on a dual lattice at location $(x+1/2,y+1/2)$. We unite the supports of this pair of gadgets and view them as a single gadget, with the $p$-bounded gadget noise model now defined with respect to these gadgets. 
    
    Due to Lemma~\ref{lem:equiv_gadget} and \ref{lem:equiv_decoupled_gadget}, this more general noise model is equivalent to the decoupled gadget error model upon appropriate rescaling of $p$.
\end{remark}

Due to the results in Fact~\ref{fact:XZ-decoupling} and Lemma~\ref{lem:equiv_decoupled_gadget}, we will consider the $X$-type automaton only in most of what follows. This effectively reduces the problem we are dealing with to a classical one. Most of the paper will be dedicated to constructing the $X$-type automaton and proving its fault tolerance. We will show that fault tolerance of the full automaton in Corollary~\ref{corollary:full-simulaton-TC}.

\section{Review of the extended rectangles (exRec) formalism}\label{app:exRec}

In this Appendix, we will review extended rectangle method of  Aliferis, Gottesman, and Preskill~\cite{Aliferis2005Apr,gottesman2009introductionquantumerrorcorrection,GottesmanBook} for proving fault tolerance in concatenated code constructions. We want to start with some circuit $\mathcal{C}$, either classical or quantum, and produce a `simulated' version of the circuit $FT(\mathcal{C})$. If each component in the circuit fails with probability $p$, then the probability that the output of the circuit $\mathcal{C}$ is wrong is large. We want $FT(\mathcal{C})$ to have an error suppression property, where the effective error per component $\mathcal{C}$ will scale like $O(p^{\alpha})$ for $\alpha > 1$ -- then, recursively repeating this procedure will yield scalable error suppression.
The way to realize the fault tolerant simulation of circuit $\mathcal{C}$ is by replacing it with a new circuit where each bit is now several bits and corresponds to a logical bit of a small code, and between each gate $G_i$ and $G_{i+1}$ in the circuit (which is now an `encoded' or `logical' gate) one inserts an error correction unit which we will call $\text{EC}$. A layer of $\text{EC}$ units are also added at the beginning and the end of the circuit. As a result, the simulated circuit is slightly larger than the original circuit.

Both the encoded gates and the EC unit need to satisfy certain conditions. Before writing them down, we must define a notion of an $r$-filter and an ideal decoder. In particular, suppose the small code has the set of codewords $S$ of some stabilizer code or classical $\mathbb{F}_2$-linear code on $n$ qubits (or bits). For now, we will use notation consistent with that in Ref.~\cite{gottesman2009introductionquantumerrorcorrection}.  We will also illustrate the concept with classical codes for simplicity, although the generalization to quantum codes is straightforward. 
 
\begin{definition}[$r$-filter]
     Define 
 \begin{equation}
 S_r = \{c: \exists \ s \in S, \,\, \text{wt}(c+s) \leq r\}
 \end{equation}
which is the set of elements that are at most distance $r$ from a codeword. An $r$-filter is an operation that projects onto the subspace $S_r$ of configurations.
\end{definition}
 
An ideal decoder is a map from $\{0,1\}^n$ where $n$ is the number of physical bits in the code to $\{0,1\}^k$ where $k$ is the number of encoded bits and simply performs an error correction and a read-out operation (which are assumed to be done in the absence of errors), extracting the values of the logical bits. Note that for Tsirelson's automaton the decoder simply performs a majority vote, with $n = 3$ and $k = 1$. We emphasize that $r$-filters and ideal decoders are not physical components of the circuit but just auxilliary tools that will be helpful to us.

\begin{definition}[Gate and EC conditions] \label{def:gateEc-OG}
The Gate A and B conditions can be schematically depicted as:
 \begin{equation*}
 \includegraphics[width = \textwidth]{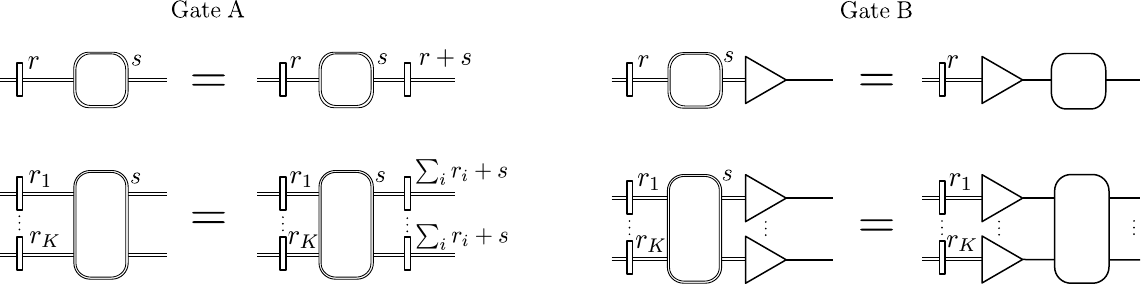}
 \end{equation*}
 and two conditions for EC which we will call EC A and EC B can be depicted as:
 \begin{equation*}
 \includegraphics[width = \textwidth]{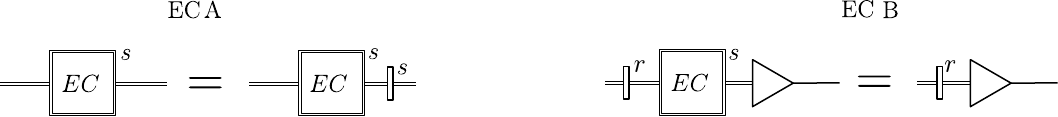}
 \end{equation*}
 Here and in the rest of the paper, the equality between circuit diagrams holds up to an application of an operator in the stabilizer group of the code to the output state. In this notation, the triangle corresponds to an ideal (noiseless) decoder, the double lines correspond to encoded bits (qubits) and the single line corresponds to a single unencoded bit (qubit). EC gadgets are labeled.  The thin rectangle corresponds to the $r$-filter, with the value of $r$ labeled on its top right. The letters $s$ next to the gate/EC gadgets indicate the number of faults that have occurred in the implementation of these operations. The Gate A and Gate B properties each have two associated figures. The top figure shows the condition for a single-bit gate (such as the NOT gate) while the bottom figure shows the condition for a two-bit gate (such as the AND gate). The generalization to $> 2$-bit gates follows similarly. Finally, the values of $r$ and $s$ in these conditions must satisfy an inequality relating to the distance of the small code used for the simulation:
 \begin{align}
 &\text{Gate A: } \sum_i r_i + s \leq t \quad  \quad  \text{Gate B: } \sum_i r_i + s \leq t \nonumber   \nonumber \\
 &\text{EC A: } s \leq t \nonumber  \quad \quad \quad \quad \ \, \quad \quad  \text{EC B: } r + s \leq t \\
 \end{align}
 where $t$ is the maximum number of errors that can be guaranteed to be corrected by a noiseless EC gadget, related to the distance of the small code via $d = 2t + 1$.

\end{definition}

 Let us  provide a bit of intuition for these conditions. For the Gate A property, we require that if an input state is conditioned on being $r$-close to a codeword, then a gate with $s$ faults does not push the input state more than $r+s$ distance from any codeword assuming $r + s \leq t$. This does not necessarily require that the input state remains near the {\it same} codeword; in reality, it can jump to a different codeword so long as it remains close to one. The Gate B property implies that the input state undergoes the correct logical operation so long as it is sufficiently close to a codeword and the gate has few faults inside.  The EC A property is perhaps the trickiest to satisfy: the input state can be arbitrary, but we require that the input state gets sent near a codeword under the EC unit, even if the EC unit has $s \leq t$ faults. However, the input state needs not to be mapped to the closest codeword, any codeword will do.
 Finally, the EC B property implies that if a state is almost a codeword and a faulty error correction operation is performed on it, the state should still remain close to the codeword so that it can still be properly read out by the ideal decoder. Of course, we demand the total faults $r + s \leq t$.

Now we review the method of proving fault tolerance of simulated circuits from ~\cite{gottesman2009introductionquantumerrorcorrection} (the figures in this subsection are adapted from Ref.~\cite{gottesman2009introductionquantumerrorcorrection}). Consider circuit $\mathcal{C}$ and its fault-tolerant simulation $FT(\mathcal{C})$. We are going to define an extended rectangle for each operation to be a region in spacetime containing both the operation as well as the preceding and succeeding error correction modules. For encoding and decoding operations, it is a little different but we do not need to be currently concerned with them. Examples are shown below:
 \begin{equation}
 \includegraphics[width = 0.6\textwidth]{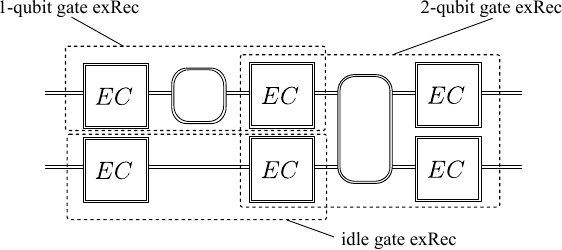}
 \end{equation}
 We now define properties of extended rectangles, which we will call good/bad and correct/incorrect. These are defined with respect to ideal decoders which are placed at the end of the circuit. The logic is that if a rectangle does not contain too many faults, we can pull an ideal decoder through this rectangle, replacing the physical bits of the error correcting code with a single bit and the logical gate with a clean gate acting on the single bit. We then similarly pull decoders through the rest of the simulated circuit to get the original circuit $\mathcal{C}$.
 \begin{definition}[Good/Bad exRecs]
 An extended rectangle is good if the total number of faults in all of the components in its support is $\leq t$. Otherwise it is bad.
 \end{definition}
 \begin{definition}[Correct/Incorrect exRecs]
 An extended rectangle for a single bit gate is correct if:
 \begin{equation}
 \includegraphics[width = 0.65 \textwidth]{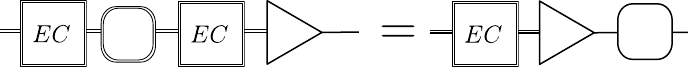}
 \end{equation}
 and an extended rectangle for a multi-bit gate is correct if:
 \begin{equation}
 \includegraphics[width = 0.65 \textwidth]{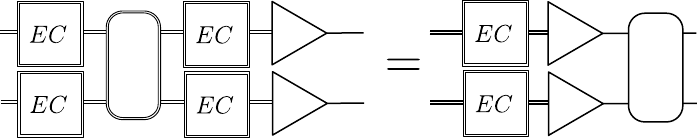}
 \end{equation}
 Otherwise the extended rectangle is said to be incorrect.
 \end{definition}
 An important result can be shown thanks to the EC and gate properties is (we reproduce the proof below):
 \begin{proposition} \label{prop:goodcorrect}
     If an extended rectangle is good, then it is correct.
 \end{proposition}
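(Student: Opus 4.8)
The plan is to follow the standard exRecs argument of Aliferis--Gottesman--Preskill \cite{gottesman2009introductionquantumerrorcorrection} essentially verbatim, which shows that a good extended rectangle is correct. The key idea is that an extended rectangle for a gate $G$ consists of a leading $\EC$, the encoded gate $G$, and a trailing $\EC$, and that the ``goodness'' hypothesis ($\le t$ total faults in the exRec) is exactly what is needed to invoke the Gate and EC properties of Def.~\ref{def:gateEc-OG} in sequence. The statement to prove is a \emph{graphical} identity: pulling an ideal decoder through a good exRec yields a clean (noiseless) unencoded gate followed by an ideal decoder acting on the state entering the leading $\EC$ (together with a trailing $\EC$ to its left, absorbed into the decoder).

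First I would split into cases according to where the faults lie. Since the exRec is good, the total fault count $s_{\mathrm{EC,le}} + s_{G} + s_{\mathrm{EC,tr}} \le t$. The argument proceeds right-to-left. Start from the trailing $\EC$: apply the EC B property. This requires that the state entering the trailing $\EC$ be $r$-close to a codeword with $r + s_{\mathrm{EC,tr}} \le t$; this $r$ will be supplied inductively by the analysis of $G$. EC B lets us replace the trailing $\EC$ together with the final ideal decoder by just an ideal decoder acting on the state that was input to the trailing $\EC$ (since the trailing $\EC$ together with a perfect decoder recovers the correct logical value). Next, analyze the encoded gate $G$: apply Gate A to control how far the output of $G$ is from a codeword (giving the value of $r$ used above, namely $\sum_i r_i + s_G$ where $r_i$ bound the distance of each input wire from a codeword), and apply Gate B to conclude that $G$ implements the correct logical operation on the decoded inputs. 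This is where the $\sum_i r_i + s_G \le t$ inequality is used. Finally, analyze the leading $\EC$: apply EC A, which guarantees that even an arbitrary input to the leading $\EC$ is mapped by it to within distance $s_{\mathrm{EC,le}}$ of \emph{some} codeword (using $s_{\mathrm{EC,le}} \le t$), which supplies the $r_i$'s for the Gate conditions. Chaining these graphical identities $\EC_{\mathrm{tr}} \to G \to \EC_{\mathrm{le}}$ transforms ``ideal decoder after the exRec'' into ``clean unencoded gate, then ideal decoder after the leading $\EC$,'' which is exactly the definition of the exRec being correct. The multi-bit case is identical except one tracks one $r_i$ per input codeblock and uses the multi-bit versions of the Gate conditions.

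The main obstacle — though it is bookkeeping rather than a genuine difficulty — is getting the budget arithmetic exactly right across the three stages: one must verify that the inequalities $s_{\mathrm{EC,le}} \le t$, $\sum_i r_i + s_G \le t$ with $r_i = s_{\mathrm{EC,le}}^{(i)}$, and $r + s_{\mathrm{EC,tr}} \le t$ with $r = \sum_i r_i + s_G$ are all simultaneously implied by the single goodness hypothesis $\sum_i s_{\mathrm{EC,le}}^{(i)} + s_G + s_{\mathrm{EC,tr}} \le t$. Since all fault counts are nonnegative and each of the needed quantities is a partial sum of the total, this follows immediately; the only care needed is that Gate A/B are applied with the correct $r_i$ drawn from EC A applied to the corresponding leading $\EC$ block. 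The remaining subtlety is purely notational: the leading $\EC$ of a gate exRec is shared with the trailing $\EC$ of the preceding exRec, but for the purposes of this proposition we treat a single exRec in isolation, and the shared-$\EC$ overlap is handled later when the exRecs are composed (via the truncation algorithm). Hence no new ideas are required beyond those already in \cite{gottesman2009introductionquantumerrorcorrection}, and I would simply reproduce that chain of graphical equalities, citing Def.~\ref{def:gateEc-OG} at each step.
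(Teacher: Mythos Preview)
Your proposal is correct and follows essentially the same approach as the paper: both invoke the Gate A/B and EC A/B properties of Def.~\ref{def:gateEc-OG} in sequence to push the ideal decoder leftward through the good exRec, with the goodness hypothesis supplying the required fault budgets at each step. The paper presents this as a short chain of diagrammatic equalities (EC A, Gate A, EC B, Gate A, Gate B, EC A), which is exactly the argument you outline.
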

 \begin{proof}

 The proof follows from the chain of diagrammatics below (assuming $s_1 + s_2 + s_3 \leq t$). 
 \begin{equation}
 \includegraphics[width =  \textwidth]{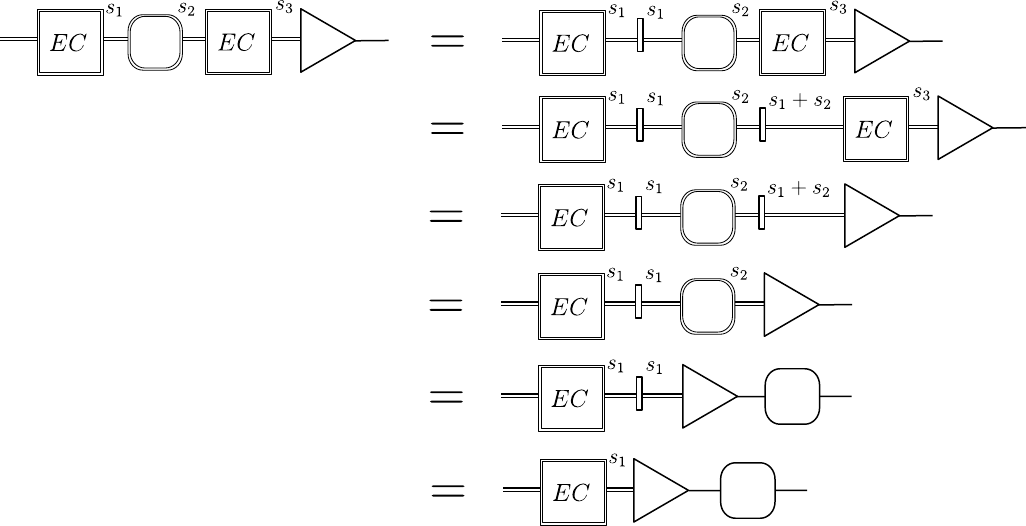}
 \end{equation}
 For completeness, we provide the properties used in each line:
 \begin{itemize}
 \item Line 1: EC A
 \item Line 1 to 2: Gate A
 \item Line 2 to 3: EC B
 \item Line 3 to 4: Gate A (should be understood as equivalence between lines 3 and 4 due to Gate A)
 \item Line 4 to 5: Gate B
 \item Line 5 to 6: EC A  (should be understood as equivalence between lines 5 and 6 due to EC A)
 \end{itemize}
 \end{proof}
 
 Therefore, if all extended rectangles are good, then one can simply pull the decoder through the entire simulated circuit $FT(\mathcal{C})$, producing the original circuit $\mathcal{C}$:
 \begin{proposition}
 If a noisy fault tolerant simulation $FT(\mathcal{C})$ of circuit $\mathcal{C}$ has all of its extended rectangles good, then following $FT(\mathcal{C})$ with a layer of decoders $\mathcal{D}^{\otimes n}$, one per simulated bit, results in the same action as $\mathcal{C}$.
 \end{proposition}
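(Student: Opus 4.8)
The statement is an immediate consequence of Proposition~\ref{prop:goodcorrect} together with an induction on the structure of the simulated circuit, so the plan is essentially bookkeeping. First I would set up the following picture: append a layer of ideal decoders $\mathcal{D}^{\otimes n}$ (one per simulated bit-block) at the very end of $FT(\mathcal{C})$. The goal is to ``push'' this layer of decoders backwards through the circuit, one extended rectangle at a time, and show that at each step the noisy encoded gate gadget collapses to a clean bare gate of $\mathcal{C}$ acting on single (decoded) bits. Since every exRec is assumed good, Proposition~\ref{prop:goodcorrect} applies to each of them, and this is exactly the statement that we may replace a decoder-followed-by-noisy-encoded-gate-with-surrounding-EC by a bare-gate-followed-by-decoder.

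Concretely, I would proceed by induction on the number of time steps (layers of encoded gates) in $\mathcal{C}$, working from the last layer toward the first. The base case is a circuit consisting of a single layer of encoded gates sandwiched between leading and trailing EC gadgets, handled directly by Proposition~\ref{prop:goodcorrect} (with the minor caveat that one treats the decoding exRec analogously, which is the standard adaptation in~\cite{gottesman2009introductionquantumerrorcorrection}). For the inductive step, consider the last layer of exRecs of $FT(\mathcal{C})$. One subtlety is that consecutive exRecs share an EC gadget: when decomposing $FT(\mathcal{C})$ into exRecs for the purpose of applying the goodness/correctness argument, one inserts an extra copy of the shared EC between each pair of consecutive encoded gates (the original exRec formalism handles this by the rule that a fault-free EC can always be inserted or removed thanks to the idempotence built into EC A/EC B). Having split the circuit this way, apply Proposition~\ref{prop:goodcorrect} to each good exRec in the last layer to pull the decoders past that layer; the trailing EC gadgets and the encoded gates are consumed, leaving a clean bare layer of $\mathcal{C}$ and a new layer of decoders now sitting just after the (previously penultimate) layer of EC gadgets. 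What remains below the decoders is precisely a noisy fault-tolerant simulation of the circuit $\mathcal{C}$ with its last layer removed, all of whose exRecs are still good, so the induction hypothesis finishes the argument.

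I would also remark explicitly on why the argument is consistent despite exRecs overlapping in time: the EC-insertion/removal moves used in the proof of Proposition~\ref{prop:goodcorrect} (Lines 1$\to$2 and 5$\to$6 of the diagram, which are ``equivalence due to Gate A/EC A'' rather than genuine rewrites) are exactly what allows one to treat the shared EC gadget as belonging to whichever exRec is currently being processed, without double-counting faults. Since every exRec is good by hypothesis, no fault budget is ever exceeded, and the chain of equalities in Proposition~\ref{prop:goodcorrect} goes through for each one.

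\textbf{Main obstacle.} There is no real mathematical difficulty here — the content is entirely in Proposition~\ref{prop:goodcorrect}, which is assumed. The only thing requiring care is the boundary treatment: the leading layer of EC gadgets and the encoding step, and the final decoding, are not literally ``gate exRecs,'' so one must invoke the standard convention (as in~\cite{gottesman2009introductionquantumerrorcorrection}) for how an encoder/decoder exRec is defined and why goodness still implies the analogous correctness statement there. Once that convention is in place, the induction is routine.
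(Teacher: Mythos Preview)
The paper does not give an explicit proof of this proposition; it is stated immediately after Proposition~\ref{prop:goodcorrect} and treated as a direct consequence thereof. Your approach---pulling the layer of ideal decoders backward through the circuit one exRec at a time, invoking correctness (Proposition~\ref{prop:goodcorrect}) at each step---is exactly the intended argument and is correct.

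One small point of confusion worth correcting: you do not need to ``insert an extra copy of the shared EC between each pair of consecutive encoded gates.'' The correctness statement is formulated precisely so that this is unnecessary. When you pull the decoder through an exRec $\EC \circ G \circ \EC$, the result is the bare gate $G_0$ followed by a decoder sitting after the \emph{leading} $\EC$ of that exRec; but that leading $\EC$ is the \emph{trailing} $\EC$ of the previous exRec, so you are immediately in position to apply correctness again. No idempotence or fault-free-$\EC$-insertion trick is needed in the all-good case---that device appears later, in the treatment of truncated exRecs (Def.~\ref{def:truncated-exRecs}), not here. Your argument works without it.
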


 This means that the logical information processed by the computation was unaffected by errors. However, the likelihood that all exRecs are good is small.  In the situation when an exRec is bad, there are a couple of issues that come to mind. The first is that extended rectangles overlap. If two bad extended rectangles overlap, we cannot properly quantify whether the faults occurred in the support of their intersection or not. This is problematic since it could be the case that with probability $p^2$ two exRecs can be bad, and the effective error probability per exRec is $p$, which indicates no error suppression.   To avoid this overcounting issue, we must ensure that no bad extended rectangles can have overlapping support. This is fixed through a truncation procedure, motivating the definition below:
 \begin{definition}[Truncated exRec] \label{def:truncated-exRecs}
 A truncated extended rectangle is an extended rectangle where one or more of the succeeding EC units are removed.
 
 A truncated extended rectangle is good if the total number of faults in both the operation, preceding EC unit, and any remaining succeeding EC units is $\leq t$. Otherwise, it is bad.

 A truncated extended rectangle is correct if (for a single-bit gate):
 \begin{equation}
 \includegraphics[width = 0.6 \textwidth]{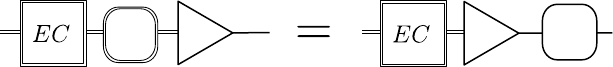}
 \end{equation}
 or for a two-bit gate (generalizable to an $n$-bit gate):
 \begin{equation}
 \includegraphics[width = 0.6 \textwidth]{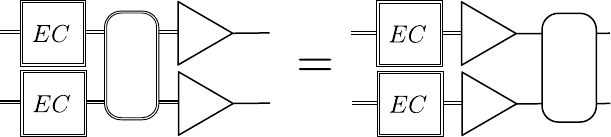}
 \end{equation}
 or
 \begin{equation}
 \includegraphics[width = 0.65 \textwidth]{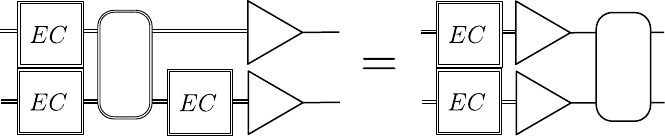}
 \end{equation}
 Else, the truncated extended rectangle is said to be incorrect. 
 \end{definition}
 Truncated extended rectangles have very similar properties to regular extended rectangles:
 \begin{proposition}
 If a truncated extended rectangle is good, then it is correct.
 \end{proposition}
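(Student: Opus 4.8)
This is a direct analog of the untruncated case. The plan is to mirror the proof of Proposition~\ref{prop:goodcorrect}, using the same chain of diagrammatic identities coming from the Gate A/B and EC A/B properties, but now applied to a truncated extended rectangle where one or more of the trailing EC units have been deleted.

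First I would observe that goodness of the truncated exRec means the total number of faults in the leading EC, the gate, and whatever trailing EC units remain is at most $t$. The key trick (as noted in the discussion just before Def.~\ref{def:truncated-exRecs}, and as I would invoke explicitly) is that one may \emph{adjoin a noiseless trailing EC unit} in place of each deleted one: a noiseless EC has zero faults, so it does not change the fault count, and since a noiseless EC followed by an ideal decoder acts the same way as the ideal decoder alone on a state that is already near a codeword (which is exactly what EC A / EC B guarantee about the output of the preceding pieces), this insertion is harmless. Concretely, in the fully-trailing-EC-removed case one reduces to a ``rectangle'' consisting of leading EC plus gate; in the partially-removed case one reduces to leading EC plus gate plus the surviving trailing ECs. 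In either situation, appending ideal (noiseless) ECs restores the structure of an ordinary good extended rectangle, to which Proposition~\ref{prop:goodcorrect} already applies.

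Thus the steps, in order, are: (1) take a good truncated exRec; (2) re-insert noiseless trailing EC gadgets to fill in the deleted positions, noting the fault count is unchanged and hence still $\le t$; (3) apply Proposition~\ref{prop:goodcorrect} to the resulting (genuine) good exRec to conclude it is correct, i.e. pulling ideal decoders through replaces the encoded gate by a bare gate; (4) observe that because the re-inserted ECs are noiseless, a noiseless EC immediately followed by an ideal decoder equals an ideal decoder (this is precisely the content of the EC A / EC B identities specialized to $s=0$, together with the idempotence of ideal decoding on codewords), so the correctness diagram for the \emph{truncated} exRec (Def.~\ref{def:truncated-exRecs}) follows from the one for the untruncated exRec. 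Matching this against the single-bit and multi-bit correctness diagrams in Def.~\ref{def:truncated-exRecs} completes the argument.

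The only mild subtlety — and the part I would be most careful about — is step (4): making sure that deleting a trailing EC on the \emph{physical} (noisy-simulation) side really does correspond, after pulling the decoder through, to deleting nothing at all on the \emph{logical} side. This is where one uses that the state entering the (re-inserted, noiseless) trailing EC is guaranteed by Gate A / EC A to be within $t$ of a codeword, so the noiseless EC maps it back onto that codeword, and the subsequent ideal decoder reads off the correct logical value regardless of whether that noiseless EC was there. Everything else is a verbatim repetition of the chain of equalities in Proposition~\ref{prop:goodcorrect}, with the single-, two-, and $n$-bit variants handled exactly as before. Hence there is no genuinely new obstacle; the proof is a short reduction to the already-established untruncated case.
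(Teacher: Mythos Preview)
Your proposal is correct and matches the paper's proof essentially verbatim: insert a noiseless EC in place of each removed trailing EC (which does not affect goodness), invoke the identity that a fault-free EC followed by an ideal decoder equals an ideal decoder, and reduce to the untruncated Proposition~\ref{prop:goodcorrect}. The paper's proof is just a two-sentence version of what you wrote.
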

 \begin{proof}
 To show that truncation does not matter we use the identity that an EC unit with zero faults followed by an ideal decoder is equal to an ideal decoder. Then, one can simply insert an EC unit with zero faults before the ideal decoder, and the proof reduces to that for the untruncated case.
 \end{proof}
 Truncated extended rectangles are utilized when an extended rectangle is bad, for reasons discussed previously. However, the way in which we assign extended rectangles is not necessarily unique. The following procedure gives a reasonable way to assign extended rectangles which is sufficient for our purposes. First, divide the circuit into timeslices, where each time slice contains some number of gates. Call the depth of the circuit $T$.
 \begin{algorithm}[Labeling of good and bad exRecs and truncation] \label{alg:OG}
 The following algorithm is used to label and truncate exRecs:
\begin{enumerate}
 \item Start with all gates at time $t = T$. 
 \item Construct extended rectangles for each of these gates at time $t$ and determine whether they are good or bad.
 \item If an extended rectangle is bad, truncate all extended rectangles occurring directly before it and which overlap with it.
 \item Decrement $t \to t-1$ and go to step 2 with the truncated extended rectangles replacing the assignment of extended rectangles at depth $t-1$ if needed.
 \item End when $t = 0$.
 \end{enumerate}
 \end{algorithm}
 
 Finally, we discuss what happens when we try to pull a decoder past a bad extended rectangle, either truncated or untruncated.  For this, we need to define a $\ast$-decoder.
 \begin{definition}[$\ast$-decoder] \label{def:conventional-ast-decoder}
 A $\ast$-decoder is a decoder with an additional data line where the syndromes (values of check operators) are stored. It is denoted as below:
 \begin{equation}
 \includegraphics[width = 0.15 \textwidth]{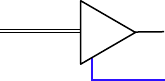}
 \end{equation}
 \end{definition}
 Intuitively, an ideal decoder corresponds to a $\ast$-decoder with the additional data line forgotten. When dealing with bad rectangles, the state to be decoded is often far from a codeword and the ideal decoder produces an incorrect result which depends on the measured sydromes. However, for good rectangles, the state is close enough to a codeword that the correct logical operation is always applied. This motivates the following definitions for correctness in the presence of a $\ast$-decoder.
 \begin{definition}
 An extended rectangle is correct in the presence of a $\ast$-decoder if
 \begin{equation}
 \includegraphics[width = 0.55 \textwidth]{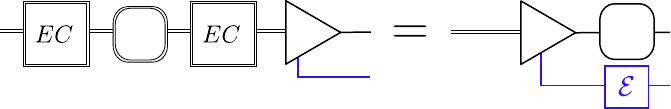}
 \end{equation}
 where $\mathcal{E}$ is an arbitrary operation (corresponding to the change in the values of the syndromes under a logical operation).
 \end{definition}
 \begin{proposition}
 Correctness of any extended rectangle (truncated or otherwise) implies correctness under the $\ast$-decoder.
 \end{proposition}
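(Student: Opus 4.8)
The statement to prove is: \emph{Correctness of any extended rectangle (truncated or otherwise) implies correctness under the $\ast$-decoder.}

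\medskip

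The plan is to unwind the definitions and observe that the $\ast$-decoder differs from the ideal decoder only by retaining a classical copy of the measured check values (the syndrome) on an auxiliary data line, and that this extra data line does not alter the quantum channel applied to the logical register. First I would recall that ``correctness of an extended rectangle'' (Prop.\ following Def.~\ref{def:truncated-exRecs} and its truncated analogues) is the statement that, after appending an ideal decoder, the exRec acts on the logical line as a noiseless logical gate $G_0$ possibly preceded or followed by an ideal (noiseless) error-correction step, i.e.\ the diagrammatic identities in the definitions of correct/incorrect exRecs. Correctness under the $\ast$-decoder, by contrast, asks for the same identity but with the ideal decoder replaced by a $\ast$-decoder and with an arbitrary operation $\mathcal{E}$ acting on the auxiliary syndrome line, reflecting how the recorded syndrome values transform when a logical operation is applied.

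The key steps, in order: (1) Factor the $\ast$-decoder as ``ideal decoder $\otimes$ (readout of check values onto the data line)'', noting that the projection onto a fixed syndrome sector commutes with the rest of the decoding circuitry; this is just the observation that measuring the stabilizer generators and recording the outcome is a non-destructive operation on states already close to the code space, and that on the support of a \emph{good} exRec the state presented to the decoder lies within the decoding radius, so the logical information extracted is independent of which syndrome sector one lands in. (2) Push the ideal-decoder part of the $\ast$-decoder through the exRec using exactly the correctness identity already assumed — this yields the noiseless logical gate on the logical line. (3) Track what happens to the syndrome line: the check values recorded by the $\ast$-decoder are some function of the (physical) output state, and under the equivalence established in step (2) this output state is the encoded image of the logical output; consequently the recorded syndrome is determined by the logical data together with the fault pattern, and the induced transformation on the data line is precisely an operation of the form $\mathcal{E}$ in the definition of $\ast$-correctness. (4) Handle truncation: for a truncated exRec one invokes the identity (used already in the proof that good truncated exRecs are correct) that a noiseless EC unit followed by an ideal decoder equals an ideal decoder — the same identity holds with ``ideal decoder'' replaced by ``$\ast$-decoder'' since the appended noiseless EC unit leaves the syndrome data consistent — so the truncated case reduces to the untruncated one verbatim.

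I expect the main obstacle to be purely bookkeeping rather than conceptual: one must be careful that ``correctness'' as used in the hypothesis is the version \emph{with} an ideal decoder appended (not a bare equivalence of channels), and that the auxiliary data line in the $\ast$-decoder is genuinely decoupled from the logical line on the support of an exRec that is correct — i.e.\ that the recorded syndrome depends only on the fault pattern and the logical input in the controlled way encoded by $\mathcal{E}$, and does not secretly carry logical information out. This is where the ``good $\Rightarrow$ correct'' chain of Prop.~\ref{prop:goodcorrect} is implicitly used: the state fed to the decoder sits strictly inside the decoding radius, so the decoder's logical output is syndrome-independent, which is exactly what lets one separate the two lines. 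Once that decoupling is made explicit, the proof is a short diagram chase identical in shape to the proofs already given for correctness and for the truncated case, with the triangle symbol for the ideal decoder replaced by the $\ast$-decoder symbol throughout.
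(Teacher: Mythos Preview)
The paper does not actually provide a proof of this proposition: it is stated without proof in Appendix~\ref{app:exRec} as part of the review of standard material from Aliferis, Gottesman, and Preskill~\cite{Aliferis2005Apr,gottesman2009introductionquantumerrorcorrection}. Your sketch is a reasonable reconstruction of the underlying argument --- factoring the $\ast$-decoder as ideal decoder plus syndrome record, pushing the ideal-decoder part through using the assumed correctness identity, and absorbing the residual syndrome transformation into $\mathcal{E}$ --- and captures the essential content.

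One small caution on your step (3)--(4): you invoke the ``good $\Rightarrow$ correct'' chain and the decoding-radius argument to justify that the syndrome line decouples from the logical line. Strictly speaking the proposition as stated assumes only \emph{correctness}, not goodness, so you should not lean on the state being within the decoding radius. The cleaner justification is purely formal: correctness of the exRec says the channel on the logical line (after tracing out the syndrome line of the $\ast$-decoder) equals the noiseless gate, and since the $\ast$-decoder is just the ideal decoder tensored with a classical copy of the syndromes, any residual dependence of the syndrome line on the fault pattern is by definition an operation $\mathcal{E}$ of the required form. Your argument would go through with that rephrasing.
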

 Under a bad extended rectangle, the effective action of a gate after pulling a $\ast$-decoder through the exRec obeys 
 \begin{equation}
  \includegraphics[width = 0.75 \textwidth]{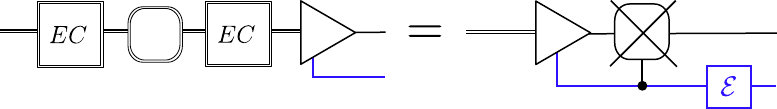}
 \end{equation}
 where there is a conditional action on the values of the syndromes but we do not need to know the particular action. Importantly, this action is localized to the affected qubit and the data line and we can continue to push the decoder past other successive good rectangles without incurring additional failures due to previous bad exRecs.

 We will now prove the main statement of the Appendix, i.e. a $p$-bounded error model in thecircuit $FT(\cC)$ yields an  $O(p^{t+1})$-bounded error model in the simulated circuit $\cC$.
 
 \begin{theorem}\label{thm-1level}
 Consider a circuit $\mathcal{C}$ and construct the fault tolerant simulation of it $FT(\mathcal{C})$. If noise model for gate failure $FT(\mathcal{C})$ is $p$-bounded for some $p$, then by pulling a layer of $\ast$-decoders through the simulated circuit, one recovers the circuit $\mathcal{C}$ simulated with the noise model for gate failure that is is $A p^{t+1}$-bounded with some constant $A$. 
 \end{theorem}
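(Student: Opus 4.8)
The plan is to establish the error-suppression theorem by the standard extended-rectangle accounting, building directly on the machinery reviewed above. First I would fix a noise realization in $FT(\cC)$ and run Algorithm~\ref{alg:OG} to assign supports to all extended rectangles and to mark each as good or bad, performing truncation whenever a bad exRec is encountered. The output of this procedure is a covering of the spacetime of $FT(\cC)$ by exRecs (some truncated) in which no two bad exRecs share support. I would then append a layer of $\ast$-decoders at the end of the circuit and pull it backwards through the circuit one timeslice at a time, using Prop.~\ref{prop:goodcorrect} (good $\Rightarrow$ correct, which implies correctness under the $\ast$-decoder) for the good exRecs and the localized conditional-action statement for the bad ones. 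Each good exRec collapses to a clean bare gate of $\cC$; each bad exRec collapses to a bare gate followed by some operation $\cE$ supported only on the affected (qu)bit and the data line, which we reinterpret as a gate failure of $\cC$ at that location. This reconstructs $\cC$ with a noise model whose failed-gate set is precisely the image of the set of bad exRecs.

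Next I would bound the probability of the induced gate-failure configurations. For a set $W$ of locations (gates of $\cC$) declared faulty, the event $E_W$ occurs iff each corresponding exRec is bad; because different noise realizations can lead to different truncations, I would sum over all ways $\{R_w\}$ of realizing this via badness of a collection of exRecs, one per $w\in W$: $\mathbb{P}(E_W) = \sum_{\{R_w\}} \mathbb{P}(E_W \text{ due to badness of } \{R_w\})$. By the truncation procedure the chosen bad exRecs are disjoint, so the $p$-bounded property applies independently: each bad exRec requires at least $t+1$ faults in its (bounded) spacetime support, contributing $\le C_0^{t+1} p^{t+1}$ where $C_0$ bounds the number of locations in an exRec, and there are at most $2^{w_{\max}}$ truncations of any one exRec (with $w_{\max}$ the maximum gate arity, $=3$ for Tsirelson, so the covering is not unique only up to this factor). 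Hence $\mathbb{P}(E_W) \le (2^{w_{\max}} C_0^{t+1})^{|W|} p^{(t+1)|W|} = (A p^{t+1})^{|W|}$ with $A = 2^{w_{\max}} C_0^{t+1}$. This is exactly the statement that the effective gate-failure model for $\cC$ is $A p^{t+1}$-bounded.

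The main obstacle — and the only place requiring genuine care rather than bookkeeping — is verifying that the truncation/assignment procedure really does produce a set of \emph{disjoint} bad exRecs whose union of supports contains all faults relevant to any single bad exRec, so that the counting argument is legitimate; one must check that truncating succeeding EC units (Def.~\ref{def:truncated-exRecs}) together with the ordering in Algorithm~\ref{alg:OG} never leaves a fault "unattributed" or double-attributed. This is handled by the observation that a good truncated exRec is still correct (an EC with no faults followed by an ideal decoder equals an ideal decoder, so missing trailing ECs can be reinstated noiselessly), which lets the pull-through argument go through unchanged for truncated exRecs; and by noting that each fault lies in the support of at most one bad exRec after truncation, since any bad exRec forces truncation of all overlapping earlier exRecs. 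Once these structural facts are in place, the probability estimate above completes the proof, and the recursion $p_k \le A p_{k-1}^{t+1}$ with $p_0 = p$ (iterated in the repeated-simulation setting) gives $A p_k \le (A p)^{(t+1)^k}$, recovering the scaling quoted elsewhere in the paper.
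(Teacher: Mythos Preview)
Your proposal is correct and follows essentially the same route as the paper's own proof: pull $\ast$-decoders through using goodness $\Rightarrow$ correctness for the good (possibly truncated) exRecs, attribute each induced gate fault in $\cC$ to a bad exRec, invoke disjointness of bad exRecs after truncation, and bound $\mathbb{P}(E_W)$ by summing over the finitely many truncation-dependent assignments $\{R_w\}$ with the combinatorial factor $2^{w_{\max}}$ per exRec and $\sim C_0^{t+1}$ fault-location choices. The paper's Eq.~\eqref{eq:newerrormodel} is exactly your final estimate with $A' \sim C_0^{t+1}$ and $A = 2^k A'$; the only cosmetic difference is that the paper does not spell out the structural check on truncation (your ``main obstacle'' paragraph) in as much detail, and the recursion you add at the end is used elsewhere in the paper rather than in this proof.
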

 \begin{proof}
 We start by inserting $\ast$-decoders at the end of $FT(\mathcal{C})$. When we pull the $\ast$-decoders through $FT(\mathcal{C})$, we obtain the circuit $\mathcal{C}$ with some of the components faulty. We want to determine the probability that set $F$ is faulty. We know that each element in $F$ must have been part of a bad extended rectangle, and for each gate $U \in \mathcal{C}$ there is only one extended rectangle containing $U$. Furthermore, due to the truncation procedure, {\it all} bad extended rectangles are disjoint. However, though all rectangles in $F$ are faulty, the exact positions of these extended rectangles depend on the number of faults in other locations in $\mathcal{C}$ since different fault patterns can result in the extended rectangle associated to $f \in F$ being truncated or untruncated.

 We can define a minimal set of errors as the smallest set of possible errors needed to make every component in $F$ fail. In this case, we need at least $t+1$ failures in an exRec corresponding to a component in $F$. For one exRec $R$, there is a constant number of fault locations for $t+1$  errors (call this constant $A'$), we have that (from a union bound along with the $p$-bounded condition):
 \begin{equation}
 \mathbb{P}\left(R \text{ is bad}\right) \leq \sum_{i^{(1)}, i^{(2)},\cdots, i^{(t+1)} \in R} \mathbb{P}\left(i^{(1)}, i^{(2)},\cdots, i^{(t+1)} \in \mathcal{E}\right) \leq A' p^{t+1}.
 \end{equation}
 The true value of $A'$ strongly depends the kind of gate that is contained within the exRec (and can be considerably larger for two-bit/two-qubit gates). However, here we are going to bound everything by some global constant $A'$. Let us assume that the maximum number of bits/qubits involved in a single gate is $k$. Now if we want the probability that all components in $f \in F$ are bad (for simplicity, we will label $f$ an integer from $1$ to $|F|$):
 \begin{align}\label{eq:newerrormodel}
 \mathbb{P}\left(f \text{ is bad }\forall f \in F\right) &= \sum_{\{R_f\}}\mathbb{P}\left(f \text{ is bad }\forall f \in F \text{ and } R_f \text{ is assignment of exRec}\right),\nonumber \\
 &\leq \sum_{\{R_f\}} \sum_{\{i^{(1)}_f, \cdots, i^{(t+1)}_f \in R_f\}}\mathbb{P}\left(i^{(1)}_f, \cdots, i^{(t+1)}_f \in \mathcal{E} \,\,\forall f \in F \text{ and } R_f\right), \nonumber \\
 &\leq \sum_{\{R_f\}} A'^{|F|} p^{(t+1)|F|} \leq (2^k A' p^{t+1})^{|F|}
 \end{align}
 where $\{R_f\}$ is a valid assignment of exRecs to locations in $f \in F$ following the algorithm proposed earlier and $\{i^{(1)}_f, \cdots, i^{(t+1)}_f \in R_f\}$ are sets of $k+1$-tuples of fault locations in each exRec. The third line follows from the logic described previously applied to each bad exRec along with the fact that bad exRecs are disjoint, and the extra factor of $2^k$ comes from the fact that there are at most $2^{k |F|}$ possible extended rectangle configurations one can draw if $F$ is completely faulty. Calling $A = 2^k A'$ proves the claim.
 \end{proof}
\section{Additional data for some of the proofs}

In this Appendix, we produce the output for some of the computer-assisted proofs that were utilized in the main text, particularly in Sec.~\ref{sec:tsirelson-proof-2} and Sec.~\ref{sec:FT_proof}.

\subsection{Lemma~\ref{lemma:gate-ec-prime-TS}, proof of ($\EC$ A)$_{k,m}$}
\label{app:reversibility}

Here, we display the output of a brute force computation used to prove ($\EC$ A)$_{k,m}$ (which we will simply refer to as $\EC$ A for brevity in the following) for Tsirelson's automaton.  Recall that in the proof, we needed to check that running $\cM_1$ on an arbitrary input state and following with $Z_0^{\otimes 3}$ will not create what we call an irreversible configuration.  This is simply a coarse-grained configuration that does not have an inverse under $\cM_1$.  The only two such irreversible configurations are $(000 \, 111 
, 000)$ and $(111 \, 000 \, 111)$.  To check that neither of these configurations is obtained, we do the computation indicated in Table~\ref{tab:irrev}.  In particular, labels of the columns are the input coarse-grained configurations, where $(abc)_1 = (aaa \, bbb\, ccc)$.  Labels of rows are the damage configurations that we $\mathbb{F}_2$-add on top of the input.  We then run the gadget $Z_0^{\otimes 3} \circ \cM_1$ with this input and show its value in the corresponding cell.  The fact that none of the cells contain either of the irreversible configurations $(010)_1$ or $(101)_1$ means that our gadgets passed the desired test.

Note that there is an alternative proof to $\EC$ A that requires one to double the operation of the error correction gadget and eliminates the need to do this computer search.  This was used for the toric code automaton for the purposes of making the proof shorter.

\begin{table}[t]
\centering
\begin{tabular}{|c|c|c|c|c|c|c|c|c|}
\hline
 & \multicolumn{8}{c|}{\textbf{Input}}   \\ \hline
\textbf{Damage } & {$(\bm{000)_1}$} & {$\bm{(001)_1}$} & {$\bm{(010)_1}$} & {$\bm{(011)_1}$} &{$\bm{(100)_1}$} & {$\bm{(101)_1}$} & {$\bm{(110)_1}$} & {$\bm{(111)_1}$} \\ \hline
$\bm{(000\ 000\ 000)}$ & $(000)_1$ & $(001)_1$ & $(000)_1$ & $(011)_1$ & $(100)_1$ & $(111)_1$ & $(110)_1$ & $(111)_1$ \\ \hline
$\bm{(001\ 000\ 000)}$ & $(000)_1$ & $(001)_1$ & $(000)_1$ & $(011)_1$ & $(100)_1$ & $(111)_1$ & $(110)_1$ & $(111)_1$ \\ \hline
$\bm{(010\ 000\ 000)}$ & $(000)_1$ & $(001)_1$ & $(000)_1$ & $(011)_1$ & $(100)_1$ & $(111)_1$ & $(110)_1$ & $(111)_1$ \\ \hline
$\bm{(011\ 000\ 000)}$ & $(100)_1$ & $(111)_1$ & $(110)_1$ & $(111)_1$ & $(000)_1$ & $(001)_1$ & $(000)_1$ & $(011)_1$ \\ \hline
$\bm{(100\ 000\ 000)}$ & $(000)_1$ & $(001)_1$ & $(000)_1$ & $(011)_1$ & $(100)_1$ & $(111)_1$ & $(110)_1$ & $(111)_1$ \\ \hline
$\bm{(101\ 000\ 000)}$ & $(100)_1$ & $(111)_1$ & $(110)_1$ & $(111)_1$ & $(000)_1$ & $(001)_1$ & $(000)_1$ & $(011)_1$ \\ \hline
$\bm{(110\ 000\ 000)}$ & $(100)_1$ & $(111)_1$ & $(110)_1$ & $(111)_1$ & $(000)_1$ & $(001)_1$ & $(000)_1$ & $(011)_1$ \\ \hline
$\bm{(111\ 000\ 000)}$ & $(100)_1$ & $(111)_1$ & $(110)_1$ & $(111)_1$ & $(000)_1$ & $(001)_1$ & $(000)_1$ & $(011)_1$ \\ \hline
$\bm{(000\ 001\ 000)}$ & $(000)_1$ & $(001)_1$ & $(000)_1$ & $(011)_1$ & $(100)_1$ & $(111)_1$ & $(110)_1$ & $(111)_1$ \\ \hline
$\bm{(000\ 010\ 000)}$ & $(000)_1$ & $(001)_1$ & $(000)_1$ & $(011)_1$ & $(100)_1$ & $(111)_1$ & $(110)_1$ & $(111)_1$ \\ \hline
$\bm{(000\ 011\ 000)}$ & $(000)_1$ & $(011)_1$ & $(000)_1$ & $(001)_1$ & $(110)_1$ & $(111)_1$ & $(100)_1$ & $(111)_1$ \\ \hline
$\bm{(000\ 100\ 000)}$ & $(000)_1$ & $(001)_1$ & $(000)_1$ & $(011)_1$ & $(100)_1$ & $(111)_1$ & $(110)_1$ & $(111)_1$ \\ \hline
$\bm{(000\ 101\ 000)}$ & $(000)_1$ & $(011)_1$ & $(000)_1$ & $(001)_1$ & $(110)_1$ & $(111)_1$ & $(100)_1$ & $(111)_1$ \\ \hline
$\bm{(000\ 110\ 000)}$ & $(000)_1$ & $(011)_1$ & $(000)_1$ & $(001)_1$ & $(110)_1$ & $(111)_1$ & $(100)_1$ & $(111)_1$ \\ \hline
$\bm{(000\ 111\ 000)}$ & $(000)_1$ & $(011)_1$ & $(000)_1$ & $(001)_1$ & $(110)_1$ & $(111)_1$ & $(100)_1$ & $(111)_1$ \\ \hline
\end{tabular}
\caption{The output of a brute-force computation showing that $\cM_1$ is reversible with an arbitrary $m=1$ damage. Only symmetry-inequivalent configurations for damage are listed.  Since there are no outputs in the form $(010)_1$ or $(101)_1$, we conclude that the output of $\cM_1$ is reversible to some clean input.}
\label{tab:irrev}
\end{table}

\subsection{Lemma~\ref{lemma:nilpotentnumerics-TC}, proof of nilpotence of toric code gadgets} \label{app:TC-nilpotence}

In this appendix we discuss the algorithm\footnote{All computer generated results in this Appendix were obtained by the code available at \cite{code}.} that was used in order to verify nilpotence of the gadgets in the toric code automaton.  Recall that when determining nilpotence of $\EC$, we can assume a clean input state thanks to additional assumptions discussed in the Lemma~\ref{lemma:nilpotentnumerics-TC} and the linearity property in Prop.~\ref{prop:confinement-for-TC}. 
Algorithms \ref{alg:clean-address} and \ref{alg:clean-address-actual} below explain how we show nilpotence assuming input state that is coarse-grained to linear points of respective gadget at level $\ell$. Algorithms ~\ref{alg:arb-input} and \ref{alg:arb-input-actual} extend the results to arbitrary level-$\ell$ coarse-grained input for gate gadgets at level $\ell$, which is needed to deal with damage occurring at the nonlinear points of $\cM$.

\begin{algorithm} \label{alg:clean-address}
Computes level-$\ell$ output damage for a specific level-0 $\mathcal{G}^{(0)}_0$ gadget failure.
Input: a sequence of gadgets $(\mathcal{G}^{(a_\ell)}_1, \mathcal{G}^{(a_{\ell-1})}_1, \cdots, \mathcal{G}^{(a_1)}_1)$ obtained from the address  $[\mathcal{G}^{(a_\ell)}_\ell, \mathcal{G}^{(a_{\ell-1})}_{\ell-1}, \cdots, \mathcal{G}^{(a_1)}_1, \mathcal{G}^{(a_0)}_0]$.
\begin{enumerate}
\item  Initialize $D$ as the set of all possible errors (damage outputs) that can be caused by failure of the level-0 gate $\mathcal{G}^{(a_0)}_0$ (which is equal to either $\cI_0$, $\cT_0^{h/v}$ or $\cM_0^{h/v}$) on its support. 
\item Start with $j = 1$ and the level-1 gadget $\mathcal{G}^{(a_j)}_1 \circ \EC^{\otimes K_{a_j}}_1$.
\item Update $D$
\begin{equation}
 D \rightarrow  \{\varepsilon'\big| \  \exists \varepsilon  \in D: \, \varepsilon' = \Gamma[\mathcal{G}^{(a_{j-1})}_1\circ \EC_1^{\otimes K_{a_{j-1}}}, \varepsilon] \}.
\end{equation}
where the error $\varepsilon$ is located at the spacetime location $\bm{x}$ corresponding to the level-reduced (to level-0) address of the $\mathcal{G}^{(a_{j-1})}_{j-1}$ gadget within $\mathcal{G}^{(a_j)}_j \circ \EC^{\otimes K_{a_j}}_j $ (which is the information implicitly included in the error address). 
\item Set $j \to j+1$, go to Step 3.
\item End when $j = \ell$, and output $D$.
\end{enumerate}
\end{algorithm}

To show nilpotence, one would need to apply this algorithm manually for all possible addresses for errors and show that there exists a single $\ell$ for which the output damage set $D_{\cG_\ell}$ will become an empty set. Instead, we will find it convenient to use the following simplified algorithm that includes all possibilities for the noise while going up level by level (which may lead to an overestimate of the nilpotence level): 

\begin{algorithm} \label{alg:clean-address-actual}
Input: integer $\ell$ corresponding to the length of a gadget sequence.
\begin{enumerate}
\item  Initialize $D_{\mathcal{I}}$, $D_{\mathcal{T}^h}$,$D_{\mathcal{T}^v}$,$D_{\mathcal{M}^h}$, $D_{\mathcal{M}^v}$ as the set of all possible errors that can be caused by failure of individual $\mathcal{I}_0$, $\mathcal{T}_0^h$, $\cT^v_0$, $\cM^h_0$, and $\mathcal{M}^v_0$ gates on their support. 
\item  For each $\cG_1^{(a)}$, call $\cE_{\cG_1^{(a)}}$ the  noise effectively experienced by $\mathcal{G}_1 \circ \EC^{\otimes K}_1$ is  a set of $(\bm x, \varepsilon)$ containing  errors  $\varepsilon$ that are produced by failures $D_{\mathcal{I}}$, $D_{\mathcal{T}}$, and $D_{\mathcal{M}}$ at all possible locations $\bm x$ in $\mathcal{G}_1 \circ \EC^{\otimes K}_1$. 
\item  Start with $j = 1$.
\item For each $\cG$, compute (or update) 
\begin{equation}
\cE_{\mathcal{G}} = \left\{(\bm{x}, \varepsilon)\big| \  \varepsilon \in D_{\mathcal{G}'},  \ \left ( \mathcal{G}_1 \circ \EC^{\otimes K}_1 \right)(\bm{x}) = \mathcal{G}' \right\}
\end{equation}
where $\left (\mathcal{G}_1 \circ \EC^{\otimes K}_1 \right)(\bm{x})$ determines the type of the level-0 gadget at spacetime point $\bm{x}$ inside the  level-1 gadget $\mathcal{G}_1 \circ \EC^{\otimes K}_1 $. 

\item For each $\cG$, update $D_{\mathcal{G}}$ via
\begin{equation}
D_{\mathcal{G}} \rightarrow \{\varepsilon'\big| \  \exists \varepsilon \in \cE_{\mathcal{G}}:\, \varepsilon' = \Gamma[\mathcal{G}_1\circ \EC_1^{\otimes K}, \varepsilon] \}.
\end{equation}
\item Set $j \to j+1$, go to Step 4.
\item End when $j = \ell$, output $D_{\mathcal{G}}$ for all $\cG$.
\end{enumerate}
\end{algorithm}
The need to separately consider vertically and horizontally oriented $\cT_0$ and $\cM_0$ gadgets arises from the lack of $\pi/2$ rotation symmetry of the $R_0$ gadget. 
A diagram describing this computation is illustrated below (for simplicity, we denote $\mathcal{G}'$ to be the  map $\Gamma$ on $\mathcal{G}_1 \circ \EC_1^{\otimes K}$ and $\varepsilon$ for all possible $\varepsilon$ in respective $\cE$, and abuse notation by not distinguishing between vertically- and horizontally-oriented gadgets).

\begin{center}
\begin{tikzpicture}[
  >=stealth,
  ]

\node (13) at (0,1) [below] {$\bm{D_{\cG^{(a_0)}}}$};
  \node (14) at (4,1) [below] {$\bm{D_{\cG^{(a_1)}}}$};
  \node (15) at (8,1) [below] {$\bm{D_{\cG^{(a_2)}}}$};
  \node (16) at (12,1) [below] {$\bm{D_{\cG^{(a_3)}}}$};
  
  \node (1) at (0,0) [below] {$D_{\cI}$};
  \node (2) at (4,0) [below] {$D_{\cI}$};
  \node (3) at (8,0) [below] {$D_{\cI}$};
  \node (4) at (12,0) [below] {$\emptyset$};

  \node (5) at (0,-2) [below] {$D_{\cT}$};
  \node (6) at (4,-2) [below] {$D_{\cT}$};
  \node (7) at (8,-2) [below] {$D_{\cT}$};
  \node (8) at (12,-2) [below] {$\emptyset$};

  \node (9) at (0,-4) [below] {$D_{\cM}$};
  \node (10) at (4,-4) [below] {$D_{\cM}$};
  \node (11) at (8,-4) [below] {$D_{\cM}$};
  \node (12) at (12,-4) [below] {$\emptyset$};

  \draw[postaction={decorate, decoration={markings, mark=at position 0.3 with {\arrow{>}}}}] 
    (1) -- (2) node[pos = 0.3, above] {\fontsize{4pt}{0pt}\selectfont $\cI'$};
  \draw[postaction={decorate, decoration={markings, mark=at position 0.3 with {\arrow{>}}}}] 
    (2) -- (3) node[pos = 0.3, above] {\fontsize{4pt}{0pt}\selectfont $\cI'$};
  \draw[postaction={decorate, decoration={markings, mark=at position 0.3 with {\arrow{>}}}}] 
    (3) -- (4) node[pos = 0.3, above] {\fontsize{4pt}{0pt}\selectfont $\cI'$};

    \draw[postaction={decorate, decoration={markings, mark=at position 0.3 with {\arrow{>}}}}] 
    (9) -- (10) node[pos=0.3, below] {\fontsize{4pt}{0pt}\selectfont $\cM'$};
  \draw[postaction={decorate, decoration={markings, mark=at position 0.3 with {\arrow{>}}}}] 
    (10) -- (11) node[pos = 0.3, below] {\fontsize{4pt}{0pt}\selectfont $\cM'$};
  \draw[postaction={decorate, decoration={markings, mark=at position 0.3 with {\arrow{>}}}}] 
    (11) -- (12) node[pos = 0.3, below] {\fontsize{4pt}{0pt}\selectfont $\cM'$};

  \draw[postaction={decorate, decoration={markings, mark=at position 0.3 with {\arrow{>}}}}] (5) -- (6) node[pos= 0.3, above] {\fontsize{4pt}{0pt}\selectfont $\cT'$};
  \draw[postaction={decorate, decoration={markings, mark=at position 0.3 with {\arrow{>}}}}] (6) -- (7) node[pos= 0.3, above] {\fontsize{4pt}{0pt}\selectfont $\cT'$};
  \draw[postaction={decorate, decoration={markings, mark=at position 0.3 with {\arrow{>}}}}] (7) -- (8) node[pos= 0.3, above] {\fontsize{4pt}{0pt}\selectfont $\cT'$};

  \draw[postaction={decorate, decoration={markings, mark=at position 0.2 with {\arrow{>}}}}] 
    (1) -- (6) node[pos = 0.2,sloped, above] {\fontsize{4pt}{0pt}\selectfont $\cT'$};
  \draw[postaction={decorate, decoration={markings, mark=at position 0.2 with {\arrow{>}}}}] 
    (5) -- (2) node[pos = 0.2,sloped, above] {\fontsize{4pt}{0pt}\selectfont $\cI'$};
 \draw[postaction={decorate, decoration={markings, mark=at position 0.2 with {\arrow{>}}}}] 
    (2) -- (7) node[pos = 0.2,sloped, above] {\fontsize{4pt}{0pt}\selectfont $\cT'$};
  \draw[postaction={decorate, decoration={markings, mark=at position 0.2 with {\arrow{>}}}}] 
    (6) -- (3) node[pos = 0.2,sloped, above] {\fontsize{4pt}{0pt}\selectfont $\cI'$};
     \draw[postaction={decorate, decoration={markings, mark=at position 0.2 with {\arrow{>}}}}] 
    (3) -- (8) node[pos = 0.2,sloped, above] {\fontsize{4pt}{0pt}\selectfont $\cT'$};
  \draw[postaction={decorate, decoration={markings, mark=at position 0.2 with {\arrow{>}}}}] 
    (7) -- (4) node[pos = 0.2,sloped, above] {\fontsize{4pt}{0pt}\selectfont $\cI'$};

    \draw[postaction={decorate, decoration={markings, mark=at position 0.2 with {\arrow{>}}}}] 
    (5) -- (10) node[pos = 0.2,sloped, above] {\fontsize{4pt}{0pt}\selectfont $\cM'$};
  \draw[postaction={decorate, decoration={markings, mark=at position 0.2 with {\arrow{>}}}}] 
    (9) -- (6) node[pos = 0.2,sloped, below] {\fontsize{4pt}{0pt}\selectfont $\cT'$};
 \draw[postaction={decorate, decoration={markings, mark=at position 0.2 with {\arrow{>}}}}] 
    (6) -- (11) node[pos = 0.2,sloped, above] {\fontsize{4pt}{0pt}\selectfont $\cM'$};
  \draw[postaction={decorate, decoration={markings, mark=at position 0.2 with {\arrow{>}}}}] 
    (10) -- (7) node[pos = 0.2,sloped, below] {\fontsize{4pt}{0pt}\selectfont $\cT'$};
     \draw[postaction={decorate, decoration={markings, mark=at position 0.2 with {\arrow{>}}}}] 
    (7) -- (12) node[pos = 0.2,sloped, above] {\fontsize{4pt}{0pt}\selectfont $\cM'$};
  \draw[postaction={decorate, decoration={markings, mark=at position 0.2 with {\arrow{>}}}}] 
    (11) -- (8) node[pos = 0.2,sloped, below] {\fontsize{4pt}{0pt}\selectfont $\cT'$};

     \draw[postaction={decorate, decoration={markings, mark=at position 0.18 with {\arrow{>}}}}] 
    (1) -- (10) node[pos = 0.1,sloped, below] {\fontsize{4pt}{0pt}\selectfont $\cM'$};
     \draw[postaction={decorate, decoration={markings, mark=at position 0.18 with {\arrow{>}}}}] 
    (9) -- (2) node[pos = 0.1,sloped, above] {\fontsize{4pt}{0pt}\selectfont $\cI'$};

     \draw[postaction={decorate, decoration={markings, mark=at position 0.18 with {\arrow{>}}}}] 
    (2) -- (11) node[pos = 0.1,sloped, below] {\fontsize{4pt}{0pt}\selectfont $\cM'$};
     \draw[postaction={decorate, decoration={markings, mark=at position 0.18 with {\arrow{>}}}}] 
    (10) -- (3) node[pos = 0.1,sloped, above] {\fontsize{4pt}{0pt}\selectfont $\cI'$};

     \draw[postaction={decorate, decoration={markings, mark=at position 0.18 with {\arrow{>}}}}] 
    (3) -- (12) node[pos = 0.1,sloped, below] {\fontsize{4pt}{0pt}\selectfont $\cM'$};
     \draw[postaction={decorate, decoration={markings, mark=at position 0.18 with {\arrow{>}}}}] 
    (11) -- (4) node[pos = 0.1,sloped, above] {\fontsize{4pt}{0pt}\selectfont $\cI'$};
\end{tikzpicture}
\end{center}

The damage sets $D_{\mathcal{G}}$ for the iterations were computed explicitly via software until they yielded empty sets. For $j=0$, it is all possible errors in the support of the respective level-0 gadget. 
When $j > 0$, we find the damage sets to be (in the following figures, black dots represent syndrome locations corresponding to particular damage realizations; we also denote the number of iterations of the algorithm in the superscript of the gate) 

\begin{equation}\label{eq:damagesets1}
    \includegraphics[width = 0.4\textwidth]{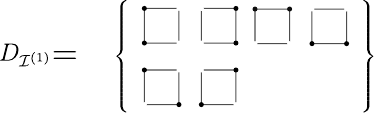}
\end{equation}

\begin{flalign}
    \includegraphics[width = 0.8\textwidth]{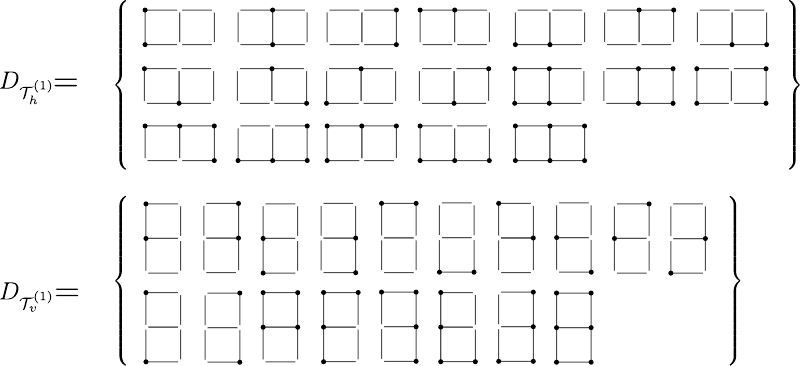}
\end{flalign}

\begin{flalign}
    \includegraphics[width = 0.9\textwidth]{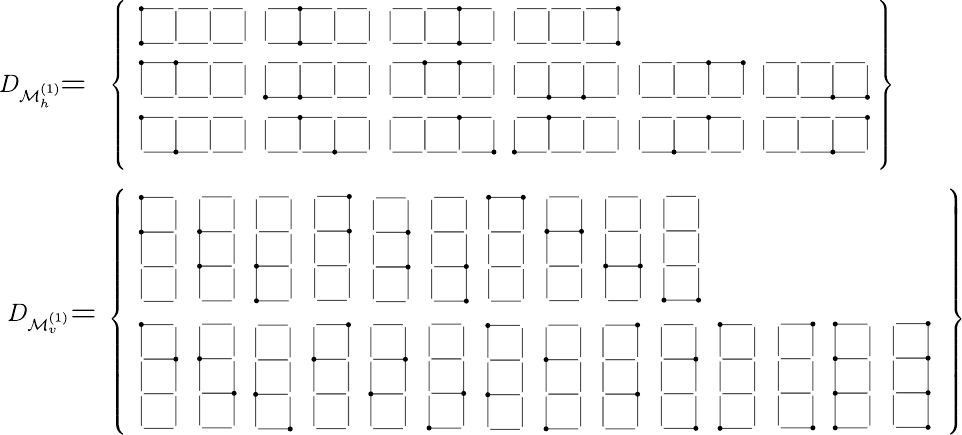}
\end{flalign}

For the next level up, 

\begin{flalign}
    \includegraphics[width = 0.37 \textwidth]{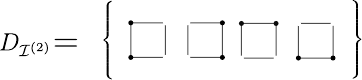}
\end{flalign}

\begin{flalign}
    \includegraphics[width = 0.55\textwidth]{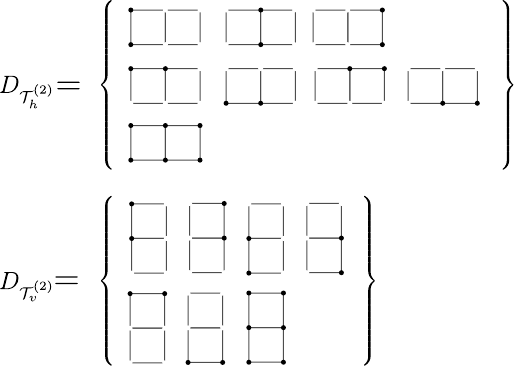}
\end{flalign}

\begin{flalign}
    \includegraphics[width = 0.9\textwidth]{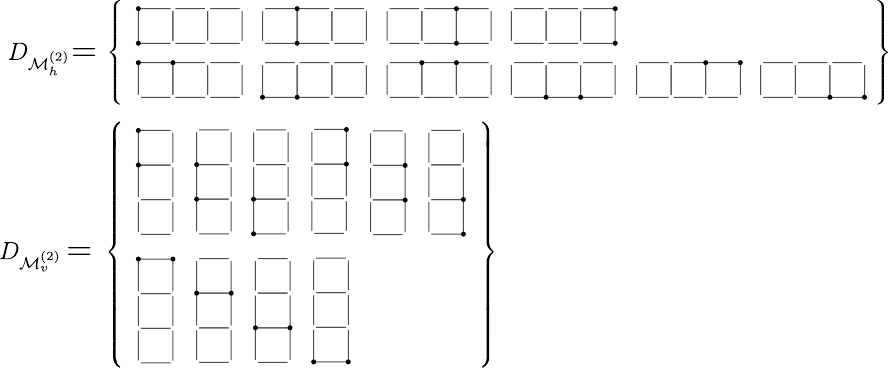}
\end{flalign}

One curious point is that the damage configurations appear to almost always satisfy a `mirror' symmetry (i.e. for every damage with syndrome $\sigma$, the vertically and horizontally reflected version of the damage is present too), even though the leading $\EC$ units to every gadget do not obey this symmetry.  The exception is in $D_{\cT_v^{(1)}}$, where the third element in the second row does not have its horizontal reflection present.  

An additional observation is that upon going up two levels, the gadget error model turns into a link error model (in which the only possible gadgets failures are ones which create syndromes on the endpoints of a single link), similar to the case of wire error model that we originally considered for Tsirelson's automaton (notice that the last error in $D_{\cT^{(2)}}$ looks like three vertical link errors; however, an input wire-model error that looks like a failure of a single vertical link at the middle position produces this damage as an output under an operation of noiseless $\cT_1^{h}$ gadget; thus, this damage is equivalent to a single link failure).

Later in the proof, we also reduce the analysis some sequence of gadgets $(\mathcal{G}^{(a_r)}_1, \mathcal{G}^{(a_{r-1})}_1, \cdots, \mathcal{G}^{(a_r)}_1)$ with input syndromes $(\sigma^{(r)}, \sigma^{(r-1)}, \cdots, \sigma^{(1)})$, assuming that each $\sigma^{(i)}$ is coarse-grained at respective level $i$.  We first defined the map $\Gamma[\mathcal{G}_1, \sigma, \varepsilon]$ which operates as follows:  we discard the syndrome damage $\sigma$ on linear points in $\Sigma_1$ (which we can always add later to the output because of linearity) and thus, assume a syndrome damage $\sigma$ which has support entirely on non-linear points of respective in $\Sigma_1$. We then evolve $\mathcal{G}_1$ under this input with an isolated level-0 fault $\varepsilon$ inside of it before applying a clean $\EC^{\otimes K}_1$, and following the output by the pushforward map $\Phi$.  Then we perform a suitable iteration of this procedure.  The difference with what we did before is that we also need to track the input syndrome at the non-linear locations of the gadgets.  This procedure is summarized by the following algorithm: 

\begin{algorithm} \label{alg:arb-input}
Computes level-$\ell$ output damage for a specific gadget failure $\mathcal{G}^{(a_0)}_0$.
Input: a sequence of gadgets $(\mathcal{G}^{(a_r)}_1, \mathcal{G}^{(a_{r-1})}_1, \cdots, \mathcal{G}^{(a_1)}_1, \mathcal{G}^{(0)}_1)$ obtained from the address  $[\mathcal{G}^{(a_r)}_r, \mathcal{G}^{(a_{r-1})}_{r-1}, \cdots, \mathcal{G}^{(a_1)}_1, \mathcal{G}^{(a_0)}_0]$, and level-reduced input syndromes $(\sigma^{(r)}, \sigma^{(r-1)}, \cdots, \sigma^{(1)})$ to each gadget such that $\sigma^{(i)} \in \widetilde{\Sigma}[\mathcal{G}^{(a_i)}_1]$.
\begin{enumerate}

\item  Initialize $D$ as a set of all possible errors (output damage)  that can be caused by failure of $\mathcal{G}^{(a_0)}_0$ (which can be $\mathcal{I}_0$, $\mathcal{T}_0^{h/v}$, or $\mathcal{M}_0^{h/v}$) in its support. 
\item Start with $j = 1$ and the level-1 gadget $\mathcal{G}^{(a_1)}_1$ with the corresponding input syndrome $\sigma^{(1)}$ supported only on nonlinear points of the gadget in $\widetilde \Sigma_1$. 
\item Update  $D$ via
\begin{equation}
D \rightarrow \{\varepsilon'\big| \  \exists \varepsilon \in D: \, \varepsilon' = \Gamma[\mathcal{G}^{(a_j)}_1, \sigma^{(j)}, \varepsilon] \}.
\end{equation}
where the error $\varepsilon$ is in addition located at the spacetime location $\bm x$ corresponding to the level-reduced (to level 0) address of the $\cG_{j-1}^{(a_{j-1})}$ within $\mathcal{G}^{(a_j)}_j$ (which is the information implicitly included in the error address).
\item Set $j \to j+1$, go to Step 3.
\item End when $j = r$, output $D$.
\end{enumerate}
\end{algorithm}

Once again, to exhaustively inspect all possibilities for errors, we use the following modified algorithm

\begin{algorithm}  \label{alg:arb-input-actual}
Input: integer $r$ corresponding to the length of a gadget sequence.
\begin{enumerate}
\item  Initialize $D_{\mathcal{I}}$, $D_{\mathcal{T}^h}$, $D_{\mathcal{T}^v}$, $D_{\mathcal{M}^h}$, and $D_{\mathcal{M}^v}$ as sets of all possible errors that can be caused by failure of individual level-0 gates on their support. Denote $\widetilde{\Sigma}[\mathcal{G}]$ to be the non-linear points of the gadget $\cG$ (nontrivial only for the $\cM^h,\cM^v$ gadgets).   
\item For each $\cG$, call $\cE_{\mathcal{G}}$ the noise effectively experienced by $\mathcal{G}_1$, which is a set of $(\bm x, \varepsilon)$ containing errors  $\varepsilon$ that are produced by failures $D_{\mathcal{I}}$, $D_{\mathcal{T}}$, and $D_{\mathcal{M}}$ at all possible locations $\bm x$ in $\mathcal{G}_1$. 
\item Start with $j = 1$.
\item For each $\cG$, compute (or update)
\begin{equation}
\cE_{\mathcal{G}} = \left\{(\bm{x}, \varepsilon)\big| \  \varepsilon \in D_{\mathcal{G}'},  \ \mathcal{G}_1 (\bm{x}) = \mathcal{G}' \right\}
\end{equation}
where $\mathcal{G}_1 (\bm{x})$ determines the type of the level-0 gadget at spacetime point $\bm{x}$ inside the  level-1 gadget $\mathcal{G}_1$. 

\item For each $\cG$, update  $D_{\mathcal{G}}$ via
\begin{equation}
D_{\mathcal{G}} \rightarrow \{\varepsilon'\big| \    \exists \varepsilon \in \cE_{\mathcal{G}}, \, \exists \sigma \in \widetilde{\Sigma}[\mathcal{G}_1]: \, \varepsilon' = \Gamma[\mathcal{G}_1, \sigma, \varepsilon] \}.
\end{equation}
\item Set $j \to j+1$, go to Step 4.
\item End when $j = r$, output $D_{\mathcal{G}}$.

\end{enumerate}
\end{algorithm}
A similar diagram can be drawn illustrating the algorithm above, although it has a bit different structure:
A diagram describing this computation is illustrated below (for simplicity, we denote $\mathcal{G}'$ to be the $\Gamma$ map on $\mathcal{G}_1$ and $\varepsilon$ for all possible $\varepsilon$ in respective $\cE$ and all possible coarse-grained syndromes $\sigma$, and as before we abuse notation by not distinguishing between horizontally- and vertically-oriented gates)
\begin{center}
\begin{tikzpicture}[
  >=stealth,
  ]

  \node (13) at (0,1) [below] {$\bm{D_{\cG^{(a_0)}}}$};
  \node (14) at (4,1) [below] {$\bm{D_{\cG^{(a_1)}}}$};
  \node (15) at (8,1) [below] {$\bm{D_{\cG^{(a_2)}}}$};
  \node (16) at (12,1) [below] {$\bm{D_{\cG^{(a_3)}}}$};
  
  \node (1) at (0,0) [below] {$D_{\cI}$};
  \node (2) at (4,0) [below] {$D_{\cI}$};
  \node (3) at (8,0) [below] {$\emptyset$};
  \node (4) at (12,0) [below] {$\emptyset$};

  \node (5) at (0,-2) [below] {$D_{\cT}$};
  \node (6) at (4,-2) [below] {$D_{\cT}$};
  \node (7) at (8,-2) [below] {$\emptyset$};
  \node (8) at (12,-2) [below] {$\emptyset$};

  \node (9) at (0,-4) [below] {$D_{\cM}$};
  \node (10) at (4,-4) [below] {$D_{\cM}$};
  \node (11) at (8,-4) [below] {$D_{\cM}$};
  \node (12) at (12,-4) [below] {$\emptyset$};

  \draw[postaction={decorate, decoration={markings, mark=at position 0.3 with {\arrow{>}}}}] 
    (1) -- (2) node[pos = 0.3, above] {\fontsize{4pt}{0pt}\selectfont $\cI$};
  \draw[postaction={decorate, decoration={markings, mark=at position 0.3 with {\arrow{>}}}}] 
    (2) -- (3) node[pos = 0.3, above] {\fontsize{4pt}{0pt}\selectfont $\cI$};
  \draw[postaction={decorate, decoration={markings, mark=at position 0.3 with {\arrow{>}}}}] 
    (3) -- (4) node[pos = 0.3, above] {\fontsize{4pt}{0pt}\selectfont $\cI$};

    \draw[postaction={decorate, decoration={markings, mark=at position 0.3 with {\arrow{>}}}}] 
    (9) -- (10) node[pos=0.3, below] {\fontsize{4pt}{0pt}\selectfont $\cM$};
  \draw[postaction={decorate, decoration={markings, mark=at position 0.3 with {\arrow{>}}}}] 
    (10) -- (11) node[pos = 0.3, below] {\fontsize{4pt}{0pt}\selectfont $\cM$};
  \draw[postaction={decorate, decoration={markings, mark=at position 0.3 with {\arrow{>}}}}] 
    (11) -- (12) node[pos = 0.3, below] {\fontsize{4pt}{0pt}\selectfont $\cM$};

  \draw[postaction={decorate, decoration={markings, mark=at position 0.3 with {\arrow{>}}}}] (5) -- (6) node[pos= 0.3, above] {\fontsize{4pt}{0pt}\selectfont $\cT$};
  \draw[postaction={decorate, decoration={markings, mark=at position 0.3 with {\arrow{>}}}}] (6) -- (7) node[pos= 0.3, above] {\fontsize{4pt}{0pt}\selectfont $\cT$};
  \draw[postaction={decorate, decoration={markings, mark=at position 0.3 with {\arrow{>}}}}] (7) -- (8) node[pos= 0.3, above] {\fontsize{4pt}{0pt}\selectfont $\cT$};

  \draw[postaction={decorate, decoration={markings, mark=at position 0.2 with {\arrow{>}}}}] 
    (1) -- (6) node[pos = 0.2,sloped, above] {\fontsize{4pt}{0pt}\selectfont $\cT$};
  \draw[postaction={decorate, decoration={markings, mark=at position 0.2 with {\arrow{>}}}}] 
    (5) -- (2) node[pos = 0.2,sloped, above] {\fontsize{4pt}{0pt}\selectfont $\cI$};
 \draw[postaction={decorate, decoration={markings, mark=at position 0.2 with {\arrow{>}}}}] 
    (2) -- (7) node[pos = 0.2,sloped, above] {\fontsize{4pt}{0pt}\selectfont $\cT$};
  \draw[postaction={decorate, decoration={markings, mark=at position 0.2 with {\arrow{>}}}}] 
    (6) -- (3) node[pos = 0.2,sloped, above] {\fontsize{4pt}{0pt}\selectfont $\cI$};
     \draw[postaction={decorate, decoration={markings, mark=at position 0.2 with {\arrow{>}}}}] 
    (3) -- (8) node[pos = 0.2,sloped, above] {\fontsize{4pt}{0pt}\selectfont $\cT$};
  \draw[postaction={decorate, decoration={markings, mark=at position 0.2 with {\arrow{>}}}}] 
    (7) -- (4) node[pos = 0.2,sloped, above] {\fontsize{4pt}{0pt}\selectfont $\cI$};

    \draw[postaction={decorate, decoration={markings, mark=at position 0.2 with {\arrow{>}}}}] 
    (5) -- (10) node[pos = 0.2,sloped, above] {\fontsize{4pt}{0pt}\selectfont $\cM$};
  \draw[postaction={decorate, decoration={markings, mark=at position 0.2 with {\arrow{>}}}}] 
    (9) -- (6) node[pos = 0.2,sloped, below] {\fontsize{4pt}{0pt}\selectfont $\cT$};
 \draw[postaction={decorate, decoration={markings, mark=at position 0.2 with {\arrow{>}}}}] 
    (6) -- (11) node[pos = 0.2,sloped, above] {\fontsize{4pt}{0pt}\selectfont $\cM$};
  \draw[postaction={decorate, decoration={markings, mark=at position 0.2 with {\arrow{>}}}}] 
    (10) -- (7) node[pos = 0.2,sloped, below] {\fontsize{4pt}{0pt}\selectfont $\cT$};
     \draw[postaction={decorate, decoration={markings, mark=at position 0.2 with {\arrow{>}}}}] 
    (7) -- (12) node[pos = 0.2,sloped, above] {\fontsize{4pt}{0pt}\selectfont $\cM$};
  \draw[postaction={decorate, decoration={markings, mark=at position 0.2 with {\arrow{>}}}}] 
    (11) -- (8) node[pos = 0.2,sloped, below] {\fontsize{4pt}{0pt}\selectfont $\cT$};

     \draw[postaction={decorate, decoration={markings, mark=at position 0.18 with {\arrow{>}}}}] 
    (1) -- (10) node[pos = 0.1,sloped, below] {\fontsize{4pt}{0pt}\selectfont $\cM$};
     \draw[postaction={decorate, decoration={markings, mark=at position 0.18 with {\arrow{>}}}}] 
    (9) -- (2) node[pos = 0.1,sloped, above] {\fontsize{4pt}{0pt}\selectfont $\cI$};

     \draw[postaction={decorate, decoration={markings, mark=at position 0.18 with {\arrow{>}}}}] 
    (2) -- (11) node[pos = 0.1,sloped, below] {\fontsize{4pt}{0pt}\selectfont $\cM$};
     \draw[postaction={decorate, decoration={markings, mark=at position 0.18 with {\arrow{>}}}}] 
    (10) -- (3) node[pos = 0.1,sloped, above] {\fontsize{4pt}{0pt}\selectfont $\cI$};

     \draw[postaction={decorate, decoration={markings, mark=at position 0.18 with {\arrow{>}}}}] 
    (3) -- (12) node[pos = 0.1,sloped, below] {\fontsize{4pt}{0pt}\selectfont $\cM$};
     \draw[postaction={decorate, decoration={markings, mark=at position 0.18 with {\arrow{>}}}}] 
    (11) -- (4) node[pos = 0.1,sloped, above] {\fontsize{4pt}{0pt}\selectfont $\cI$};
\end{tikzpicture}
\end{center}

The damage sets $D_{\mathcal{G}}$ for the iterations were computed explicitly via software until they yielded empty sets. For $k=0$, it is all possible errors in the support of respective level-0 gadget. For higher $k$, we find the damage sets to be as shown below (here, we show respective syndromes for each element in the list). Note that the damage sets are smaller than in Eq.~\ref{eq:damagesets1} because the definition of $\cG'$ does not contain the $\EC$ gadget at the beginning, for the reasons explained in the proof of Lemma~\ref{lemma:nilpotentnumerics-TC}.

For level 1, we obtain (as before, we denote the number of iterations of the algorithm that have been completed as a superscript to the gate)

\begin{equation}
    \includegraphics[width = 0.28\textwidth]{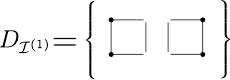}
\end{equation}

\begin{equation}
    \includegraphics[width = 0.7\textwidth]{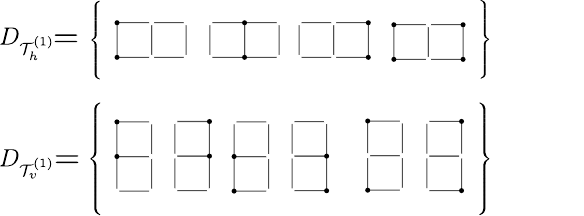}
\end{equation}

\begin{equation}
    \includegraphics[width = 0.75 \textwidth]{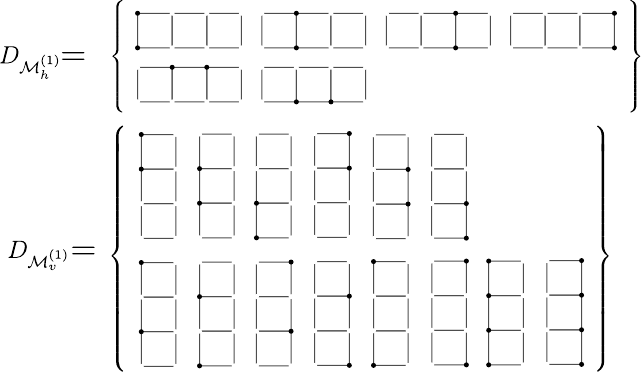}
\end{equation}

For level 2, all damage sets are trivial except those for $\cM^{(2)}_{v/h}$: 

\begin{equation}
    \includegraphics[width = 0.45\textwidth]{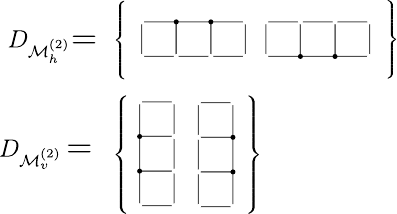}
\end{equation}

Notice that these damage sets also possess mirror symmetry, and that after going up two levels, the error model is reduced to be a single link failure on the $\cM$ gadgets.

\FloatBarrier

\bibliographystyle{quantum}
\bibliography{ref}

\end{document}